\definecolor{darkred}{RGB}{150,0,0}
\definecolor{darkgreen}{RGB}{0,150,0}
\definecolor{darkblue}{RGB}{0,0,200}
\numberwithin{equation}{section}
\newtheorem{defn}{Definition}[section]
\newtheorem{prop}{Property}[section]
\newtheorem{fact}{Fact}[section]
\newtheorem{propo}{Proposition}[section]
\newtheorem{thm}{Theorem}[section]
\newtheorem{assume}{Assumption}[section]
\newtheorem{lem}{Lemma}[section]
\newtheorem{cor}{Corollary}[section]
\newcommand{\beq}{\begin{equation}}
\newcommand{\eeq}{\end{equation}}
\newcommand{\bea}{\begin{align}}
\newcommand{\eea}{{\end{align}}}
\newcommand{\U}{\mathbf{U}}
\newcommand{\V}{\mathbf{V}}
\newcommand{\Sb}{\mathbf{S}}
\newcommand{\X}{\mathbf{X}}
\newcommand{\A}{\mathbf{A}}
\newcommand{\Lb}{\mathbf{L}}
\newcommand{\prox}{\text{prox}}
\newcommand{\Dx}{{\mathbf{D}}(\x_0,}
\newcommand{\Dlf}{{\mathbf{D}}(\la\paf)}
\newcommand{\Dctf}{{\mathbf{D}}(\la\paf+T_\Cc(\x_0)^*)}
\newcommand{\Dcf}{{\mathbf{D}}(T_\Cc(\x_0)^*)}
\newcommand{\DC}{{\mathbf{D}}(\Cc)}
\newcommand{\DCC}{{\mathbf{D}}(\text{cone}(\Cc))}
\newcommand{\Delxf}{{\mathbf{D}}(\text{cone}(\paf))}
\newcommand{\Deltf}{{\mathbf{D}}(T_f(\x_0)^*)}
\newcommand{\Delxc}{{\mathbf{D}}(T_\Cc(\x_0)^*)}
\newcommand{\bu}{\text{Proj}}
\newcommand{\w}{\mathbf{w}}
\newcommand{\x}{\mathbf{x}}
\newcommand{\hax}{\hat{\mathbf{x}}}
\newcommand{\haw}{\hat{\mathbf{w}}}
\newcommand{\ub}{\mathbf{u}}
\newcommand{\g}{\mathbf{g}}
\newcommand{\vb}{\mathbf{v}}
\newcommand{\q}{\mathbf{q}}
\newcommand{\p}{\mathbf{p}}
\newcommand{\bb}{\mathbf{b}}
\newcommand{\e}{\mathbf{e}}
\newcommand{\y}{\mathbf{y}}
\newcommand{\s}{\mathbf{s}}
\newcommand{\z}{\mathbf{z}}
\newcommand{\ab}{\mathbf{a}}
\newcommand{\h}{\mathbf{h}}
\newcommand{\hx}{{\mathbf{x}}^{*}(\la,\z)}
\newcommand{\Sc}{{\mathcal{S}}}
\newcommand{\Kc}{\mathcal{K}}
\newcommand{\Nn}{\mathcal{N}}
\newcommand{\Cc}{\mathcal{C}}
\newcommand{\Pc}{\mathcal{P}}
\newcommand{\R}{\mathbb{R}}
\newcommand{\Pro}{{\mathbb{P}}}
\newcommand{\E}{{\mathbb{E}}}
\newcommand{\paf}{\pa f(\x_0)}
\newcommand{\pafw}{\pa f(\x_0,\w)}
\newcommand{\pafb}{\pa f(\x_0,}
\newcommand{\paw}{\pa f'({\x_0},\w)}
\newcommand{\fp}{f'({\x_0},}
\newcommand{\ff}{\hat{f}_{\x_0}}
\newcommand{\la}{{\lambda}}
\newcommand{\eps}{\epsilon}
\newcommand{\st}{\star}
\newcommand{\pa}{\partial}
\newcommand{\cl}{\text{Cl}}
\newcommand{\cn}{\text{cone}}
\newcommand{\dt}{\text{{dist}}}
\newcommand{\vs}{\vspace}
\newcommand{\hs}{\hspace}
\newcommand{\nn}{\nonumber}
\newcommand{\li}{\left<}
\newcommand{\ri}{\right>}
\newcommand{\vp}{\vspace{4pt}}
\newcommand{\Iden}{{\bf{I}}}
\newcommand{\order}[1]{\mathcal{O}\left(#1\right)}
\newcommand{\Keywords}[1]{\par\noindent
{\small{\em \indent{\bf{Keywords}}\/}: #1}}
\begin{document}
\title{\Large{\bf{Sharp MSE Bounds for Proximal Denoising}}\vs{25pt}}
\author{Samet Oymak\hs{120pt}Babak Hassibi\thanks{ 
This work was supported in part by the National Science Foundation under grants CCF-0729203, CNS-0932428 and CIF-1018927, by the Office of Naval Research under the MURI grant N00014-08-1-0747, and by a grant from Qualcomm Inc.}\vs{15pt}\\Department of Electrical Engineering \\ Caltech, Pasadena - 91125\\\normalsize \href{mailto:soymak@caltech.edu}{soymak@caltech.edu}, \href{mailto:hassibi@caltech.edu}{hassibi@caltech.edu}} % Your email }
%\address{}
\date{}
\maketitle

\vspace{10pt}

\begin{abstract} 
Denoising has to do with estimating a signal $\x_0$ from its noisy observations $\y=\x_0+\z$. In this paper, we focus on the ``structured denoising problem", where the signal $\x_0$ possesses a certain structure and $\z$ has independent normally distributed entries with mean zero and variance $\sigma^2$. We employ a structure-inducing convex function $f(\cdot)$ and solve $\min_\x\{\frac{1}{2}\|\y-\x\|_2^2+\sigma\la f(\x)\}$ to estimate $\x_0$, for some $\lambda>0$. Common choices for $f(\cdot)$ include the $\ell_1$ norm for
sparse vectors, the $\ell_1-\ell_2$ norm for block-sparse signals and the nuclear norm for low rank
matrices. The metric we use to evaluate the performance of an estimate $\x^*$ is the normalized mean-squared-error $\text{NMSE}(\sigma)=\frac{\E\|\x^*-\x_0\|_2^2}{\sigma^2}$. We show that NMSE is maximized as $\sigma\rightarrow 0$ and we find the \emph{exact} worst case NMSE, which has a simple geometric interpretation: the mean-squared-distance of a standard normal vector to the $\la$-scaled subdifferential $\la\paf$. When $\la$ is optimally tuned to minimize the worst-case NMSE, our results can be related to the constrained denoising problem $\min_{f(\x)\leq f(\x_0)}\{\|\y-\x\|_2\}$. The paper also connects these results to the generalized LASSO problem, in which, one solves $\min_{f(\x)\leq f(\x_0)}\{\|\y-\A\x\|_2\}$ to estimate $\x_0$ from noisy linear observations $\y=\A\x_0+\z$. We show that certain properties of the LASSO problem are closely related to the denoising problem. In particular, we characterize the normalized LASSO cost and show that it exhibits a ``phase transition'' as a function of number of observations. Our results are significant in two ways. First, we find a simple formula for the performance of a general convex estimator. Secondly, we establish a connection between the denoising and linear inverse problems.% and demonstrated in some special cases in \cite{BayMon}. Secondly it gives an explicit formula for $\Delxf$, and thereby the optimal normalized denoising MSE, in terms of the subdifferential of $f(\cdot)$ at $\x_0$. Several examples are worked out in detail. 

% In summary, our results provide exact analysis of several aspects of the denoising problem and show a close connection between two apparently unrelated problems in a general setup: the estimation of a structured signal from noisy observations and the recovery of the same signal from compressed observations. The single quantity $\Delxf$ is able to capture critical aspects of both problems.
\vspace{5pt}
\Keywords{convex optimization, proximity operator, structured sparsity, statistical estimation, model fitting, stochastic noise, linear inverse, generalized LASSO}
\end{abstract}

%\newpage

%    \setcounter{tocdepth}{2}
%\tableofcontents

%\let\thefootnote\relax\footnote{Partial results of this paper was presented in Allerton $2012$, \cite{OymakAllerton}}
\section{Introduction}

%

%The minimization \eqref{DN} is alternatively known as the proximity operator, \cite{More} and plays a critical role in various algorithms, \cite{Hale,Combet,Figue,Ma,Cai,Toh,Yin,Cai2,Rudin}.

%We will treat the problems in a very general setting. Given the growing applications on a daily basis, sparsity is no longer able to capture all structured problems. It is desirable to have a general theory to treat ``generalized sparse recovery'' problems.

%{\bf{structured?, low dimensional?, generalized sparse?}}
Signals exhibiting structured behavior play a critical role in
various applications. In particular, sparse signals, block-sparse signals, low-rank matrices and their many variations are often the underlying solutions of problems, with applications ranging from MRI to recommendation systems to DNA microarrays, \cite{CandesMain,Cai,Cha,Bach,Don1,soft-thresh,Comp,MatCom,Block}. Hence, a significant amount of work has been dedicated to developing and analyzing algorithms, that can take advantage of the signal structure. In this work, we will be
considering the estimation of structured signals corrupted by
additive noise via convex optimization. Under Gaussian noise assumption, we will provide an exact characterization of the estimation performances of useful convex algorithms based on \emph{proximity operator}. Proximity operator corresponding to a function $f(\cdot)$ at a point $\y$ is given by,
\beq
\prox_f(\y,\la)=\arg\min_\x\frac{1}{2}\|\y-\x\|_2^2+\la f(\x)\label{proxmain}
\eeq
Proximity operator was first introduced by Moreau in \cite{More}. It has several \emph{nice} properties, it can often be evaluated quickly and it constitutes the backbone of the proximal algorithms, \cite{More,prox2,prox3,prox4,Hale,Combet,Combet2,Boyd2}. The topic of this work is to understand and characterize the estimation capabilities of $\prox_f(\cdot)$ when $\y$ is the noisy observations of an underlying structured signal $\x_0$. Our results will be particularly meaningful when $f(\cdot)$ is \emph{structure inducing}. The prime example is the $\ell_1$ norm, in which case \eqref{proxmain} is known as ``soft-thresholding the entries of $\y$'' and has the following closed form solution, \cite{soft-thresh,Donoho,AMPmain,Tikh,Combet},
\begin{align}
[\prox_f(\y,\la)]_i=\begin{cases}y_i-\la~\text{if}~y_i\geq \la\\0~\text{if}~|y_i|< \la\\y_i+\la~\text{else}\end{cases}.
\end{align}
While sparse signal estimation is the fundamental question in compressed sensing and properties of $\ell_1$ minimization has been studied extensively, recent literature shows that, various other signal forms show up in diverse set of applications and these applications are growing in a daily basis, \cite{Koltc,Plan,Oym,Fazel,Block,RechtFazel,RechtBlock,Model,Nonuniform,Oym2,Vasvani,LPS,PCA,NoisyLPS,Negah,MinTao,Emile3}. Consequently, a uniform treatment of the structured signal recovery problems is highly desirable. With this intention, we will loosely say, $\x_0$ is a structured signal and $f(\cdot):\R^n\rightarrow \R$ is a convex function that exploits this particular structure. Such structured signal-function pairs include the sparse vectors and the $\ell_1$ norm, and the low-rank matrices and the nuclear norm. Chandrasekaran et al. \cite{Cha} and Bach \cite{Bach} consider systematic ways of finding the structure-inducing function given the characteristics of the signal. 

With this motivation, we will provide sharp estimation guarantees for the general problem \eqref{proxmain} where $\y=\x_0+\sigma\vb$, $\x_0$ is a structured signal, and $\vb$ is the noise vector; whose entries are independent standard normal. As it has been observed in the relevant literature, \cite{Rec2,Mon,BayMon,OymLAS}, when noise has variance $\sigma^2$, it is useful to consider the slight modification of \eqref{proxmain},
\beq
\prox_f(\y,\sigma\la)=\arg\min_\x\frac{1}{2}\|\y-\x\|_2^2+\sigma\la f(\x)\label{proxmain2},
\eeq
which takes the normalization for $\sigma$ into account. We focus on finding a tight upper bound on the normalized-mean-squared-error (NMSE) defined as,
\beq
\frac{\E[\|\prox_f(\x_0+\sigma\vb,\sigma\la)-\x_0\|^2_2]}{\sigma^2}\label{NMSE}.
\eeq
NMSE \eqref{NMSE} is a function of variance $\sigma^2$, vector $\x_0$ and function $f(\cdot)$. We find a formula for the highest value of NMSE over $\sigma>0$, which is only a function of $f(\cdot)$ and $\x_0$. To state our result, we need to introduce some notation.
\begin{itemize}
\item \emph{Subdifferential:} The set of subgradients of a convex function $f(\cdot)$ at $\x_0$ will be denoted by $\paf$.
\item \emph{Distance:} Given a nonempty set $\Cc$, the distance of a vector $\vb$ to $\Cc$ is $\dt(\vb,\Cc):=\inf_{\ub\in\Cc}\|\vb-\ub\|_2$.
\item \emph{Mean-squared-distance (MSD):} Let $\g\in\R^n$ be a vector with standard normal entries and $\Cc$ be a nonempty set. Define $\DC=\E[\dt(\g,\Cc)^2]$.
\end{itemize}

The following theorem gives a sample result.

\begin{thm}[Worst Case NMSE] \label{proximal} Assume $f(\cdot):\R^n\rightarrow\R$ is a convex function, $\x_0\in\R^n$, $\la\geq 0$ and $\vb$ has independent standard normal entries. Then,
\beq
\max_{\sigma>0}\frac{\E[\|\prox_f(\x_0+\sigma\vb,\sigma\la)-\x_0\|^2_2]}{\sigma^2}=\Dlf\label{regprox}.
\eeq
Furthermore, the worst case NMSE is achieved as $\sigma\rightarrow 0$.
\end{thm}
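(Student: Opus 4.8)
The plan is to pass to the natural scaled variable and reduce the claim to a pointwise (in the noise realization) monotonicity statement. Writing $\x^*=\prox_f(\x_0+\sigma\vb,\sigma\la)$ and introducing the normalized error $\h(\sigma):=(\x^*-\x_0)/\sigma$, a change of variables $\x=\x_0+\sigma\h$ in \eqref{proxmain2} followed by dividing the cost by $\sigma^2$ shows that $\h(\sigma)$ is the unique minimizer (unique by strong convexity of the quadratic term) of
\beq
\tfrac12\|\vb-\h\|_2^2+\la\,\frac{f(\x_0+\sigma\h)-f(\x_0)}{\sigma}\nn,
\eeq
and that the quantity inside the expectation in \eqref{regprox} is exactly $\|\h(\sigma)\|_2^2$. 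Thus it suffices to show that, for every fixed $\vb$, the map $\sigma\mapsto\|\h(\sigma)\|_2$ is nonincreasing and that $\|\h(\sigma)\|_2\to\dt(\vb,\la\paf)$ as $\sigma\to0$; the theorem then follows by taking expectations.

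For the monotonicity I would use only that the subdifferential is a monotone operator. The optimality condition for $\h(\sigma)$ reads $\vb-\h(\sigma)\in\la\,\pa f(\x_0+\sigma\h(\sigma))$. Fixing $\sigma_1<\sigma_2$ with minimizers $\h_1,\h_2$ and corresponding subgradients $\s_1,\s_2$, so that $\vb-\h_i=\la\s_i$, monotonicity of $\pa f$ applied at the points $\x_0+\sigma_i\h_i$ gives $\langle\s_1-\s_2,\sigma_1\h_1-\sigma_2\h_2\rangle\ge0$. Substituting $\s_1-\s_2=\la^{-1}(\h_2-\h_1)$ and bounding the cross term by Cauchy--Schwarz, I expect this to collapse to
\beq
\big(\|\h_2\|_2-\|\h_1\|_2\big)\big(\sigma_1\|\h_1\|_2-\sigma_2\|\h_2\|_2\big)\ge0\nn,
\eeq
which, since $\sigma_1<\sigma_2$, is incompatible with $\|\h_2\|_2>\|\h_1\|_2$ and hence forces $\|\h_2\|_2\le\|\h_1\|_2$. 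Therefore $\|\h(\sigma)\|_2$ is nonincreasing in $\sigma$ and its supremum over $\sigma>0$ is attained in the limit $\sigma\to0$.

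To identify this limit I would let $\sigma\downarrow0$ in the scaled cost. By convexity the difference quotient $\sigma^{-1}\big(f(\x_0+\sigma\h)-f(\x_0)\big)$ is nondecreasing in $\sigma$ and decreases to the directional derivative $f'(\x_0,\h)=\sup_{\s\in\paf}\langle\s,\h\rangle$, i.e.\ the support function of $\paf$. The limiting problem is then $\min_\h\{\tfrac12\|\vb-\h\|_2^2+\la f'(\x_0,\h)\}$, whose penalty is the support function of the scaled set $\la\paf$. Moreau's decomposition identifies the minimizer as $\h_0=\vb-\bu_{\la\paf}(\vb)$, so that $\|\h_0\|_2=\dt(\vb,\la\paf)$; combined with the monotone convergence of the penalized objectives (forcing convergence of the unique minimizers) this yields $\|\h(\sigma)\|_2\uparrow\dt(\vb,\la\paf)$ as $\sigma\to0$.

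Finally, since $\|\h(\sigma)\|_2^2$ increases monotonically to $\dt(\vb,\la\paf)^2$ as $\sigma\to0$, the monotone convergence theorem lets me interchange limit and expectation, giving $\max_{\sigma>0}\E\|\h(\sigma)\|_2^2=\E[\dt(\g,\la\paf)^2]=\Dlf$. The scaling and the algebra behind the monotonicity inequality are routine; the main obstacle I anticipate is making the $\sigma\to0$ limit fully rigorous---verifying that monotone (epi-)convergence of the penalized costs forces the minimizers to converge, and that the directional-derivative limit genuinely coincides with the support function of $\paf$ when $f$ is merely finite convex---so that the interchange of limit and expectation is justified.
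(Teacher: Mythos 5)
Your proof is correct, and it takes a genuinely different---and leaner---route than the paper's. The paper splits the work in two: a deterministic upper bound $\|\x^*-\x_0\|_2\leq\dt(\z,\tau\paf)$ obtained from the optimality conditions and subgradient variational inequalities (Lemma \ref{upper bound lem}, Proposition \ref{upper summary}), and a separate asymptotic lower bound obtained by uniformly approximating $f$ by its directional-derivative linearization (Proposition \ref{unif lem}), proving a deterministic perturbation bound (Lemma \ref{sharp}), and then splitting the Gaussian integral over three regions $S_1,S_2,S_3$ with a careful ordering of the limits $\delta_0\rightarrow 0$, $\sigma\rightarrow 0$, $C_1\rightarrow\infty$, $\eps_1\rightarrow 0$ (Proposition \ref{asymp}). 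You instead prove a single pointwise fact the paper never establishes: for every fixed realization $\vb$, the normalized error $\|\h(\sigma)\|_2$ is nonincreasing in $\sigma$. Your algebra there is sound (assuming $\la>0$; the case $\la=0$ is trivial since the prox is then the identity): monotonicity of $\pa f$ at the two points $\x_0+\sigma_i\h_i$ together with $\s_1-\s_2=\la^{-1}(\h_2-\h_1)$ gives $(\sigma_1+\sigma_2)\li\h_1,\h_2\ri\geq\sigma_1\|\h_1\|_2^2+\sigma_2\|\h_2\|_2^2$, and Cauchy--Schwarz turns this into $\left(\|\h_2\|_2-\|\h_1\|_2\right)\left(\sigma_2\|\h_2\|_2-\sigma_1\|\h_1\|_2\right)\leq 0$, which indeed forces $\|\h_2\|_2\leq\|\h_1\|_2$ when $\sigma_1<\sigma_2$. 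The identification of the $\sigma\rightarrow 0$ limit---the penalty becomes the support function of $\la\paf$, whose prox is $\vb-\bu(\vb,\la\paf)$ by Moreau's decomposition---is essentially the paper's approximated problem (Proposition \ref{approx prop} with $\Cc=\R^n$), so that ingredient is shared. What your monotonicity buys is that the supremum over $\sigma$ equals this limit by the monotone convergence theorem, with no region-splitting or $\eps$--$\delta$ bookkeeping, plus the strictly stronger conclusion that the NMSE is monotone in $\sigma$ (even per realization). What the paper's heavier machinery buys in exchange is reusability: its upper bound holds verbatim with an additional constraint set $\Cc$ (Theorems \ref{constrained} and \ref{con mini}), whereas your cancellation $\s_1-\s_2=\la^{-1}(\h_2-\h_1)$ exploits the penalty-only structure of Theorem \ref{proximal}.

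The one step you flag as the main obstacle---that monotone convergence of the scaled costs forces convergence of the minimizers---is real but closes in two lines. Write $g_\sigma(\h)=\frac{1}{2}\|\vb-\h\|_2^2+\la\sigma^{-1}(f(\x_0+\sigma\h)-f(\x_0))$ and $g_0(\h)=\frac{1}{2}\|\vb-\h\|_2^2+\la f'(\x_0,\h)$. Since $g_\sigma\geq g_0$ pointwise, $g_0(\h_0)\leq g_0(\h(\sigma))\leq g_\sigma(\h(\sigma))\leq g_\sigma(\h_0)\rightarrow g_0(\h_0)$ as $\sigma\rightarrow 0$, and since $g_0$ is $1$-strongly convex with minimizer $\h_0$, we get $\frac{1}{2}\|\h(\sigma)-\h_0\|_2^2\leq g_0(\h(\sigma))-g_0(\h_0)\rightarrow 0$. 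The identity $f'(\x_0,\cdot)=\sup_{\s\in\paf}\li\s,\cdot\ri$ for a finite convex $f$ is exactly \eqref{dirder}, quoted from \cite{Urru}, and measurability of $\|\h(\sigma)\|_2^2$ in $\vb$ follows from nonexpansiveness of the prox, so the interchange of limit and expectation is justified exactly as you propose.
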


\noindent{\bf{Remark 1:}} Observe that, the result is not interesting if the function is differentiable at $\x_0$, in which case, subdifferential is a singleton. It becomes useful when the subdifferential is large; which decreases the distance term $\Dlf$.

\noindent{\bf{Remark 2:}} The fact that worst case NMSE is achieved as $\sigma\rightarrow 0$ has been observed in relevant problems which will be discussed later on \cite{Matesh, NoiseSense,OymLAS, Donoho}

The quantity $\Dlf$ may seem abstract at first sight. However, for the problems of interest, the subdifferential $\paf$ is a highly structured and well-studied set, \cite{DonCentSym,CandesMain,Decomp1}. Several works, \cite{Foygel,McCoy,OymLAS,Cha}, provide useful upper bounds on this quantity for certain structured signal classes.

\subsection{Examples} \label{examples}
We will now list some specific examples which are applications of Theorem \ref{proximal} in combination with Table \ref{table:closed}.
\begin{table}[h]
\begin{center}
\hspace*{-5pt}  \begin{tabular}{ | c |  c  |  c | c |}
     \cline{2-4}
     \multicolumn{1}{c|}{}&\pbox{20cm}{\vp{\bf{{{$k$-sparse, $\x_0\in\R^n$}}}}\vp}& \pbox{20cm}{\bf{{{Rank $r$, $\X_0\in\R^{d\times d}$}}}}   & $k$-{\bf{block sparse}}, $\x_0\in\R^{tb}$ \\     \hline
     {\color{darkred}{{$\Dlf$}}}  & $ (\la^2+3)k$ ~for~ $\la\geq \sqrt{2\log \frac{n}{k}}$ & \pbox{20cm}{\vp \small{$\la^2r+2d(r+1)$ ~for~ $\la\geq 2\sqrt{d}$}\vp} & \small{$(\la^2+b+2)k$~ for~ $\la \geq \sqrt{b}+\sqrt{2\log\frac{t}{k}}$} \\
    \hline
  \end{tabular}
  \caption{Closed form upper bounds for $\Dlf$ corresponding to \eqref{eq:clo1}, \eqref{eq:clo2} and \eqref{eq:clo3} (from \cite{Foygel,OymLAS}).}
  \label{table:closed}
\end{center}
\end{table}

%In this sense, \eqref{proxmain}, can be seen as a generalized soft thresholding. 

 \noindent{$~~\emph1$}.  {\bf{Sparse signal estimation:}} Assume $\x_0\in\R^n$ has $k$ nonzero entries and $\vb$ has independent standard normal entries and $\y=\x_0+\sigma\vb$. Pick $f(\cdot)$ as the $\ell_1$ norm. Then, for $\la\geq \sqrt{2\log\frac{n}{k}}$,
 %under mild assumptions, 
\beq\frac{\E[\|\prox_{\ell_1}(\y,\sigma\la)-\x_0\|_2^2]}{\sigma^2}\leq (\la^2+3)k.\label{eq:clo1}\eeq

%\noindent$\bullet$
 \noindent{$~~\emph2$}. {\bf{Low-rank matrix estimation:}} Assume $\X_0\in\R^{d\times d}$ is a rank $r$ matrix, $n=d\times d$. This time, $\x_0\in\R^n$ corresponds to vectorization of $\X_0$ and $f(\cdot)$ is chosen as the nuclear norm $\|\cdot\|_\st$ (sum of the singular values of a matrix) \cite{Fazel,RechtFazel}. Hence, we observe $\y=\text{vec}(\X_0)+\sigma\vb$ and estimate $\X_0$ as,% i.e.%, as $f(\cdot)$ and solve,
\beq\prox_{\st}(\y,\sigma\la)=\arg\min_{\X}\frac{1}{2}\|\y-\text{vec}(\X)\|_2^2+\sigma\la\|\X\|_\st.\nn\eeq

Then, whenever $\la\geq 2\sqrt{d}$, normalized mean-squared-error satisfies,
 %under mild assumptions,
\beq\frac{\E[\|\prox_{\st}(\y,\sigma\la)-\X_0\|^2_F]}{\sigma^2}\leq (\la^2+2d)r+2d.\label{eq:clo2}\eeq

 \noindent{$~~\emph3$}. {\bf{Block sparse estimation:}} Let $n=t\times b$ and assume the entries of $\x_0\in\R^n$ can be grouped into $t$ known blocks of size $b$ so that only $k$ of these $t$ blocks are nonzero. To induce the structure, the standard approach is to use the $\ell_{1,2}$ norm which sums up the $\ell_2$ norms of the blocks, \cite{Yonina,Block,RechtBlock}. In particular, denoting the subvector corresponding to $i$'th block of a vector $\x$ by $\x_i$, the $\ell_{1,2}$ norm is equal to $\|\x\|_{1,2}=\sum_{i=1}^t\|\x_i\|_2$.
Pick $f(\cdot)=\|\cdot\|_{1,2}$ and assume $\la \geq \sqrt{b}+\sqrt{2\log\frac{t}{k}}$, $\vb$ has independent standard normal entries and $\y=\x_0+\sigma\vb$.
%and denote the estimate of Constrained-LASSO with $\ell_{1,2}$ minimization by $\x^*_c$. 
% under mild assumptions,
\beq\frac{\E[\|\prox_{\ell_{1,2}}(\y,\sigma\la)-\x_0\|^2_2]}{\sigma^2}\leq (\la^2+b+2)k.\label{eq:clo3}\eeq

\subsection{Relevant literature}
%http://arxiv.org/pdf/1308.5038.pdf
Proximity operator with $\ell_1$ minimization has been subject of various works and results of type \eqref{eq:clo1} is known \cite{DonPhaseTrans,Cha, soft-thresh,BayMon}. Block sparse signals have been studied in \cite{Donoho,Yonina}. NMSE properties of low-rank matrices has been analyzed in detail by \cite{Matesh,Shabal,Matesh2}. Our main contribution is giving the \emph{exact} worst case NMSE for arbitrary convex functions. As a side result, we provide closed form upper bounds such as \eqref{eq:clo1}, \eqref{eq:clo2} and \eqref{eq:clo3}. Closer to our results, in \cite{Rec2}, Bhaskar et al. considers denoising via structure inducing ``atomic norms'' with a focus on line spectral estimation. However, their error bounds are looser than what Theorem \ref{proximal} gives. For example, in \eqref{eq:clo2}, we are able to bound the error in terms of the sparsity of the vector; whereas \cite{Rec2} bounds in terms of the $\ell_1$ norm of the vector which can be \emph{substantially} larger. In \cite{ChaJor}, Chandrasekaran and Jordan consider the ``constrained denoising'' problem which enforces the constraint $f(\x)\leq f(\x_0)$ rather than penalization $\sigma\la f(\x)$ in \eqref{proxmain2}. As it will be discussed in Section \ref{sec:main}, their results are consistent with ours; however, we additionally show that NMSE bounds are sharp as $\sigma\rightarrow 0$.% In Section \ref{sec:main} provides more details on our results.

A more challenging problem occurs when $\y$ is arising from noisy \emph{linear} observations $\A\x_0$ and we have to solve an underdetermined linear inverse problem. In this case, we can consider a variant of \eqref{proxmain}, namely,
\beq
\arg\min_\x\frac{1}{2}\|\y-\A\x\|_2^2+\la f(\x)\label{proxlas}.
\eeq
When $f(\cdot)$ is the $\ell_1$ norm and $\x_0$ is a sparse signal \eqref{proxlas} is known as LASSO. LASSO has been the subject of great interest as it is a natural and powerful approach to noise robust compressed sensing \cite{Tikh, BicRit,BunTsy,StojLAS,StojSOCP,ExampleLAS,Koltc}. Section \ref{sec:main} provides certain results on LASSO and connection of our results to the linear inverse problems in more detail.

%when $m<  and it has been shown that for the case of $\ell_1$ minimization, LASSO gains robustness to noise. 

Some of the advantages of our results are as follows.
 \begin{itemize}
\item In various compressed sensing or estimation results, guarantees are orderwise rather than the exact quantities. For example, while NMSE of $\order{dr}$ is orderwise optimal for nuclear norm minimization (recall \eqref{eq:clo2}), it is often critical to know the actual constant term that multiplies $dr$ for real life applications. For Gaussian noise, it is known that, this constant is as small as $6$ rather than being, say, $1000$ (set $\la=2\sqrt{d}$ in \eqref{eq:clo2}). Consequently, a \emph{formula} that gives the exact performance is desirable.
\item Instead of finding case specific results for various structured signal classes, we are stating our results for a generic convex function $f(\cdot)$; which allows us to systematically obtain the specific instances. The only required information is the subdifferential $\paf$.% which will be the topic of next section.
\item As it will be discussed in Section \ref{sec:main}, we will relate Theorem \ref{proximal} to the linear inverse problem \eqref{proxlas} and in fact, we will find a general relation between them, in connection to the recent results of \cite{Cha,McCoy}. In particular, for both problems, the subdifferential based quantity, $\Dlf$, plays a critical role.
\end{itemize}

To provide a better intuition about our results, we will now exemplify other possible signal forms and the associated structure inducing functions.

\section{Common structures and the associated functions}\label{commonfunc}
We consider the following low dimensional signal models and for each case, we explain the signal structure and provide the convex function $f(\cdot)$ that exploits the structure.% Clearly, choice of the function $f(\cdot)$ is not arbitrary and in many cases, given the structure, it is possible to come up with the ``logical'' choice for $f(\cdot)$ in a deterministic way. For example, \cite{Cha,Bach,Model} develops certain frameworks to relate the signal structure and the function that exploits the model.

%
%\noindent$\bullet$ {\bf{Sparse signals and $\ell_1$ minimization:}} Sparsity is arguably the most popular signal model, \cite{CandesMain,candes-tao,Don1,Comp}. In many applications, the signal $\x_0$ that underlies the observations $\A\x_0$ (or $\x_0+\z$) is sparse. In order to, enforce a sparse solution, convex function $f(\cdot)$ is chosen to be $\ell_1$ norm, i.e.:
%\beq
%f(\x)=\|\x\|_1=\sum_{i=1}^n |x_i|
%\eeq
%\\
\vspace{2pt}
\noindent$\bullet$ {\bf{Sum of a low rank and a sparse matrix:}} The matrix of interest $\X_0$ can be decomposed into a low rank piece $\Lb_0$ and a sparse piece $\Sb_0$, hence it is a ``mixture'' of the low rank and sparse structures. This model is useful in applications such as video surveillance and face recognition, \cite{PCA,NoisyLPS}. In this case, function $f(\cdot)$ should be chosen so that it simultaneously emphasizes sparse and low rank pieces in the given matrix.
\beq\label{LPS}
f(\X)=\inf_{\Lb+\Sb=\X}\|\Lb\|_\star+\gamma \|\Sb\|_1
\eeq
\eqref{LPS} is known as the infimal convolution of the $\ell_1$ norm and the nuclear norm, \cite{PCA,LPS}.% While this structure is more complicated and its exact characteristics are unknown, we will later on argue that its phase transitions and minimax MSE can be sharply predicted based on our framework.\\

\vspace{3pt}
\noindent$\bullet$ {\bf{Discrete total variation:}} In many imaging applications, \cite{Rudin,TotalCompress1,TotalCompress2}, the signal of interest $\x_0$ rarely changes as a function of time. Consequently, letting ${\bf{d}}_i=\x_{0,i+1}-\x_{0,i}$ for $1\leq i\leq n-1$, ${\bf{d}}\in\R^{n-1}$ becomes a sparse vector. To induce this structure, one may minimize the total variation of $\x_0$, namely $\|\x_0\|_{TV}=\|{\bf{d}}\|_1$.

\vspace{3pt}
\noindent$\bullet$ {\bf{Nonuniformly sparse signals and weighted $\ell_1$ minimization:}} In many cases, we might have prior information regarding the sparsity pattern of the signal, \cite{CandesReweight,Nonuniform,Oym2,Vasvani}. In particular, the signal $\x_0$ might be relatively sparse over a certain region and dense over another. To exploit this additional information we can use a modified $\ell_1$ minimization where different weights are assigned to different regions. More rigorously, assume the set of entries $\{1,2,\dots,n\}$ is divided into $t$ disjoint sets $S_1,\dots,S_t$ that correspond to regions with different sparsity levels. Then, given a nonnegative weight vector $\w=[w_1~w_2\dots~w_t]$, weighted $\ell_1$ norm $\|\x\|_w$ can be given as:
\beq
\|\x\|_w=\sum_{i=1}^tw_i\sum_{j\in S_i}|x_j|\nn
\eeq
%Again, $\ell_1$ minimization is a special case of weighted $\ell_1$ and is obtained when weights are equal.\\

\noindent$\bullet$ {\bf{Other models:}} We can include various other models: signals that are sparse and positive, \cite{DonNonNegative,Amin}; positive semidefinite constraints, \cite{Fazel,Oym}; simultaneously sparse and low rank matrices, \cite{simultaneous,French,Emile3}; permutation matrices, binary vectors, cut matrices, \cite{Cha}; etc.

\section{Main Contributions}\label{sec:main}
\subsection{Notation}\label{notate1}
Before stating our main results, we will introduce the relevant notation.
%\subsection{Notation and Definitions}
 $\Nn(0,\sigma^2\Iden_{n})$ is used to denote the distribution of a vector in $\R^n$ with independent Gaussian entries with variance $\sigma^2$ and mean zero. For a convex function $f(\cdot):\R^n\rightarrow \R$, the set of subgradients of $f(\cdot)$ at $\x$ is denoted by $\pa f(\x)$. $\pa f(\x)$ is a nonempty, convex and compact set, \cite{Urru}. Given a set $A$ and $\la\in\R$, $\la A$ will denote the set obtained by scaling elements of $A$ by $\la$. The cone obtained by $A$ is given as,
\beq\nn
\cn(A)=\{\la\x\in\R^n\big|\x\in A,\la\geq 0\}.
\eeq
When $A,B$ are two sets, $A+B$ will denote the Minkowski sum $\{\ab+\bb\big|\ab\in A~\text{and}~\bb\in B\}$. The closure of a set $A$ will be denoted by $\cl(A)$. We will now briefly describe our notation on convex geometry. The distance between a set nonempty set $\Cc$ and a point $\x\in\R^n$ is given as: $\dt(\x,\Cc)=\min_{\s\in\Cc}\|\x-\s\|_2$. When $\Cc$ is closed and convex, there exists a unique point in $\Cc$ that is closest to $\x$ called the ``projection of $\x$ on $\Cc$''. This point will be denoted by $\bu(\x,\Cc)$. Basically, $\bu(\x,\Cc)\in\Cc$ and,
\beq
\|\x-\bu(\x,\Cc)\|_2=\dt(\x,\Cc).\nn
\eeq
The polar cone of $\Cc$, which is always closed and convex, will be denoted by $\Cc^*$ and is given as:
\beq
\Cc^*=\{\vb\big|\li\vb,\x\ri\leq 0~\text{for all}~\x\in\Cc\}.\nn
\eeq

\subsubsection{Useful concepts}
\vspace{3pt}
\noindent{\bf{Mean-squared-distance}} Let $\Cc$ be a nonempty subset of $\R^n$. The mean-squared-distance (MSD) to $\Cc$ will be denoted by $\DC$ and is defined as,
\beq
\DC=\E[\dt(\g,\Cc)^2],
\eeq where $\g\sim\Nn(0,\Iden)$. The relevant definitions have been used in \cite{Foygel,OymLAS}. This definition is closely related to the concepts like ``statistical dimension'' of \cite{McCoy} and ``Gaussian width'' of \cite{Cha}.

\vspace{2pt}

\noindent{\bf{Feasible sets and tangent cones:}} Given a convex function $f(\cdot):\R^n\rightarrow \R$, the descent set at $\x_0$ is given as $F_f(\x_0)=\{\z\big|f(\x_0+\z)\leq f(\x_0)\}$. Then, the tangent cone at $\x_0$ is defined as $T_f(\x_0)=\cl(\cn(F_f(\x_0)))$. In particular, when $\x_0$ is not a minimizer of $f(\cdot)$, we have $T_f(\x_0)=\cn(\pa f(\x_0))^*$. In fact, this will be the only assumption we will be making besides the convexity of $f(\cdot)$. The details of this result can be found in \cite{Bertse,Roc70}. In a similar manner, for a convex set $\Cc$ and $\x_0\in\Cc$, the set of feasible directions at $\x_0$ is denoted by $F_\Cc(\x_0)$ and is given by $F_\Cc(\x_0)=\{\z\big|\x_0+\z\in\Cc\}$. Then, the tangent cone at $\x_0$ becomes $T_\Cc(\x_0):=\cl(\cn(F_\Cc(\x_0)))$.

\subsection{Results on denoising}
Throughout the discussion below, we assume $\x_0$ is \emph{not} a minimizer of $f(\cdot)$.
%OWe will consider two types of denoising problems and the associated normalized MSE's.
\begin{thm}[Constrained denoising] \label{constrained} Let $\Cc$ be a nonempty, convex and closed set and let $\x_0$ be an arbitrary vector in $\Cc$. Let,
\beq
\prox_\Cc(\y)=\arg\min_{\x\in\Cc}\|\y-\x\|_2\label{constrained prox}.
\eeq
Assume $\vb\sim\Nn(0,\Iden_n)$. Then,
\beq
\max_{\sigma>0}\frac{\E[\|\prox_\Cc(\x_0+\sigma\vb)-\x_0\|_2^2]}{\sigma^2}=\Dcf\label{constrained eq}.
\eeq
Furthermore, the equality above is achieved as $\sigma\rightarrow 0$. Now, let $f(\cdot):\R^n\rightarrow \R$ be a convex function and choose $\Cc=\{\x\in\R^n\big| f(\x)\leq f(\x_0)\}$. For this choice, we have,
\beq
\max_{\sigma>0}\frac{\E[\|\prox_\Cc(\x_0+\sigma\vb)-\x_0\|_2^2]}{\sigma^2}=\Deltf=\Delxf\label{constprox}.
\eeq

\end{thm}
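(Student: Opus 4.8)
The plan is to first reduce the constrained denoiser to a projection onto the feasible set, establish the general identity \eqref{constrained eq}, and then specialize to the sublevel set $\Cc=\{\x:f(\x)\leq f(\x_0)\}$ to get \eqref{constprox}. Writing $\z=\x-\x_0$, the problem defining $\prox_\Cc(\x_0+\sigma\vb)$ becomes $\min_{\z\in F_\Cc(\x_0)}\|\sigma\vb-\z\|_2$, so the error is $\prox_\Cc(\x_0+\sigma\vb)-\x_0=\bu(\sigma\vb,F_\Cc(\x_0))$. By positive homogeneity of the projection, $\bu(\sigma\vb,F_\Cc(\x_0))=\sigma\,\bu(\vb,\tfrac{1}{\sigma}F_\Cc(\x_0))$, so the NMSE at noise level $\sigma$ equals $\E[\|\bu(\vb,S_\sigma)\|_2^2]$ with $S_\sigma:=\tfrac{1}{\sigma}F_\Cc(\x_0)$. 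Since $F_\Cc(\x_0)=\Cc-\x_0$ is convex and contains the origin, the family $\{S_\sigma\}$ is nested and grows as $\sigma\downarrow 0$, with $\bigcup_{\sigma>0}S_\sigma=\cn(F_\Cc(\x_0))$, whose closure is $T_\Cc(\x_0)$.

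The core is a pointwise (in $\vb$) upper bound valid at every $\sigma$. For any closed convex $S$ containing the origin with $\p=\bu(\vb,S)$, the obtuse-angle inequality $\li\vb-\p,\s-\p\ri\leq 0$ evaluated at $\s=0$ gives $\li\vb,\p\ri\geq\|\p\|_2^2$, and expanding $\|\vb\|_2^2=\|\vb-\p\|_2^2+2\li\vb-\p,\p\ri+\|\p\|_2^2$ yields $\|\bu(\vb,S)\|_2^2\leq\|\vb\|_2^2-\dt(\vb,S)^2$. Applying this to $S=S_\sigma$, using $S_\sigma\subseteq T_\Cc(\x_0)$ (hence $\dt(\vb,S_\sigma)\geq\dt(\vb,T_\Cc(\x_0))$), and using that $T_\Cc(\x_0)$ is a cone—for which the Moreau decomposition gives the \emph{equality} $\|\bu(\vb,T_\Cc(\x_0))\|_2^2=\|\vb\|_2^2-\dt(\vb,T_\Cc(\x_0))^2$—I obtain, for all $\sigma$ and $\vb$, the bound $\|\bu(\vb,S_\sigma)\|_2^2\leq\|\bu(\vb,T_\Cc(\x_0))\|_2^2$. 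Taking expectations and invoking the Moreau identity $\E[\|\bu(\vb,K)\|_2^2]=\E[\dt(\vb,K^*)^2]=\DD K^*)$ for $K=T_\Cc(\x_0)$ bounds every NMSE by $\DD T_\Cc(\x_0)^*)=\Dcf$.

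To see the bound is tight and attained in the limit, I would show $\bu(\vb,S_\sigma)\to\bu(\vb,T_\Cc(\x_0))$ as $\sigma\downarrow 0$ for each fixed $\vb$—the standard convergence of projections onto an increasing family of closed convex sets to the projection onto the closure of their union—and then pass the limit through the expectation by dominated convergence (dominated by $\|\vb\|_2^2$, which is integrable since $\vb\sim\Nn(0,\Iden_n)$). This gives \eqref{constrained eq} with the maximum achieved as $\sigma\rightarrow 0$. I expect this convergence-and-interchange step to be the main obstacle, as it needs the set-convergence argument together with uniform integrability; the inequality chain itself is elementary once the $\s=0$ trick is in hand.

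Finally, for \eqref{constprox} I specialize to $\Cc=\{\x:f(\x)\leq f(\x_0)\}$, which is closed and convex (a sublevel set of a finite, hence continuous, convex function) and contains $\x_0$. Here $F_\Cc(\x_0)=\{\z:f(\x_0+\z)\leq f(\x_0)\}=F_f(\x_0)$, so the tangent cones coincide, $T_\Cc(\x_0)=T_f(\x_0)$, and \eqref{constrained eq} already gives worst-case NMSE $=\Deltf$. For the remaining equality, I use the stated fact that $\x_0$ not being a minimizer implies $T_f(\x_0)=\cn(\paf)^*$; the bipolar theorem then gives $T_f(\x_0)^*=\cn(\paf)^{**}=\cl(\cn(\paf))$, and since the distance to a set equals the distance to its closure, $\Deltf=\DD\cl(\cn(\paf)))=\DD\cn(\paf))=\Delxf$, completing the proof.
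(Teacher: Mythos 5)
Your proof is correct, but it takes a genuinely different route from the paper's on the hard part (tightness as $\sigma\rightarrow 0$). The paper's upper bound comes from its general optimality-condition machinery (Lemma \ref{optimal lem}, Lemma \ref{upper bound lem}, specialized via Proposition \ref{upper summary} with $f(\cdot)=0$), whereas you re-derive the key pointwise inequality $\|\bu(\vb,S_\sigma)\|_2^2\leq\|\bu(\vb,T_\Cc(\x_0))\|_2^2$ directly from the obtuse-angle property of projections evaluated at $\s=0$ together with Moreau's decomposition; this is the content of the paper's Lemma \ref{fees less tan}, obtained more elementarily. For the lower bound, the paper (Theorem \ref{finish constrained} in Appendix \ref{constrained case}) splits the Gaussian integral into three regions governed by parameters $\alpha,\eps_0$, and its tightness hinges on the \emph{uniform} cone-approximation result Proposition \ref{cone cool}, whose proof needs compactness of the sphere, continuity of the function $s(\w)$, and the geometric angle/monotonicity Lemmas \ref{angle1} and \ref{continuous increase}. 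You instead observe that $S_\sigma=\tfrac{1}{\sigma}F_\Cc(\x_0)$ is a nested increasing family of closed convex sets whose union is dense in $T_\Cc(\x_0)$, invoke pointwise convergence of projections onto such a family (a standard Mosco-type fact, provable in a few lines: distances decrease to $\dt(\vb,T_\Cc(\x_0))$ by density, projections are bounded, and any limit point must be the unique projection onto $T_\Cc(\x_0)$), and then interchange limit and expectation by dominated convergence, with domination by $\|\vb\|_2^2$ already supplied by your pointwise bound. Your argument is shorter and avoids the delicate geometric lemmas entirely; what the paper's heavier machinery buys is reusability---the uniform approximation and integral-splitting template of Proposition \ref{cone cool} runs parallel to the proof of Proposition \ref{asymp} for the regularized estimator (Theorem \ref{proximal}), where the minimizer is no longer a pure projection and your monotone-family/DCT shortcut is not available. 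Your handling of the specialization \eqref{constprox} is also slightly more careful than the paper's: you pass through the bipolar theorem, $T_f(\x_0)^*=\cl(\cn(\paf))$, and note that distance to a set equals distance to its closure, where the paper simply cites $T_f(\x_0)^*=\cn(\paf)$; both correctly require the standing assumption that $\x_0$ is not a minimizer of $f(\cdot)$.
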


\noindent{\bf{Remark:}} The constrained and the regularized denoising problems are closely related. To relate these, consider the right hand sides of \eqref{constprox} and \eqref{regprox}, in particular, the quantities, $\Delxf$ and $\Dlf$. Since $\la\paf$ is subset of $\text{cone}(\paf)$, for all $\la\geq 0$, one has,
\beq
\Delxf\leq\Dlf\nn.
\eeq
On the other hand, as it will be discussed later on, when $f(\cdot)$ is a norm, making use of the results of \cite{McCoy}, we can actually obtain,
\beq
0\leq  \min_{\la\geq0}\Dlf-\Delxf\leq 2\frac{\sup_{\s\in\paf}\|\s\|_2}{f(\frac{\x_0}{\|\x_0\|_2})}\label{sandwich}.
\eeq
%These two results shows that $|\min_{\la\geq0}\Dlf- \Delxf$ can be sharply upper bounded.

\eqref{sandwich} can be simplified for structure inducing functions. For example, when $\x_0$ is a $k$ sparse vector in $\R^n$, right hand side can be replaced by $2\sqrt{\frac{n}{k}}$. For the low-rank and block-sparse signals described in Section \ref{examples}, we can use $2\sqrt{\frac{d}{r}}$ and $2\sqrt{\frac{t}{k}}$ respectively.% In general, for sparsity inducing functions such as $\ell_1,\ell_{1,2}$ and the nuclear norm, we can say that, this ratio is the square-root of the ratio of the ambient dimension and the ``generalized sparsity'' of the signal. A detailed discussion and a different upper bound can be found in Section \ref{FurtherDis}. 

Our next result considers a mixture of constrained and regularized estimators.% achieve $f(\frac{\x}{\|\x\|})=\sqrt{k}$ while ensuring $\pa f(\x)=\paf$ by setting $\x=\text{sgn}(\x_0)$.
%Given a cone $\Kc$, we first used the fact that $\E[\dt(\g,\Kc)^2]=n-\delta(\Kc)

%As it will be seen soon, the connection between denoising and phase transitions is much more apparent when we consider $\eta_{L_f(\x_0)}(\x_0)$ rather than $\etaDN$.

\begin{thm}[General upper bound] \label{con mini} Assume $\Cc$ is a nonempty, convex and closed set and $f(\cdot):\R^n\rightarrow\R$ is a convex function. Assume $\x_0\in\Cc$, $\la\geq 0$ and $\vb\sim\Nn(0,\Iden_n)$. Consider the following estimator,
\beq\label{prox}
\prox_{f,\Cc}(\y,\sigma\la)=\arg\min_{\x\in\Cc} \sigma\la f(\x)+\frac{1}{2}\|\y-\x\|_2^2.%\label{constden}
\eeq
%We denote the error variance by $\beta(\la,\z)=\|\hx-\x_0\|^2$.\\
For any $\sigma>0$, we have,
\beq
\frac{\E[\|\prox_{f,\Cc}(\x_0+\sigma\vb,\sigma\la)-\x_0\|_2^2]}{\sigma^2}\leq\Dctf\label{conmini}.
\eeq
\end{thm}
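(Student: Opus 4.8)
The plan is to prove the bound pointwise in the noise realization: I will show that, writing $\x^*=\prox_{f,\Cc}(\x_0+\sigma\vb,\sigma\la)$ and $\w^*=\x^*-\x_0$, every realization of $\vb$ satisfies the deterministic inequality
\[
\|\w^*\|_2\le\sigma\,\dt\big(\vb,\ \la\paf+T_\Cc(\x_0)^*\big).
\]
Squaring, taking expectations, and dividing by $\sigma^2$ then gives \eqref{conmini} immediately, because the right-hand side becomes $\E[\dt(\vb,\la\paf+T_\Cc(\x_0)^*)^2]=\Dctf$ by the definition of MSD. No Gaussian comparison machinery is needed; the only probabilistic step is the final expectation.

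To obtain the deterministic inequality I first write the first-order optimality conditions for the strongly convex constrained program \eqref{prox}. With $\y=\x_0+\sigma\vb$, optimality produces vectors $\g^*\in\pa f(\x^*)$ and $\n^*\in T_\Cc(\x^*)^*$ (the normal cone of $\Cc$ at $\x^*$) with
\[
\sigma\vb-\w^*=\sigma\la\,\g^*+\n^*.
\]
I then take the inner product of this identity with $\w^*$ and use three convexity inequalities. First, monotonicity of the subdifferential gives $\langle\g^*-\s,\w^*\rangle\ge0$, hence $\langle\g^*,\w^*\rangle\ge\langle\s,\w^*\rangle$, for every $\s\in\paf$. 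Second, the normal-cone inequality at $\x^*$ gives $\langle\n^*,\w^*\rangle\ge0$ (because $\x_0\in\Cc$). Together these reduce the identity to $\|\w^*\|_2^2\le\sigma\langle\vb-\la\s,\w^*\rangle$. Third, the normal-cone inequality at $\x_0$ gives $\langle\p,\w^*\rangle\le0$ for every $\p\in T_\Cc(\x_0)^*$ (because $\x^*\in\Cc$), so I may replace $\vb-\la\s$ by $\vb-(\la\s+\p)$ on the right, only enlarging it: $\|\w^*\|_2^2\le\sigma\langle\vb-(\la\s+\p),\w^*\rangle$. Cauchy--Schwarz and cancellation of $\|\w^*\|_2$ yield $\|\w^*\|_2\le\sigma\|\vb-(\la\s+\p)\|_2$, and minimizing over $\s\in\paf$ and $\p\in T_\Cc(\x_0)^*$ gives exactly $\sigma\,\dt(\vb,\la\paf+T_\Cc(\x_0)^*)$, since $\la\s+\p$ sweeps out the Minkowski sum.

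The single most important technical point is that monotonicity is applied between the subgradient $\g^*$ at the \emph{optimal} point $\x^*$ and an arbitrary $\s$ at $\x_0$: this bridges the two points with no smallness assumption on $\w^*$ and, crucially, leaves $\|\w^*\|_2^2$ (not $\tfrac12\|\w^*\|_2^2$) on the left. A cruder comparison of objective values at $\x^*$ and the feasible point $\x_0$ would only give $\tfrac12\|\w^*\|_2^2\le\sigma\langle\vb-\la\s,\w^*\rangle$ and lose a factor of $2$, so the monotone-operator viewpoint is what keeps the constant sharp.

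The main obstacle I anticipate is the rigorous justification of the optimality conditions, namely the subdifferential sum rule $\pa\!\big(\sigma\la f+\tfrac12\|\y-\cdot\|_2^2+\iota_\Cc\big)(\x^*)=\sigma\la\,\pa f(\x^*)+(\x^*-\y)+T_\Cc(\x^*)^*$ that produces $\g^*$ and $\n^*$, together with the identification of the normal cone $T_\Cc(\x_0)^*$ as the polar of the tangent cone. Because $f$ is finite-valued (hence subdifferentiable everywhere) and $\Cc$ is closed and convex with $\x_0\in\Cc$, the required constraint qualification holds and these are standard facts \cite{Roc70,Bertse,Urru}; one only needs to note that the objective is strongly convex in $\x$, so the minimizer $\x^*$ is unique and well defined. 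The remaining points are routine: measurability of $\x^*$ in $\vb$ follows from uniqueness and continuity of the proximal map, and finiteness of the expectation follows because $\dt(\vb,\la\paf+T_\Cc(\x_0)^*)\le\|\vb-(\la\s_0+\p_0)\|_2$ for any fixed feasible pair $(\s_0,\p_0)$, which is square-integrable.
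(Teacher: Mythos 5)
Your proposal is correct and follows essentially the same route as the paper: the paper's Lemma \ref{optimal lem} gives the same optimality conditions $\tau\s+\ub+\w^*=\z$, its Lemma \ref{upper bound lem} combines exactly your three inequalities (subgradient monotonicity between $\x^*$ and $\x_0$, the normal-cone inequality at $\x^*$, and the normal-cone inequality at $\x_0$) to obtain $\|\w^*\|_2\leq \dt(\z,\tau\paf+T_\Cc(\x_0)^*)$ via the same Cauchy--Schwarz and minimization step, and Proposition \ref{upper summary} then squares, takes expectations, and normalizes by $\sigma^2$ just as you do. The only differences are organizational (the paper splits the argument into a lemma pair plus a proposition) and your added remarks on measurability and integrability, which the paper leaves implicit.
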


 \begin{figure}

  \begin{center}
{\includegraphics[width=0.6\textwidth,height=0.23\textheight]{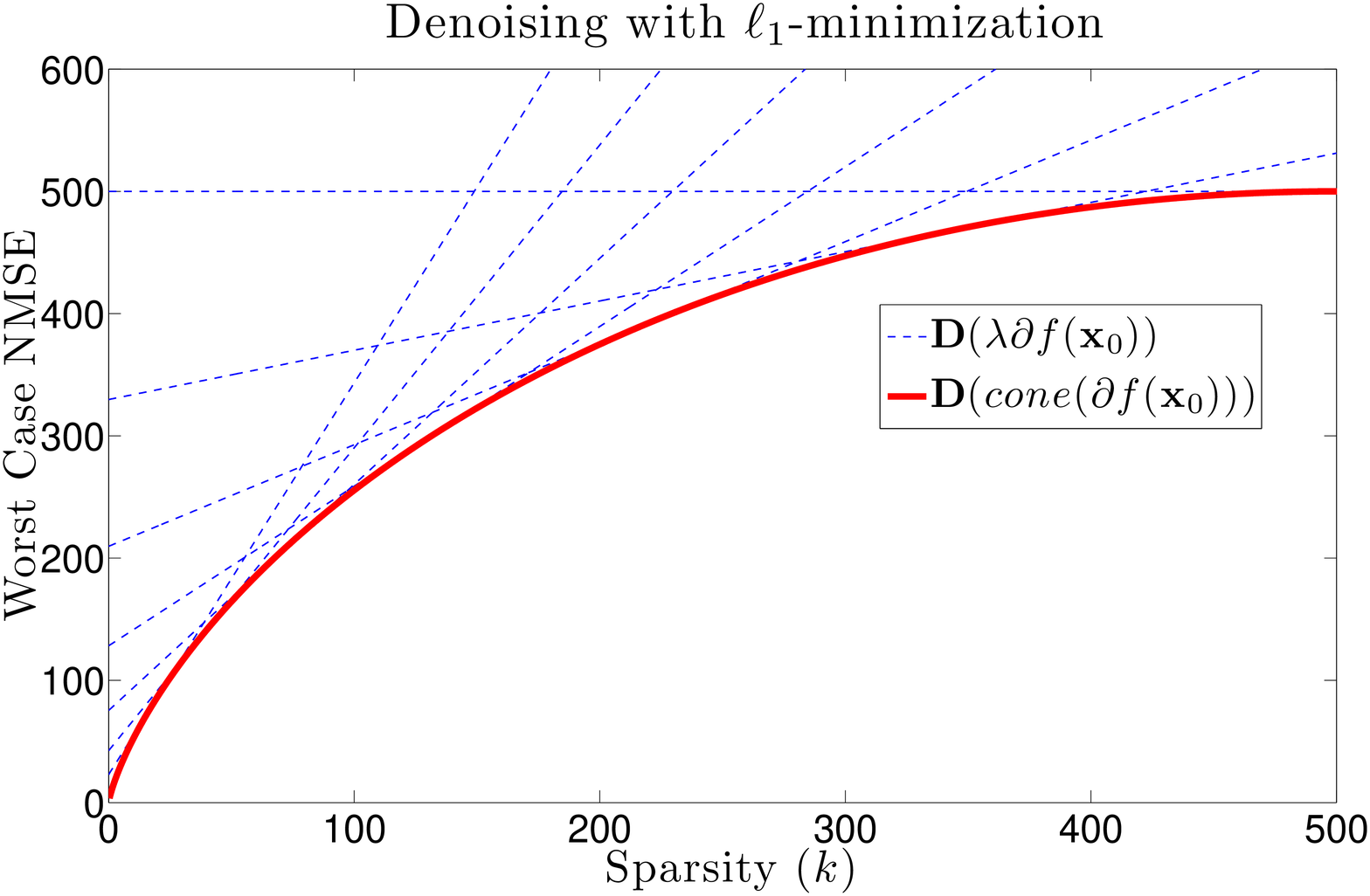}}  
  \end{center}
  \caption{{\small{We assumed $\x_0$ is a $k$ sparse vector in $\R^{500}$ and picked $f(\cdot)$ to be the $\ell_1$-norm. The solid red line corresponds to $\Delxf$ and dashed blue lines correspond to $\Dlf$ for $\la$ from $0$ to $1.5$. In case of $\ell_1$ minimization, both quantities are solely functions of the sparsity. Observe that $\Delxf$ is always upper bounded by $\Dlf$.}}}%hand side is the ``minimax risk'' as a function of sparsity. The dashed blue lines are the normalized estimation errors obtained by keeping a fixed $\frac{\la}{\sigma}$ in \eqref{DN} as sparsity varies. A constant $\frac{\la}{\sigma}$ results in a suboptimal tuning and the resulting error lies above the minimax risk. The right hand side is the phase transition behavior as a function of sparsity. The dark region is the experimental region of failure in which \eqref{BP} fails to recover $\x_0$ due to insufficient amount of measurements. Minimax risk curve and the phase transition curve are equal.}
 % \vspace*{-10pt}
 \label{ELL1Relation}
\end{figure}

\subsection{Connecting the NMSE to the linear inverse problem}\label{PT discussion}
We now turn our attention to the connection between the noiseless linear inverse problem and normalized MSE. Linear inverse problem is the problem of recovering a signal of size $n$ from $m$ noiseless linear observations $\y=\A\x_0$. To tackle this problem, when $m<n$, in a generic setup, a common approach is to make use of a structure inducing function $f(\cdot)$ and solving,
\beq
\min_\x f(\x)~~~\text{subject to}~~~\y=\A\x\label{LinInv}.
\eeq
It is known that, for structured signals, \eqref{LinInv} exhibits a \emph{phase transition} from failure to success as the number of observations increases. Works by Chandrasekaran et al. \cite{Cha} and Amelunxen et al. \cite{McCoy} showed that, this phase transition occurs around $m\approx \Delxf$. This is impressive as $\Delxf$ not only corresponds to the worst case NMSE of the constrained denoising \eqref{constrained prox}; but also to a seemingly unrelated problem \eqref{LinInv}. This observation was first made by Donoho et al. in \cite{Donoho}. Authors in \cite{Donoho} effectively claimed that, the worst case NMSE of optimally tuned \eqref{proxmain2} corresponds to the phase transition point of \eqref{LinInv}. We show that, this is indeed the case as $\min_{\la\geq 0} \Dlf\approx \Delxf$ (recall \eqref{sandwich}). Hence, our results combined with \cite{McCoy,Cha} rigorously justifies the claims in \cite{Donoho}.

\subsection{Relation to LASSO}
As a next step, we will now briefly extend our results to the estimation of $\x_0$ from noisy linear observations of $\x_0$. While many generic compressed sensing problems deal with matrices with independent identically distributed entries, we will be stating our results for a random partial unitary matrix $\A$. In other words, $\A$ is generated uniformly at random among the matrices satisfying $\A\in\R^{m\times n}$ and $\A\A^T=\Iden_{m}$.

Matrices with independent standard normal entries and uniformly random unitary matrices are clearly not identical; however, they are closely related. For instance, both matrices have uniformly distributed null spaces with respect to the Haar measure. This ensures that, the two matrices have statistically identical behavior for the purposes of the noiseless linear inverse problem \eqref{LinInv}, \cite{McCoy}.

We observe $\y=\A\x_0+\sigma\vb$ where $\vb\sim \Nn(0,\Iden)$. This is the standard setup for noisy linear inverse problems and it is used in numerous papers including \cite{Tikh,BicRit,BayMon,BunTsy,Mon,Koltc}. In order to estimate $\x_0$ from $\y$, we will be using the following generalized setup,
%We will consider analysis of the following constrained problem.
\begin{align}
\min_\x\|\y-\A\x\|_2^2~~~\text{subject to}~~~\x\in \Cc\label{LASSOopt}.
\end{align}
We can again relate this problem to the classic LASSO when $\x_0$ is sparse and when we choose $\Cc=\{\x\in\R^n~\big|~\|\x\|_1\leq \|\x_0\|_1\}$. When $f(\cdot)$ is an arbitrary function, we can use $\Cc=\{\x\in\R^n~\big|~f(\x)\leq f(\x_0)\}$.

We have the following result that relates optimal LASSO cost and the LASSO error to $\Delxc$ and also $\Delxf$.
\begin{thm}\label{LASSO} Assume $\Cc$ is a nonempty, closed and convex set and $\x_0\in\Cc$. Assume $\A\in\R^{m\times n}$ is a partial unitary matrix generated uniformly at random. Let the noise vector $\vb\sim\Nn(0,\Iden)$ be independent of $\A$ and $\y=\A\x_0+\sigma\vb$. Denote the minimizer of \eqref{LASSOopt} by $\x^*=\x^*(\y,\A)$. Conditioned on $\A$, define the normalized LASSO cost and ``projected LASSO error'' as follows,
\beq
F_{LASSO}(\A)=\lim_{\sigma\rightarrow 0}\frac{\E[\|\y-\A\x^*\|_2^2]}{\sigma^2},~~~\eta_{LASSO}(\A)=\max_{\sigma>0}\frac{\E[\|\A\x^*-\A\x_0\|_2^2]}{\sigma^2},\nn
\eeq
where the expectation is over $\vb$. Then, there exists constants $c_1,c_2>0$ such that,
\begin{itemize} 
\item Whenever $m< \Delxc$, with probability $1-c_1\exp(-c_2\frac{(m-\Delxc)^2}{n})$,
\beq
\eta_{LASSO}(\A)=m~~\text{and}~~F_{LASSO}(\A)=0.\nn
\eeq
\item Whenever $m>\Delxc$, with probability $1-c_1\exp(-c_2t^2)$,
\begin{align}
\Delxc-t\sqrt{n}\leq& \eta_{LASSO}(\A)\leq \Delxc+t\sqrt{n},\nn\\
m-\Delxc-t\sqrt{n}\leq& F_{LASSO}(\A)\leq m-\Delxc+t\sqrt{n}\nn
\end{align}
\end{itemize}
Here, the probabilities are over the random measurement matrix $\A$.
\end{thm}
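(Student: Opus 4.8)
The plan is to strip the estimator \eqref{LASSOopt} down to a projection onto the tangent cone lying inside the (random) row space of $\A$, to derive a conservation law tying $F_{LASSO}(\A)$ and $\eta_{LASSO}(\A)$ together, and then to invoke the integral--geometry estimates of \cite{McCoy} to pin down the single remaining unknown. First I would pass to the small--noise limit exactly as in Theorems \ref{constrained} and \ref{con mini}. Writing $\x=\x_0+\sigma\h$, the cost in \eqref{LASSOopt} becomes $\sigma^2\|\vb-\A\h\|_2^2$ and the feasible set for $\h$ is $\sigma^{-1}(\Cc-\x_0)$. Since $\Cc$ is convex with $\x_0\in\Cc$, this set is star--shaped about the origin and increases monotonically to $T_\Cc(\x_0)$ as $\sigma\downarrow 0$. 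For nested convex cones the residual is nonincreasing and the projection norm nondecreasing (by the Pythagorean identity for cone projections), so the limit defining $F_{LASSO}(\A)$ and the maximum defining $\eta_{LASSO}(\A)$ are both attained as $\sigma\to 0$, where the problem reduces to $\min_{\h\in T_\Cc(\x_0)}\|\vb-\A\h\|_2^2$.

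Next I would exploit the partial--unitary structure $\A\A^T=\Iden_m$. Let $V=\Rc(\A^T)$ be the Haar--random $m$--dimensional row space, $V^\perp=\Nc(\A)$, and $P_V=\A^T\A$ the orthogonal projector onto $V$. Because $\A^T:\R^m\to V$ is then an isometry, $\g:=\A^T\vb$ is a standard Gaussian supported on $V$ with $\E\|\g\|_2^2=m$, and for every $\h$ one has $\|\vb-\A\h\|_2=\|\g-P_V\h\|_2$ and $\|\A\h\|_2=\|P_V\h\|_2$. Thus the cone problem is precisely the projection of $\g$ onto the convex cone $P_V T_\Cc(\x_0)$. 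Orthogonality of projection onto a convex cone splits $\|\g\|_2^2$ into the squared residual plus the squared projection, and taking expectations over $\vb$ yields the conservation law $F_{LASSO}(\A)+\eta_{LASSO}(\A)=m$. Hence it suffices to compute one of the two terms, say the cost.

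I would then identify that cost with a mean--squared distance in the full space. Eliminating the unconstrained $V^\perp$--component shows $\dt(\g,P_V T_\Cc(\x_0))=\dt(\tilde\g,\cl(T_\Cc(\x_0)+V^\perp))$ for a full Gaussian $\tilde\g\sim\Nn(0,\Iden_n)$ whose $V$--component has the law of $\g$, so that $F_{LASSO}(\A)=\mathbf{D}(\cl(T_\Cc(\x_0)+V^\perp))=\delta\big(T_\Cc(\x_0)^*\cap V\big)$, where I use that the polar of a Minkowski sum is the intersection of the polars, that $(V^\perp)^*=V$, and the identity $\mathbf{D}(K)=\delta(K^*)$ for a closed convex cone. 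Since $\delta(T_\Cc(\x_0))=\Delxc$ and $\delta(T_\Cc(\x_0)^*)=n-\Delxc$, the entire theorem is now reduced to controlling the statistical dimension of a fixed cone intersected with a uniformly random $m$--dimensional subspace.

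The hard part, and the final step, is this concentration, which is governed entirely by the integral--geometry machinery of \cite{McCoy}. When $m<\Delxc$, the random subspace $V$ meets the cone $T_\Cc(\x_0)^*$ (of statistical dimension $n-\Delxc$) only at the origin with probability $1-c_1\exp(-c_2(\Delxc-m)^2/n)$; this forces $\cl(T_\Cc(\x_0)+V^\perp)=\R^n$, hence $F_{LASSO}(\A)=0$ and, by the conservation law, $\eta_{LASSO}(\A)=m$. When $m>\Delxc$, the approximate kinematic formula gives $\delta(T_\Cc(\x_0)^*\cap V)\approx (n-\Delxc)+m-n=m-\Delxc$, and the Gaussian (Lipschitz) concentration of the distance function together with the deviation bounds of \cite{McCoy} yields the two--sided estimates at scale $t\sqrt{n}$ with probability $1-c_1\exp(-c_2 t^2)$; feeding these through $\eta_{LASSO}(\A)=m-F_{LASSO}(\A)$ produces the stated bounds on both quantities. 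I expect this step to be the main obstacle, since the threshold and both tail exponents come from the kinematic estimates; the only additional care needed elsewhere is the routine justification of the $\sigma\to 0$ limit and of the closure operations ensuring that all cones involved are closed and convex.
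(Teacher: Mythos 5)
Your proposal is correct, and its skeleton coincides with the paper's: reduce to the small-noise cone problem, use the conservation law $F_{LASSO}(\A)+\eta_{LASSO}(\A)=m$, identify the remaining unknown with the statistical dimension of $T_\Cc(\x_0)^*\cap\Rc(\A^T)$, and invoke the kinematic formula of \cite{McCoy} with threshold $m=\Delxc$; the threshold, tail exponents, and both cases come out identically. The differences are in the bookkeeping. The paper works inside $\R^m$: it applies Theorem \ref{constrained} to the set $\A\Cc$, so that $\eta_{LASSO}(\A)={\bf{D}}(T_{\A\Cc}(\A\x_0)^*)$, then proves Lemma \ref{ACTangent}, namely $T_{\A\Cc}(\A\x_0)^*=\A\bigl(T_\Cc(\x_0)^*\cap\Rc(\A^T)\bigr)$, and pulls the statistical dimension through the isometry; the cost is recovered afterwards from $F_{LASSO}=m-\eta_{LASSO}$ (Proposition \ref{prop objective}). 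You instead lift the Gaussian to $\R^n$ and compute $F_{LASSO}$ first, via $\dt(\g,\A^T\A\, T_\Cc(\x_0))=\dt(\tilde{\g},\cl(T_\Cc(\x_0)+V^\perp))$ and the polarity $(T_\Cc(\x_0)+V^\perp)^*=T_\Cc(\x_0)^*\cap V$; this is an equivalent, arguably cleaner, dual formulation of Lemma \ref{ACTangent}. Two steps in your write-up deserve more care. First, your justification of the $\sigma\rightarrow 0$ reduction (``Pythagorean identity for cone projections'') is not right as stated, because the rescaled feasible sets $\sigma^{-1}(\Cc-\x_0)$ are convex sets, not cones, and projection norms are \emph{not} monotone along general nested convex sets; the monotonicity of $\|\bu(t\w,F_\Cc(\x_0))\|_2/t$ under scaling is true but is exactly the content of the paper's Lemma \ref{angle1} and Lemma \ref{continuous increase} (or one can simply cite Theorem \ref{constrained}, as the paper's proof does). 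Second, the two-sided concentration of $\delta(T_\Cc(\x_0)^*\cap V)$ around $m-\Delxc$ is not a direct quotation from \cite{McCoy}: Theorem \ref{kinematics} only controls whether the intersection is trivial, and the paper has to prove the statistical-dimension bound as a separate result (Proposition \ref{inter dim}, obtained by intersecting with a second independent random subspace and applying Theorem \ref{kinematics} twice). You correctly isolate this as the crux and it is indeed derivable from the integral-geometry machinery you cite, but it is a step that must be proved rather than invoked.
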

This theorem suggests that, $\eta_{LASSO}$ and $F_{LASSO}$ has a phase transition around the point $m\approx \Delxc$; which not only shows up in \eqref{constprox} but also corresponds to the phase transition point of \eqref{LinInv} (setting $\Cc=\{\x\in\R^n~\big|~f(\x)\leq f(\x_0)\}$). When $m<\Delxc$, one cannot recover $\x_0$, even from the noiseless observations $\A\x_0$; as a result, it is futile to expect noise robustness. In this regime, our result suggests, $\eta_{LASSO}=m$. When the number of measurements are more than the phase transition point, interestingly, $\eta_{LASSO}$ stops growing proportionally to $m$ and takes a value around $\Delxc$ depending on the particular realization of $\A$.

While this is an interesting phenomenon, we should emphasize that, the more critical question is $\frac{\E[\|\x^*-\x_0\|_2^2]}{\sigma^2}$ rather than the projection of the error term on $\A$. This problem has been investigated for Gaussian measurement matrices and for $\ell_1$-minimization in a series of work, \cite{BayMon,Mon,StojLAS,StojSOCP}. However, to the best of our knowledge, Theorem \ref{LASSO} is the first result, that relates the LASSO cost and error to the convex geometry of the problem in a general framework.

\subsection{Organization of the paper}

The rest of this work is dedicated to the technical aspects of the theorems stated in this section. A summary of the consequent sections are as follows.

\vspace{3pt}
%While our main result is summarized in Theorem \ref{MSEthm}, we will be considering several problems related to minimax denoising. In general, there will be five main discussion topics.\\
%\noindent {$\bullet$ \bf{First order approximation:}} In order to interpret and understand Theorem \ref{MSEthm} in a simpler way, instead of $f(\x)$ we will use its first order approximation $\hat{f}_{\x_0}(\cdot)$ as the regularization for the denoising problem (\ref{prox}).\\

\noindent {$\bullet$ \bf{Section \ref{sec upbound}, Upper bounding the error:}} We will start by introducing the optimality conditions for the problem \eqref{prox}. Then, we will find a tight upper bound to the resulting MSE. This will prove Theorem \ref{con mini}.% Using the optimality conditions,  prove the upper bound given in (\ref{upbound}) which works for arbitrary choices of $\la$ and $\z$. \\

\vspace{3pt}

\noindent {$\bullet$ \bf{Section \ref{lower for f}, Lower bounding the error:}} We will restrict our attention to the analysis of the proximity operator \eqref{proxmain2}. As $\sigma\rightarrow 0$, we will find a lower bound; which is arbitrarily close to the upper bound. This will prove Theorem \ref{proximal}. We use a similar argument for the constrained problem described in Theorem \ref{constrained}.

\vspace{3pt}

\noindent{$\bullet$ \bf{Section \ref{FurtherDis}, Further discussion}}: Following from \eqref{sandwich}, we discuss the relation between \emph{optimally tuned} (over $\la$) ``regularized estimator'' \eqref{proxmain2} and the ``constrained estimator'' \eqref{constrained prox}.

\vspace{3pt}

\noindent{$\bullet$ \bf{Section \ref{sec proj error}, Connection to LASSO:}} The proof of Theorem \ref{LASSO} will be provided. This proof requires some technical details related to the intersection of a random subspace and a cone. In particular, we will characterize certain properties of the intersection by using a modification of the results provided in \cite{McCoy}. We will also give examples to illustrate generality of our results.%show that, while Theorem \ref{LASSO} is stated for partial unitary measurement matrices, Gaussian measurements has the exact same behavior.

\section{Upper Bound and Its Interpretation}\label{sec upbound}
%{\bf{Notation:}} When $\g\sim\Nn(0,I)$, we will use $\Dx\tau f,\Cc)$ to denote the expectation $\E[\dt(\g,\la \paf+T_\Cc(\x_0)^*)^2]$. This is similar to $\Dx\la f)$ but we have additionally included the set $\Cc$.% to denote the expectation $\E[\dt(\g,\tau \paf)^2]$. Also $\Delxf=\delta(T_f(\x_0))$ and $\Delxc=\delta(T_\Cc(\x_0))$.

For the rest of the paper, we will make the following assumptions.
\begin{itemize}
\item $f(\cdot):\R^n\rightarrow \R$ is a convex function.
\item $\Cc\subset\R^n$ is a nonempty, convex and closed set.
\item $\x_0\in\Cc$.
\end{itemize}

\noindent {\bf{Notation:}} When it is clear from context, the minimizer of an optimization over $\x$ will be denoted by $\x^*$. Following from Section \ref{notate1}, for a closed and convex set $\Cc$, the ``distance vector'' from $\Cc$ to $\x$ is defined as $\x-\bu(\x,\Cc)$ and is denoted by $\Pi(\x,\Cc)$. Since projection is the nearest point, $\|\Pi(\x,\Cc)\|_2=\dt(\x,\Cc)$.

To distinguish the deterministic analysis from stochastic analysis, we will allocate $\tau$ and solve,
\beq\label{con mini2}
\x^*=\arg\min_{\x\in\Cc}\tau f(\x)+\frac{1}{2}\|\y-\x\|_2^2
\eeq
when the noise is deterministic. When noise is distributed as $\Nn(0,\sigma^2\Iden)$, we will set $\tau=\sigma\la$ and $\la$ will correspond to the penalty for stochastic noise.

\subsection{Upper bounding the error}
\begin{lem} [Optimality conditions] \label{optimal lem} Let $\y=\x_0+\z$ and assume $f(\cdot)$ is a convex function and $\Cc$ is a convex and closed set containing $\x_0$. Let $\tau\geq 0$. Consider the problem \eqref{con mini2}. $\x^*$ is the unique optimal solution if and only if there exists $\s\in \pa f(\x^*)$ and $\ub\in T_\Cc(\x^*)^*$ such that:
\beq
\tau \s+\ub+\x^*=\y\label{KKT1}
\eeq
Alternatively, we can state \eqref{KKT1} in terms of the error vector $\w^*=\x^*-\x_0$ and the noise $\z$, as follows.
\beq
\tau \s+\ub+\w^*=\z\label{KKT2}
\eeq

\end{lem}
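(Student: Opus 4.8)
The plan is to read \eqref{con mini2} as the unconstrained minimization of the extended-real-valued convex function $g(\x)=\tau f(\x)+\frac{1}{2}\|\y-\x\|_2^2+\iota_\Cc(\x)$, where $\iota_\Cc$ is the $\{0,+\infty\}$-valued indicator of $\Cc$, and to extract the stated conditions from the single inclusion $0\in\pa g(\x^*)$. First I would dispose of existence and uniqueness: the quadratic term is $1$-strongly (hence strictly) convex, $f$ is convex and finite-valued on all of $\R^n$, and $\Cc$ is nonempty, closed and convex, so $g$ is a closed, proper, strictly convex and coercive function and therefore possesses a unique minimizer $\x^*$. The strict convexity is exactly what furnishes the uniqueness asserted in the lemma, and I would record this at the outset.

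Next I would invoke the first-order characterization of minimizers of convex functions, namely that $\x^*$ is the global minimizer of $g$ if and only if $0\in\pa g(\x^*)$. The crux is to split this subdifferential. Because $\text{dom}\,f=\R^n$ while $\Cc$ is nonempty convex, the relative interiors of the two domains intersect ($\text{ri}(\R^n)\cap\text{ri}(\Cc)=\text{ri}(\Cc)\neq\emptyset$), so the subdifferential sum rule applies and gives $\pa g(\x^*)=\tau\,\pa f(\x^*)+\{\x^*-\y\}+\pa\iota_\Cc(\x^*)$, where $\x^*-\y$ is the gradient of the (smooth) quadratic. I would then use the two standard identifications $\pa\iota_\Cc(\x^*)=N_\Cc(\x^*)$, the normal cone of $\Cc$ at $\x^*$, and $N_\Cc(\x^*)=T_\Cc(\x^*)^*$: a vector $\ub$ satisfies $\langle\ub,\x-\x^*\rangle\le0$ for every $\x\in\Cc$ precisely when it is polar to all feasible directions, i.e. to the closed cone $T_\Cc(\x^*)$ from Section \ref{notate1}. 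Thus $0\in\pa g(\x^*)$ unfolds to: there exist $\s\in\pa f(\x^*)$ and $\ub\in T_\Cc(\x^*)^*$ with $0=\tau\s+(\x^*-\y)+\ub$, which is \eqref{KKT1}. Substituting $\y=\x_0+\z$ and $\w^*=\x^*-\x_0$ then yields \eqref{KKT2} at once.

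For completeness I would give the sufficiency (``if'') direction directly, since it needs neither the sum rule nor uniqueness machinery. Assume $\s\in\pa f(\x^*)$, $\ub\in T_\Cc(\x^*)^*$ and $\tau\s+\ub+\x^*=\y$. For any $\x\in\Cc$, convexity of $f$ gives $\tau f(\x)\ge\tau f(\x^*)+\langle\tau\s,\x-\x^*\rangle$, while the quadratic obeys $\frac12\|\y-\x\|_2^2=\frac12\|\y-\x^*\|_2^2+\langle\x^*-\y,\x-\x^*\rangle+\frac12\|\x-\x^*\|_2^2$. Adding these and using $\tau\s+\x^*-\y=-\ub$ together with $\langle\ub,\x-\x^*\rangle\le0$ (because $\x-\x^*\in T_\Cc(\x^*)$ for $\x\in\Cc$) yields $g(\x)\ge g(\x^*)+\frac12\|\x-\x^*\|_2^2$, so $\x^*$ is the strict global minimizer; the ``only if'' direction is the content of the subdifferential computation above.

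The only genuinely delicate points are the justification of the subdifferential sum rule (equivalently, that no separating-hyperplane pathology occurs at the boundary of $\Cc$) and the polarity identity $N_\Cc(\x^*)=T_\Cc(\x^*)^*$; I expect these, rather than the algebra, to be where care is required. Both hold here because $f$ is finite everywhere and $\Cc$ is closed convex, so I would cite the standard convex-analysis references \cite{Roc70,Bertse,Urru} for them instead of reproving them, leaving only the routine bookkeeping of expanding the quadratic gradient and changing variables from $(\x^*,\y)$ to $(\w^*,\z)$.
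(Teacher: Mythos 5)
Your proof is correct and follows essentially the same route as the paper: the paper also derives \eqref{KKT1} from the first-order subdifferential optimality condition for the convex objective $\tau f(\x)+\frac{1}{2}\|\y-\x\|_2^2$ over $\Cc$ (citing Proposition 4.7.2 of \cite{Bertse}, which is the constrained-minimization form of your $0\in\pa g(\x^*)$ computation combined with the identification $N_\Cc(\x^*)=T_\Cc(\x^*)^*$), and it likewise gets uniqueness from strict convexity of the objective. Your indicator-function packaging with the sum rule, and your explicit sufficiency argument giving $g(\x)\geq g(\x^*)+\frac{1}{2}\|\x-\x^*\|_2^2$, are simply self-contained versions of what the paper delegates to the citation.
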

\begin{proof} The fact that ``$\x^*$ is the optimal solution if and only if \eqref{KKT1} holds'' follows from Proposition 4.7.2 of \cite{Bertse} and considering the subdifferential of the objective $\tau f(\x)+\frac{1}{2}\|\y-\x\|_2^2$. On the other hand, uniqueness follows from the strict convexity of the objective function of the proximity operator \eqref{con mini2}.
\end{proof}

The next lemma provides an upper bound on the $\ell_2$ norm of the error term $\w^*=\x^*-\x_0$.
\begin{lem}\label{upper bound lem} Consider the problem \eqref{con mini2}. Let $\w^*=\x^*-\x_0$. Assuming the setup of Lemma \ref{optimal lem}, we have:
\beq
\|\w^*\|_2\leq \dt(\z,\tau \pa f(\x_0)+T_\Cc(\x_0)^*).\nn
\eeq
\end{lem}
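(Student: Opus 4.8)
The plan is to reduce the statement to a pointwise inequality and then optimize. Write $\p=\tau\s_0+\ub_0$ for an arbitrary element of the set $\tau\pa f(\x_0)+T_\Cc(\x_0)^*$, so that $\s_0\in\pa f(\x_0)$ and $\ub_0\in T_\Cc(\x_0)^*$. Since $\dt(\z,\tau\pa f(\x_0)+T_\Cc(\x_0)^*)=\inf_{\s_0,\ub_0}\|\z-\p\|_2$, it suffices to establish the single bound $\|\w^*\|_2\leq\|\z-\p\|_2$ for each such $\p$ and then pass to the infimum over $\s_0$ and $\ub_0$. The workhorse is the optimality characterization of Lemma \ref{optimal lem}, which furnishes $\s\in\pa f(\x^*)$ and $\ub\in T_\Cc(\x^*)^*$ with $\z=\tau\s+\ub+\w^*$.

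First I would estimate the inner product $\langle\w^*,\z-\p\rangle$ from below. Substituting the optimality relation and expanding gives
\beq
\langle\w^*,\z-\p\rangle=\|\w^*\|_2^2+\tau\langle\w^*,\s-\s_0\rangle+\langle\w^*,\ub-\ub_0\rangle.\nn
\eeq
The goal is to show that the two cross terms are nonnegative, which yields $\langle\w^*,\z-\p\rangle\geq\|\w^*\|_2^2$.

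The first cross term is controlled by monotonicity of the subdifferential of the convex function $f$: since $\s\in\pa f(\x^*)$ and $\s_0\in\pa f(\x_0)$, one has $\langle\x^*-\x_0,\s-\s_0\rangle\geq 0$, i.e. $\langle\w^*,\s-\s_0\rangle\geq 0$. The second cross term is where the main (though still routine) work lies, and it is handled through the geometry of the feasible directions. Because both $\x_0$ and $\x^*$ belong to $\Cc$, we have $\x^*-\x_0\in F_\Cc(\x_0)\subseteq T_\Cc(\x_0)$ and $\x_0-\x^*\in F_\Cc(\x^*)\subseteq T_\Cc(\x^*)$. By the definition of the polar cone, $\ub\in T_\Cc(\x^*)^*$ forces $\langle\ub,\x_0-\x^*\rangle\leq 0$, i.e. $\langle\ub,\w^*\rangle\geq 0$; likewise $\ub_0\in T_\Cc(\x_0)^*$ forces $\langle\ub_0,\x^*-\x_0\rangle\leq 0$, i.e. $\langle\ub_0,\w^*\rangle\leq 0$. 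Subtracting gives $\langle\w^*,\ub-\ub_0\rangle\geq 0$.

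Combining the two signs yields $\|\w^*\|_2^2\leq\langle\w^*,\z-\p\rangle\leq\|\w^*\|_2\,\|\z-\p\|_2$ by Cauchy--Schwarz, hence $\|\w^*\|_2\leq\|\z-\p\|_2$ (the case $\w^*=\mathbf{0}$ being trivial). Taking the infimum over $\s_0\in\pa f(\x_0)$ and $\ub_0\in T_\Cc(\x_0)^*$ converts the right-hand side into $\dt(\z,\tau\pa f(\x_0)+T_\Cc(\x_0)^*)$, completing the argument. The only delicate point is correctly translating membership of $\x_0$ and $\x^*$ in $\Cc$ into the two tangent-cone inclusions and reading off the right signs from the polar-cone definition; everything else is monotonicity and a Cauchy--Schwarz bookkeeping step.
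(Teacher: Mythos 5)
Your proof is correct and follows essentially the same route as the paper's: the same optimality decomposition from Lemma \ref{optimal lem}, the same subgradient monotonicity bound on the $\s$-term, and the same two polar-cone sign arguments for the $\ub$-term, finished with Cauchy--Schwarz. The only cosmetic difference is that you pass to the infimum over all $\p$ at the end (which sidesteps the closedness remark), whereas the paper instantiates $\w_0$ as the projection $\Pi(\z,\tau\pa f(\x_0)+T_\Cc(\x_0)^*)$; the content is identical.
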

\begin{proof} Before starting the proof we emphasize that $\tau\paf+T_\Cc(\x_0)^*$ is closed due to the fact that $\tau\paf$ is compact and $T_\Cc(\x_0)^*$ is closed. From the optimality conditions, there exists $\s\in \pa f(\x^*)$ and $\ub\in T_\Cc(\x^*)^*$ such that \eqref{KKT2} holds. Now, let $\Kc:=\Kc(\Cc,f,\x_0)=\tau \pa f(\x_0)+T_\Cc(\x_0)^*$. Further, let $\s_0\in\pa f(\x_0)$, $\ub_0\in T_\Cc(\x_0)^*$ and $\w_0=\z-\tau\s_0-\ub_0$.
We will first explore the relation between the dual vectors and $\w^*$. The following inequality follows from the standard properties of the subgradients.
\beq
(\w^*)^T\s\geq f(\x_0+\w^*)-f(\x_0)\geq (\w^*)^T\s_0\implies (\w^*)^T(\s-\s_0)\geq 0\label{my sub eq}
\eeq
Since $\ub\in T_\Cc(\x^*)^*$ and $-\w^*\in T_\Cc(\x^*)$, we have:
\beq
\li\ub,\w^*\ri\geq 0\label{my w eq1}
\eeq
Similarly, $\ub_0\in T_\Cc(\x_0)^*$ and $\w^*\in T_\Cc(\x_0)$, hence:
\beq
\li\ub_0,\w^*\ri\leq 0\label{my w eq2}
\eeq
Overall, combining \eqref{my sub eq}, \eqref{my w eq1} and \eqref{my w eq2}, we find:
\beq
\li\w^*,(\ub+\tau\s)-(\ub_0+\tau\s_0)\ri\geq 0\label{importw}
\eeq
From \eqref{importw}, we will conclude that $\|\w^*\|_2\leq \|\w_0\|_2$. \eqref{importw} is equivalent to:
\beq
\li\w^*,(\z-\w^*)-(\z-\w_0)\ri=\li\w^*,\w_0-\w^*\ri\geq 0\implies \|\w^*\|_2^2\leq  \li\w^*,\w_0\ri\leq \|\w^*\|_2\|\w_0\|_2\label{importw2}
\eeq
Hence, we indeed have: $\|\w_0\|_2\geq \|\w^*\|_2$. Since this is true for all $\s_0,\ub_0$, we can choose the $\w_0$ with the shortest length, namely, choose $\w_0$ to be the distance vector $\w_0=\Pi(\z,{T_\Cc(\x_0)^*+\tau\pa f(\x_0)})$ to conclude.
\end{proof}

As a direct application of Lemma \ref{upper bound lem}, we can upper bound the expected errors as follows when $\z$ is stochastic.
\begin{propo} \label{upper summary} Let $\sigma>0$, $\la\geq 0$. In problem \eqref{con mini2}, let $\z=\sigma\vb$,  where $\vb\sim\Nn(0,\Iden)$. Then,
\begin{itemize}
\item Denote the minimizer of \eqref{con mini2} by $\x^*=\x^*(\tau,\z)$. Letting $\tau=\sigma\la$,
\beq
\frac{\E[\|\x^*-\x_0\|_2^2]}{\sigma^2}\leq \Dctf.\label{always upper}
\eeq
\item Let $\la^*=\arg\min_{\la\geq 0}\Dctf\label{tau^* is good}$. By choosing $\tau=\la^*\sigma$ in \eqref{con mini2} and using \eqref{always upper} we have,
\beq
\frac{\E[\|\x^*-\x_0\|_2^2]}{\sigma^2}\leq \min_{\la\geq 0} \Dctf\label{good estimate}
\eeq
%\item Setting In Proposition \ref{con mini}, we have $\eta_{f,\Cc}(\x_0)\leq\min_{\la\geq 0}\Dx\la f,\Cc)$, i.e. \eqref{conmini} holds.
\item Setting $\Cc=\R^n$ in \eqref{always upper}, we show $\leq $-direction of Theorem \ref{proximal}.%In Definition \ref{minimax}, we have $\etaDM,\etaDN\leq \min_{\la\geq 0}\Dx\la f)$.
\item Setting $f(\cdot)=0$ in \eqref{always upper}, we show $\leq$-direction of Theorem \ref{constrained}.%In Definition \ref{constrained}, we have $\eta_{\Cc}(\x_0)\leq \Delxc$.
\end{itemize}
\end{propo}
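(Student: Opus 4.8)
The plan is to invoke Lemma \ref{upper bound lem} directly and then exploit two elementary facts: the positive homogeneity of the distance function and the scale-invariance of the polar cone $T_\Cc(\x_0)^*$. First I would substitute $\z=\sigma\vb$ and $\tau=\sigma\la$ into the deterministic estimate $\|\w^*\|_2\leq \dt(\z,\tau\paf+T_\Cc(\x_0)^*)$ of Lemma \ref{upper bound lem}, obtaining $\|\w^*\|_2\leq \dt(\sigma\vb,\sigma\la\paf+T_\Cc(\x_0)^*)$. Because $T_\Cc(\x_0)^*$ is a polar cone, it satisfies $\sigma\,T_\Cc(\x_0)^*=T_\Cc(\x_0)^*$ for every $\sigma>0$, so $\sigma\la\paf+T_\Cc(\x_0)^*=\sigma\big(\la\paf+T_\Cc(\x_0)^*\big)$. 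Then the identity $\dt(\sigma\vb,\sigma S)=\sigma\,\dt(\vb,S)$ gives the pathwise bound $\|\w^*\|_2\leq \sigma\,\dt(\vb,\la\paf+T_\Cc(\x_0)^*)$.

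Squaring this inequality, taking expectation over $\vb\sim\Nn(0,\Iden)$, and dividing by $\sigma^2$ yields $\frac{\E[\|\w^*\|_2^2]}{\sigma^2}\leq \E[\dt(\vb,\la\paf+T_\Cc(\x_0)^*)^2]$. Since $\vb$ has the same law as the standard Gaussian $\g$ in the definition of MSD, the right-hand side is exactly $\Dctf$, which establishes \eqref{always upper}. The second bullet is then immediate: \eqref{always upper} holds for every fixed $\la\geq 0$, so evaluating it at $\la=\la^*$ and noting that the left-hand side is the error of the estimator \eqref{con mini2} tuned at $\tau=\la^*\sigma$ puts the minimum $\min_{\la\geq 0}\Dctf$ on the right, giving \eqref{good estimate}.

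For the two specializations I would simply compute the relevant sets. Taking $\Cc=\R^n$ gives $T_\Cc(\x_0)=\R^n$ and hence $T_\Cc(\x_0)^*=\{0\}$, so $\la\paf+T_\Cc(\x_0)^*=\la\paf$ and \eqref{con mini2} collapses to the proximity operator \eqref{proxmain2}; thus \eqref{always upper} reads $\frac{\E[\|\x^*-\x_0\|_2^2]}{\sigma^2}\leq \Dlf$, the ``$\leq$'' half of Theorem \ref{proximal}. Taking $f\equiv 0$ gives $\paf=\{0\}$, so $\la\paf+T_\Cc(\x_0)^*=T_\Cc(\x_0)^*$ and \eqref{con mini2} becomes the constrained projection $\prox_\Cc(\y)$ of \eqref{constrained prox}; thus \eqref{always upper} reads $\leq \Dcf$, the ``$\leq$'' half of Theorem \ref{constrained}.

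The only step I would verify with care is the cone-invariance $\sigma\,T_\Cc(\x_0)^*=T_\Cc(\x_0)^*$, which is what lets the scalar $\sigma$ be pulled cleanly out of the entire set $\la\paf+T_\Cc(\x_0)^*$ even though the subdifferential part $\la\paf$ is itself not a cone; everything else is bookkeeping on top of Lemma \ref{upper bound lem}. I emphasize that the bound is pathwise, valid for each realization of $\vb$ before any expectation, so no concentration or worst-case-over-$\sigma$ argument is needed here; that content, together with the matching lower bounds that turn these inequalities into the equalities of Theorems \ref{proximal} and \ref{constrained}, is deferred to the later sections.
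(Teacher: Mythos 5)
Your proposal is correct and follows essentially the same route as the paper: substitute $\z=\sigma\vb$, $\tau=\sigma\la$ into Lemma \ref{upper bound lem}, pull out $\sigma$ by homogeneity, square, take expectations, then specialize $\Cc=\R^n$ and $f\equiv 0$ for the last two bullets. The only difference is cosmetic: you explicitly justify the scaling step via the cone invariance $\sigma\,T_\Cc(\x_0)^*=T_\Cc(\x_0)^*$, a detail the paper's proof uses implicitly when it normalizes by $\sigma^2$.
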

%\samet{Go over the proof again and include maximin.}
\begin{proof}
Let $\tau=\sigma\la$ and $\w^*(\tau,\z)=\x^*(\tau,\z)-\x_0$. From Lemma \ref{upper bound lem}, the noise in \eqref{con mini2} satisfies:
\beq
\|\w^*(\tau,\z)\|_2\leq \dt(\z,\sigma\la\pa f(\x_0)+T_\Cc(\x_0)^*)=\dt(\sigma\vb,\sigma\la\pa f(\x_0)+T_\Cc(\x_0)^*).\nn
\eeq
Taking the squares of both sides and normalizing by $\sigma^2$, we find,
\beq
\frac{\|\w^*(\tau,\z)\|_2^2}{\sigma^2}\leq \dt(\vb,\la\pa f(\x_0)+T_\Cc(\x_0)^*)^2.\nn
\eeq
Now, taking the expectations of both sides and using $\vb\sim\Nn(0,\Iden)$, we end up with the first statement \eqref{always upper}. Second statement follows immediately. For third statement, observe that if $\Cc=\R^n$ (i.e. no constraint) then $F_\Cc(\x_0)=\R^n$ and $T_\Cc(\x_0)^*=\{0\}$. Finally, if $f(\cdot)=0$ (i.e. no penalization), $\paf=\la\paf=\{0\}$.

%
% Normalizing \eqref{always upper} by $\sigma^2$,and introducing $\min_{\la\geq 0}\max_{\sigma>0}$ to both sides, we have:
%\beq
%\min_{\la\geq 0}\max_{\sigma>0}\frac{\E[\|\w^*(\la,\z)\|_2^2]}{\sigma^2}\leq \min_{\la\geq 0}\max_{\sigma>0} \E[\dt(\g,T_\Cc(\x_0)^*+\la\paf)^2]\label{still true}
%\eeq
%Setting $\la=\frac{\tau}{\sigma}$, we now obtain:
%\beq
%\min_{\la\geq 0}\max_{\sigma>0}\frac{\E[\|\w^*(\la,\z)\|_2^2]}{\sigma^2}\leq \min_{\la\geq 0} \E[\dt(\g,T_\Cc(\x_0)^*+\la\paf)^2]\label{final form}
%\eeq
%%As it is apparent from \eqref{always upper}, by setting $\la=\tau^*\sigma$ will guarantee \eqref{good estimate}.
%
%Observe that \eqref{final form} yields the second statement. Finally, third and fourth statements are the special cases of the second statement in which $\Cc=\R^n$ and $f(\cdot)=0$ respectively. Finally, for the fourth statement, to end up with $\Delxc$, we used the fact that $\Delxc=\E[\dt(\g,T_\Cc(\x_0)^*)^2]$.
\end{proof}

\subsection{First order approximation}
In this section, we will consider the first order approximation of the proximal denoising problem around $\x_0$ and we will argue that, the error term of the first order problem is exactly same as the upper bound provided in Lemma \ref{upper bound lem}. This will provide a good intuition for our consecutive results, which argue that the least favorable noise distribution that maximizes normalized MSE occurs as $\sigma\rightarrow 0$.\\
{\bf{Notation:}} Given $\w,\x_0,f(\cdot),\eps>0$, the ratio $\frac{f(\x_0+\eps\w)-f(\x_0)}{\eps}$ is nondecreasing as a function of $\eps$. Furthermore, the ``directional derivative'' along $\w$ is defined as \cite{Roc70,Urru},
\beq
\lim_{\eps\rightarrow0} \frac{f(\x_0+\eps\w)-f(\x_0)}{\eps}=f'(\x_0,\w)\label{dirder3}
\eeq
It is known that, \cite{Urru}, there is a trivial relation between the directional derivative and the subgradients,
\beq
f'(\x_0,\w)=\sup_{\s\in\paf}\li\w,\s\ri.\label{dirder}
\eeq
We will extensively make use of this equality later on. Finally, let us denote the ``set of maximizing subgradients'' of $f(\cdot)$ at $\x_0$ along $\w$, via $\pafw$. In particular,
\beq
\pafw:=\{\s\in\paf\big|\s^T\w=\sup_{\s'\in\paf}(\s')^T\w\}\label{dir dev2}
\eeq
$\pafw$ basically corresponds to a face of the subdifferential along $\w$. Observe that, the supremum on the right hand side is the directional derivative of $f(\cdot)$ at $\x_0$ along $\w$.

\subsubsection{Approximated problem}
\begin{itemize}
\item {\bf{Approximating the function $f(\cdot)$ at $\x_0$:}} We will use the first order approximation $\ff(\cdot)$ of $f(\cdot)$ at $\x_0$, which is achieved via directional derivatives. In particular, we will let:
\beq
\ff(\x)=f(\x_0)+\sup_{\s\in \pa f(\x_0)}\li\s,\x-\x_0\ri\label{f App}
\eeq
It is known that, $f(\cdot)$ behaves like $\ff(\cdot)$ in the close proximity of $\x_0$, \cite{Bertse}. \eqref{f App} can be viewed as a generalization of the first order approximation of differentiable functions. Furthermore, observe that, $\ff(\x)\leq f(\x)$ for all $\x$ due to convexity and the subgradient property.
\item {\bf{Approximating the set $\Cc$ at $\x_0$:}} We will find an approximation of $\Cc$ by considering the tangent cone of $\Cc$ at $\x_0$. This follows from the fact that $T_\Cc(\x_0)$ provides a good approximation of the set of feasible perturbations $F_\Cc(\x_0)$ around $0$ as we have $T_\Cc(\x_0)=\text{Cl}(\cn(F_\Cc(\x_0)))$. Hence, the approximation to the set $\Cc$ will be:
\beq
\hat{\Cc}_{\x_0}=\x_0+T_\Cc(\x_0)\label{c App}
\eeq
\item {\bf{The approximated problem around $\x_0$:}} Approximated problem is obtained by approximating the function and the set simultaneously and considering \eqref{con mini2} with the new function and set $\hat{f},\hat{\Cc}$.
\beq
\min_{\x\in\hat{\Cc}_{\x_0}} \tau\ff(\x)+\frac{1}{2}\|\y-\x\|_2^2\label{first or}
\eeq
As it is argued in Lemma \ref{approx conv}, the first order approximation of $f(\cdot)$ is a convex function and hence \eqref{first or} is a convex problem.
\end{itemize}

The nice property of the problem \eqref{first or} is the fact that, it can be solved exactly. It also provides an insight about what to expect as the minimax risk. We will now show that, the error in the approximated problem is exactly equal to the upper bound given in Lemma \ref{upper bound lem}. Hence, the error in the original problem is always upper bounded by the error in the approximated problem. However, intuitively, the approximation will be tight when the noise variance is sufficiently small, hence when the variance $\sigma$ is small, the error in the original problem will be approximately equal to the upper bound.

Before analysis of the approximated problem, we will provide a lemma which shows that the subgradients of $\ff$ can be found exactly. The proof can be found in the Appendix \ref{approximate subdif}.
\begin{lem}\label{approx subdif} Let $\ff$ be same as in \eqref{f App}. For any $\x$, we have,
\begin{align}
&\pa \ff(\x)=\pafb\x-\x_0)\label{dir dev}
%&T_{\hat{\Cc}_{\x_0}}(\x)=\{\u\big|\s^T(\x-\x_0)=\sup_{\s'\in\paf}(\s')^T(\x-\x_0)\}
\end{align}
\end{lem}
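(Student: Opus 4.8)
The plan is to recognize that $\ff$ is, up to an additive constant and a translation of its argument, the support function of the convex compact set $\paf$. Writing $h(\w):=\sup_{\s\in\paf}\li\s,\w\ri$ for this support function, we have $\ff(\x)=f(\x_0)+h(\x-\x_0)$, so a vector $\g$ lies in $\pa\ff(\x)$ if and only if it lies in $\pa h(\x-\x_0)$; the additive constant and the affine change of variable leave the subdifferential unaffected beyond this shift. It therefore suffices to prove the classical identity $\pa h(\w)=\{\s\in\paf\big|\li\s,\w\ri=h(\w)\}=\pafw$, i.e. that the subdifferential of a support function at $\w$ is exactly the face of $\paf$ exposed by $\w$. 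With $\w=\x-\x_0$ this is precisely the claim $\pa\ff(\x)=\pafb\x-\x_0)$. I would prove the two inclusions separately.

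The inclusion $\pafb\x-\x_0)\subseteq\pa\ff(\x)$ is immediate from the definitions. If $\g\in\paf$ attains the supremum in the direction $\w=\x-\x_0$, then for every $\y$ one has $h(\y-\x_0)\geq\li\g,\y-\x_0\ri=\li\g,\y-\x\ri+\li\g,\x-\x_0\ri=\li\g,\y-\x\ri+h(\x-\x_0)$, which is exactly the subgradient inequality for $\ff$ at $\x$.

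For the reverse inclusion I would suppose $\g\in\pa\ff(\x)$, equivalently $\g\in\pa h(\w)$ with $\w=\x-\x_0$, and exploit the positive homogeneity of the support function $h$. Evaluating the subgradient inequality $h(\ub)\geq h(\w)+\li\g,\ub-\w\ri$ at $\ub=0$ and at $\ub=2\w$, and using $h(0)=0$ together with $h(2\w)=2h(\w)$, yields both $\li\g,\w\ri\geq h(\w)$ and $\li\g,\w\ri\leq h(\w)$, hence $\li\g,\w\ri=h(\w)$. Substituting this back into the subgradient inequality gives $h(\ub)\geq\li\g,\ub\ri$ for every $\ub$. Since $\paf$ is nonempty, closed and convex, the support-function characterization of such sets states that $\g\in\paf$ if and only if $\li\g,\ub\ri\leq h(\ub)$ for all $\ub$; thus $\g\in\paf$, and together with $\li\g,\w\ri=h(\w)$ this places $\g$ in the exposed face, i.e. $\g\in\pafb\x-\x_0)$.

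The main obstacle is the reverse inclusion, and within it the step establishing membership $\g\in\paf$: this is where the compactness and convexity of the subdifferential $\paf$ (guaranteed for finite convex $f$, as noted in Section \ref{notate1}) are essential, via the bipolar-type characterization of a closed convex set by its support function. Once $\li\g,\w\ri=h(\w)$ has been pinned down using homogeneity, membership in the exposed face $\pafb\x-\x_0)$ follows immediately.
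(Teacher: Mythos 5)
Your proof is correct, and it takes a genuinely different route from the paper's. You identify $\ff$ as (a translate of) the support function $h$ of the compact convex set $\paf$ and invoke the classical duality: positive homogeneity pins down $\li\g,\x-\x_0\ri=h(\x-\x_0)$ via the two evaluations at $\ub=0$ and $\ub=2(\x-\x_0)$, and then the separation-based characterization of a nonempty closed convex set by its support function yields $\g\in\paf$, placing $\g$ in the exposed face. This argument is uniform in $\x$ (it covers $\x=\x_0$ with no special treatment, where the face degenerates to all of $\paf$) and is essentially the textbook identity that the subdifferential of a support function is the exposed face. The paper instead proves the lemma from scratch with a two-case analysis: for $\w=\x-\x_0=0$ it argues through directional derivatives of the original $f$, and for $\w\neq 0$ it decomposes a putative subgradient as $\s=c\w+\ub$ with $\ub\perp\w$, applies hyperplane separation inside the orthogonal complement of $\w$ to produce a violating direction $\h$, and closes the contradiction with a Bolzano--Weierstrass compactness argument showing that near-maximizing subgradients accumulate on the face $\pafb\w)$. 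What your approach buys is brevity and transparency, at the price of quoting the support-function characterization (itself a separation-theorem consequence, so the underlying tool is the same as the paper's); what the paper's approach buys is a self-contained argument that never needs to recognize $\ff$ as a support function, though it is considerably longer and requires the case split and a sequential compactness step that your route avoids entirely.
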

Basically, the subgradients of $\ff$ are the maximizing subgradients of $f(\cdot)$, i.e., the set of subgradients that maximizes the inner product with $\x-\x_0$.

The next result, characterizes a useful property of the projection on a convex set, \cite{Bertse}.
\begin{lem} \label{proj maximize} Let $\Kc$ be a closed and convex set and $\vb$ be an arbitrary vector. We have,
\beq
\li\bu(\vb,\Kc),\Pi(\vb,\Kc)\ri=\sup_{\ub\in\Kc}\li\ub,\Pi(\vb,\Kc)\ri\nn
\eeq
In words, the projection vector maximizes the inner product with the distance vector over $\Kc$.
\end{lem}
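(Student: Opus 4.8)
The plan is to reduce the claim to the standard variational (obtuse-angle) characterization of the Euclidean projection onto a closed convex set. Writing $\p=\bu(\vb,\Kc)$ for the projection and recalling that the distance vector is $\Pi(\vb,\Kc)=\vb-\p$, the single fact I would invoke is the projection theorem: for a closed convex $\Kc$, the point $\p$ is the projection of $\vb$ onto $\Kc$ if and only if $\li \vb-\p,\ub-\p\ri\leq 0$ for every $\ub\in\Kc$. This is the same first-order optimality/projection statement already being used elsewhere in the paper (the projection defining $\bu(\cdot,\Kc)$ is the minimizer of the strictly convex problem $\min_{\ub\in\Kc}\|\vb-\ub\|_2$), so I would take it as known, or if a self-contained argument is wanted, derive it in one line by differentiating $\|\vb-(\p+t(\ub-\p))\|_2^2$ at $t=0^{+}$ along the feasible segment $\p+t(\ub-\p)\in\Kc$.

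Given this inequality, the rest is immediate. Since $\Pi(\vb,\Kc)=\vb-\p$, the variational inequality reads $\li\Pi(\vb,\Kc),\ub\ri\leq\li\Pi(\vb,\Kc),\p\ri$ for all $\ub\in\Kc$. Taking the supremum over $\ub\in\Kc$ on the left yields $\sup_{\ub\in\Kc}\li\ub,\Pi(\vb,\Kc)\ri\leq\li\bu(\vb,\Kc),\Pi(\vb,\Kc)\ri$. For the reverse direction it suffices to observe that $\p=\bu(\vb,\Kc)$ is itself a feasible point of $\Kc$, so the right-hand quantity is one of the values competing in the supremum; hence the supremum is attained at $\p$ and the two sides coincide, which is exactly the asserted identity.

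I do not expect a genuine obstacle here: the only nontrivial ingredient is the variational inequality, and its proof is routine via convexity of $\Kc$ and optimality of the projection. The one point worth stating carefully is that closedness and convexity of $\Kc$ are what guarantee both the existence/uniqueness of $\bu(\vb,\Kc)$ and the validity of the obtuse-angle condition; without convexity the maximization characterization fails. The geometric content is transparent once the inequality is in hand: among all points of $\Kc$, the nearest point $\bu(\vb,\Kc)$ is precisely the one whose inner product with the outward distance vector $\Pi(\vb,\Kc)$ is largest.
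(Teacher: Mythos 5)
Your proof is correct: the variational inequality $\li\vb-\bu(\vb,\Kc),\ub-\bu(\vb,\Kc)\ri\leq 0$ for all $\ub\in\Kc$, rearranged and combined with feasibility of the projection itself, gives exactly the claimed identity. The paper offers no independent proof of this lemma — it cites \cite{Bertse} and restates the same property as \eqref{desiredlem2} in Fact \ref{prom} — and the standard textbook argument it invokes is precisely the one you wrote out, so your approach coincides with the paper's.
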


We have the following result regarding the first order approximated problem \eqref{first or}.
\begin{propo} [Solution for the approximated problem]\label{approx prop} Let $\ff$ and $\hat{\Cc}$ be same as \eqref{f App} and \eqref{c App} respectively. Let $\hax=\hat{\x}(\tau,\z)$ denote the optimal solution of \eqref{first or}. Then, we have:
\beq\label{approx}
\hax-\x_0=\Pi(\z,{\tau \pa f(\x_0)+T_\Cc(\x_0)^*})
\eeq
\end{propo}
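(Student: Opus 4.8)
The plan is to verify that the explicit candidate $\hax:=\x_0+\Pi(\z,\Kc)$, with $\Kc:=\tau\paf+T_\Cc(\x_0)^*$ and $\z=\y-\x_0$, satisfies the optimality conditions of the approximated problem \eqref{first or}, and then to invoke the uniqueness of its minimizer (the objective is strictly convex). Since $\ff$ is convex (Lemma \ref{approx conv}) and $\hat{\Cc}_{\x_0}=\x_0+T_\Cc(\x_0)$ is closed and convex, Lemma \ref{optimal lem} applies verbatim to \eqref{first or}: $\hax$ is the unique optimizer if and only if there exist $\s\in\pa\ff(\hax)$ and $\ub\in T_{\hat{\Cc}_{\x_0}}(\hax)^*$ with $\tau\s+\ub+\hax=\y$. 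Writing $\w^*:=\Pi(\z,\Kc)$ and $\bu(\z,\Kc)=\z-\w^*$, and decomposing the projection as $\bu(\z,\Kc)=\tau\s+\ub$ with $\s\in\paf$ and $\ub\in T_\Cc(\x_0)^*$ (possible since $\bu(\z,\Kc)\in\Kc$), the target identity $\tau\s+\ub+\hax=(\z-\w^*)+(\x_0+\w^*)=\x_0+\z=\y$ holds automatically. Thus everything reduces to showing that this particular decomposition can be chosen so that $\s$ lands in $\pa\ff(\hax)$ and $\ub$ lands in $T_{\hat{\Cc}_{\x_0}}(\hax)^*$.

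First I would settle the geometry of the constraint. Because $T_\Cc(\x_0)$ is a closed convex cone and $\w^*\in T_\Cc(\x_0)$ (shown momentarily), the tangent cone of $\hat{\Cc}_{\x_0}=\x_0+T_\Cc(\x_0)$ at $\hax=\x_0+\w^*$ is the tangent cone of $T_\Cc(\x_0)$ at $\w^*$, whose polar is the normal cone, giving $T_{\hat{\Cc}_{\x_0}}(\hax)^*=T_\Cc(\x_0)^*\cap(\w^*)^\perp$. To see $\w^*\in T_\Cc(\x_0)$, I use the variational characterization of the projection: for every $\ub'\in T_\Cc(\x_0)^*$ the point $\bu(\z,\Kc)+\ub'$ again lies in $\Kc$ (as $T_\Cc(\x_0)^*$ is closed under addition), so $\langle\z-\bu(\z,\Kc),\ub'\rangle=\langle\w^*,\ub'\rangle\le 0$; since this holds for all $\ub'$ in the closed convex cone $T_\Cc(\x_0)^*$, the bipolar theorem yields $\w^*\in(T_\Cc(\x_0)^*)^*=T_\Cc(\x_0)$, which also certifies feasibility $\hax\in\hat{\Cc}_{\x_0}$.

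The heart of the argument is a tightness computation based on Lemma \ref{proj maximize}. Since the supremum of a linear functional over a Minkowski sum splits, and since $\langle\ub',\w^*\rangle\le0$ for all $\ub'\in T_\Cc(\x_0)^*$ with equality at $\ub'=0$, I get $\sup_{\q\in\Kc}\langle\q,\w^*\rangle=\tau\sup_{\s'\in\paf}\langle\s',\w^*\rangle=\tau\,\fp\w^*)$ by \eqref{dirder}. On the other hand Lemma \ref{proj maximize} gives $\sup_{\q\in\Kc}\langle\q,\w^*\rangle=\langle\bu(\z,\Kc),\w^*\rangle=\tau\langle\s,\w^*\rangle+\langle\ub,\w^*\rangle$. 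Comparing the two, and using the one-sided bounds $\langle\s,\w^*\rangle\le\fp\w^*)$ and $\langle\ub,\w^*\rangle\le0$, forces both to be equalities: $\langle\s,\w^*\rangle=\fp\w^*)$ and $\langle\ub,\w^*\rangle=0$. The first says $\s$ is a maximizing subgradient along $\w^*$, i.e. $\s\in\pa f(\x_0,\w^*)$, which by Lemma \ref{approx subdif} equals $\pa\ff(\hax)$; the second says $\ub\in T_\Cc(\x_0)^*\cap(\w^*)^\perp=T_{\hat{\Cc}_{\x_0}}(\hax)^*$. Both membership requirements of Lemma \ref{optimal lem} hold, so $\hax$ is the unique minimizer and \eqref{approx} follows.

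The main obstacle is this last step: the projection $\bu(\z,\Kc)$ only supplies a single decomposition $\tau\s+\ub$ into the two summands of $\Kc$, and a priori neither piece need interact correctly with $\w^*$. The non-obvious point is that projection optimality forces the two naturally available slack inequalities to be simultaneously tight, which is exactly what Lemma \ref{proj maximize} provides; the supporting facts (feasibility $\w^*\in T_\Cc(\x_0)$ and the tangent-cone identity for the shifted cone $\hat{\Cc}_{\x_0}$) are routine once the bipolar theorem and the description of the normal cone to a convex cone are in hand.
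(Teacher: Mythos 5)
Your proof is correct, and it follows the same overall certification strategy as the paper: take the explicit candidate $\hax=\x_0+\Pi(\z,\Kc)$, split $\bu(\z,\Kc)=\tau\s+\ub$ with $\s\in\paf$ and $\ub\in T_\Cc(\x_0)^*$, and verify the two memberships demanded by Lemma \ref{optimal lem} applied to $\ff$ and $\hat{\Cc}_{\x_0}$. Where you genuinely diverge is in how those memberships are certified. The paper first upgrades the decomposition to two nested-projection identities, $\tau\s=\bu(\z-\ub,\tau\paf)$ and $\ub=\bu(\z-\tau\s,T_\Cc(\x_0)^*)$; it then gets $\s\in\pa f(\x_0,\haw)$ from Lemma \ref{proj maximize} applied to $\tau\paf$ alone, gets $\haw\in T_\Cc(\x_0)$ and $\li\haw,\ub\ri=0$ from Moreau's decomposition (Fact \ref{more}) applied to the cone $T_\Cc(\x_0)^*$, and finally checks $\ub\in T_{\hat{\Cc}_{\x_0}}(\hax)^*$ by unwinding the definition of the feasible set of the shifted cone. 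You instead (i) obtain $\w^*\in T_\Cc(\x_0)$ from the variational inequality of the projection plus the bipolar theorem, (ii) apply Lemma \ref{proj maximize} once to the whole Minkowski sum and split the supremum, so that a sandwich argument forces the two slack inequalities $\li\s,\w^*\ri\le f'(\x_0,\w^*)$ and $\li\ub,\w^*\ri\le 0$ to be tight simultaneously, and (iii) replace the paper's definition-chasing by the normal-cone identity $T_{\hat{\Cc}_{\x_0}}(\hax)^*=T_\Cc(\x_0)^*\cap(\w^*)^\perp$. Your route is arguably tidier: a single application of Lemma \ref{proj maximize} does the work of the paper's two nested-projection claims (which the paper asserts with essentially no justification), and the simultaneous-tightness mechanism makes the equality cases transparent; the price is reliance on two standard but external facts (the bipolar theorem and the normal cone of a closed convex cone at one of its points), whereas the paper stays closer to first principles. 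Two small points, shared with the paper, are worth flagging: the projection onto $\Kc$ is well defined because $\Kc$ is closed ($\tau\paf$ compact plus $T_\Cc(\x_0)^*$ closed, as noted in the proof of Lemma \ref{upper bound lem}), and the conclusion $\li\s,\w^*\ri=f'(\x_0,\w^*)$ requires $\tau>0$; when $\tau=0$ the subgradient term drops out of the optimality condition entirely, so the proposition holds trivially in that case.
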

\begin{proof}
Let $\haw=\hax-\x_0$. From optimality conditions, $\hax$ is the unique minimizer of \eqref{first or} if and only if, it satisfies,
\beq
\tau \s+\ub+\haw=\z\label{satisfied}
\eeq
where $\s\in \pa \ff(\hax)$ and $\ub\in T_{\hat{\Cc}_{\x_0}}(\hax)^*$. Using Lemma \ref{approx subdif}, $\s\in \pa \ff(\hax)$ is identical to $\s\in \pafb\hax-\x_0)$.

Assume \eqref{approx} holds, then $\z-\haw=\tau\s+\ub$ for some $\ub\in T_\Cc(\x_0)^*$ and $\s\in \paf$. We will show that, \eqref{satisfied} holds for this particular $\haw,\s,\ub$ and $\hax$ is indeed the minimizer of \eqref{first or}. Since $\z-\ub$ is the closest point to $\tau \paf$, we have:
\beq
\haw=\Pi(\z-\ub,\tau \pa f(\x_0)),~ \tau\s=\bu(\z-\ub,\tau \pa f(\x_0))\nn
\eeq
In conjunction with Lemma \ref{proj maximize}, this implies $\li\haw,\s\ri=\sup_{\s'\in\pa f(\x_0)}\li\s',\haw\ri$ which in turn implies $\s\in \pafb\hax-\x_0)= \pa\ff(\hax)$ based on Lemma \ref{approx subdif}. In a similar manner, we have:
\beq
\haw=\Pi(\z-\tau\s,T_\Cc(\x_0)^*),~ \ub=\bu(\z-\tau\s,T_\Cc(\x_0)^*)\nn
\eeq
Using Moreau's decomposition theorem (see Lemma \ref{more}), this implies that $\haw\in T_\Cc(\x_0)$ and $\li\haw,\ub\ri=0$. Now, we will argue that $\ub\in T_{\hat{\Cc}_{\x_0}}(\hax)^*$ to conclude. Let $\vb\in F_{\hat{\Cc}_{\x_0}}(\hax)$. This implies $\vb+\x_0+\haw\in \hat{\Cc}_{\x_0}$ or alternatively, $\vb+\haw\in T_\Cc(\x_0)$. Hence,
\beq
\li\vb+\haw,\ub\ri\leq 0\nn
\eeq
Using $\li\haw,\ub\ri=0$, we find for all $\vb\in F_{\hat{\Cc}_{\x_0}}(\hax)$, $\li\vb,\ub\ri\leq0$. This will still be true when we take the closure and the cone of the set $F_{\hat{\Cc}_{\x_0}}(\hat{\x}(\tau,\z))$ which gives $\li\vb,\ub\ri\leq0$ for all $\vb\in T_{\hat{\Cc}_{\x_0}}(\hax)$ and implies $\ub$ is in the polar cone of $T_{\hat{\Cc}_{\x_0}}(\hax)$.
\end{proof}
This shows that the approximated problem \eqref{first or} can be solved exactly and its error exactly matches with the upper bound found in Lemma \ref{upper bound lem}. This observation provides a good interpretation of Lemma \ref{upper bound lem}.

The next corollary follows immediately from Proposition \ref{approx prop} and gives MSE results under stochastic noise. The proof is based on taking the expectation of \eqref{approx} over $\z$.
\begin{cor}[Approximated MSE]\label{approx risk} Assume $\vb\sim\Nn(0,\Iden)$. Let $\z=\sigma\vb$ in the approximated problem \eqref{first or} and let $\tau=\sigma\la$. Denote the minimizer by $\hat{\x}(\la,\z)$. Then,

\beq\E[\|\hat{\x}(\la,\z)-\x_0\|_2^2]=\sigma^2\E[\dt(\vb,\la \pa f(\x_0)+T_\Cc(\x_0)^*)^2].\nn\eeq
\end{cor}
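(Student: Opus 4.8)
The plan is to read the result off directly from the closed-form solution of the approximated problem, so the argument is essentially deterministic until the very last line. By Proposition \ref{approx prop}, the minimizer $\hax$ of \eqref{first or} satisfies $\hax-\x_0=\Pi(\z,\tau\paf+T_\Cc(\x_0)^*)$, where $\Pi(\cdot,\Kc)$ is the distance vector to a closed convex set $\Kc$. As observed in the proof of Lemma \ref{upper bound lem}, the set $\tau\paf+T_\Cc(\x_0)^*$ is closed (it is the Minkowski sum of the compact set $\tau\paf$ and the closed set $T_\Cc(\x_0)^*$), so this projection is well defined and $\|\Pi(\z,\Kc)\|_2=\dt(\z,\Kc)$.

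First I would substitute $\z=\sigma\vb$ and $\tau=\sigma\la$ and take the squared norm, which gives $\|\hat{\x}(\la,\z)-\x_0\|_2^2=\dt(\sigma\vb,\sigma\la\paf+T_\Cc(\x_0)^*)^2$ pointwise in $\vb$. The one step that requires care is factoring $\sigma$ out of this distance. Because $T_\Cc(\x_0)^*$ is a (closed convex) cone, one has $\sigma T_\Cc(\x_0)^*=T_\Cc(\x_0)^*$ for every $\sigma>0$, so the target set rewrites as $\sigma\la\paf+T_\Cc(\x_0)^*=\sigma(\la\paf+T_\Cc(\x_0)^*)$. Combining this with the homogeneity of the distance function, $\dt(\sigma\vb,\sigma S)=\sigma\,\dt(\vb,S)$ — immediate from $\|\sigma\vb-\sigma\ub\|_2=\sigma\|\vb-\ub\|_2$ and taking the infimum over $\ub\in S$ — yields $\dt(\sigma\vb,\sigma\la\paf+T_\Cc(\x_0)^*)=\sigma\,\dt(\vb,\la\paf+T_\Cc(\x_0)^*)$.

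Putting these together gives $\|\hat{\x}(\la,\z)-\x_0\|_2^2=\sigma^2\,\dt(\vb,\la\paf+T_\Cc(\x_0)^*)^2$ deterministically, and taking the expectation over $\vb\sim\Nn(0,\Iden)$ produces the claimed identity. There is no genuine obstacle here: the corollary is a direct consequence of Proposition \ref{approx prop}, and the only point worth highlighting is the conic rescaling of $T_\Cc(\x_0)^*$, which is precisely what lets $\sigma^2$ be extracted cleanly so that the right-hand side depends only on $\la$ and the normalized noise $\vb$, and not on $\sigma$ — the feature that makes this expression the anchor for the later claim that the worst case is attained as $\sigma\rightarrow 0$.
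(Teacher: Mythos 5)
Your proof is correct and takes essentially the same route as the paper: the paper's proof of Corollary \ref{approx risk} consists precisely of taking the expectation of \eqref{approx} from Proposition \ref{approx prop} with $\z=\sigma\vb$, $\tau=\sigma\la$. Your explicit justification of the $\sigma$-rescaling (invariance of the cone $T_\Cc(\x_0)^*$ under positive scaling together with homogeneity of the distance function) merely fills in a detail that the paper, as in the proof of Proposition \ref{upper summary}, leaves implicit.
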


The next section is dedicated to showing that the bounds we have provided so far are in fact sharp. In particular, as $\sigma\rightarrow 0$, the original problem \eqref{con mini2} behaves same as the approximated problem \eqref{first or}.

\section{Tight lower bound when $\sigma\rightarrow 0$}\label{lower for f}
We will now focus our attention to the $\geq$-direction of Theorem \ref{proximal} and hence the regularized estimator (no constraint),
\beq
\x^*(\tau,\z)=\arg\min \tau f(\x)+\frac{1}{2}\|(\x_0+\z)-\x\|_2^2\label{penest}
\eeq
We will show that, one can find a good lower bound for the estimation error $\|\x^*(\tau,\z)-\x_0\|_2$ in \eqref{penest}. The following proposition states that $f(\cdot)$ can be approximated by $\ff$(recall \eqref{f App}) around $\x_0$ along all directions simultaneously. This result is a stronger version of \eqref{dirder} and is due to Lemma 2.1.1 of Chapter VI of \cite{Urru}.

\begin{propo}\label{unif lem} Assume $f(\cdot)$ is a convex function on $\R^n$. Then, for any given $\delta_0>0$, there exists some $\eps_0>0$ such that, for all $\w\in\R^n$, with $\|\w\|_2\leq \eps_0$, the following holds:
\beq
0\leq f(\x_0+\w)-\ff(\x_0+\w)\leq \delta_0\|\w\|_2\label{relation eps delt}
\eeq
\end{propo}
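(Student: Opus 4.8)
The plan is to treat the two inequalities in \eqref{relation eps delt} separately. The left one, $0\le f(\x_0+\w)-\ff(\x_0+\w)$, is immediate: by the definition \eqref{f App} and the subgradient inequality, for every $\s\in\paf$ we have $f(\x_0+\w)\ge f(\x_0)+\li\s,\w\ri$, and taking the supremum over $\s\in\paf$ yields $f(\x_0+\w)\ge\ff(\x_0+\w)$ (this is exactly the observation recorded right after \eqref{f App}). So all the content lies in the right inequality, which is a \emph{uniform-over-directions} sharpening of the pointwise convergence of difference quotients to the directional derivative.

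The main reduction is by positive homogeneity. For $\w\neq 0$ I would set $t=\|\w\|_2$ and $\ub=\w/t$, a unit vector. Since by \eqref{dirder} the directional derivative $f'(\x_0,\cdot)=\sup_{\s\in\paf}\li\cdot,\s\ri$ is positively homogeneous, \eqref{f App} gives $\ff(\x_0+\w)=f(\x_0)+f'(\x_0,\w)=f(\x_0)+t\,f'(\x_0,\ub)$. Dividing the target bound by $t$, the claim becomes
\[
h_t(\ub):=\frac{f(\x_0+t\ub)-f(\x_0)}{t}-f'(\x_0,\ub)\le\delta_0
\]
for all unit $\ub$ and all $t\in(0,\eps_0]$ (the case $\w=0$ is trivial). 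Thus it suffices to control $h_t$ uniformly on the unit sphere $S=\{\ub:\|\ub\|_2=1\}$, which is compact.

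I would then record three facts that set up a Dini-type argument. First, for each fixed $\ub$ the difference quotient $\frac{f(\x_0+t\ub)-f(\x_0)}{t}$ is nondecreasing in $t$ (noted before \eqref{dirder3}) and tends to $f'(\x_0,\ub)$ as $t\downarrow0$ by \eqref{dirder3}; hence $h_t(\ub)\ge 0$, $h_t(\ub)\downarrow 0$ pointwise, and $t\mapsto h_t(\ub)$ is itself nondecreasing. Second, for each fixed $t>0$ the map $\ub\mapsto h_t(\ub)$ is continuous, since $f$ is finite and convex on $\R^n$ hence continuous, and $f'(\x_0,\cdot)$ is the support function of the compact set $\paf$, hence finite and Lipschitz-continuous. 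With these in hand I apply Dini's theorem to the sequence $h_{1/k}$: a monotone sequence of continuous nonnegative functions decreasing pointwise to $0$ on the compact sphere $S$ converges uniformly. Given $\delta_0>0$, I pick $k$ with $\sup_{\ub\in S}h_{1/k}(\ub)\le\delta_0$ and set $\eps_0=1/k$; monotonicity of $t\mapsto h_t(\ub)$ then gives $h_t(\ub)\le h_{\eps_0}(\ub)\le\delta_0$ for all $t\le\eps_0$, which is exactly the claim after multiplying back by $t=\|\w\|_2$.

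The main obstacle is precisely passing from pointwise convergence of the difference quotients to \emph{uniform} convergence over all directions; this rests on compactness of the unit sphere together with the monotonicity in $t$, which jointly license the Dini argument. The two points deserving a little care are the continuity of $f'(\x_0,\cdot)$, which I obtain from compactness of $\paf$ and \eqref{dirder}, and the reduction to the single sequence $t=1/k$ rather than a general net, which is justified exactly because $t\mapsto h_t(\ub)$ is monotone.
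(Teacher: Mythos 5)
Your proof is correct. It is worth knowing, however, that the paper contains no internal proof of Proposition \ref{unif lem} to compare against: the statement is invoked as a known result, attributed to Lemma 2.1.1 of Chapter VI of \cite{Urru}, so the ``paper's proof'' is a citation. Your argument is a complete, self-contained derivation of that cited fact, and it follows what is essentially the classical route: the left inequality of \eqref{relation eps delt} is the subgradient inequality combined with the definition \eqref{f App}; for the right inequality you use positive homogeneity of the support function in \eqref{dirder} to reduce to unit directions, then combine (i) monotonicity in $t$ of the difference quotients (the fact recorded before \eqref{dirder3}), (ii) continuity of $\ub\mapsto h_t(\ub)$, which holds because a finite convex function on $\R^n$ is continuous and $f'(\x_0,\cdot)$ is the support function of the nonempty compact set $\paf$, and (iii) compactness of the unit sphere, so that Dini's theorem upgrades pointwise convergence $h_t\downarrow 0$ to uniform convergence; monotonicity in $t$ then legitimately transfers the bound from the sequence $t=1/k$ to all $t\le\eps_0$. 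The two delicate points you flag are exactly the right ones, and both are handled correctly. What your approach buys is removing the paper's dependence on an external reference (and it confirms that the hypotheses needed there, finiteness of $f$ on all of $\R^n$ and compactness of $\paf$, are indeed available in this setting); what the paper's citation buys is brevity.
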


While it is clear that $\ff(\cdot)$ is an under estimator of $f(\cdot)$ at all points, upper bounding $f(\cdot)$ in terms of $\ff(\cdot)$ and a small first order term will be quite helpful for the perturbation analysis. %Let us denote $f(\x_0+\w)-\ff(\x_0+\w)$ by $f_{r,\x_0}(\w)$. This fact will be quite useful in our analysis.

\subsection{Deterministic lower bound}
%While it is clear that $\ff(\cdot)$ is an under estimator of $f(\cdot)$ at all points, upper bounding $f(\cdot)$ in terms of $\ff(\cdot)$ and a second order term will be quite helpful for the perturbation analysis. Let us denote $f(\x_0+\w)-\ff(\x_0+\w)$ by $f_{r,\x_0}(\w)$.
%assume a reasonable structure on second order function $\ffr(\y)$. In particular, we assume $\ffr(\y)\leq C \|\y-\x_0\|_2^2$ for some scalar $C>0$. We'll further justify that such an assumption is indeed a reasonable one.

The following lemma provides a deterministic lower bound on the error $\x^*(\tau,\z)-\x_0$. Unlike the previous sections, there will be no set constrained (i.e. $\Cc$) in the consequent analysis.
\begin{lem}\label{sharp} Consider the problem \eqref{penest}. Let $\w_0:=\Pi(\z,\tau\paf)$. Assume $\tau\leq C_0\|\w_0\|_2$ for some constant $C_0>0$. Using Proposition \ref{unif lem}, choose $\delta_0,\eps_0>0$ ensuring \eqref{relation eps delt}. Then, whenever $\|\w_0\|_2\leq \eps_0$,
\beq
\|\x^*(\tau,\z)-\x_0\|_2\geq (1-\sqrt{2\delta_0C_0} )\dt(\z,\tau\paf).\nn
\eeq
\end{lem}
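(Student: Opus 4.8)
The plan is to compare the true minimizer $\x^*(\tau,\z)$ of \eqref{penest} against the minimizer $\hax$ of the first-order approximated problem \eqref{first or}, taken with no constraint (so $\Cc=\R^n$ and $T_\Cc(\x_0)^*=\{0\}$). The point of this comparison is that $\hax$ is known \emph{exactly}: by Proposition \ref{approx prop}, $\hax-\x_0=\Pi(\z,\tau\paf)=\w_0$, hence $\|\hax-\x_0\|_2=\|\w_0\|_2=\dt(\z,\tau\paf)$. Consequently it suffices to show that $\x^*$ cannot sit much closer to $\x_0$ than $\hax$ already does; precisely, if I can establish $\|\x^*-\hax\|_2\leq\sqrt{2\delta_0 C_0}\,\|\w_0\|_2$, then the reverse triangle inequality $\|\x^*-\x_0\|_2\geq\|\hax-\x_0\|_2-\|\x^*-\hax\|_2$ gives the claim at once.

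To control $\|\x^*-\hax\|_2$, write $G(\x)=\tau f(\x)+\tfrac12\|\y-\x\|_2^2$ and $\hat G(\x)=\tau\ff(\x)+\tfrac12\|\y-\x\|_2^2$ with $\y=\x_0+\z$, so that $\x^*$ minimizes $G$ and $\hax$ minimizes $\hat G$. Two facts drive the estimate. First, since $\tfrac12\|\y-\x\|_2^2$ is $1$-strongly convex and $\ff$ is convex, $\hat G$ is $1$-strongly convex; evaluating the strong-convexity lower bound at the minimizer $\hax$ yields $\hat G(\x^*)-\hat G(\hax)\geq\tfrac12\|\x^*-\hax\|_2^2$. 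Second, because $\ff\leq f$ pointwise (the subgradient inequality noted after \eqref{f App}), we have $\hat G\leq G$ everywhere, which together with the optimality of each minimizer gives the chain
\[
\hat G(\hax)\leq\hat G(\x^*)\leq G(\x^*)\leq G(\hax).
\]
Hence the objective gap is controlled by the \emph{pointwise} function gap at the single point $\hax$: $\hat G(\x^*)-\hat G(\hax)\leq G(\hax)-\hat G(\hax)=\tau\big(f(\hax)-\ff(\hax)\big)$.

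The neighborhood hypothesis now closes the argument. Since $\|\hax-\x_0\|_2=\|\w_0\|_2\leq\eps_0$, Proposition \ref{unif lem} applied at $\w=\w_0$ gives $f(\hax)-\ff(\hax)\leq\delta_0\|\w_0\|_2$. Combining with the strong-convexity bound, $\tfrac12\|\x^*-\hax\|_2^2\leq\tau\delta_0\|\w_0\|_2$, and inserting the hypothesis $\tau\leq C_0\|\w_0\|_2$ turns this into $\|\x^*-\hax\|_2\leq\sqrt{2\delta_0 C_0}\,\|\w_0\|_2$, exactly the bound needed in the first paragraph.

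The conceptual crux — and the main obstacle — is the realization that an objective-value comparison can be upgraded to a distance bound: the sandwich $\hat G\leq G$ plus the optimality of $\x^*$ and $\hax$ pins down $\hat G(\x^*)-\hat G(\hax)$, and $1$-strong convexity of $\hat G$ then converts that value gap into a bound on $\|\x^*-\hax\|_2$. The one technical point to monitor is that Proposition \ref{unif lem} only controls $f-\ff$ within the ball of radius $\eps_0$; this is precisely why the statement restricts to $\|\w_0\|_2\leq\eps_0$, and it is convenient that the evaluation point satisfies $\|\hax-\x_0\|_2=\|\w_0\|_2$ exactly, so it always lies inside the admissible neighborhood. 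Everything else is the reverse triangle inequality and the scaling hypothesis $\tau\leq C_0\|\w_0\|_2$.
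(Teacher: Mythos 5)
Your proof is correct: every step checks out, and it delivers exactly the claimed constant $(1-\sqrt{2\delta_0 C_0})$. It is, however, a genuinely different route from the paper's. The paper never compares the two minimizers directly; instead it works with the perturbation objective $g(\w)=\tau(f(\x_0+\w)-f(\x_0))+\frac{1}{2}\|\z-\w\|_2^2$, upper-bounds $g(\w_0)$ via Proposition \ref{unif lem}, lower-bounds $g(\w^*)$ at the true error $\w^*=\x^*(\tau,\z)-\x_0$ using a maximizing subgradient $\s^*\in\pafb\w^*)$ together with Cauchy--Schwarz, and then solves the resulting quadratic inequality $\frac{a^2}{2}-ab\leq\frac{t-1}{2}b^2$ in the ratio $u=a/b$ (invoking Lemma \ref{upper bound lem} to know $u\leq 1$). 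You replace that entire mechanism with two cleaner ingredients: the identification $\hax=\x_0+\w_0$ from Proposition \ref{approx prop} (with $\Cc=\R^n$), and the $1$-strong convexity of $\hat G=\tau\ff+\frac{1}{2}\|\y-\cdot\|_2^2$ around its minimizer, which converts the sandwich $\hat G(\hax)\leq\hat G(\x^*)\leq G(\x^*)\leq G(\hax)$ into the distance bound $\frac{1}{2}\|\x^*-\hax\|_2^2\leq\tau(f(\hax)-\ff(\hax))$, after which the triangle inequality finishes. What your route buys: no quadratic inequality, no explicit subgradient bookkeeping (that work is outsourced to Proposition \ref{approx prop}, which the paper has already proved at this point, so there is no circularity), and a stronger intermediate statement of independent interest, namely that the exact and approximated minimizers satisfy $\|\x^*-\hax\|_2\leq\sqrt{2\tau\delta_0\|\w_0\|_2}$. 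What the paper's route buys: it is self-contained at that point in the text, relying only on the subgradient inequality and Lemma \ref{proj maximize} rather than on the full solution of the approximated problem. Both proofs consume the same two approximation facts ($\ff\leq f$ globally, and $f\leq\ff+\delta_0\|\cdot-\x_0\|_2$ on the $\eps_0$-ball), so the quantitative content is identical; yours packages it more conceptually.
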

\begin{proof}
%Due to previous section we already know $\dt(\z,\tau \paf)\geq \|\w^*(\tau,\z)\|_2$.\\
%Recalling, KKT conditions, we can again write:
%\beq
%\w^*(\tau,\z)=\z-\tau\has
%\eeq
%where $\has\in\pa f(\x^*(\tau,\z))$.
%Now, c
To begin with, observe that, $\|\w_0\|_2=\dt(\z,\tau\paf)$. Let $\s_0=\frac{\bu(\z,\tau\paf)}{\tau}$. From Lemma \ref{proj maximize}, $\s_0\in \pafb\w_0)$ where $\pafb\w_0)$ is given by \eqref{dir dev2}. Let us rewrite the objective function of (\ref{penest}) as a function of the perturbation $\w=\x-\x_0$.
\beq
g(\w)=\tau (f(\x_0+\w)-f(\x_0))+\frac{1}{2}\|\z-\w\|_2^2\nn
\eeq
Using the facts that $\|\w_0\|_2\leq\eps_0$, $\tau\leq C_0\|\w_0\|_2$ and $\w_0+\tau\s_0=\z$, we have the following,
\bea
\nn g(\w_0)&=\tau (f(\x_0+\w_0)-f(\x_0))+\frac{1}{2}\|\z-\w_0\|_2^2\\
\nn&=\tau (\li\s_0,\w_0\ri+(f(\x_0+\w_0)-\hat{f}_{\x_0}(\x_0+\w_0)))+\frac{1}{2}\|\z-\w_0\|_2^2\\
\nn&\leq \tau (\li\s_0,\w_0\ri+\delta_0\|\w_0\|_2)+\frac{1}{2}\|\w_0\|_2^2-\li\z,\w_0\ri+\frac{1}{2}\|\z\|_2^2\\
%&\leq \tau (\li\s',\w_0\ri+c_f\|\w_0\|_2^2)+\frac{1}{2}\|\w_0\|_2^2-\li\z,\w_0\ri+\frac{1}{2}\|\z\|_2^2\\
\nn&\leq  C_0 \delta_0\|\w_0\|_2^2+\frac{1}{2}\|\w_0\|_2^2-\li\z-\tau\s_0,\w_0\ri+\frac{1}{2}\|\z\|_2^2\\
\nn&= (C_0 \delta_0-\frac{1}{2})\|\w_0\|_2^2+\frac{1}{2}\|\z\|_2^2\\
\label{eqq}&= (C_0 \delta_0-\frac{1}{2})\dt(\z,\tau\paf)^2+\frac{1}{2}\|\z\|_2^2.
\end{align}
Letting $\w^*(\tau,\z)=\x^*(\tau,\z)-\x_0$, $g(\w^*(\tau,\z))$ should be smaller than the right hand side of (\ref{eqq}) as $\x^*(\tau,\z)$ should outperform $\x_0+\w_0$. Choose $\s^*\in \pafb\x^*(\tau,\z)-\x_0)$.
%\beq
%\li\w^*(\tau,\z),\has\ri\geq f(\x_0+\w^*(\tau,\z))-f(\x_0)\geq \li\w^*(\tau,\z),\s_0\ri
%\eeq
Then, $g(\w^*(\tau,\z))$ can be lower bounded as follows:
\bea
g(\w^*(\tau,\z))&\geq \tau\li\s^*,\w^*(\tau,\z)\ri+\frac{1}{2}\|\z-\w^*(\tau,\z)\|_2^2\nn\\
&=\tau \li\s^*,\w^*(\tau,\z)\ri-\li\z,\w^*(\tau,\z)\ri+\frac{1}{2}\|\w^*(\tau,\z)\|_2^2+\frac{1}{2}\|\z\|_2^2\nn\\
&= -\li\w^*(\tau,\z),\z-\tau\s^*\ri+\frac{1}{2}\|\w^*(\tau,\z)\|_2^2+\frac{1}{2}\|\z\|_2^2\label{bound on gw}
\end{align}
Now, using $\s^*\in \pafb\w^*(\tau,\z))$,
\beq
\li\w^*(\tau,\z),\s^*\ri\geq \li\w^*(\tau,\z),\s_0\ri\implies -\li\w^*(\tau,\z),\z-\tau\s^*\ri\geq -\li\w^*(\tau,\z),\z-\tau\s_0\ri\nn
\eeq
Combining this and \eqref{bound on gw},
 $g(\w^*(\tau,\z))$ satisfies:
\beq\label{func1}
g(\w^*(\tau,\z))\geq \frac{1}{2}\|\w^*(\tau,\z)\|_2^2-\|\w^*(\tau,\z)\|_2\dt(\z,\tau\paf)+\frac{1}{2}\|\z\|_2^2
\eeq
% and (\ref{func1}) is decreasing function of $\|\w^*(\tau,\z)\|_2$ between $[0,\dt(\z,\tau\paf)]$. 
From Lemma \ref{upper bound lem}, it is known that $\|\w^*(\tau,\z)\|_2\leq \dt(\z,\tau\paf)$. Now, calling $a=\|\w^*(\tau,\z)\|_2,~b=\dt(\z,\tau\paf)$, and $t=2C_0 \delta_0$, $g(\w^*(\tau,\z))\leq g(\w_0)$ implies,
\beq
\frac{a^2}{2}-ab\leq (\frac{t-1}{2})b^2.\nn
\eeq
Setting $u=a/b$ and using $1\geq u,t\geq 0$, we have:
\beq
ub^2-u^2b^2/2\geq (1-t)b^2/2\implies 2u-u^2\geq (1-t)\implies u^2-2u+(1-t)\leq 0\nn
\eeq
Overall, $1\geq u\geq 1-\sqrt{t}$. Equivalently, $\|\w^*(\tau,\z)\|_2\geq (1-\sqrt{2C_0 \delta_0})\dt(\z,\tau\paf)$.
%\beq
%\|\w^*(\tau,\z)\|_2\geq (1-\sqrt{2\tau C})\dt(\z,\tau\paf)
%\eeq
\end{proof}

This lemma shows that, when $\z$ is sufficiently small, then the error $\|\w^*\|_2$ will be close to its upper bound given in Lemma \ref{upper bound lem}. Observe that, we assumed $\tau\leq C_0\dt(\z,\tau \paf)$ for some $C_0>0$. The setup of Theorem \ref{proximal} already prepares the background for this assumption as it chooses $\tau=\sigma\la$ and $\z\sim\Nn(0,\sigma^2\Iden)$. Together, these will ensure $\z,\tau$ is approximately proportional and Lemma \ref{upper bound lem} is applicable, with high probability, for sufficiently large $C_0$. The choice $\tau=\sigma\la$ for a fixed $\la$, is in fact a well-known property of the soft thresholding operator, \cite{soft-thresh,Donoho,Mon,Matesh,Tikh,BicRit}.

%\subsection{Behavior of regularizer $\la^*(\sigma)$ as $\sigma\rightarrow 0$}
%\begin{lem}\label{behave} Let $f(\cdot)$ be a smooth function at $\x_0$ with parameters $c_f,\eps_f$ and assume $\x_0$ is not a minimizer of $f(\cdot)$. Then, optimal regularizer $\la^*(\sigma)$ satisfies the following:
%\beq
%\lim_{\sigma\rightarrow 0}\sup\frac{\la^*(\sigma)}{\sigma}=C
%\eeq
%for some nonnegative number $C\in\R$. In particular, $C$ is a function of $n$ and $\paf$.
%\end{lem}
%\begin{proof} Assume the claim is not true. Then, $\lim_{\sigma\rightarrow 0}\sup\frac{\la^*(\sigma)}{\sigma}\rightarrow\infty$. We will argue that there exists a $\la'$ that yields smaller estimation error than $\la^*(\sigma)$.
%\end{proof}

\subsection{Stochastic lower bound for regularized NMSE}
To finish the proof of Theorem \ref{proximal}, we need to show that the upper bound can be achieved as $\sigma\rightarrow 0$. This will be the topic of this section.

\begin{propo} \label{asymp}Consider program \eqref{penest} with $\tau=\sigma\la$ and $\z=\sigma\vb$ where $\vb\sim\Nn(0,\Iden_n)$. Then, we have:
\beq
\lim_{\sigma\rightarrow 0}\frac{\E[\|\x^*(\sigma\la,\sigma\vb)-\x\|_2^2]}{\sigma^2}=\Dlf\nn
\eeq
\end{propo}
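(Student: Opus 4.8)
The plan is to combine the pointwise upper bound of Lemma \ref{upper bound lem} with the deterministic lower bound of Lemma \ref{sharp}, and then to interchange the limit and the expectation by dominated convergence. First I would record the scaling identity $\dt(\sigma\vb,\sigma\la\paf)=\sigma\,\dt(\vb,\la\paf)$, so that with $\tau=\sigma\la$ and $\z=\sigma\vb$ the normalized error
$$\phi_\sigma(\vb):=\frac{\|\x^*(\sigma\la,\sigma\vb)-\x_0\|_2^2}{\sigma^2}$$
obeys $\phi_\sigma(\vb)\leq\dt(\vb,\la\paf)^2$ for every $\sigma>0$ and every fixed $\vb$, directly from Lemma \ref{upper bound lem} with $\Cc=\R^n$ (so $T_\Cc(\x_0)^*=\{0\}$). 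This deterministic bound is the crucial ingredient, since it will later furnish an integrable dominating function.

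Next I would establish pointwise convergence: for every fixed $\vb$ with $d:=\dt(\vb,\la\paf)>0$ I claim $\lim_{\sigma\to0}\phi_\sigma(\vb)=d^2$. Here $\w_0=\Pi(\z,\tau\paf)$ has $\|\w_0\|_2=\sigma d$, so the hypothesis $\tau\leq C_0\|\w_0\|_2$ of Lemma \ref{sharp} reads $\la\leq C_0 d$; since $d>0$ I may fix $C_0=\la/d$, a value independent of $\sigma$. Given any target $\nu>0$ I then choose $\delta_0>0$ small enough that $2\delta_0 C_0<1$ and $(1-\sqrt{2\delta_0 C_0})^2\geq 1-\nu$, and invoke Proposition \ref{unif lem} to obtain the corresponding $\eps_0>0$. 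The remaining hypothesis $\|\w_0\|_2=\sigma d\leq\eps_0$ holds for all $\sigma\leq\eps_0/d$, and for such $\sigma$ Lemma \ref{sharp} yields $\phi_\sigma(\vb)\geq(1-\sqrt{2\delta_0 C_0})^2 d^2\geq(1-\nu)d^2$. Letting $\sigma\to0$ and then $\nu\to0$ gives $\liminf_{\sigma\to0}\phi_\sigma(\vb)\geq d^2$, which together with the upper bound forces the pointwise limit. On the complementary event $d=0$, i.e. $\vb\in\la\paf$, the upper bound already gives $\phi_\sigma(\vb)=0=d^2$, so $\phi_\sigma(\vb)\to\dt(\vb,\la\paf)^2$ holds for every $\vb$.

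Finally I would pass to expectations. The dominating function is $\dt(\vb,\la\paf)^2$, which is integrable because $\dt(\vb,\la\paf)\leq\|\vb\|_2+\max_{\s\in\la\paf}\|\s\|_2$ (the subdifferential being compact) and $\E\|\vb\|_2^2=n$, whence $\Dlf=\E[\dt(\vb,\la\paf)^2]<\infty$. Since $0\leq\phi_\sigma(\vb)\leq\dt(\vb,\la\paf)^2$ and $\phi_\sigma(\vb)\to\dt(\vb,\la\paf)^2$ pointwise, the dominated convergence theorem gives $\lim_{\sigma\to0}\E[\phi_\sigma(\vb)]=\E[\dt(\vb,\la\paf)^2]=\Dlf$, which is the assertion.

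The step I expect to be most delicate is precisely this interchange of limit and expectation. The constants $C_0,\delta_0,\eps_0$, and hence the threshold $\sigma\leq\eps_0/d$ below which the lower bound activates, all depend on $\vb$ through $d=\dt(\vb,\la\paf)$; indeed $C_0=\la/d$ blows up as $d\to0$, so the convergence is genuinely non-uniform in $\vb$ and no single deterministic $\sigma_0$ works for all noise realizations. This is why one cannot conclude by a single application of Lemma \ref{sharp} to the full expectation. The pointwise upper bound of Lemma \ref{upper bound lem} is exactly what repairs this: it supplies the integrable majorant that promotes the non-uniform pointwise limit to a limit of expectations without demanding any uniformity in $\vb$.
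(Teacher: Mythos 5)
Your proof is correct, and it takes a genuinely different route from the paper's. Both arguments rest on the same two pillars---the deterministic upper bound of Lemma \ref{upper bound lem} (which, after the scaling $\dt(\sigma\vb,\sigma\la\paf)=\sigma\dt(\vb,\la\paf)$, gives $\phi_\sigma(\vb)\leq \dt(\vb,\la\paf)^2$) and the deterministic lower bound of Lemma \ref{sharp}---but you combine them via pointwise convergence plus dominated convergence, whereas the paper runs an explicit truncation argument: it splits the Gaussian integral over three regions $S_1$ (large noise, $\|\z\|_2\geq C_1\sigma$), $S_2$ (moderate noise with $\dt(\z,\sigma\la\paf)\geq\eps_1\sigma$), and $S_3$ (small distance), applies Lemma \ref{sharp} only on $S_2$, where the constant $C_0=\la/\eps_1$ is \emph{uniform} over the region, shows the contributions $\hat{I}_1=Q(C_1)$ and $\hat{I}_3\leq\eps_1^2$ are negligible, and then takes limits in the order $\sigma\to0$, $\delta_0\to0$, $C_1\to\infty$, $\eps_1\to0$. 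Your observation that the constant $C_0=\la/d$ may blow up as $d=\dt(\vb,\la\paf)\to0$, making the convergence non-uniform in $\vb$, is exactly the difficulty the paper's region $S_3$ is designed to quarantine; your proof handles it instead by fixing $\vb$ first (so $C_0$ is just a finite number for that realization) and letting the integrable majorant $\dt(\vb,\la\paf)^2$ do the uniform-integrability bookkeeping abstractly. What each approach buys: yours is shorter, avoids the delicate ordering of four nested limits, and needs no positivity of $\Dlf$ (the $d=0$ case is handled trivially by the upper bound); the paper's hand-rolled truncation is self-contained, makes quantitatively explicit which noise realizations contribute to the error, and the same three-region template is reused essentially verbatim for the constrained case (Theorem \ref{finish constrained} in Appendix \ref{constrained case}), which is presumably why the authors chose it. The only points worth making explicit in a polished write-up are that $\phi_\sigma$ is measurable (it is, since the proximal map is continuous in $\vb$) and that dominated convergence along the continuum $\sigma\to0$ is justified by applying it along arbitrary sequences $\sigma_n\to0$; neither is a gap.
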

\begin{proof} To begin with, observe that $\Dlf>0$ as $\la\paf$ is a compact set. Let $\w^*=\x^*(\sigma\la,\sigma\vb)-\x$. Denote the probability density function of an $\Nn(0,c\Iden)$ distributed vector by $p_c(\cdot)$. We can write:
\beq
\E[\|\w^*\|_2^2]=\int_{\z\in\R^n}\| \w^*\|_2^2 p_\sigma(\z)d\z\nn
\eeq
We will split the error term into three parts where the integration is performed over the following sets.
\begin{itemize}
\item $S_1=\{\ab\in\R^n\big|\|\ab\|_2\geq C_1\sigma\}$.
\item $S_2=\{\ab\in\R^n\big| \|\ab\|_2\leq C_1\sigma,~\dt(\ab,\la\sigma \paf)\geq \eps_1\sigma\}$.
\item $S_3=\{\ab\in\R^n\big| \|\ab\|_2\leq C_1\sigma,~\dt(\ab,\la\sigma \paf)\leq \eps_1\sigma\}$.
\end{itemize}
where $C_1,\eps_1$ are positive scalars to be determined later on.
Recall that $\vb=\frac{\z}{\sigma}$ and define the associated restricted integrations $I_i,\hat{I}_i$, which are given as:
\begin{align}
I_i&=\int_{\z\in S_i} \|\w^*\|_2^2 p_\sigma(\z)d\z\nn\\
\hat{I}_i&=\int_{\sigma\vb\in S_i} \dt(\vb,\la\paf)^2 p_1(\vb)d\z\nn
\end{align}
From Proposition \ref{upper summary}, $\E[\|\w^*(\sigma\la,\sigma\vb)\|_2^2]=I_1+I_2+I_3\leq\sigma^2( \hat{I}_1+\hat{I}_2+\hat{I}_3)$. To proceed,
\begin{itemize}
\item We will first argue that $\hat{I_1}$ and $\hat{I_3}$ are smaller compared to $\hat{I_2}$. 
\item Then, we will argue that $I_2$ is close to $\hat{I}_2$.
\item These will yield: 
\beq
I_1+I_2+I_3\geq I_2\approx \hat{I}_2\approx \hat{I}_1+\hat{I}_2+\hat{I}_3\nn
\eeq
\end{itemize}

\noindent{\bf{Claim:}} We have,
\beq
\hat{I}_2\geq\Dlf-[\eps_1^2+Q(C_1)]\label{I2eq}
\eeq
where $Q(C_1)\rightarrow 0$ as $C_1\rightarrow\infty$.
\begin{proof} Using the boundedness of $\paf$, let $R=R_{\x_0}=\sup_{\s\in\paf}\|\s\|_2$. $\dt(\vb,\la\paf)^2$ can be upper bounded as,
\beq
\dt(\vb,\la\paf)^2\leq (\|\vb\|_2+\la R)^2\leq 2(\|\vb\|_2^2+\la^2R^2)\nn.
\eeq
Observe that, $\hat{I}_1$ is calculated over the region $\|\vb\|_2\geq C_1$. As $C_1\rightarrow \infty$, since the multivariate Gaussian distribution have finite moments, we have $\hat{I}_1:=Q(C_1)\rightarrow 0$. 

$\hat{I}_3$ can be upper bounded by the integration of $\eps_1^2$ over $\R^n$ which yields $\hat{I}_3\leq \eps_1^2$. Finally, using,
\beq
\sum_{i=1}^3\hat{I}_i=\Dlf\nn
\eeq
we have $\hat{I_2}\geq \Dlf-\eps_1^2-Q(C_1)$.
\end{proof}
The following lemma finishes the proof by using Lemma \ref{sharp}.
\begin{lem} Let $R=\sup_{\s\in\paf}\|\s\|_2$. Based on Proposition \ref{unif lem}, for any $\delta_0>0$, there exists $\eps_0>0$ so that,
\beq
f(\x_0+\w)-\hat{f}_{\x_0}(\x)\leq \delta_0\|\w\|_2~\text{for all}~\|\w\|_2\leq \eps_0\label{cool delta eps relation}.
\eeq
Now, for any given (fixed) $\eps_1$, $C_1$, $\delta_0$, $\eps_0$, choose $\sigma$ sufficiently small to ensure $(C_1+R\la)\sigma<\eps_0$. Then, for all such $\sigma$'s,
\beq
I_2\geq \sigma^2(1-\sqrt{2\delta_0\la\eps_1^{-1}})\hat{I_2}\label{I2comp}.
\eeq

Following from \eqref{I2comp} and letting $\delta_0\rightarrow 0$ (by choosing $\eps_0$ properly) and keeping $\sigma$ accordingly small, we obtain,
\beq
\lim_{\sigma\rightarrow 0}\frac{\E[\|\w^*(\la,\z)\|_2^2}{\sigma^2 \hat{I_2}}\geq \lim_{\sigma\rightarrow 0}\frac{I_2}{\sigma^2 \hat{I_2}}\geq 1\label{I2comp2}.
\eeq
Since this is true for arbitrary $\eps_1,C_1$ letting $C_1\rightarrow \infty$ and $\eps_1\rightarrow 0$ and using the relation \eqref{I2eq}, we can conclude,
\beq
\lim_{\sigma\rightarrow 0}\frac{\E[\|\w^*(\la,\z)\|_2^2]}{\sigma^2}\geq \Dlf\label{sigma-0}.
\eeq
\end{lem}
\begin{proof}
To show the result, we will use Lemma \ref{sharp}. To apply Lemma \ref{sharp} with $\w_0=\dt(\z,\sigma\la\paf)$, we will show that, all $\z\in S_2$ satisfy the requirements. 
\begin{itemize}
\item First, observe that, for any $\z\in S_2$, from triangle inequality, we have,
\beq
\dt(\z,\sigma\la\paf)\leq \sigma\la R+\sigma C_1\leq (\la R+C_1)\sigma\leq \eps_0.\nn
\eeq
\item Secondly, we additionally have,
\beq
\tau=\sigma\la\leq \frac{\la}{\eps_1}\dt(\z,\sigma\la\paf).\nn
\eeq
\item Using these and $(C_1+\la R)\sigma<\eps_0$, we can apply Lemma \ref{sharp} for all $\z\in S_2$ to obtain,
\beq
\|\w^*(\la,\z)\|_2\geq(1-\sqrt{2\delta_0 \la\eps_1^{-1}})\dt(\z,\sigma\la\paf)\label{final final bound}
\eeq
\end{itemize}
Integrating over $S_2$, we find \eqref{I2comp}. Recall that, $\sigma$ is under our control and all $\z\in S_2=S_2(\sigma)$ with $\sigma<\frac{\eps_0}{\la R+C_1}$ satisfies \eqref{final final bound}; which is only a function of $\delta_0$ when $C_1,\eps_1$ are fixed. We can let $\delta_0\rightarrow 0$, while keeping $\sigma$ sufficiently small compared to $\eps_0$, to obtain, \eqref{I2comp2} after an integration over $S_2$.

\eqref{I2comp2} is true for any constants $C_1,\eps_1>0$ as $\sigma\rightarrow 0$. Letting $C_1\rightarrow \infty, \eps_1\rightarrow 0$ and using \eqref{I2eq}, we obtain,
\beq
\lim_{C_1\rightarrow\infty,\eps_1\rightarrow 0}\frac{\hat{I}_2}{\Dlf}= 1.\nn
\eeq
Finally, combining this with \eqref{I2comp2}, and letting $\sigma\rightarrow 0$ as a function of $C_1\rightarrow\infty,\eps_1\rightarrow 0$, we get the desired result \eqref{sigma-0}. Hence, this shows the $\geq $--direction in Theorem \ref{proximal} as $\sigma\rightarrow 0$.
\end{proof}

\end{proof}

\section{Further discussion on proximal denoising}\label{FurtherDis}
The remaining two things are the proofs for Theorem \ref{constrained} and discussion of \eqref{sandwich} which relates Theorem \ref{constrained} and \ref{proximal}. We should remark that, for Theorem \ref{constrained}, Proposition \ref{upper summary} already provides the upper bound on the estimation error. In order to show the exact equality in \eqref{constrained eq}, we use arguments which are similar to Proposition \ref{asymp} in nature. This time, instead of arguing the function $f(\cdot)$ can be well approximated by the directional derivatives in a small neighborhood, we consider the first order approximation of $F_\Cc(\x_0)$ by the tangent cone $T_\Cc(\x_0)$. The reader is referred to Appendix \ref{constrained case} for the proof.% While, we are not providing the details on constrained denoising result to preserve the flow of the discussion on minimax denoising, \eqref{constrained result} plays a critical role in the derivations of our consequent results.

We will now discuss the relation between the terms $\Delxf$ and $\Dlf$; which relates constrained and regularized estimation results and also the noiseless linear inverse problem \eqref{LinInv} to each other.
%\subsubsection{Finalizing the proof of Theorem \ref{minimaxthm}}
From Propositions \ref{asymp} and \ref{upper summary}, recall that $\Dlf$ corresponds to the worst case NMSE of the $\la$-regularized proximity operator \eqref{proxmain2}. When we tune $\la$ optimally, we can achieve an error as small as $\inf_{\la\geq 0}\Dlf$. We related this to $\Delxf$ via \eqref{sandwich}. The right hand side is due to Theorem $4.5$ of \cite{McCoy} which gives the following bound,
\beq
\inf_{\la\geq 0}\Dlf\leq \Delxf+\frac{2\sup_{\s\in\paf}\|\s\|_2}{f(\frac{\x_0}{\|\x_0\|_2})}\label{technical}
\eeq
when $f(\cdot)$ is a norm. This bound can be enhanced by choosing an $\x$ that maximizes $f(\frac{\x}{\|\x\|_2})$ while ensuring $\pa f(\x)=\paf$. We will now provide some observations on \eqref{technical}. Later on, we will exemplify how this upper bound is quite simple and powerful for well-known structure inducing functions.  We will also propose an alternative upper bound of our own that only depends on the subdifferential $\paf$ and does not require $f(\cdot)$ to be norm.%come up with an upper bound of our own, which will be particularly useful later on.

\subsection{Observations on the upper bound}
Upper bounding $\inf_{\tau\geq 0}\Dx\tau f)$ in terms of $\Delxf$ requires technical argument involving Gaussian concentration. While \eqref{technical} looks simple, magnitudes of $\sup_{\s\in\paf}\|\s\|_2$ and $f(\frac{\x_0}{\|\x_0\|_2})$ might not always be clear.

For the consequent discussion, we will use the following notation. Let,
\beq
R_{\x_0}=\sup_{\s\in\paf}\|\s\|_2~~~~~\text{and}~~~~~f_{\max}(\x_0)=\sup_{\pa f(\x)=\pa f(\x_0)}f\left(\frac{\x}{\|\x\|_2}\right).\eeq $f_{\max}(\x_0)$ corresponds to the largest value of $f(\frac{\x}{\|\x\|_2})$ while keeping subdifferential fixed. Since $\inf_{\tau\geq 0}\Dx\tau f)$ and $\Delxf$ depends only on the set subdifferential we can use $f_{\max}(\x_0)$ to get a better bound on the right hand side of \eqref{technical}.

We will now briefly investigate this quantity for the classical sparsity inducing functions and argue that it has a simple interpretation in general.
\begin{itemize}
\item{$\ell_1$ norm:} If $\x_0$ is a $k$ sparse signal, then, subgradient is given as:
\beq
\pa\|\x_0\|_1=\left\{\s\in\R^n\big |\begin{cases}s_i=\text{sgn}(x_{0,i})~\text{when }~x_{0,i}\neq 0\\s_i\in [-1,1]~\text{else}\end{cases}\right\}
\eeq
Hence, subgradients only depend on the sign of the signal. This way, by keeping $\text{sgn}(\x_0)$ same and normalizing the magnitudes, we can make $f(\frac{\x_0}{\|\x_0\|_2})$ as large as $\sqrt{k}$. On the other hand, $R_{\x_0}$ is trivially equal to $\sqrt{n}$ and it is achieved by choosing $|s_i|=1$ for all $1\leq i\leq n$. Consequently, we find:
\beq
\frac{R_{\x_0}}{f_{\max}(\x_0)}=\sqrt{\frac{n}{k}}\label{quant 1}
\eeq
\item{Nuclear norm:} We now assume $\x_0$ is a vectorized form of a rank $r$ matrix $\X_0\in\R^{d\times d}$ where $n=d^2$. Assume, $\X_0$ has singular value decomposition $\U\Sigma\V^T$ where $\Sigma\in\R^{r\times r}$ has positive diagonals. Set of subgradients of the nuclear norm at $\X_0$ can be given as \cite{Decomp1},
\beq
\pa \|\X_0\|_\st =\{\Sb\big|\U^T\Sb\V=\Iden_{r}~~~\text{and}~~~\|(\Iden_{d}-\U\U^T)\Sb(\Iden_{d}-\V\V^T)\|_\star\leq 1\}\label{nuc subdif}
\eeq
In a very similar manner to the $\ell_1$ norm, subgradients of the nuclear norm depends only on the singular vectors of $\X_0$ and not on its singular values. Consequently, by normalizing singular values of $\X_0$ without modifying $\Sb$ we can make $f(\frac{\x_0}{\|\x_0\|_2})$ as high as $\sqrt{r}$. On the other hand, $\sup_{\Sb\in\pa \|\X_0\|_\star}\|\Sb\|_F$ is $\sqrt{d}$ and it is achieved when all singular values of $\Sb$ are equal to $1$. This yields,
\beq
\frac{R_{\x_0}}{f_{\max}(\x_0)}=\sqrt{\frac{d}{r}}\label{quant 2}
\eeq

\item{$\ell_{1,2}$ norm:} Let $\x_0\in\R^n$ for $n=tb$ as described in \eqref{eq:clo3}. Subdifferential of $\ell_{1,2}$ yields $R_{\x_0}=\sqrt{t}$, \cite{Decomp1}. Similar to $\ell_1$ norm, normalizing $\ell_2$ norms of the individual blocks of $\x_0$, it can be shown that $f_{\max}(\x_0)=\sqrt{k}$. Hence,
\beq
\frac{R_{\x_0}}{f_{\max}(\x_0)}=\sqrt{\frac{t}{k}}\label{quant 3}
\eeq
\end{itemize}
The reason behind these examples is to provide the reader with an intuition about the quantity $\frac{R_{\x_0}}{f_{\max}(\x_0)}$. From these examples, we observe the following relation,
\beq
\frac{R_{\x_0}}{f_{\max}(\x_0)}\approx\left(\frac{\text{Ambient dimension}}{\text{Degrees of freedom}}\right)^{1/2}
\eeq
For $\ell_1$ minimization, this is fairly clear as a sparse signal has $k$ degrees of freedom. For a $d\times d$ rank $r$ matrix, degrees of freedom is $r(2d-r)$, \cite{RechtNuc}, which lies between $dr$ and $2dr$ and the ambient dimension that the matrix belongs is $d^2$. From \eqref{quant 2}, we indeed have: $\sqrt{\frac{d}{r}}=\sqrt{\frac{d^2}{dr}}$.

\subsubsection{Simple bounds on optimal tuning for common functions}
From \eqref{quant 1} and \eqref{quant 2}, we can state the following results as straightforward observations.
\begin{propo}[Sandwiching optimally tuned NMSE] Let $\x_0\in\R^n$.
\begin{itemize}
\item Assume, $\x_0$ is a $k$ sparse signal (recall \eqref{eq:clo1}). Then, using ${\bf{D}}(\text{cone}(\pa\|\x_0\|_1))\geq k$, we have:
\beq
1\leq \frac{\min_{\la\geq 0}{\bf{D}}(\la\pa\|\x_0\|_1)}{{\bf{D}}(\text{cone}(\pa\|\x_0\|_1))}\leq 1+2\sqrt{\frac{n}{k^3}}
\eeq
\item Assume $\x_0$ corresponds to a rank $r$ matrix with size $d\times d$ where $n=d^2$ (recall \eqref{eq:clo2}). Using ${\bf{D}}(\text{cone}(\pa\|\x_0\|_\star))\geq rd$,
\beq
1\leq\frac{\min_{\la\geq 0}{\bf{D}}(\la\pa\|\x_0\|_\st)}{{\bf{D}}(\text{cone}(\pa\|\x_0\|_\st))}\leq 1+2\sqrt{\frac{1}{r^3d}}
\eeq
\item Assume $\x_0$ is a block sparse signals (recall (\ref{eq:clo3})) where $n=tb$ has $t$ blocks of size $b$ and $k$ of the blocks are nonzero. Using ${\bf{D}}(\text{cone}(\pa\|\x_0\|_{1,2}))\geq kb$,
\beq
1\leq\frac{\min_{\la\geq 0}{\bf{D}}(\la\pa\|\x_0\|_{1,2})}{{\bf{D}}(\text{cone}(\pa\|\x_0\|_{1,2}))}\leq 1+2\sqrt{\frac{t}{b^2k^3}}
\eeq
\end{itemize}
\end{propo}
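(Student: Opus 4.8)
The plan is to prove each of the three sandwiches by handling its two inequalities separately, uniformly across all cases, since once the inequality in \eqref{technical} is in hand every case reduces to substituting the already-computed ratio $R_{\x_0}/f_{\max}(\x_0)$ and the stated lower bound on $\Delxf$. So I would first isolate the two general inequalities, then specialize.

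The lower inequality $1\le \min_{\la\ge 0}\Dlf/\Delxf$ is immediate from set inclusion. For every $\la\ge 0$ we have $\la\paf\subseteq\cn(\paf)$, hence $\dt(\g,\cn(\paf))\le\dt(\g,\la\paf)$ pointwise in $\g$; squaring and taking expectations gives $\Delxf\le\Dlf$ for each $\la$, and minimizing the right-hand side over $\la\ge0$ preserves this, so $\Delxf\le\min_{\la\ge0}\Dlf$. This is exactly the left half of \eqref{sandwich}, and it supplies the lower bound of $1$ in all three cases at once.

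For the upper inequality I would invoke the enhanced form of \eqref{technical}, namely
\[
\min_{\la\ge 0}\Dlf\le \Delxf+\frac{2R_{\x_0}}{f_{\max}(\x_0)},
\]
where replacing $f(\x_0/\|\x_0\|_2)$ by $f_{\max}(\x_0)$ is justified because both $\Dlf$ and $\Delxf$ depend on $\x_0$ only through the set $\paf$, so one is free to pick the representative maximizing the normalized value. Dividing through by $\Delxf$ gives
\[
\frac{\min_{\la\ge 0}\Dlf}{\Delxf}\le 1+\frac{2R_{\x_0}/f_{\max}(\x_0)}{\Delxf}.
\]
It then remains to substitute the case-specific quantities: from \eqref{quant 1}, \eqref{quant 2}, \eqref{quant 3} the ratio $R_{\x_0}/f_{\max}(\x_0)$ equals $\sqrt{n/k}$, $\sqrt{d/r}$, and $\sqrt{t/k}$, while the stated dimension bounds give $\Delxf\ge k$, $\ge rd$, and $\ge kb$. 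A one-line simplification then yields $2\sqrt{n/k}/k=2\sqrt{n/k^3}$ for the sparse case, $2\sqrt{d/r}/(rd)=2\sqrt{1/(r^3 d)}$ for the low-rank case, and $2\sqrt{t/k}/(kb)=2\sqrt{t/(b^2 k^3)}$ for the block-sparse case, matching the three claimed right-hand sides.

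Thus the conceptual content is carried entirely by \eqref{technical} and the ratio computations already performed, and no genuinely new estimate is required. I expect the only real obstacle to be a clean justification of the trio of lower bounds $\Delxf\ge k,\,rd,\,kb$; these are the ``degrees of freedom'' lower bounds on the statistical dimension of the tangent cone $T_f(\x_0)$. They follow from the known closed-form estimates for $\delta(T_f(\x_0))$ for these structures (cf.\ the references behind Table~\ref{table:closed} and \cite{McCoy}); intuitively, a dimension count shows that $T_f(\x_0)$ essentially contains a subspace of the stated dimension, and the statistical dimension of any cone is at least the dimension of a contained subspace, which already gives the correct order, with the exact constant supplied by the closed-form formula. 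Everything else is routine arithmetic.
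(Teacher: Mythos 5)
Your proposal is correct and takes essentially the same route as the paper's own proof: the lower bound of $1$ via the inclusion $\la\paf\subseteq\cn(\paf)$ (the left half of \eqref{sandwich}), the upper bound via \eqref{technical} enhanced with $f_{\max}(\x_0)$, and then substitution of the ratios \eqref{quant 1}--\eqref{quant 3} together with the degrees-of-freedom bounds \eqref{super trivial}, which the paper likewise defers to the cited literature (\cite{Decomp1,Negah2,simultaneous}) rather than proving. Your case-by-case arithmetic reproduces the three stated right-hand sides exactly.
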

\begin{proof}
For the proof, we combine \eqref{quant 1}, \eqref{quant 2}, \eqref{quant 3} with the facts that,
\beq
{\bf{D}}(\text{cone}(\pa\|\x_0\|_{1}))\geq k,~~~{\bf{D}}(\text{cone}(\pa\|\x_0\|_{\st}))\geq rd,~~~ {\bf{D}}(\text{cone}(\pa\|\x_0\|_{1,2}))\geq bk\label{super trivial}.
\eeq
Then, we apply \eqref{technical}. These bounds trivially arise from the specific structure of the sub-differentials of the $\ell_1$, nuclear norm and $\ell_{1,2}$ norm. For \eqref{super trivial}, the reader is referred to \cite{Decomp1,Negah2,simultaneous} for a detailed discussion. Note that $k$, $2dr-r^2$ and $bk$ are the ``degrees of freedom'' for the $k$ sparse signal, rank $r$ matrix and $k$ block-sparse signal hence \eqref{super trivial} is intuitive.%and the corresponding phase transitions for the $\ell_1$, nuclear and $\ell_{1,2}$ norms are known to satisfy these inequalities, \cite{Decomp1}. This is, in fact, intuitive as one needs at least ``degrees of freedom'' number of measurements to be able to hope for recovery.
\end{proof}

\subsubsection{Alternative upper bounds}
We should emphasize that, different bounds can be developed. We will next consider a bound that does not involve the term $f(\x)$; however requires the following assumption.
\begin{assume}\label{assume 1} There exists a nontrivial subspace $T=T(f,\x_0)$ and a vector $\e=\e(f,\x_0)$ so that, for all $\s\in\paf$, we have:
\beq
\bu(\s,T)=\e
\eeq
\end{assume}
This assumption is related to the concept of decomposable norms, \cite{LPS,simultaneous,Negah2,Decomp1}, and is known to be true for sparsity inducing functions. If there are many such subspaces, we will choose the one with the largest dimension. In general, $T$ is called the \emph{support} of the sparse signal and $\e$ is known as the ``sign'' vector. For example, for $\ell_1$ norm, $T$ is the set of vectors, whose location of the nonzero entries are same as that of $\x_0$. On the other hand, $\e$ is simply the vector $\text{sgn}(\x_0)$.

For the nuclear norm described in \eqref{nuc subdif}, $T$ is the set of matrices $\X$ satisfying $(\Iden_{d}-\U\U^T)\Sb(\Iden_{d}-\V\V^T)=0$ and the sign vector is $\U\V^T$.

To be able to analyze $\min_{\tau\geq 0}\Dx\tau f)$, for a given vector $\vb$, we are interested in the optimal regularizer $\tau(\vb)$ that minimizes $\dt(\vb,\tau \pa f(\x_0))$ over all $\tau\geq 0$. In general, $\tau(\vb)$ may not be unique. In this case, we will use the smallest value. Assumption \ref{assume 1} implies an interesting property for $\tau(\vb)$, namely, $\tau(\vb)$ is a $\|\e\|_2^{-1}$-Lipschitz function of $\vb$. In other words, the optimal regularizer cannot change too much with a small change in $\vb$. More rigorously, for all $\vb_1,\vb_2\in\R^n$,
\beq
\|\tau(\vb_1)-\tau(\vb_2)\|_2\leq \frac{\|\vb_1-\vb_2\|_2}{\|\e\|_2}
\eeq

Based on this observation, we have the following result. The proof can be found in Lemma \ref{my result proof} of Appendix \ref{app swap}.
\begin{propo}\label{my bound} Suppose Assumption \ref{assume 1} holds. Let $L_{\x_0}=\|\e\|_2^{-1}$. Then,
\beq
\inf_{\tau\geq 0}\Dx\tau f)\leq \Delxf+2\pi(R_{\x_0}^2L_{\x_0}^2+R_{\x_0}L_{\x_0}\sqrt{\Delxf}+1)\label{up bound 2}
\eeq
\end{propo}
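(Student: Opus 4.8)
The plan is to read the gap between the two sides as the cost of exchanging an infimum with an expectation. Since $\cn(\paf)=\{\tau\s:\s\in\paf,\ \tau\geq 0\}$, the definition $\Delxf={\bf{D}}(\cn(\paf))$ unwinds to
\beq
\Delxf=\E\Big[\inf_{\tau\geq 0}\dt(\g,\tau\paf)^2\Big],\nn
\eeq
whereas the left-hand side of the claim is $\inf_{\tau\geq 0}\E[\dt(\g,\tau\paf)^2]$. These differ only in the order of $\inf$ and $\E$; the inequality $\Delxf\leq\inf_{\tau\geq 0}\Dx\tau f)$ is automatic, and the real content is to bound the reverse slack. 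For each $\vb$ let $\tau(\vb)=\arg\min_{\tau\geq 0}\dt(\vb,\tau\paf)$ (the smallest minimizer), so that the inner optimizer is realized pointwise and $\Delxf=\E[\dt(\g,\tau(\g)\paf)^2]$. The idea is to freeze the regularizer at the single deterministic value $\tau^*=\E[\tau(\g)]$, which is finite because $\tau(\cdot)$ is $L_{\x_0}$-Lipschitz (as established above from Assumption \ref{assume 1}) and hence $\tau(\g)$ has Gaussian tails. Then $\inf_{\tau\geq 0}\Dx\tau f)\leq\E[\dt(\g,\tau^*\paf)^2]$, and it remains only to compare this against $\Delxf$.

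Next I would establish a pointwise perturbation bound. Writing $\tau(\g)\s^*(\g)=\bu(\g,\tau(\g)\paf)$ for the projection, so that $\dt(\g,\tau(\g)\paf)=\|\g-\tau(\g)\s^*(\g)\|_2$ with $\s^*(\g)\in\paf$, the feasible point $\tau^*\s^*(\g)\in\tau^*\paf$ together with the triangle inequality gives
\beq
\dt(\g,\tau^*\paf)\leq\|\g-\tau^*\s^*(\g)\|_2\leq \dt(\g,\tau(\g)\paf)+|\tau(\g)-\tau^*|\,\|\s^*(\g)\|_2\leq \inf_{\tau\geq 0}\dt(\g,\tau\paf)+R_{\x_0}|\tau(\g)-\tau^*|,\nn
\eeq
using $\|\s^*(\g)\|_2\leq R_{\x_0}$. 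Squaring, expanding, and taking expectations yields
\beq
\E[\dt(\g,\tau^*\paf)^2]\leq \Delxf+2R_{\x_0}\,\E\big[\dt(\g,\tau(\g)\paf)\,|\tau(\g)-\tau^*|\big]+R_{\x_0}^2\,\E\big[(\tau(\g)-\tau^*)^2\big].\nn
\eeq
The cross term is controlled by Cauchy--Schwarz, bounding it by $R_{\x_0}\sqrt{\Delxf}\sqrt{\E[(\tau(\g)-\tau^*)^2]}$, so the entire estimate reduces to controlling the single scalar fluctuation $\E[(\tau(\g)-\tau^*)^2]=\mathrm{Var}(\tau(\g))$.

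The heart of the argument, and the step I expect to be the main obstacle, is bounding this variance. Here I would invoke the $L_{\x_0}$-Lipschitz continuity of $\tau(\cdot)$ together with Gaussian concentration: for a Lipschitz function of a standard Gaussian vector, concentration about its mean gives $\mathrm{Var}(\tau(\g))\lesssim L_{\x_0}^2$ (a clean constant follows from the Gaussian Poincar\'e inequality, while a more hands-on sub-Gaussian tail integral produces explicit constants --- the factors of $2\pi$ and the additive term in the statement are the price of the latter, cruder estimate). Substituting $\mathrm{Var}(\tau(\g))\leq c\,L_{\x_0}^2$ into the two remaining terms yields an $R_{\x_0}^2L_{\x_0}^2$ contribution from the quadratic term and an $R_{\x_0}L_{\x_0}\sqrt{\Delxf}$ contribution from the cross term, which assembled together give the claimed bound. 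The delicate points to monitor are that $\tau(\cdot)$ need only be Lipschitz, not differentiable, for the concentration step to apply, via Rademacher's theorem, and that the $\arg\min$ defining $\tau(\vb)$ is well defined and measurable, which is precisely why the smallest minimizer is singled out; the Lipschitz estimate itself, the genuinely structural ingredient, is exactly what Assumption \ref{assume 1} supplies through the fixed projection $\bu(\s,T)=\e$.
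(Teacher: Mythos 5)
Your proposal is correct, and its skeleton coincides with the paper's own proof (Lemma \ref{my result proof} in Appendix \ref{app swap}): both arguments freeze the regularizer at $\tau^*=\E[\tau(\g)]$, use the same triangle-inequality perturbation $\dt(\g,\tau^*\paf)\leq\dt(\g,\tau(\g)\paf)+R_{\x_0}|\tau(\g)-\tau^*|$, and control the fluctuation of $\tau(\g)$ through Gaussian concentration for Lipschitz functions, with the Lipschitz constant $L_{\x_0}=\|\e\|_2^{-1}$ supplied by Assumption \ref{assume 1} via Lemma \ref{lemF3}. Where you genuinely diverge is the moment bookkeeping. The paper works at the level of first moments: it bounds $\E[\dt(\g,\tau^*\paf)]\leq\sqrt{\Delxf}+\sqrt{2\pi}R_{\x_0}L_{\x_0}$ using the expected-absolute-deviation estimate $\E\big[|\tau(\g)-\tau^*|\big]\leq\sqrt{2\pi}L_{\x_0}$ (Lemma \ref{lem abs lip}, proved by a tail integral), then squares, and finally pays an additive $+1$ to convert $\E[\dt(\g,\tau^*\paf)]^2$ into $\E[\dt(\g,\tau^*\paf)^2]$ via the variance bound of Lemma \ref{lem square lip}. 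You instead square the pointwise inequality, take expectations, and dispatch the cross term by Cauchy--Schwarz, so that the entire argument reduces to the single bound $\mathrm{Var}(\tau(\g))\leq L_{\x_0}^2$, i.e.\ one application of the Gaussian Poincar\'e inequality (Lemma \ref{lem square lip}) applied to $\tau(\cdot)$ itself; the tail-integral lemma is never needed. This buys a strictly sharper conclusion, $\inf_{\tau\geq 0}\Dx\tau f)\leq\left(\sqrt{\Delxf}+R_{\x_0}L_{\x_0}\right)^2$, with no additive constant and no factors of $2\pi$, and since $2\leq 2\pi$ and $1\leq 2\pi$ it implies \eqref{up bound 2}. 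The technical points you flag --- taking the smallest minimizer so that $\tau(\vb)$ is well defined, and finiteness of $\tau^*$ --- are handled in the paper exactly as you propose.
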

The right hand side of \eqref{up bound 2} involves the quantity $R_{\x_0}L_{\x_0}$. For the norms we considered, this quantity is actually equal to $\frac{R_{\x_0}}{f_{\max}(\x_0)}$ as we have $\|\e\|_2=f_{\max}(\x_0)$. Consequently, there is a close relation between the two upper bounds \eqref{technical} and \eqref{up bound 2}. Observe that, unlike \eqref{technical}, \eqref{up bound 2} only depends on the properties of the subdifferential set $\pa f(\x_0)$ rather than the particular value of $f(\x_0)$. 

Finally, we should emphasize that $\|\e\|_2^2$ is closely related to the ``generalized sparsity'' of the signal. The reader is referred to \cite{Negah2,Decomp1,LPS,simultaneous} for a comprehensive discussion.

\section{Calculation of the LASSO Cost}\label{sec proj error}
This section will make use of the results of \cite{McCoy}. Let us introduce the concept of statistical dimension and its relation to Gaussian squared-distance.
\begin{defn} [Statistical dimension, \cite{McCoy}] Let $\Cc\in\R^n$ be a closed and convex cone. Statistical dimension of $\Cc$ is denoted by $\delta(\Cc)$ and is defined as,
\beq
\delta(\Cc)=\E[\|\bu(\g,\Cc)\|_2^2],\nn
\eeq
where $\g\sim\Nn(0,\Iden_n)$.
\end{defn}
When $\Cc$ is a linear subspace, the statistical dimension reduces to the regular notion of dimension. In general, $\delta(\Cc)$ is a good measure of the \emph{size} of $\Cc$ and $\Cc$ behaves like a $\delta(\Cc)$ dimensional linear subspace, \cite{McCoy}. It is also closely related to the Gaussian width that has been the topics of papers \cite{Cha,ChaJor}. When $\Cc$ is a convex and closed cone, $\delta(\Cc)$ and $\DC$ are related as follows,
\beq
\delta(\Cc^*)=\DC\nn
\eeq
This follows from Moreau's decomposition (Fact \ref{more}) as follows,
\beq
\DC=\E[\dt(\g,\Cc)^2]=\E[\|\Pi(\g,\Cc)\|_2^2]=\E[\|\bu(\g,\Cc^*)\|_2^2]=\delta(\Cc^*)\nn
\eeq
Again using Moreau's decomposition, it can be shown that $\delta(\Cc)+\delta(\Cc^*)=\E[\|\g\|_2^2]=n$.

\subsection{Proof of Theorem \ref{LASSO}}
We first present the proof of Theorem \ref{LASSO}. To do this, we will make use of Theorem I of \cite{McCoy}; which characterizes the probability of nontrivial intersection for two randomly oriented cones. To rotate a cone $\Cc$ randomly, we will multiply its elements by a unitary matrix $\U$; which is drawn uniformly at random.
\beq
\U \Cc:=\{\U\ab\big|\ab\in\Cc\}\nn
\eeq
\begin{thm}[Kinematic Formula, \cite{McCoy}]\label{kinematics} Let $A,B$ be two arbitrary closed and convex cones, one of which is not a subspace. Let $\U$ be a unitary matrix chosen uniformly at random with respect to Haar measure. Then, for any $\eps>0$, the followings hold:
\begin{align}
&\delta(A)+\delta(B)\leq n-\eps \sqrt{n}\implies \Pro(A\cap \U B=\emptyset)\geq 1-4\exp(-\frac{\eps^2}{16})\nn\\
&\delta(A)+\delta(B)\geq n+\eps \sqrt{n}\implies \Pro(A\cap \U B=\emptyset)\leq4\exp(-\frac{\eps^2}{16})\nn
\end{align}
\end{thm}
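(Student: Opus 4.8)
Since the statement is quoted verbatim from Theorem~I of \cite{McCoy}, the cleanest route in the body of the paper is simply to cite it; but were I to reconstruct a self-contained argument I would work entirely inside \emph{conic integral geometry}. The plan is to attach to every closed convex cone $\Cc\subseteq\R^n$ its sequence of \emph{conic intrinsic volumes} $v_0(\Cc),\dots,v_n(\Cc)$, which are nonnegative, sum to one, and therefore define a probability distribution on $\{0,1,\dots,n\}$ whose mean is exactly the statistical dimension, $\delta(\Cc)=\sum_{k=0}^n k\,v_k(\Cc)$. The two essential ingredients are then: (i) an \emph{exact} kinematic identity expressing the probability of a nontrivial intersection $A\cap\U B\neq\{0\}$ as a prescribed linear combination of the products $v_i(A)v_j(B)$; and (ii) a sharp \emph{concentration} statement saying that each intrinsic-volume distribution clusters tightly around its mean $\delta(\cdot)$. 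Combining the two converts the smooth kinematic expression into the sharp threshold at $\delta(A)+\delta(B)=n$. The hypothesis that one cone is not a subspace is what keeps the intrinsic-volume distribution nondegenerate and the identity valid.

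First I would establish the spherical/conic kinematic formula. Using the conic Gauss--Bonnet relation together with the rotation invariance of the Haar measure on the unitary group, the intersection probability can be written as a ``half-tail'' functional of the interleaved sequence: schematically,
\beq
\Pro(A\cap\U B\neq\{0\})=\sum_{k}\;h_k\!\Big(\sum_{i+j=n+k} v_i(A)\,v_j(B)\Big),\nn
\eeq
where the weights $h_k$ select the contributions with $i+j>n$. The consequence I would extract is a monotone squeeze: if $K_A,K_B$ denote \emph{independent} random indices drawn from the two intrinsic-volume distributions, then $\Pro(A\cap\U B\neq\{0\})$ is sandwiched between tail probabilities of $K_A+K_B$ about the level $n$, so that the geometry is reduced to a question about the convolution of two discrete distributions.

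Second I would prove that each $v_k(\Cc)$ concentrates around $k=\delta(\Cc)$. The cleanest derivation uses the master Steiner formula: for $\g\sim\Nn(0,\Iden_n)$ the squared projection length $\|\bu(\g,\Cc)\|_2^2$ is a chi-square mixture governed by the intrinsic-volume distribution, and $\E\|\bu(\g,\Cc)\|_2^2=\delta(\Cc)$. A Gaussian-Lipschitz concentration argument applied to $\dt(\g,\Cc)$ and to $\|\bu(\g,\Cc)\|_2$ then yields subgaussian tails $\Pro(|K_\Cc-\delta(\Cc)|\ge t)\le 2\exp(-t^2/(\text{const}))$, with the variance proxy tracked carefully enough to produce the exponent $1/16$. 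Feeding these tails into the squeeze above: when $\delta(A)+\delta(B)\le n-\eps\sqrt n$ the sum $K_A+K_B$ exceeds $n$ only on an event of probability at most $4\exp(-\eps^2/16)$, which gives the first implication; the reversed hypothesis $\delta(A)+\delta(B)\ge n+\eps\sqrt n$ gives the second by the symmetric estimate.

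The genuinely hard part is ingredient (i): establishing the exact conic kinematic formula and isolating from it the clean ``mass above level $n$'' bound. This rests on the full apparatus of spherical integral geometry---conic intrinsic volumes, the polarity relation $v_k(\Cc)=v_{n-k}(\Cc^*)$, and the Gauss--Bonnet/kinematic identities---none of which is elementary. By contrast the concentration in (ii), while delicate in its constants, is a fairly standard Gaussian-concentration computation once the Steiner formula identifies the intrinsic-volume distribution with the law of the squared projection. This asymmetry in difficulty is precisely why, in the body of the paper, I would invoke Theorem~I of \cite{McCoy} rather than reprove it.
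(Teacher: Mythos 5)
The paper gives no proof of this theorem: it is imported verbatim as Theorem I of \cite{McCoy}, exactly as you propose, so citing it is indeed the same approach the paper takes. Your sketch of how the cited proof actually runs --- conic intrinsic volumes summing to one with mean $\delta(\cdot)$, the exact conic kinematic formula reducing the intersection probability to the mass of $v_i(A)v_j(B)$ with $i+j>n$, and subgaussian concentration of the intrinsic-volume distribution about the statistical dimension via the master Steiner formula --- is a faithful outline of the argument in \cite{McCoy}, so there is nothing to reconcile.
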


Theorem \ref{kinematics} shows that, two cones will intersect with high probability when $\delta(A)+\delta(B)>n$. In order to prove Theorem \ref{LASSO}, we introduce a useful variation of Theorem \ref{kinematics}, which probabilistically characterizes the statistical dimension of the intersection when two cones do intersect. Basically, we show that, when $\delta(A)+\delta(B)$, with high probability, $\delta(A\cap \U B)$ is around $\delta(A)+\delta(B)-n$. The detailed discussion of this can be found in Appendix \ref{demixing appendix}. We now proceed with the proof of Theorem \ref{LASSO} and as a first step, we will prove the results on the projected LASSO error $\eta_{LASSO}$.

\subsection{Finding $\eta_{LASSO}$}\label{sec etaLASSO}
\begin{proof} [Proof of Theorem \ref{LASSO}, Calculation of $\eta_{LASSO}$] We will reduce \eqref{LASSOopt} to a problem that is equivalent to the regular constrained denoising \eqref{constrained prox}. Consider the set, 
\beq
\A\Cc=\{\A\x\big|\x\in\Cc\}\nn
\eeq
Hence, finding $\A\x^*$ is equivalent to finding the solution of:
\beq
\min_{\ub\in \A\Cc}\|\y-\ub\|_2\label{c proj}
\eeq
where $\y=\A\x_0+\sigma\vb$. From Theorem \ref{constrained}, we know that, for fixed $\A$ and Gaussian $\vb$, the worst case NMSE $\A(\x-\x_0)$ satisfies:
\beq
\eta_{LASSO}=\max_{\sigma> 0}\frac{\E[\|\A(\x-\x_0)\|_2^2]}{\sigma^2}={\bf{D}}(T_{\A\Cc}(\A\x_0)^*)\label{etalasso1}
\eeq
Hence, we simply need to characterize ${\bf{D}}(T_{\A\Cc}(\A\x_0)^*)$ to conclude. To do this, we will make use of the fact that $\A\in\R^{m\times n}$ is a uniformly random partial unitary matrix and we will first characterize $T_{\A\Cc}(\A\x_0)^*$. The following lemma, provides this characterization.
\begin{lem} \label{ACTangent}Assume $\A\A^T=\Iden_{m}$ and let $\Cc(\x_0,\A^T)=T_\Cc(\x_0)^*\cap \text{Range}(\A^T)$. Then,
\beq
T_{\A\Cc}(\A\x_0)^*=\A \Cc(\x_0,\A^T)\label{equal tangent}
\eeq
\end{lem}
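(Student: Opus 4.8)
The plan is to reduce the computation of $T_{\A\Cc}(\A\x_0)^*$ to a polar-cone calculation under the linear map $\A$, and only at the very end to invoke the partial-unitary identity $\A\A^T=\Iden_m$ to put the answer in the stated form. First I would identify the feasible directions of $\A\Cc$ at $\A\x_0$. Since $\A\x_0+\w\in\A\Cc$ holds precisely when $\w=\A\z$ for some $\z$ with $\x_0+\z\in\Cc$, we get $F_{\A\Cc}(\A\x_0)=\A F_\Cc(\x_0)$, and therefore $\cn(F_{\A\Cc}(\A\x_0))=\A\,\cn(F_\Cc(\x_0))$. This is the only place the definition of $\A\Cc$ is used directly.

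Next I would pass to polars. Using that the polar of a cone is insensitive to taking the closure, we have $T_{\A\Cc}(\A\x_0)^*=\big(\cn(F_{\A\Cc}(\A\x_0))\big)^*=\big(\A\,\cn(F_\Cc(\x_0))\big)^*$, and likewise $\big(\cn(F_\Cc(\x_0))\big)^*=T_\Cc(\x_0)^*$. Combining this with the image-polar identity, valid for an arbitrary cone $\Kc$,
\beq
(\A\Kc)^*=\{\y\in\R^m\,|\,\A^T\y\in\Kc^*\},\nn
\eeq
which follows at once from $\li\y,\A\x\ri=\li\A^T\y,\x\ri$, I obtain the reformulation
\beq
T_{\A\Cc}(\A\x_0)^*=\{\y\in\R^m\,|\,\A^T\y\in T_\Cc(\x_0)^*\}.\nn
\eeq
This is the crux: the polar tangent cone of the pushed-forward set is exactly the set of $\y$ whose pullback $\A^T\y$ lands in $T_\Cc(\x_0)^*$.

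It then remains to verify that this set equals $\A\,\Cc(\x_0,\A^T)$ with $\Cc(\x_0,\A^T)=T_\Cc(\x_0)^*\cap\Rc(\A^T)$, and here the assumption $\A\A^T=\Iden_m$ finally enters. For the inclusion ``$\subseteq$'': given $\y$ with $\A^T\y\in T_\Cc(\x_0)^*$, set $\s=\A^T\y$; then $\s\in T_\Cc(\x_0)^*\cap\Rc(\A^T)$ and $\A\s=\A\A^T\y=\y$, so $\y\in\A\,\Cc(\x_0,\A^T)$. For ``$\supseteq$'': given $\y=\A\s$ with $\s=\A^T\vb\in T_\Cc(\x_0)^*\cap\Rc(\A^T)$, compute $\A^T\y=\A^T\A\s=\A^T\A\A^T\vb=\A^T\vb=\s\in T_\Cc(\x_0)^*$. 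Both directions hinge solely on $\A\A^T=\Iden_m$, i.e. on the fact that $\A^T$ is a right inverse of $\A$ and that $\A^T\A$ restricts to the identity on $\Rc(\A^T)$.

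The only genuinely delicate point is the polar step: one must argue that passing to the polar legitimately erases the closure and conic-hull operations hidden inside $T_{\A\Cc}(\A\x_0)=\cl(\cn(F_{\A\Cc}(\A\x_0)))$, so that the image-polar identity may be applied to the raw cone $\A\,\cn(F_\Cc(\x_0))$ rather than to its closure. I expect this to be the main obstacle to phrase cleanly, though it is standard and requires no assumption on $\A$; everything after it is a two-line membership argument driven entirely by the partial-unitary identity, which is precisely what forces the intersection with $\Rc(\A^T)$ to appear in the final formula.
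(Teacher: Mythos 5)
Your proof is correct, and it takes a genuinely different route from the paper's. The paper argues element-wise in both directions: for the inclusion $\A\Cc(\x_0,\A^T)\subseteq T_{\A\Cc}(\A\x_0)^*$ it takes $\s=\A\s'$ with $\s'\in T_\Cc(\x_0)^*\cap\Rc(\A^T)$ and an arbitrary feasible direction $\ub=\A(\x-\x_0)$, $\x\in\Cc$, and verifies $\li\s,\ub\ri=\li\s',\x-\x_0\ri\leq 0$ by inserting $\A^T\A$ inside the inner product; for the converse it works by contraposition, pulling $\s\notin\A\Cc(\x_0,\A^T)$ back to $\s'=\A^T\s\notin T_\Cc(\x_0)^*$, extracting $\x\in\Cc$ with $\li\x-\x_0,\s'\ri>0$, and pushing forward to conclude $\li\A(\x-\x_0),\s\ri>0$, hence $\s\notin T_{\A\Cc}(\A\x_0)^*$. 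You instead factor the claim through the image-polar identity $(\A\Kc)^*=\{\y\,\big|\,\A^T\y\in\Kc^*\}$, obtaining the preimage characterization $T_{\A\Cc}(\A\x_0)^*=\{\y\,\big|\,\A^T\y\in T_\Cc(\x_0)^*\}$, valid for an arbitrary matrix $\A$, and only at the end use $\A\A^T=\Iden_m$ to convert this preimage into the image $\A\left(T_\Cc(\x_0)^*\cap\Rc(\A^T)\right)$. Your decomposition buys a clean separation of the convex-geometric content (push-forward of feasible sets; polars erasing closures and conic hulls) from the linear-algebraic content (the right-inverse property of $\A^T$), and it makes transparent that the intersection with $\Rc(\A^T)$ in the statement is just the cost of rewriting a preimage as an image. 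The paper's membership argument buys brevity and self-containedness: the erasure facts you flag as the delicate point are absorbed invisibly into the two membership checks, since nonpositivity of $\li\s,\cdot\ri$ on the raw feasible set $F_{\A\Cc}(\A\x_0)$ extends to its closed conic hull by scaling and continuity, which is exactly why checking raw feasible directions suffices. Both proofs ultimately rest on the same two algebraic facts: the adjoint relation $\li\y,\A\x\ri=\li\A^T\y,\x\ri$ and the fact that $\A^T\A$ restricts to the identity on $\Rc(\A^T)$.
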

\begin{proof} Assume $\s\in \A \Cc(\x_0,\A^T)$ and $\ub\in F_{\A\Cc}(\A\x_0)$. Then, there exists $\x\in\Cc$ such that, $\ub=\A(\x-\x_0)$ and $\s'\in \Cc(\x_0,\A^T)$ such that $\A\s'=\s$. Then,
\beq
\li\s,\ub\ri=\li\A\s',\A(\x-\x_0)\ri=\li\A^T\A\s',\A^T\A(\x-\x_0)\ri=\li\s',\A^T\A(\x-\x_0)\ri=\li\s',\x-\x_0\ri\leq 0\nn
\eeq
Hence $\s\in T_{\A\Cc}(\A\x_0)^*$. Conversely, assume $\s\not\in \A \Cc(\x_0,\A^T)$ and let $\s'=\A^T\s$. $\s'\not\in T_\Cc(\x_0)^*$ because $\s'\in\text{Range}(\A^T)$ however $\A\s'\not\in \A \Cc(\x_0,\A^T)$. Consequently, there exists $\x\in\Cc$ such that $\li\x-\x_0,\s'\ri>0$. It follows,
\beq
0<\li\x-\x_0,\s'\ri=\li\x-\x_0,\A^T\A\s'\ri=\li\A(\x-\x_0),\A\s'\ri=\li\A(\x-\x_0),\s\ri.\nn
\eeq
where $\A\x-\A\x_0\in T_{\A\Cc}(\A\x_0)$. This implies $\s\not\in T_{\A\Cc}(\A\x_0)^*$. Overall, we find \eqref{equal tangent}.
\end{proof}

Based on Lemma \ref{ACTangent}, we can write,
\beq
\eta_{LASSO}={\bf{D}}(T_{\A\Cc}(\A\x_0)^*)=m-\delta(T_{\A\Cc}(\A\x_0)^*)=m-\delta(\A \Cc(\x_0,\A^T)),\nn
\eeq
and calculate $\delta(\A \Cc(\x_0,\A^T))$ with high probability. Recall from Lemma \ref{ACTangent} that, $\Cc(\x_0,\A^T)$ is the intersection of a cone with the uniformly random subspace $\text{Range}(\A^T)$. Hence Theorem \ref{kinematics} is applicable. We will now split the problem in two cases.
\begin{itemize}
\item {\bf{Case 1 $m<\Delxc$:}} In this case, using Theorem \ref{kinematics} with $t=\frac{\Delxc-m}{\sqrt{n}}$, we find, with probability $1-c_1\exp(-c_2\frac{(m-\Delxc)^2}{n})$, $T_\Cc(\x_0)^*\cap \text{Range}(\A^T)=\{0\}$. Consequently, $T_{\A\Cc}(\A\x_0)^*=\A\{0\}=\{0\}$ which gives $\eta_{LASSO}=m$.
\item {\bf{Case 2 $m>\Delxc$:}} In this case, we use a modification of Theorem \ref{kinematics} which characterizes the statistical dimension of the intersection of a random subspace and a cone. This is, in fact, the topic of Appendix \ref{demixing appendix}. From Proposition \ref{inter dim}, there exists constants $c_1,c_2>0$ such that, with probability $1-c_1\exp(-c_2t^2)$, we have:
\beq
|\delta(T_\Cc(\x_0)^*\cap \text{Range}(\A^T))-(m-\Delxc)|\leq t\sqrt{n}\nn
\eeq
Equivalently, we have:
\beq
m-\Delxc-t\sqrt{n}\leq \delta(\Cc(\x_0,\A^T))\leq m-\Delxc+t\sqrt{n}\label{good ineq}
\eeq
On the other hand, since $\Cc(\x_0,\A^T)\subset\text{Range}(\A^T)$, multiplication with partial unitary $\A$ preserves the distances over $\text{Range}(\A^T)$ and is isometric, hence it will preserve the statistical dimension (see properties of statistical dimension in \cite{McCoy}) and will yield,
\beq
\delta(\A \Cc(\x_0,\A^T))=\delta( \Cc(\x_0,\A^T))\nn
\eeq
Overall, using \eqref{etalasso1} and applying Lemma \ref{ACTangent}, we have $\eta_{LASSO}=m-\delta( \Cc(\x_0,\A^T))$. Now, using \eqref{good ineq}, with the same probability, we find:
\beq
\Delxc-t\sqrt{n}\leq \eta_{LASSO}\leq \Delxc+t\sqrt{n}\nn
\eeq
\end{itemize}

\end{proof}

\subsection{Pinpointing the LASSO Cost}

The next result directly follows from characterization of $\eta_{LASSO}$ in Section \ref{sec etaLASSO} and Proposition \ref{prop objective} and gives the result on the LASSO cost $F_{LASSO}$. Basically, we use the fact that $F_{LASSO}+\eta_{LASSO}=m$.
\begin{propo} [LASSO cost] Assume $\A,\vb,\x_0,f$ is same as in Theorem \ref{LASSO}. Consider the cost function $f_{obj}(\x_0,\A,\vb)=\min_{\x\in\Cc}\|\y-\A\x\|_2^2$ in \eqref{LASSOopt}. Conditioned on $\A$, define the asymptotic LASSO cost as:
\beq
F_{LASSO}(\A)=\lim_{\sigma\rightarrow 0}\frac{\E[f_{obj}(\x_0,\A,\vb)]}{\sigma^2}\nn
\eeq
where the expectation is over $\vb\sim\Nn(0,\Iden_m)$. Then, there exists constants $c_1,c_2>0$ such that,
\begin{itemize} 
\item Whenever $m< \Delxc$, with probability $1-c_1\exp(-c_2\frac{(m-\Delxc)^2}{n})$, $F_{LASSO}=0$.
\item Whenever $m>\Delxc$, with probability $1-c_1\exp(-c_2t^2)$,
\beq
m-\Delxc-t\sqrt{n}\leq F_{LASSO}\leq m-\Delxc+t\sqrt{n}.\nn
\eeq
\end{itemize}
\end{propo}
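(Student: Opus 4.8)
The plan is to reduce this statement entirely to the characterization of $\eta_{LASSO}(\A)$ already obtained in Section \ref{sec etaLASSO}, via the conservation identity
\beq
F_{LASSO}(\A)+\eta_{LASSO}(\A)=m,\nn
\eeq
which is formalized as Proposition \ref{prop objective}. Once this identity is in hand, each bullet becomes a one-line subtraction, so the real content lies in establishing the identity and then transferring the two probabilistic regimes from $\eta_{LASSO}$ to $F_{LASSO}$.

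To prove the identity I would work with the reduced problem \eqref{c proj}, namely the constrained denoising of $\A\x_0$ with noise $\sigma\vb$ over the convex set $\A\Cc$. Writing $\A\x^*=\bu(\y,\A\Cc)$ for the projection, the residual is the distance vector $\y-\A\x^*=\Pi(\y,\A\Cc)$ while the projected error is $\A\x^*-\A\x_0$. As $\sigma\to 0$ the first-order analysis behind Theorem \ref{constrained} (see Proposition \ref{approx prop}) makes the tangent-cone approximation exact: with $K=T_{\A\Cc}(\A\x_0)$ one has $\A\x^*-\A\x_0\to \bu(\sigma\vb,K)$ and $\y-\A\x^*\to \Pi(\sigma\vb,K)=\bu(\sigma\vb,K^*)$. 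Moreau's decomposition (Fact \ref{more}) splits $\sigma\vb$ orthogonally into these two pieces, giving the Pythagorean relation $\|\sigma\vb\|_2^2=\|\A\x^*-\A\x_0\|_2^2+\|\y-\A\x^*\|_2^2$. Dividing by $\sigma^2$, taking expectations, and letting $\sigma\to 0$ yields $m=\E[\|\vb\|_2^2]=\eta_{LASSO}(\A)+F_{LASSO}(\A)$, where I use that the worst case of the projected error is attained as $\sigma\to 0$ (Theorem \ref{constrained}) so that the two limits coincide with $\eta_{LASSO}$ and $F_{LASSO}$. Equivalently, in statistical-dimension language, Lemma \ref{ACTangent} together with ${\bf{D}}(\Cc)=\delta(\Cc^*)$ gives $\eta_{LASSO}(\A)=m-\delta(\A\Cc(\x_0,\A^T))$ and $F_{LASSO}(\A)=\delta(\A\Cc(\x_0,\A^T))$, which sum to $m$ by $\delta(K)+\delta(K^*)=m$.

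With the identity established I would finish by substituting the two cases from Section \ref{sec etaLASSO}. When $m<\Delxc$, Theorem \ref{LASSO} gives $\eta_{LASSO}(\A)=m$ with probability $1-c_1\exp(-c_2(m-\Delxc)^2/n)$, so $F_{LASSO}(\A)=m-\eta_{LASSO}(\A)=0$ with the same probability. When $m>\Delxc$, the same theorem gives $\Delxc-t\sqrt{n}\le \eta_{LASSO}(\A)\le \Delxc+t\sqrt{n}$ with probability $1-c_1\exp(-c_2 t^2)$; subtracting from $m$ reverses the inequalities and yields $m-\Delxc-t\sqrt{n}\le F_{LASSO}(\A)\le m-\Delxc+t\sqrt{n}$ with the same probability. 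The main obstacle is not this final algebra but the justification of the identity: one must check that the $\sigma\to 0$ first-order approximation controls the projected error and the residual \emph{simultaneously}, so that the Pythagorean split survives under expectation and in the limit (rather than merely holding pointwise), which is exactly where the $\sigma\to 0$ machinery of Theorem \ref{constrained} and the orthogonality in Moreau's decomposition do the work.
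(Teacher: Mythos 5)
Your proposal is correct and follows essentially the same route as the paper: the paper also derives the result by combining the conservation identity $F_{LASSO}+\eta_{LASSO}=m$ (formalized exactly as Proposition \ref{prop objective}, applied to the reduced problem \eqref{c proj} over $\A\Cc$) with the probabilistic characterization of $\eta_{LASSO}$ from Section \ref{sec etaLASSO}. The "simultaneous control" issue you flag at the end is resolved in the paper's proof of Proposition \ref{prop objective} not by tracking both limits at once, but by a squeeze: the pointwise inequalities $\|\z-\w^*\|_2^2+\|\w^*\|_2^2\leq\|\z\|_2^2$ (Lemma \ref{angle1}) and $\|\z-\w^*\|_2\geq \dt(\z,T_\Cc(\x_0))$ pin the normalized residual between $n-\Delxc$ and $n-\E[\|\w^*\|_2^2]/\sigma^2$, and the already-proven limit of the error term (Theorem \ref{finish constrained}) closes the gap.
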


%\noindent {\bf{Interpretation:}} Observe from \eqref{c proj} that, the LASSO cost and the projected LASSO error is only related to the projection of $\Cc$ on $\A$, namely, $\A\Cc$. $\A\Cc$ can be further approximated around $\x_0$ by $\A (T_\Cc(\x_0)+\{\x_0\})$. Getting rid of the constant offset $\A\x_0$, we end up with the projection of the tangent cone on $\A$, i.e., $\A T_\Cc(\x_0)$.
%
%The fact that, the LASSO cost is $0$ for $m<\Delxc$ is due to the fact that, whenever $m<\Delxc=\delta(T_\Cc(\x_0))$, the projection of $T_\Cc(\x_0)$ on $\A$ is simply $\R^m$. This way, we end up with,
%\beq
%\dt(\z,\A T_\Cc(\x_0))=\dt(\z,\R^m)=0
%\eeq
%This is intuitive, when one considers the cone $T_\Cc(\x_0)$ as a subspace. Given a $d$ dimensional subspace, the projection of this subspace onto an $m<d$ dimensional random subspace will be the $m$ dimensional subspace with probability $1$.
%
%On the other hand, when $m>\Delxc$, the dimension of the projection $\A T_f(\x_0)$ stays approximately same and it is close to $\Delxc$. Again, when we consider the cones in terms of linear subspaces, these arguments become more intuitive.

\subsection{Numerical results}
{\bf{Remark:}} To relate Theorem \ref{LASSO} to the level-set constrained problem $\min_{f(\x)\leq f(\x_0)} \|\y-\A\x\|_2$, we will make use of the fact that $T_f(\x_0)^*=\text{cone}(\paf)$ when $\x_0$ is not a minimizer of $f(\cdot)$ (see \cite{Roc70}).

\subsubsection{LASSO with the $\ell_1$ norm}\label{sec ell1 LASSO}
We considered the following $\ell_1$ constrained optimization,
\beq\label{constrained ell1}
\min_{\x}\|\y-\A\x\|_2^2~~~\text{subject to}~~~\|\x\|_1\leq \|\x_0\|_1.
\eeq 
We let $\x_0$ to be a $k$ sparse vector and chose $k$ to be $20,40$ and $60$ while varying number of measurements $m$ from $20$ to $400$. The ambient dimension is $n=500$. We have performed $50$ realization of the problem in which $\A$ is generated as a random unitary matrix and $\sigma\vb$ is the noise vector, where $\vb\sim\Nn(0,\Iden)$ and $\sigma$ is sufficiently small.

We have estimated the quantities $\eta_{LASSO}$ and $F_{LASSO}$ by averaging $\|\y-\A\x^*\|_2^2$ and $\|\A\x^*-\A\x_0\|_2^2$ over $50$ realizations.

Finally, in order to verify our results, we estimate the term ${\bf{D}}(\text{cone}(\|\x_0\|_1))$. This is done by making use of the classical results on $\ell_1$ phase transitions, (see \cite{Don1,DonCentSym,Sto1}). In particular, see Theorem 4 of \cite{Sto1}. For example, when $n=500$ and $\x_0$ is $20$ sparse, we find ${\bf{D}}(\text{cone}(\|\x_0\|_1))\approx 89$. Similarly, $40$ and $60$ gives ${\bf{D}}(\text{cone}(\|\x_0\|_1))\approx 142$ and $186$ respectively.% The reader is referred to Section \ref{sec simp subdif} for a short explanation of the calculation.

Figure \ref{fig-Unitary} illustrates our results. In Figure \ref{fig-Error}, we observe that we can accurately predict projected LASSO error based on Theorem \ref{LASSO}. The dashed red line is what we theoretically expect and the blue, green and black markers are the experimental results for $k=20,40$ and $60$ respectively.

Figure \ref{fig-Obj} demonstrates that, LASSO cost can be similarly predicted. We should note that, in both figures there is an apparent phase transition. When the number of measurements are not sufficient, ($m<{\bf{D}}(\text{cone}(\|\x_0\|_1)))$, we observe that $\eta_{LASSO}$ increases linearly in $m$ actually it is equal to $m$. On the other hand, when the number of measurements are sufficient, $\eta_{LASSO}$ stays same and is simply equal to ${\bf{D}}(\text{cone}(\|\x_0\|_1))$. 

Recall that, $m>{\bf{D}}(\text{cone}(\|\x_0\|_1))$ regime is the regime where the noiseless compressed sensing problem succeeds. For $\ell_1$ minimization, \cite{NoiseSense} argues that, there is in fact a phase transition for noise sensitivity, and when $m>{\bf{D}}(\text{cone}(\|\x_0\|_1))$, the LASSO will recover $\x_0$ robustly and when $m<{\bf{D}}(\text{cone}(\|\x_0\|_1))$, the normalized error $\frac{\|\x^*-\x_0\|^2_2}{\sigma^2}$ will be unbounded as $\sigma\rightarrow 0$. Overall, we observe that the phase transition for $\eta_{LASSO}$ and $F_{LASSO}$ occurs exactly at $m\approx{\bf{D}}(\text{cone}(\|\x_0\|_1))$; which is also known to be the noise sensitivity threshold \cite{NoiseSense}.

%It should be emphasized that while we considered $\ell_1$ minimization in our examples, our results are true for arbitrary convex functions and this is our main contribution about LASSO analysis. 
It should be emphasized that the more popular formulation of LASSO is,
\beq
\min_\x\frac{1}{2}\|\y-\x\|_2^2+\la\|\x\|_1.\nn
\eeq
This has been studied in a series of papers \cite{NoiseSense,BayMon,Mon,AMPmain} and \cite{BicRit,BayMon,Mon} give analytical characterization of the asymptotic LASSO error $\E[\|\x^*-\x_0\|_2^2]$. On the other hand, our results hold for arbitrary convex functions and are not limited to $\ell_1$ minimization. Now, we will illustrate this with an example on nuclear norm.

\begin{figure}
\centering
\mbox{
\subfigure[]{
\includegraphics[width=3.1in]{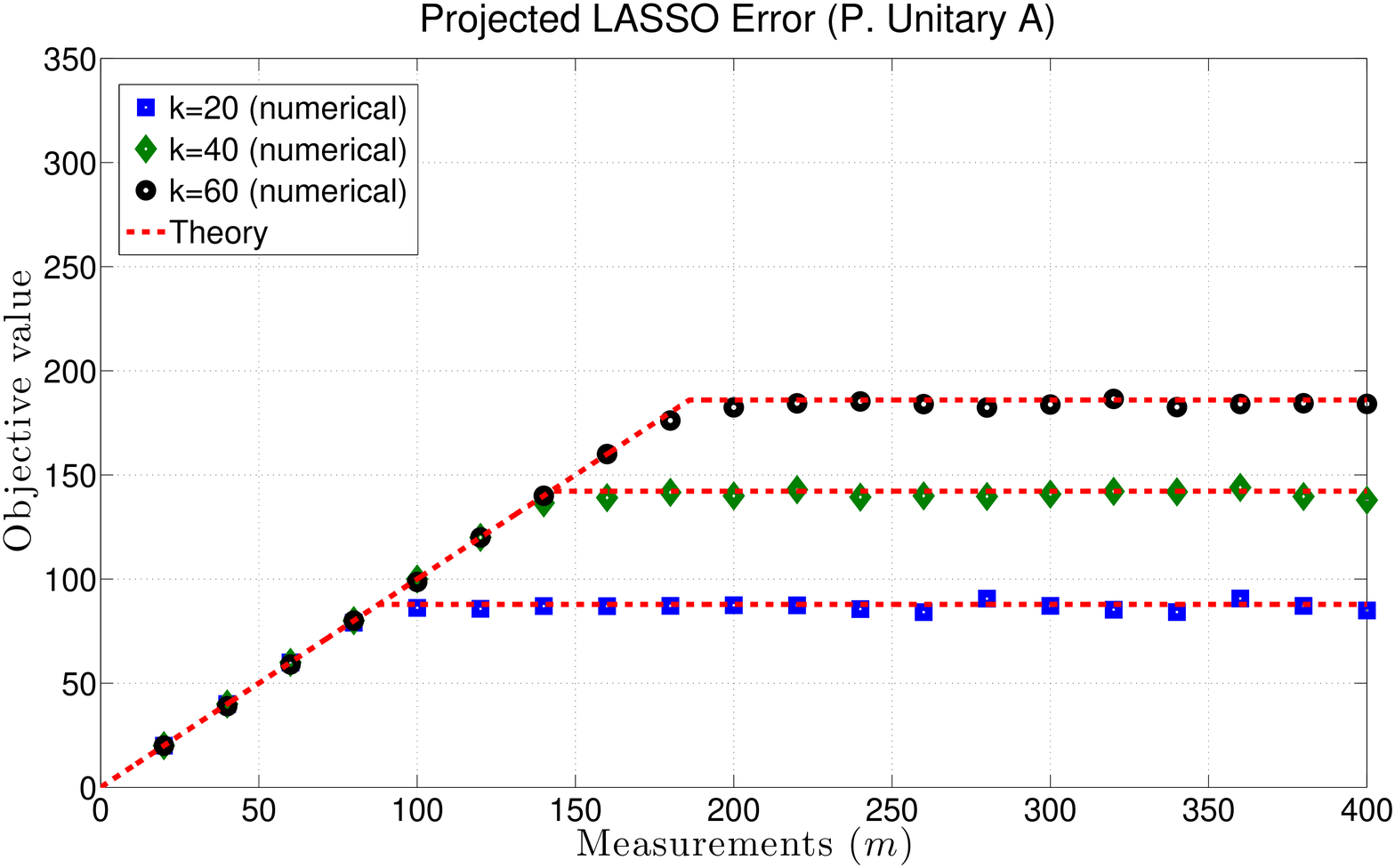}
\label{fig-Error}
}\quad
\subfigure[]{
\includegraphics[width=3.1in]{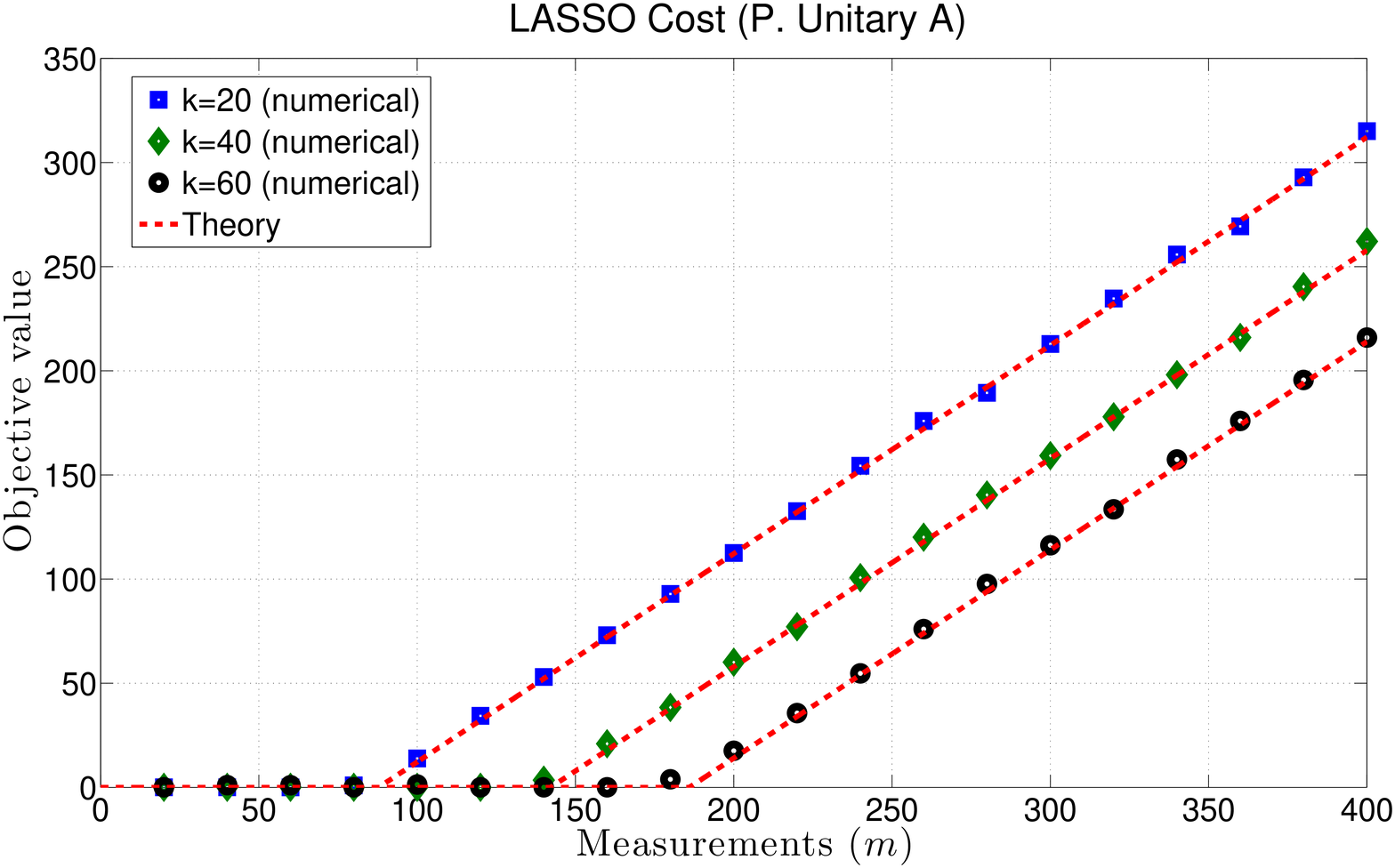}
\label{fig-Obj}
}
}
\centering
\caption{\small{$a)$ represents the expected projected LASSO error $\|\A(\x^*-\x_0)\|_2^2$ when $\A\in\R^{m\times n}$ is partial unitary. $b)$ represents the LASSO cost i.e. $\E[\|\y-\A\x^*\|_2^2]$. Results are for $\ell_1$ minimization, and for three different sparsity levels of $k=20,40$ and $60$. Vector length is $n=500$. Measurements run from $m=20$ to $400$.}}
\label{fig-Unitary}
\end{figure}

\subsubsection{LASSO with the nuclear norm}
We next considered a scenario where $\x_0$ is a vectorized form of a $30\times 30$, rank $4$ matrix. We observe $\y=\A\x_0+\sigma\vb$ where $\A$ is an $m\times 900$ partial unitary and $\vb\sim\Nn(0,\Iden)$. We solve,
\beq
\min_\x\|\y-\A\x\|_2^2~~~\text{subject to}~~~\|\x\|_\star\leq \|\x_0\|_\star\label{nuc lasso}
\eeq
where $\|\cdot\|_\star$ returns the nuclear norm of the $30\times 30$ matrix form of $\x_0$.

To predict the LASSO cost, we made use of the results of \cite{Oym}, which approximates ${\bf{D}}(\text{cone}(\|x_0\|_\star))$ based on the asymptotic singular value distribution of the $n\times n$ i.i.d.~Gaussian matrices in a similar manner to the $\ell_1$ phase transition calculations.

Using results of \cite{Oym}, we estimate ${\bf{D}}(\text{cone}(\|x_0\|_\star))\approx 389$. We then repeat the same experiment that is described in the previous section where we vary $m$ from $30$ to $900$ in the intervals of $30$ and average the results to estimate $\eta_{LASSO}$ and $F_{LASSO}$.

This time, we additionally approximated the actual asymptotic LASSO error $E_{LASSO}=\lim_{\sigma\rightarrow 0}\frac{\E[\|\x^*-\x_0\|_2^2]}{\sigma^2}$. Figure \ref{LASSOfigEvery} shows the projected error and the actual error as a function of $m$. We observe that, projected error can again be accurately predicted from our theoretical results. The LASSO is not robust for the regime $m<{\bf{D}}(\text{cone}(\|x_0\|_\star))$ and $E_{LASSO}$ is unbounded. Conversely, when $m>{\bf{D}}(\text{cone}(\|x_0\|_\star))$ the error becomes finite and decreases as a function of $m$. At $m=n=900$, we see that $E_{LASSO}=\eta_{LASSO}$. This is not surprising due to the fact that $\|\A(\x^*-\x_0)\|_2=\|\x^*-\x_0\|_2$ when $m=900$ as $\A$ becomes a full (square) unitary matrix and preserves the $\ell_2$ norm. Figure \ref{fig-DeltaLASSO} provides the LASSO cost for the same simulation. The cost satisfies $F_{LASSO}=m-\eta_{LASSO}$ as predicted by our theory. Figure \ref{fig-DeltaLASSO} also illustrates how $\Delxf$ corresponds to the stability phase transition of \eqref{nuc lasso} (recall Section \ref{PT discussion}).

\begin{figure}

  \begin{center}
{\includegraphics[scale=0.4]{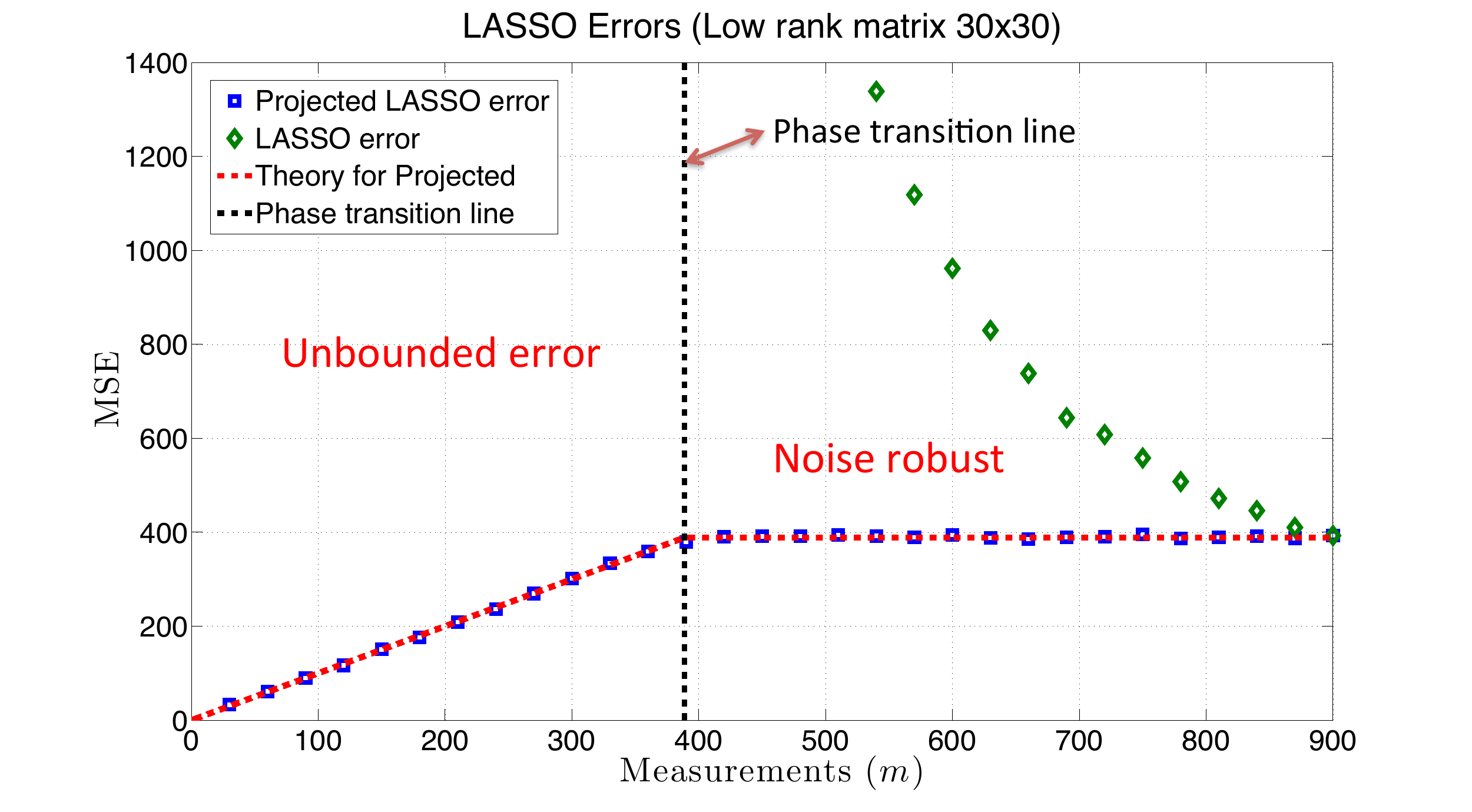}}  
  \end{center}
  \caption{\small{Simultation is performed for a $30\times 30$ matrix of rank $4$. The blue markers are the projected LASSO error; which matches with theory. Green markers are the (total) normalized LASSO MSE; which is unbounded on the left side of the phase transition line and a decreasing function of $m$ on the right side of the PT line $m=\Delxf$ (recall Section \ref{PT discussion}).}}
%  \vspace*{-10pt}
\label{LASSOfigEvery}
\end{figure}

\subsubsection{LASSO for Gaussian matrices}
We additionally performed the same simulations in Section \ref{sec ell1 LASSO} and solved \eqref{constrained ell1} for i.i.d. Gaussian $\A$ rather than partial unitary. Estimation of the quantities $\frac{\E[\|\A(\x^*-\x_0)\|_2^2]}{\sigma^2}$ and $\frac{\E[\|\y-\A\x^*\|_2^2]}{\sigma^2}$ are carried out in the in the exact same manner. In Figure \ref{fig-GaussObj} the dashed red line is the theoretical prediction for $\eta_{LASSO}$ when $\A$ is unitary and the blue, green and black markers are the results for i.i.d. Gaussian $\A$. Hence, Figure \ref{fig-GaussObj} indicates that, our results for partial unitary compression matrices also holds for Gaussian compression matrices and the results might be universal. This would not be surprising given the recent advances of the universality of the phase transitions, \cite{BLM12,Universal}.

%Again, while we have provided the results for $\ell_1$ minimization, in practice, this phenomenon goes beyond $\ell_1$ minimization and we observe similar equivalence for other functions such as nuclear norm and $\ell_{1,2}$ norm.
%It should be emphasized that, while we are only presenting our simulation results for $\ell_1$ minimization, we have observed this equivalence for other scenarios as well.

\begin{figure}
\centering
\mbox{
\subfigure[]{
\includegraphics[width=3.1in]{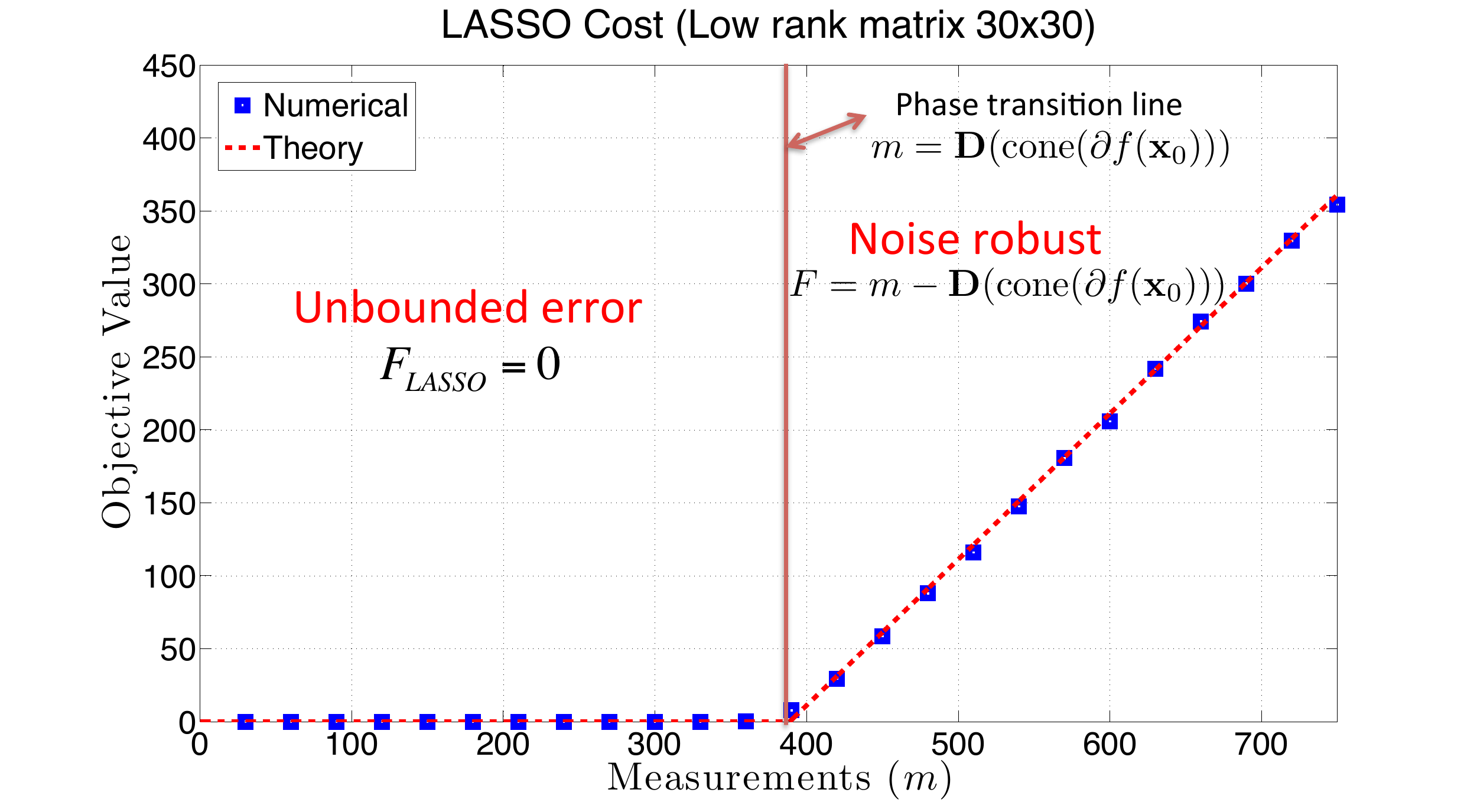}
\label{fig-DeltaLASSO}
}\quad
\subfigure[]{
\includegraphics[width=3.1in]{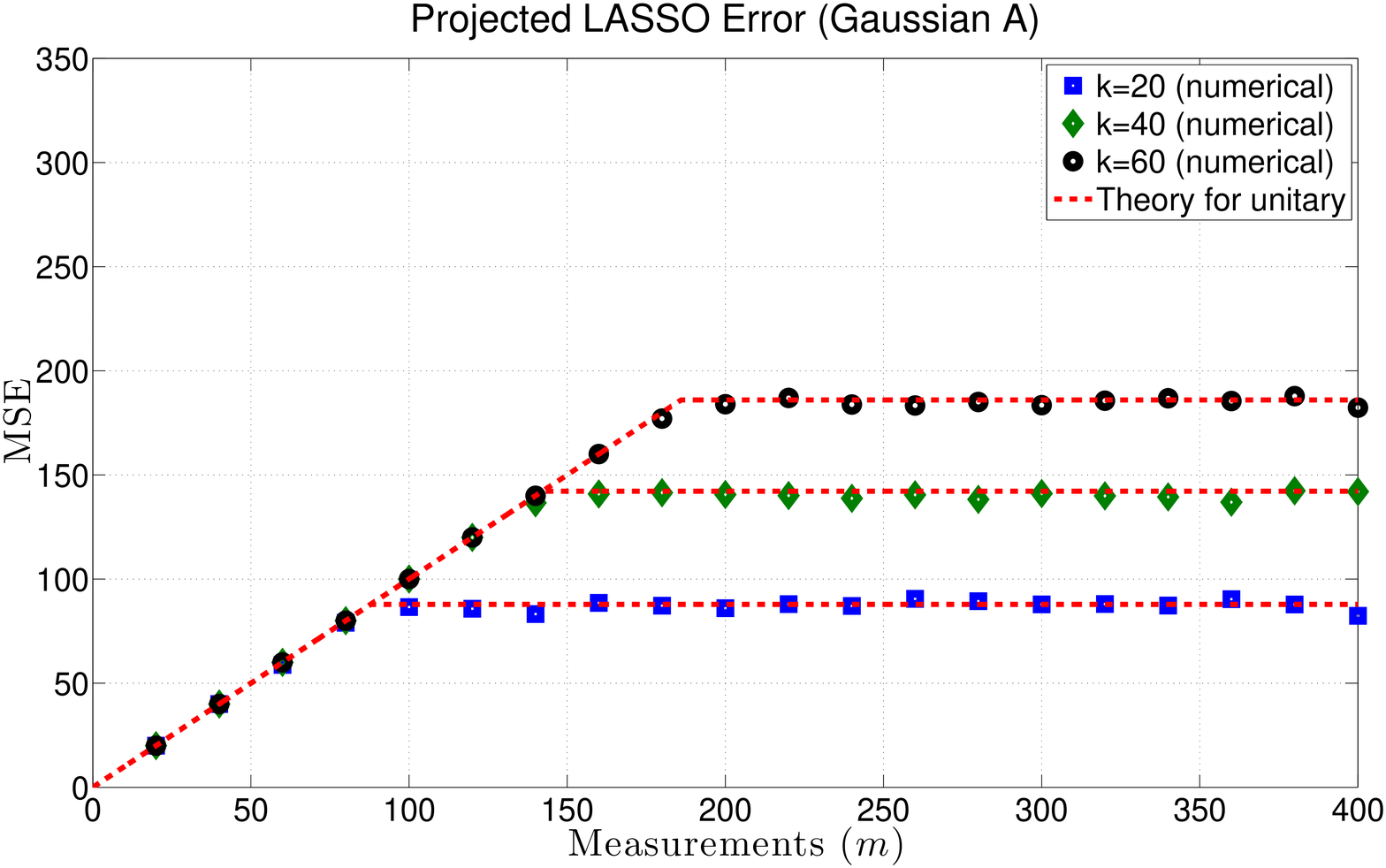}
\label{fig-GaussObj}
}
}
\centering
\caption{\small{a) The LASSO cost as a function of measurements for $30\times 30$, rank $4$ matrix. b) This figure is same as Figure \ref{fig-Error} except i.i.d.~Gaussian measurement is used rather than partial unitary.}}
\label{fig-Gaussian}
\end{figure}

\section{Discussion of the results}
We have considered the proximal denoising problem and provided sharp MSE upper bounds that are achievable in the small noise regime. Our bounds depend on the convex geometry of the problem \eqref{proxmain} and can be captured by distance to the scaled subdifferential $\la\paf$ or to the subdifferential cone $\text{cone}(\paf)$. These are meaningful quantities when $\x_0$ is a structured signal and $f(\cdot)$ is properly chosen to induce structure. Surprisingly, our estimation bounds are closely related to the recovery phase transition of the linear inverse problem \eqref{LinInv}.% as well as the generalized LASSO problem described in \eqref{}.showed a relation between the denoising risk and the compressed sensing phase transitions. Both of these quantities are in fact equal to the statistical dimension of the tangent cone $\Delxf$. $\Delxf$ basically provides a measure of how structured the signal $\x_0$ is with respect to the function $f(\cdot)$.

We also showed an interesting phase transition for the generalized LASSO problem. When the number of measurements $m$ are smaller than $\Delxf$, there is no noise robustness and the cost value is equal to $0$. On the other hand, when the number of measurements are more than $\Delxf$, the cost is around $m-\Delxf$. This indicates that behavior of LASSO is closely related to the same quantity $\Delxf$.

We should again emphasize that, while $\ell_1$-minimization is often the primary interest, our results apply to all convex functions.

%This relation helps us keeping things simple even under more complicated settings. Interestingly, we showed that, when $\x_0$ is sum of randomly rotated signals $\ab_0+\U\bb_0$, the risk in estimating $\x_0$ becomes the sum of risks for estimating $\ab_0$ and $\bb_0$ individually as we have $\Delxf\approx \Delta_{f_1}(\ab_0)+\Delta_{f_2}(\bb_0)$. This helps us provide a perfect characterization of the deconvolution problem under Gaussian noise.

\subsection{Future directions}
%We believe there are three critical directions to follow.
\begin{itemize}
\item{Analysis of Generalized LASSO:} While we considered some basic properties of the generalized LASSO problem \eqref{LASSOopt}, the most critical one is yet to be explored. We believe there is a simple formula that predicts the LASSO error in,
\beq
\min_{\x}\|\y-\A\x\|_2^2+\la f(\x)
\eeq
which depends on the number of measurements $m$, the subdifferential $\paf$ and the regularizer $\la$. The generalized LASSO analysis will be a unification of the results on noiseless compressed sensing \eqref{LinInv} and compression-less denoising problem \eqref{proxmain}. Results of Chandrasekaran et al. and Amelunxen et al. \cite{Cha,McCoy} apply to noiseless linear inverse problem \eqref{LinInv} and we have mostly focused on estimation via convex functions without linear measurements. With generalized LASSO analysis, one will hopefully be able to predict the behavior of the noisy linear inverse problem and extend the results of \cite{Mon,BayMon,BicRit}, from $\ell_1$-minimization to arbitrary functions.

\item{Universality in denoising:} Assume the noise vector $\z$ has independent and identically distributed (i.i.d.) entries but the distribution is not normal. In this case, unfortunately our MSE bounds are no longer accurate. A simple example is the scenario where $\x_0$ is a sparse signal, and the entries of $\z$ are equally likely to be $\sigma$ and $-\sigma$. If $\x_0$ is $k$ sparse, ${\bf{D}}(\text{cone}(\|\x_0\|_1))\sim k\log\frac{2n}{k}$, \cite{DonCentSym}.

On the other hand, assuming $\sigma$ is sufficiently small and setting $\la=1$ in \eqref{proxmain2}, soft-thresholding reveals that the estimate $\x^*$ satisfies,
\beq
x^*_i=\begin{cases}0~\text{if}~x_{0,i}=0\\x_{0,i}-\text{sgn}(x_{0,i})\sigma+z_i~\text{else}\end{cases}
\eeq
Overall, this gives $\E[\frac{\|\x^*-\x_0\|_2^2}{\sigma^2}]=2k$; which is smaller than $\Delxf$ for the regime $k\ll n$.

While this shows that estimation error may depend on the distribution, one can actually introduce further randomization to the noise. Given $\z$, assume we observe $\x_0+\U\z$ where $\U$ is a uniformly random unitary matrix. We believe that, if one first generates and fixes $\U$ and then takes the expectation over $\z$, the worst case normalized MSE will in fact be around $\Delxf$, under mild assumptions. Possible such assumptions are i.i.d.'ness and subgaussianity of the entries of $\z$; which are used in \cite{BLM12} to show the universality of the compressed sensing phase transitions for $\ell_1$ minimization.% Observe that, if we are allowed to generate a new random $\U$ along with $\z$, $\U\z$ will be statistically identical to $\Nn(0,I)$.

\end{itemize}
%  We believe the main remaining challenge in asymptotically exact analysis 

\section*{Acknowledgments}
Authors would like to thank Michael McCoy and Joel Tropp for stimulating discussions and helpful comments. Michael McCoy pointed out Lemma \ref{lem square lip} and informed us of various recent results most importantly Theorem \ref{kinematics}. S.O. would also like to thank his colleagues Kishore Jaganathan and Christos Thrampoulidis for their support.

\newpage

    \appendix
    \cleardoublepage
    \addcontentsline{toc}{section}{Appendix}
    \addtocontents{toc}{\protect\setcounter{tocdepth}{-1}}

\appendix
\vspace{7pt}
\begin{center}\Large{{APPENDIX}}\end{center}
\section{Auxiliary results}

\begin{fact}[Hyperplane separation theorem, \cite{Bertse}] \label{hypseper}Assume $\Cc_1,\Cc_2\subseteq\R^n$ are disjoint closed sets at least one of which is compact. Then, there exists a hyperplane $H$ such that $\Cc_1,\Cc_2$ lies on different open half planes induced by $H$.
\end{fact}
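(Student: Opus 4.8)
The plan is to collapse the two-set question into a one-set question by passing to the Minkowski difference $\Kc:=\Cc_1-\Cc_2=\{\ab-\bb\,\big|\,\ab\in\Cc_1,\ \bb\in\Cc_2\}$. (Since ``hyperplane separation'' is vacuous for nonconvex sets, I read the hypothesis as including convexity, which is the operative assumption; I also take both sets nonempty, the empty case being trivial.) Then $\Kc$ is convex, being the Minkowski sum of the convex sets $\Cc_1$ and $-\Cc_2$, and disjointness of $\Cc_1,\Cc_2$ is precisely the statement $0\notin\Kc$: an equality $0=\ab-\bb$ would force $\ab=\bb\in\Cc_1\cap\Cc_2$. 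Thus it suffices to strictly separate the point $0$ from the closed convex set $\Kc$ and then translate the resulting functional back to $\Cc_1,\Cc_2$.

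The one genuinely nontrivial step, and the only place the compactness hypothesis is used, is showing that $\Kc$ is closed; this is the part I expect to be the main obstacle, since the difference of two merely closed sets can fail to be closed and the theorem is then false. Assume without loss of generality that $\Cc_2$ is compact, and let $\z_k=\ab_k-\bb_k\to\z$ with $\ab_k\in\Cc_1$ and $\bb_k\in\Cc_2$. By compactness I can pass to a subsequence along which $\bb_k\to\bb\in\Cc_2$; then $\ab_k=\z_k+\bb_k\to\z+\bb$, and closedness of $\Cc_1$ gives $\z+\bb\in\Cc_1$, so that $\z=(\z+\bb)-\bb\in\Kc$. Hence $\Kc$ is closed.

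Finally, since $\Kc$ is a nonempty closed convex set with $0\notin\Kc$, the projection $\p:=\bu(0,\Kc)$ exists, is unique, and satisfies $\p\neq0$. The associated distance vector is $\Pi(0,\Kc)=-\p$, so applying Lemma \ref{proj maximize} with $\vb=0$ gives $\li\p,-\p\ri=\sup_{\y\in\Kc}\li\y,-\p\ri$, i.e.\ $\inf_{\y\in\Kc}\li\p,\y\ri=\|\p\|_2^2>0$. Unwinding $\y=\ab-\bb$ yields $\inf_{\ab\in\Cc_1}\li\p,\ab\ri\geq\sup_{\bb\in\Cc_2}\li\p,\bb\ri+\|\p\|_2^2$, a strictly positive gap. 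Choosing any $\beta$ in the open interval between $\sup_{\bb\in\Cc_2}\li\p,\bb\ri$ and $\inf_{\ab\in\Cc_1}\li\p,\ab\ri$, the hyperplane $H=\{\x\,\big|\,\li\p,\x\ri=\beta\}$ places $\Cc_1$ in the open halfspace $\{\li\p,\x\ri>\beta\}$ and $\Cc_2$ in $\{\li\p,\x\ri<\beta\}$, which is exactly the desired strict separation.
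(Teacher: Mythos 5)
Your argument is correct, but there is nothing in the paper to compare it against: the statement is imported verbatim as a Fact, with a citation to Bertsekas \cite{Bertse}, and the paper never proves it (it is only invoked once, inside the proof of Lemma \ref{approx subdif}, to produce a strictly separating direction $\h$). So your proposal is not an alternative to the paper's proof; it is a proof where the paper has none. Two substantive remarks. First, your decision to read convexity into the hypothesis is not optional pedantry but a necessary repair: as literally stated the Fact is false (take $\Cc_1$ the unit circle in $\R^2$ and $\Cc_2=\{0\}$ -- disjoint, closed, one compact, no separating hyperplane), and convexity is exactly the hypothesis in the cited source. The same applies to nonemptiness, though there the degenerate case ($\Cc_1=\emptyset$, $\Cc_2=\R^n$) is excluded rather than ``trivial'' as you claim, since $\R^n$ lies in no open halfspace. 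Second, the body of your argument is the standard and correct one: the Minkowski difference $\Kc=\Cc_1-\Cc_2$ is convex, contains $0$ if and only if the sets intersect, and is closed precisely because one factor is compact (your subsequence argument is right, and this is indeed the only place compactness is needed -- finiteness of $\sup_{\bb\in\Cc_2}\li\p,\bb\ri$ already follows from the gap inequality once a single point of $\Cc_1$ is fixed). Your use of Lemma \ref{proj maximize} with $\vb=0$ to get $\inf_{\y\in\Kc}\li\p,\y\ri=\|\p\|_2^2>0$, and hence a strictly positive gap between $\inf_{\ab\in\Cc_1}\li\p,\ab\ri$ and $\sup_{\bb\in\Cc_2}\li\p,\bb\ri$, is clean and has the pleasant feature of making the Fact self-contained within the paper's own toolkit (Fact \ref{prom} for existence and uniqueness of the projection, Lemma \ref{proj maximize} for the variational characterization), so it could replace the citation outright.
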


\begin{fact}[Properties of the projection, \cite{Bertse,Boyd}] \label{prom}Assume $\Cc\subseteq\R^n$ is a nonempty, closed and convex set and $\ab,\bb\in\R^n$ are arbitrary points. Then:
\beq
\|\bu(\ab)-\bu(\bb)\|_2\leq \|\ab-\bb\|_2\nn
\eeq
The projection $\bu(\ab,\Cc)$ is the unique vector satisfying,
\beq
\bu(\ab,\Cc)=\arg\min_{\vb\in\Cc}\|\ab-\vb\|_2\label{lem1}
\eeq
The projection $\bu(\ab,\Cc)$ is also the unique vector $\s_0$ that satisfies,
\beq
\li\s_0,\ab-\s_0\ri=\sup_{\s\in\Cc} \li\s,\ab-\s_0\ri\label{desiredlem2}
\eeq
In other words, $\ab$ and $\Cc$ lies on different half planes induced by the hyperplane that goes through $\bu(\ab,\Cc)$ and that is orthogonal to $\ab-\bu(\ab,\Cc)$.
\end{fact}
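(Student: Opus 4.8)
The plan is to prove the three assertions in the order that lets each rest on the previous one: first the existence–uniqueness statement \eqref{lem1}, then the variational characterization \eqref{desiredlem2}, and finally the non-expansiveness bound $\|\bu(\ab)-\bu(\bb)\|_2\leq\|\ab-\bb\|_2$. The only ingredients needed are the convexity and closedness of $\Cc$ together with elementary identities for the squared Euclidean norm; the geometric half-plane reading at the end is then just the special case of Fact \ref{hypseper} produced by the characterization \eqref{desiredlem2}.

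For \eqref{lem1} I would argue existence by compactness: fix any $\s'\in\Cc$ and intersect $\Cc$ with the closed ball $\{\vb:\|\ab-\vb\|_2\leq\|\ab-\s'\|_2\}$; this set is nonempty and compact, and $\vb\mapsto\|\ab-\vb\|_2$ is continuous, so a minimizer exists by Weierstrass. Uniqueness comes from strict convexity of the squared norm: if $\s_1\neq\s_2$ both attained the minimal distance $d$, then by convexity of $\Cc$ the midpoint $\tfrac{1}{2}(\s_1+\s_2)\in\Cc$, and the parallelogram identity gives $\|\ab-\tfrac{1}{2}(\s_1+\s_2)\|_2^2=\tfrac{1}{2}\|\ab-\s_1\|_2^2+\tfrac{1}{2}\|\ab-\s_2\|_2^2-\tfrac{1}{4}\|\s_1-\s_2\|_2^2<d^2$, a contradiction. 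Hence $\bu(\ab)$ is well defined.

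Next I would establish that $\s_0=\bu(\ab)$ if and only if $\li\s-\s_0,\ab-\s_0\ri\leq 0$ for every $\s\in\Cc$, which is precisely the rearrangement of \eqref{desiredlem2}. For the forward direction, fix $\s\in\Cc$; convexity gives $\s_0+t(\s-\s_0)\in\Cc$ for $t\in[0,1]$, so $\phi(t)=\|\ab-\s_0-t(\s-\s_0)\|_2^2$ is minimized at $t=0$, and $\phi'(0)=-2\li\s-\s_0,\ab-\s_0\ri\geq 0$ yields the inequality. For the converse, the inequality gives $\|\ab-\s\|_2^2=\|\ab-\s_0\|_2^2-2\li\s-\s_0,\ab-\s_0\ri+\|\s-\s_0\|_2^2\geq\|\ab-\s_0\|_2^2$ for all $\s\in\Cc$, so $\s_0$ minimizes the distance and equals $\bu(\ab)$ by uniqueness. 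Rewriting $\li\s-\s_0,\ab-\s_0\ri\leq 0$ as $\li\s,\ab-\s_0\ri\leq\li\s_0,\ab-\s_0\ri$ and noting equality at $\s=\s_0$ recovers \eqref{desiredlem2}; the hyperplane through $\s_0$ orthogonal to $\ab-\s_0$ then separates $\ab$ from $\Cc$, giving the half-plane statement.

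Finally, non-expansiveness follows by applying the variational inequality twice. Writing $\p=\bu(\ab)$ and $\q=\bu(\bb)$, testing for $\ab$ against $\q\in\Cc$ gives $\li\q-\p,\ab-\p\ri\leq 0$, and testing for $\bb$ against $\p\in\Cc$ gives $\li\p-\q,\bb-\q\ri\leq 0$. Adding and regrouping produces $\li\p-\q,(\ab-\bb)-(\p-\q)\ri\geq 0$, that is $\|\p-\q\|_2^2\leq\li\p-\q,\ab-\bb\ri\leq\|\p-\q\|_2\|\ab-\bb\|_2$ by Cauchy–Schwarz, and dividing by $\|\p-\q\|_2$ finishes. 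The one genuinely load-bearing step is the variational characterization through the one-sided derivative $\phi'(0)\geq 0$, so I expect that directional-derivative computation, together with the care needed to keep the perturbation $\s_0+t(\s-\s_0)$ inside $\Cc$, to be the main point requiring attention rather than any deep obstacle.
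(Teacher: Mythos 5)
Your proposal is correct, but note that the paper itself gives no proof of this statement: it is recorded as a cited Fact, imported from the textbooks \cite{Bertse,Boyd} (it is Bertsekas's Projection Theorem, Proposition 2.2.1 there). What you have written is essentially that standard textbook argument — existence by Weierstrass on the compact set $\Cc\cap\{\vb:\|\ab-\vb\|_2\leq\|\ab-\s'\|_2\}$, uniqueness by the parallelogram identity, the variational inequality $\li\s-\s_0,\ab-\s_0\ri\leq 0$ via the one-sided derivative, and non-expansiveness by summing the two variational inequalities and applying Cauchy--Schwarz — so you have supplied, correctly and self-containedly, exactly the proof the paper delegates to its references. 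Two small points of care. First, in the last step you divide by $\|\p-\q\|_2$; the case $\p=\q$ should be dispatched as trivial before dividing. Second, the uniqueness claim in \eqref{desiredlem2} must be read as uniqueness among $\s_0\in\Cc$: your converse argument ("so $\s_0$ minimizes the distance and equals $\bu(\ab)$ by uniqueness") tacitly uses $\s_0\in\Cc$, and indeed without that restriction the displayed equality is also satisfied vacuously by $\s_0=\ab$ (both sides are then inner products with the zero vector), which is not the projection when $\ab\notin\Cc$. This is a defect of reading shared by the statement itself, not a flaw introduced by your argument, but it is worth flagging since your proof is the only place the claim is actually verified.
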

\begin{fact}[Moreau's decomposition theorem, \cite{More}] \label{more}Let $\Cc$ be a closed and convex cone in $\R^n$. For any $\vb\in\R^n$, the followings are equivalent:
\begin{itemize}
\item $\vb=\ab+\bb$, $\ab\in\Cc,\bb\in \Cc^*$ and $\ab^T\bb=0$.
\item $\ab=\bu(\vb,\Cc),\bb=\bu(\vb,\Cc^*)$.
\end{itemize}
\end{fact}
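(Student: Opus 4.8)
The statement is an equivalence between an algebraic orthogonal decomposition and the pair of metric projections onto $\Cc$ and its polar $\Cc^*$. The plan is to prove it from a single existence observation together with the variational characterization of the projection recorded in Fact \ref{prom}; in particular I will read off both implications from \eqref{desiredlem2}, and I expect the only genuinely non-formal step to be the use of the cone structure to force a supremum to be finite.

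First I would record what projection onto a \emph{cone} gives. Put $\ab=\bu(\vb,\Cc)$ and $\bb=\vb-\ab=\Pi(\vb,\Cc)$. By \eqref{desiredlem2} applied at the point $\vb$, $\li\ab,\bb\ri=\sup_{\s\in\Cc}\li\s,\bb\ri$. Since $\Cc$ is a cone, $t\s\in\Cc$ for every $\s\in\Cc$ and $t>0$; hence if some $\s\in\Cc$ had $\li\s,\bb\ri>0$ the supremum would be $+\infty$, contradicting the finite left-hand side. Therefore $\li\s,\bb\ri\le 0$ for all $\s\in\Cc$, i.e. $\bb\in\Cc^*$, and the supremum equals $0$ (attained at $\s=0\in\Cc$), so $\li\ab,\bb\ri=0$. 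This shows that projecting $\vb$ onto $\Cc$ always produces an orthogonal decomposition $\vb=\ab+\bb$ with $\ab\in\Cc$, $\bb\in\Cc^*$ and $\ab^T\bb=0$; in particular at least one such decomposition exists.

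Next I would handle the implication ``second bullet $\Rightarrow$ first'' together with its converse. For the direction that an orthogonal decomposition is realized by the two projections, suppose $\vb=\ab+\bb$ with $\ab\in\Cc$, $\bb\in\Cc^*$ and $\ab^T\bb=0$. To verify $\ab=\bu(\vb,\Cc)$ via \eqref{desiredlem2} I check $\li\ab,\vb-\ab\ri=\sup_{\s\in\Cc}\li\s,\vb-\ab\ri$: here $\vb-\ab=\bb$, the left side is $\li\ab,\bb\ri=0$, and for $\s\in\Cc$ one has $\li\s,\bb\ri\le 0$ because $\bb\in\Cc^*$, so the supremum is $0$, again attained at $\s=0$. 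The two sides agree, and uniqueness in Fact \ref{prom} yields $\ab=\bu(\vb,\Cc)$. The identical computation with $\Cc$ and $\Cc^*$ interchanged, using $\vb-\bb=\ab\in\Cc$ and the defining inequality of $\Cc^*$, gives $\bb=\bu(\vb,\Cc^*)$. For the reverse direction, assume $\ab=\bu(\vb,\Cc)$ and $\bb=\bu(\vb,\Cc^*)$. By the existence step, $\vb=\ab+(\vb-\ab)$ is an orthogonal decomposition with $\ab\in\Cc$ and $\vb-\ab\in\Cc^*$; applying the implication just established to this decomposition shows $\vb-\ab=\bu(\vb,\Cc^*)$, and since the projection onto the closed convex cone $\Cc^*$ is unique (Fact \ref{prom}), $\bb=\vb-\ab$. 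Hence $\vb=\ab+\bb$ with $\ab\in\Cc$, $\bb\in\Cc^*$ and $\ab^T\bb=0$, which is the first bullet, closing the equivalence.

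The only subtle point is the cone dichotomy in the existence step, where finiteness of the supremum in \eqref{desiredlem2} forces $\Pi(\vb,\Cc)\in\Cc^*$ and orthogonality; everything else is a direct substitution into \eqref{desiredlem2} plus uniqueness of the projection. Notably the argument needs neither the bipolar identity $(\Cc^*)^*=\Cc$ nor the separation Fact \ref{hypseper}, because the projection onto $\Cc^*$ is treated symmetrically using only the defining inequality $\li\s,\x\ri\le 0$ of the polar cone.
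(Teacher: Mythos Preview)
Your argument is correct. The paper does not actually prove Fact~\ref{more}; it records Moreau's decomposition as a cited fact from \cite{More} and uses it as a black box elsewhere (e.g.\ in Lemma~\ref{more2} and Proposition~\ref{approx prop}). So there is no ``paper's own proof'' to compare against.

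On the substance: your derivation is clean and self-contained within the paper's toolkit. The key move---using the cone property to force $\sup_{\s\in\Cc}\li\s,\Pi(\vb,\Cc)\ri$ to be finite, hence nonpositive, hence zero---is exactly the right one, and your appeal to the variational characterization \eqref{desiredlem2} together with uniqueness in Fact~\ref{prom} handles both directions without invoking the bipolar theorem. One tiny point you left implicit: the supremum being \emph{attained} at $0$ requires $0\in\Cc$, which holds because a closed cone contains the origin; you might state this explicitly.
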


%\subsection{Results on Gaussian vectors}
\begin{defn}[Lipschitz function]$h(\cdot):\R^n\rightarrow \R$ is called $L$-Lipschitz if for all $\x,\y\in\R^n$, $|h(\x)-h(\y)|\leq L\|\x-\y\|_2$.
\end{defn}
The next lemma provides a concentration inequality for Lipschitz functions of Gaussian vectors, \cite{Tal}.
\begin{fact}\label{classic} Let $\g\sim\Nn(0,I)$ and $h(\cdot):\R^n\rightarrow \R$ be an $L$-Lipschitz function. Then for all $t\geq 0$:
\beq
\Pro(|h(\g)-\E[h(\g)]|\geq t)\leq 2\exp(-\frac{t^2}{2L^2})\nn
\eeq
\end{fact}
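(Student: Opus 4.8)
The plan is to bound the Laplace transform of $h(\g)$ and then run a Chernoff argument. By rescaling we may take $L=1$, and by mollifying $h$ with a narrow Gaussian kernel — which preserves the Lipschitz bound, since $\nabla(h\ast\phi_\eps)=(\nabla h)\ast\phi_\eps$ and $\|\nabla h\|_2\leq 1$ almost everywhere by Rademacher's theorem — we may assume $h$ is smooth with $\|\nabla h(\x)\|_2\leq 1$ for every $\x$; the estimate then passes to the limit because $\E[h(\g)]$ and the tail probabilities converge. Writing $F=h-\E[h(\g)]$ so that $\E[F(\g)]=0$, the whole statement reduces to the sub-Gaussian Laplace bound
\beq
H(\lambda):=\E[e^{\lambda F(\g)}]\leq e^{\lambda^2/2}\qquad\text{for all }\lambda\in\R.\nn
\eeq

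The key ingredient is the Gaussian logarithmic Sobolev inequality, $\mathrm{Ent}_\gamma(g^2)\leq 2\,\E_\gamma[\|\nabla g\|_2^2]$ for the standard Gaussian measure $\gamma$, which I would invoke as the classical analytic fact underlying the reference \cite{Tal}. Applying it with $g=e^{\lambda F/2}$, the chain rule $\nabla g=\tfrac{\lambda}{2}(\nabla F)\,e^{\lambda F/2}$ and $\|\nabla F\|_2\leq 1$ give $\E[\|\nabla g\|_2^2]\leq \tfrac{\lambda^2}{4}H(\lambda)$, while $\mathrm{Ent}_\gamma(e^{\lambda F})=\lambda H'(\lambda)-H(\lambda)\log H(\lambda)$. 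This yields the differential inequality $\lambda H'(\lambda)-H(\lambda)\log H(\lambda)\leq \tfrac{\lambda^2}{2}H(\lambda)$. Setting $K(\lambda)=\lambda^{-1}\log H(\lambda)$ for $\lambda>0$, the inequality is exactly $K'(\lambda)\leq \tfrac12$, and the normalizations $H(0)=1$, $H'(0)=\E[F(\g)]=0$ force $K(\lambda)\to 0$ as $\lambda\to 0^+$; integrating gives $K(\lambda)\leq \lambda/2$, i.e. $H(\lambda)\leq e^{\lambda^2/2}$, and the same argument applied to $-F$ covers $\lambda<0$.

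To finish, Markov's inequality gives $\Pro(F(\g)\geq t)\leq \inf_{\lambda>0}e^{-\lambda t}H(\lambda)\leq e^{-t^2/2}$, optimized at $\lambda=t$, and the analogous bound holds for $-F$; summing the two one-sided tails and undoing the normalization $L=1$ produces the stated factor-$2$ estimate. I expect the main difficulty to be expository rather than mathematical: the sharp constant $\tfrac{1}{2L^2}$ in the exponent is precisely what forces the use of log-Sobolev, since the softer rotation/interpolation (Maurey--Pisier) argument only delivers the weaker exponent $\tfrac{2}{\pi^2 L^2}$, so the crux is to justify the log-Sobolev input and to carry out the smoothing reduction carefully enough that the limiting passage and the centering at the mean are legitimate. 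An equally valid alternative would bypass the Laplace transform entirely via the Gaussian isoperimetric inequality: extremality of half-spaces yields concentration of $h(\g)$ about its \emph{median} at the rate $e^{-t^2/(2L^2)}$, after which a comparison of median and mean recovers the stated form with the same constant.
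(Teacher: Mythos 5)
Your main argument is correct, but it is worth noting that the paper does not prove this statement at all: it is stated as a Fact and attributed to the literature (the citation \cite{Tal}, Ledoux--Talagrand), so your proposal supplies a self-contained proof where the paper uses a black box. The route you take — mollify to reduce to smooth $h$ with $\|\nabla h\|_2\le 1$, apply the Gaussian log-Sobolev inequality to $g=e^{\lambda F/2}$ to get the differential inequality $\lambda H'(\lambda)-H(\lambda)\log H(\lambda)\le \tfrac{\lambda^2}{2}H(\lambda)$, integrate $K'\le\tfrac12$ with $K(\lambda)=\lambda^{-1}\log H(\lambda)$ and $K(0^+)=0$, then run Chernoff at $\lambda=t$ — is the standard Herbst argument, and it does deliver exactly the constant $\tfrac{1}{2L^2}$ and the factor $2$ in the statement; the entropy identity, the chain-rule bound $\E\|\nabla g\|_2^2\le\tfrac{\lambda^2}{4}H(\lambda)$, and the centering $H'(0)=0$ are all used correctly, and the integrability needed to differentiate $H$ is automatic since a Lipschitz function of a Gaussian has sub-Gaussian tails. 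One caveat on your closing aside: the claim that the isoperimetric route ``recovers the stated form with the same constant'' after a median-to-mean comparison is not accurate. Isoperimetry gives $\Pro(|h(\g)-m|\ge t)\le 2\bigl(1-\Phi(t/L)\bigr)$ about the \emph{median} $m$; replacing $m$ by $\E[h(\g)]$ forces a shift $c=|m-\E[h(\g)]|$, and the resulting bound $e^{-(t-c)^2/(2L^2)}$ exceeds $2e^{-t^2/(2L^2)}$ for large $t$, so the mean-centered statement with the sharp exponent genuinely requires the Laplace-transform (Herbst or semigroup) machinery you used in the main argument. Since that remark is only an advertised alternative and not part of your proof, it does not affect correctness.
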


\begin{lem}\label{ell2} For any $\g\sim\Nn(0,I)$, $c>1$, we have:
\beq
\Pro(\|\g\|_2\geq c\sqrt{n})\leq 2\exp(-\frac{(c-1)^2n}{2})\nn
\eeq
\end{lem}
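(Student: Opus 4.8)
The plan is to recognize that the map $h(\x)=\|\x\|_2$ is a $1$-Lipschitz function and to invoke the Gaussian concentration inequality of Fact \ref{classic}. First I would pin down the Lipschitz constant: by the reverse triangle inequality $|\,\|\x\|_2-\|\y\|_2\,|\leq \|\x-\y\|_2$ for all $\x,\y\in\R^n$, so $h$ is $1$-Lipschitz and we may take $L=1$ in Fact \ref{classic}.

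The second ingredient is a bound on the mean $\E[\|\g\|_2]$. Rather than computing it exactly, I would use that $\E[\|\g\|_2^2]=n$ together with concavity of the square root: Jensen's inequality yields $\E[\|\g\|_2]\leq\sqrt{\E[\|\g\|_2^2]}=\sqrt{n}$. This inequality is all that is needed, and it avoids any delicate estimate of the expected norm.

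With these two facts in hand, I would apply Fact \ref{classic} with $L=1$ and threshold $t=(c-1)\sqrt{n}$, which is positive since $c>1$. Because $\E[\|\g\|_2]\leq\sqrt{n}$, the event $\{\|\g\|_2\geq c\sqrt{n}\}$ is contained in $\{\|\g\|_2-\E[\|\g\|_2]\geq (c-1)\sqrt{n}\}$, hence in $\{\,|h(\g)-\E[h(\g)]|\geq (c-1)\sqrt{n}\,\}$. Therefore
\beq
\Pro(\|\g\|_2\geq c\sqrt{n})\leq \Pro\big(|h(\g)-\E[h(\g)]|\geq (c-1)\sqrt{n}\big)\leq 2\exp\left(-\frac{(c-1)^2 n}{2}\right),\nn
\eeq
which is exactly the claimed bound.

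Since every step is standard, there is no genuine obstacle here; the only point that requires a little care is the use of the one-sided estimate $\E[\|\g\|_2]\leq\sqrt{n}$ to justify the inclusion of events, after which the tail bound follows immediately from Fact \ref{classic}.
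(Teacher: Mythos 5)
Your proposal is correct and follows the paper's proof exactly: bound $\E[\|\g\|_2]\leq\sqrt{\E[\|\g\|_2^2]}=\sqrt{n}$, note that the $\ell_2$ norm is $1$-Lipschitz, and apply the Gaussian concentration inequality of Fact \ref{classic} at threshold $(c-1)\sqrt{n}$. Your write-up is slightly more explicit about the event inclusion, but the argument is the same.
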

\begin{proof}
$\E[\|\g\|_2]\leq \sqrt{\E[\|\g\|_2^2]}=\sqrt{n}$. Secondly $\ell_2$ norm is a $1$-Lipschitz function due to the triangle inequality. Hence:
\beq
\Pro(\|\g\|_2\geq c\sqrt{n})\leq \Pro(\|\g\|_2\geq (c-1)\sqrt{n}+\E[\|\g\|_2])\leq 2\exp(-\frac{(c-1)^2n}{2})\nn
\eeq
\end{proof}

\begin{lem}\label{more2}Let $\Cc$ be a closed and convex cone in $\R^n$. Then, ${\bf{D}}(\Cc)+{\bf{D}}(\Cc^*)=n$.
\end{lem}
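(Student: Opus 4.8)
The plan is to establish the claimed identity pointwise, for every realization of the Gaussian vector $\g$, and then simply take expectations. The only ingredient required is Moreau's decomposition theorem (Fact \ref{more}), which I would apply to the closed convex cone $\Cc$ together with its polar $\Cc^*$.

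First I would fix an arbitrary $\g\in\R^n$ and invoke Fact \ref{more} to write $\g=\bu(\g,\Cc)+\bu(\g,\Cc^*)$, where the two summands are orthogonal, i.e. $\li\bu(\g,\Cc),\bu(\g,\Cc^*)\ri=0$. The orthogonality gives, by the Pythagorean theorem, $\|\g\|_2^2=\|\bu(\g,\Cc)\|_2^2+\|\bu(\g,\Cc^*)\|_2^2$.

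Next I would identify each squared norm with a squared distance. Since $\bu(\g,\Cc)$ is the projection of $\g$ onto $\Cc$, the distance vector is $\Pi(\g,\Cc)=\g-\bu(\g,\Cc)=\bu(\g,\Cc^*)$ (again reading off the decomposition), so $\dt(\g,\Cc)=\|\bu(\g,\Cc^*)\|_2$. Symmetrically, $\dt(\g,\Cc^*)=\|\bu(\g,\Cc)\|_2$. Substituting these into the Pythagorean identity yields the pointwise relation $\dt(\g,\Cc)^2+\dt(\g,\Cc^*)^2=\|\g\|_2^2$, valid for every $\g$.

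Finally I would take the expectation over $\g\sim\Nn(0,\Iden)$. By definition ${\bf{D}}(\Cc)=\E[\dt(\g,\Cc)^2]$ and ${\bf{D}}(\Cc^*)=\E[\dt(\g,\Cc^*)^2]$, while $\E[\|\g\|_2^2]=n$, which gives ${\bf{D}}(\Cc)+{\bf{D}}(\Cc^*)=n$. There is no serious obstacle here: the statement is an immediate corollary of Moreau's theorem, and the only point demanding any care is the correct identification of the distance to one cone with the norm of the projection onto the polar cone, which is exactly the content of the orthogonal decomposition. This is in fact the same computation already used in Section \ref{sec proj error} to establish $\delta(\Cc^*)={\bf{D}}(\Cc)$; here both cones are simply treated symmetrically.
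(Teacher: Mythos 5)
Your proof is correct and is essentially identical to the paper's own argument: both apply Moreau's decomposition (Fact \ref{more}) to get the orthogonal splitting $\g=\bu(\g,\Cc)+\bu(\g,\Cc^*)$, read off the pointwise Pythagorean identity $\dt(\g,\Cc)^2+\dt(\g,\Cc^*)^2=\|\g\|_2^2$ by identifying each distance with the norm of the projection onto the polar cone, and then take expectations using $\E[\|\g\|_2^2]=n$. Your write-up simply makes explicit the identification $\Pi(\g,\Cc)=\bu(\g,\Cc^*)$ that the paper leaves implicit.
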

\begin{proof} Using Fact \ref{more}, any $\vb\in\R^n$ can be written as $\bu(\vb,\Cc)+\bu(\vb,\Cc^*)=\vb$ and $\li\bu(\vb,\Cc),\bu(\vb,\Cc^*)\ri=0$. Hence,
\beq
\|\vb\|^2=\|\bu(\vb,\Cc^*)\|^2+\|\bu(\vb,\Cc)\|^2=\dt(\vb,\Cc)^2+\dt(\vb,\Cc^*)^2\nn
\eeq
Letting $\vb\sim\Nn(0,\Iden)$ and taking the expectations, we can conclude.
\end{proof}

\section{Subdifferential of the approximation}\label{approximate subdif}

%Wei's result is positive or negative?
% I hear that negative results are trending right now #simultaneous
% He has no results, *lots* of simulations backing intuition though

\begin{proof}[Proof of Lemma \ref{approx subdif}]
Recall that $\hat{f}_{\x_0}(\x_0+\vb)-f(\x_0)$ is equal to the directional derivative $\fp\vb)=\sup_{\s\in\paf} \li\s,\vb\ri$. Also recall the ``set of maximizing subgradients'' from \eqref{dir dev2}. Clearly, $\pa\fp\vb)=\pa \hat{f}_{\x_0}(\x_0+\vb)$. We will let $\x=\w+\x_0$ and investigate $\pa\fp\w)$ as a function of $\w$.

\noindent{\bf{If $\w=0:$}} For any $\s\in\paf$ and any $\vb$ by definition, we have:
\beq
\fp\vb)-\fp0)=\fp\vb)=\sup_{\s'\in\paf} \li\vb,\s'\ri\geq \li\vb,\s\ri\nn
\eeq
hence $\s\in\pa \fp0)$. Conversely, assume $\s\not\in\paf$, then there exists $\vb$ such that:
\beq
f(\vb+\x_0)<f(\x_0)+\li\vb,\s\ri\nn
\eeq
By convexity for any $\eps>0$:
\beq
\frac{f(\eps\vb+\x_0)-f(\x_0)}{\eps}\leq f(\vb+\x_0)-f(\x_0)<\li\vb,\s\ri\nn
\eeq
Taking $\eps\rightarrow 0$ on the left hand side we find:
\beq
\fp\vb)-\fp0)=\fp\vb)<\li\vb,\s\ri\nn
\eeq
which implies $\s\not\in \pa \fp0)$.\\
{\bf{If $\w\neq 0$:}} Now, consider the case $\w\neq 0$. Assume, $\s\in \pafb \w)$. Then, for any $\vb$, we have:
\begin{align}
\fp\w+\vb)-\fp\w)&=\sup_{\s_1\in \paf} \li\w+\vb,\s_1\ri-\sup_{\s_2\in \paf} \li\w,\s_2\ri\\
&=\sup_{\s_1\in \paf} \li\w+\vb,\s_1\ri- \li\w,\s\ri\geq \li\vb,\s\ri
\end{align}
Hence, $\s\in\paw$. Conversely, assume $\s\not\in \pafb \w)$. Then, we'll argue that $\s\not\in \paw$.\\
Assume $\fp\w)=c\|\w\|_2^2$ for some scalar $c=c(\w)$. We can write $\s=a\w+\ub$ where $\ub^T\w=0$. Choose $\vb=\eps \w$ with $|\eps|<1$. We end up with:
\beq
\fp\w+\vb)-\fp\w)=\eps\sup_{\s_1\in \paf} \li\w,\s_1\ri=c\eps\|\w\|_2^2\geq \li\s,\vb\ri=a\eps\|\w\|_2^2\nn
\eeq
Consequently, we have $c\eps\geq a\eps$ for all $|\eps|<1$ which implies $a=c$. Hence, $\s$ can be written as $c\w+\ub$. Now, if $\s\in\paf$ then $\s\in \pafb \x-\x_0)$ as it maximizes $\li\s',\w\ri$ over $\s'\in\paf$. However we assumed $\s\not\in \pafb \x-\x_0)$. Observe that $\ub=\s-c\w$ and $\pafb \x-\x_0)-c\w$ lies on $n-1$ dimensional subspace $H$ that is perpendicular to $\w$. By assumption $\ub\not\in \pafb \x-\x_0)-c\w$. We'll argue that this leads to a contradiction. By making use of convexity of $\pafb \x-\x_0)-c\w$ and invoking Hyperplane separation theorem (Fact \ref{hypseper}), we can find a direction $\h\in H$ such that:
\beq
\li\h,\ub\ri>\sup_{\s'\in \pafb \x-\x_0)-c\w}\li\h,\s'\ri\label{good direction}
\eeq
Next, considering $\eps\h$ perturbation, we have:
\bea
\fp\w+\eps\h)-\fp\w)&=\sup_{\s_1\in \paf}( \eps\li\h,\s_1\ri-\sup_{\s_2\in \paf} \li\w,\s_2-\s_1\ri)
%&=\sup_{\s_1\in \paf} \li\w+\vb,\s_1\ri- \li\w,\s\ri\geq \li\vb,\s\ri
\end{align}
Denote the $\s_1$ that establish equality by $\s_1^*$.\\
{\bf{Claim:}} As $\eps\rightarrow 0$, $\li\s_1^*,\w\ri\rightarrow c\|\w\|_2^2$.
\begin{proof} Recall that $\paf$ is bounded. Let $R=\sup_{\s'\in\paf}\|\s'\|_2$. Choosing $\s_1\in \pafb \x-\x_0)$, we always have:
\beq
\fp\w+\eps\h)-\fp\w)\geq \eps\li\s_1,\h\ri\geq -\eps R\|\h\|_2\nn
\eeq
On the other hand, for any $\s_1$ we may write:
\beq
 \eps\li\h,\s_1\ri-\sup_{\s_2\in \paf} \li\w,\s_2-\s_1\ri\leq \eps R\|\h\|_2+\li\s_1,\w\ri-c\|\w\|_2^2\nn
\eeq
Hence, for $\s_1^*$, we obtain:
\beq
\eps R\|\h\|_2+\li\s_1^*,\w\ri-c\|\w\|_2^2\geq -\eps R\|\h\|_2\implies \li\s_1^*,\w\ri\geq c\|\w\|_2^2-2\eps R\|\h\|_2\nn
\eeq
Letting $\eps\rightarrow 0$, we obtain the desired result.
\end{proof}
\noindent{\bf{Claim:}} Given $\paf$, for any $\eps'>0$ there exists a $\delta>0$ such that for all $\s_1\in\paf$ satisfying $\li\s_1,\w\ri>c\|\w\|_2^2-\delta$ we have $\dt(\s_1,\pafb \x-\x_0))<\eps'$.
\begin{proof}
Assume for some $\eps'>0$, claim is false. Then, we can construct a sequence $\s(i)$ such that $\dt(\s(i),\pafb \x-\x_0))\geq\eps'$ but $\li\s(i),\w\ri\rightarrow c\|\w\|_2^2$. From the well-known Bolzano-Weierstrass Theorem and the compactness of $\paf\subseteq\R^n$, $\s(i)$ will have a convergent subsequence whose limit $\s(\infty)$ will be inside $\paf$ and will satisfy $\li\s_\infty,\w\ri=c\|\w\|_2^2=\fp\w)$. On the other hand, $\dt(\s(\infty),\pafb \x-\x_0))\geq \eps'\implies \s(\infty)\not\in \pafb \x-\x_0)$ which is a contradiction.
\end{proof}
Going back to what we have, using the first claim, as $\eps\h\rightarrow 0$, $\li\s_1^*,\w\ri\rightarrow c\|\w\|_2^2$. Using the second claim, this implies for some $\delta$ which approaches to $0$ as $\eps\rightarrow 0$, we have:
\beq
\sup_{\s_1\in \paf}( \eps\li\h,\s_1\ri-\sup_{\s_2\in \paf} \li\w,\s_2-\s_1\ri)\leq \eps(\delta\|\h\|_2+\sup_{\s'\in \pafb \x-\x_0)-c\w}\li\s',\h\ri)\nn
\eeq

Finally, based on \eqref{good direction}, whenever $\eps$ is chosen to ensure $\delta\|\h\|_2<\li\h,\ub\ri-\sup_{\s'\in \pafb \x-\x_0)-c\w}\li\s',\h\ri$ we have,
\beq
\fp\w+\eps\h)-\fp\w)<\eps\li\h,\ub\ri,\nn
\eeq
which contradicts with the initial assumption that $\s$ is a subgradient of $\fp\cdot)$ at $\w$, since,
\beq
\fp\w+\eps\h)-\fp\w)\geq \li\s,\eps\h\ri=\eps\li\ub,\h\ri.\nn
\eeq
\end{proof}

% Babak ( before you came): You should see the Google Scholar page of our previous and current president, no citations at all. Something like papers with 5 or 7 citations. Their wives have lots and lots of citations.  

\begin{lem} \label{approx conv}$\hat{f}_{\x_0}(\x)$ is a convex function of $\x$.
\end{lem}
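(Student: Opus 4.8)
The plan is to read convexity straight off the definition \eqref{f App}. Writing $\ff(\x)=f(\x_0)+\sup_{\s\in\paf}\li\s,\x-\x_0\ri$, the first summand $f(\x_0)$ is a constant, while the second is a \emph{pointwise supremum} over the index set $\s\in\paf$ of the maps $\x\mapsto\li\s,\x-\x_0\ri$. Each of these maps is affine in $\x$, hence convex, and a pointwise supremum of a family of convex functions is itself convex; adding back the constant $f(\x_0)$ preserves convexity. This is essentially the entire argument.

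To keep the proof self-contained I would verify the defining inequality directly rather than merely invoking the standard fact. Fix $\x_1,\x_2\in\R^n$ and $\theta\in[0,1]$, and abbreviate $h(\x)=\sup_{\s\in\paf}\li\s,\x-\x_0\ri$. Using linearity of the inner product and the subadditivity of the supremum,
\beq
h(\theta\x_1+(1-\theta)\x_2)=\sup_{\s\in\paf}\big(\theta\li\s,\x_1-\x_0\ri+(1-\theta)\li\s,\x_2-\x_0\ri\big)\leq\theta\,h(\x_1)+(1-\theta)\,h(\x_2).
\eeq
Adding $f(\x_0)$ to both sides yields $\ff(\theta\x_1+(1-\theta)\x_2)\leq\theta\,\ff(\x_1)+(1-\theta)\,\ff(\x_2)$, which is exactly the convexity inequality.

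There is no genuine obstacle here; the only points worth recording are that $\paf$ is nonempty, so the supremum is taken over a nonempty set, and that $\paf$ is compact (as noted in Section \ref{notate1}), so the supremum is finite for every $\x$. Consequently $\ff$ is a real-valued (rather than extended-real-valued) convex function, as required for it to serve as the first-order surrogate in \eqref{first or}.
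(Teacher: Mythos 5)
Your proof is correct, but it takes a genuinely different route from the paper's. You read convexity directly off the definition \eqref{f App}: $\ff$ is a constant plus a pointwise supremum of the affine maps $\x\mapsto\li\s,\x-\x_0\ri$ indexed by $\s\in\paf$, and a pointwise supremum of convex functions is convex; your explicit verification of the convexity inequality (the supremum of a sum is at most the sum of suprema) is a correct, self-contained version of that standard fact, and your remark that nonemptiness and compactness of $\paf$ make $\ff$ finite-valued is a worthwhile addition. The paper argues differently: it identifies $\ff(\x)-f(\x_0)$ with the directional derivative $f'(\x_0,\x-\x_0)$ via \eqref{dirder}, and then proves convexity of $f'(\x_0,\cdot)$ from its limit characterization \eqref{dirder3} rather than from the subgradient representation --- it approximates the directional derivative by the convex difference quotients $g(\w)=f(\x_0+\w)-f(\x_0)$ up to an error $\delta\eps$, applies convexity of $g$, uses positive homogeneity of $f'(\x_0,\cdot)$ to scale out $\eps$, and lets $\delta\rightarrow 0$. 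Your argument is shorter and needs nothing about $f$ beyond the structure of the subdifferential set; the paper's argument is heavier but establishes that convexity of the first-order model follows from the limit definition of the directional derivative alone, so it would apply verbatim if one took $f'(\x_0,\cdot)$, rather than the supremum formula \eqref{f App}, as the primary definition of the approximation.
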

\begin{proof} To show convexity, we need to argue that the function $\fp\w)$ is a convex function of $\w=\x-\x_0$.\\
Observe that $g(\w)=f(\x_0+\w)-f(\x_0)$ is a convex function of $\w$ and behaves same as the directional derivative $\fp\w)$ for sufficiently small $\w$. More rigorously, from \eqref{dirder3}, for any $\w_1,\w_2\in\R^n$ and $ \delta>0$ there exists $\eps>0$ such that, we have:
\beq
g(\eps\w_1)\leq \fp\eps\w_1)+\delta\eps, ~g(\eps\w_2)\leq \fp\eps\w_2)+\delta\eps\nn
\eeq
Hence, for any $0\leq c\leq 1$:
\bea
\fp\eps(c\w_1+(1-c)\w_2))&\leq g(\eps(c\w_1+(1-c)\w_2))\nn\\
&\leq cg(\eps\w_1)+(1-c)g(\eps\w_2)\nn\\
&\leq c\fp\eps\w_1)+(1-c)\fp\eps\w_2)+\eps \delta\nn
\end{align}
Making use of the fact that $\fp\eps\s)=\eps \fp\s)$ for any direction $\s$, we obtain:
\beq
\fp c\w_1+(1-c)\w_2)\leq c\fp\w_1)+(1-c)\fp\w_2)+ \delta\nn
\eeq
Letting $\delta\rightarrow 0$, we may conclude with the convexity of $\fp\cdot)$ and problem (\ref{first or}).
\end{proof}

%Swapping

\section{Swapping the minimization over $\tau$ and the expectation}\label{app swap}
%\subsection{Swapping square and expectation}
\begin{lem}[ \cite{Bod98,Led01}]\label{lem square lip}Assume $\g\sim\Nn(0,\Iden_n)$ and let $h(\cdot):\R^n\rightarrow \R$ be an $L$-Lipschitz function. Then, we have:
\beq
\text{Var}(h(\g))\leq L^2\nn
\eeq
\end{lem}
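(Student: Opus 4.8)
The plan is to deduce the bound from the Gaussian Poincar\'e inequality
\[
\text{Var}(f(\g)) \le \E[\|\nabla f(\g)\|_2^2],
\]
valid for smooth $f$, together with the observation that an $L$-Lipschitz function satisfies $\|\nabla f\|_2\le L$ almost everywhere (Rademacher's theorem), so that the right-hand side is at most $L^2$. To reduce a merely Lipschitz $h$ to the smooth case, I would mollify: set $h_\eps=h*\varphi_\eps$ for a smooth Gaussian kernel $\varphi_\eps$. Convolution does not increase the Lipschitz constant, $h_\eps$ is $C^\infty$ with $\|\nabla h_\eps\|_2\le L$, and $h_\eps\to h$ locally uniformly; since $h$ has at most linear growth, $h(\g)$ is square-integrable and $\text{Var}(h_\eps(\g))\to \text{Var}(h(\g))$. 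Hence it suffices to prove the Poincar\'e inequality for smooth $f$ with bounded gradient.

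For the Poincar\'e inequality itself I would use the Ornstein--Uhlenbeck semigroup
\[
P_t f(\x)=\E_{\g}\!\left[f\!\left(e^{-t}\x+\sqrt{1-e^{-2t}}\,\g\right)\right],
\]
which satisfies $P_0 f=f$ and $P_\infty f\equiv \E[f(\g)]$, is self-adjoint with respect to the standard Gaussian measure, has generator $\Lc f=\Delta f-\li\x,\nabla f\ri$ obeying the integration-by-parts identity $\E[g\,\Lc g]=-\E[\|\nabla g\|_2^2]$, and enjoys the commutation relation $\nabla P_t f=e^{-t}P_t(\nabla f)$. Telescoping the variance along the flow and differentiating under the integral sign yields
\[
\text{Var}(f(\g))=-\int_0^\infty \frac{d}{dt}\E[(P_t f)^2]\,dt=2\int_0^\infty \E[\|\nabla P_t f\|_2^2]\,dt.
\]
Bounding the integrand via the commutation relation, Jensen's inequality for the averaging operator $P_t$ (so $\|P_t(\nabla f)\|_2^2\le (P_t\|\nabla f\|_2)^2\le P_t(\|\nabla f\|_2^2)$), and the stationarity of $P_t$, I obtain $\E[\|\nabla P_t f\|_2^2]=e^{-2t}\E[\|P_t(\nabla f)\|_2^2]\le e^{-2t}\E[\|\nabla f\|_2^2]$, whence
\[
\text{Var}(f(\g))\le 2\,\E[\|\nabla f\|_2^2]\int_0^\infty e^{-2t}\,dt=\E[\|\nabla f\|_2^2]\le L^2.
\]

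The main obstacle is not the algebra but its rigorous justification: verifying the integration-by-parts and commutation identities, justifying differentiation under the integral and the integrability of $\E[\|\nabla P_t f\|_2^2]$ near $t=0$ and $t=\infty$, and controlling the mollification limit so that the \emph{sharp} constant $L^2$ survives. A fully elementary alternative that sidesteps the semigroup is to establish the one-dimensional inequality $\text{Var}(h(g))\le \E[h'(g)^2]$ through the Hermite expansion (where the spectral gap $k\ge 1$ supplies the inequality) and then tensorize across coordinates using the conditional-variance decomposition $\text{Var}(f)\le \sum_i \E[\text{Var}_i(f)]$; this also delivers the sharp $L^2$. If only a non-sharp constant were required, one could instead integrate the Gaussian concentration tail bound of Fact \ref{classic}, giving the weaker estimate $\text{Var}(h(\g))\le 4L^2$.
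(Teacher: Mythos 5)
Your proof is correct, but there is nothing in the paper to compare it against: Lemma \ref{lem square lip} is not proved in the paper at all, it is imported as a known result with the citations \cite{Bod98,Led01} standing in place of a proof. What you have written is, in essence, the argument that lives inside those references: the Ornstein--Uhlenbeck semigroup derivation of the Gaussian Poincar\'e inequality $\text{Var}(f(\g))\leq \E[\|\nabla f(\g)\|_2^2]$ (your identities --- the telescoping $\text{Var}(f(\g))=2\int_0^\infty \E[\|\nabla P_t f\|_2^2]\,dt$, the commutation $\nabla P_tf=e^{-t}P_t(\nabla f)$, Jensen, and stationarity of the Gaussian measure under $P_t$ --- are all sound), combined with mollification to pass from smooth $f$ to merely Lipschitz $h$; since convolution does not increase the Lipschitz constant, the sharp constant $L^2$ survives the limit, as you note. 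Your Hermite-expansion-plus-tensorization alternative is an equally standard and valid route to the same sharp constant. Your final remark is also well taken and worth emphasizing: one cannot shortcut the lemma by integrating the tail bound of Fact \ref{classic}, since that only yields $\text{Var}(h(\g))\leq 4L^2$; the sharp constant does matter (mildly) downstream, because Lemma \ref{my result proof} applies the present lemma to the $1$-Lipschitz function $\vb\mapsto \dt(\vb,\tau(\vb)\Cc)$ to get $\E[\dt(\g,\tau(\g)\Cc)]^2\geq \E[\dt(\g,\tau(\g)\Cc)^2]-1$, and a constant $4$ in place of $1$ would propagate into the additive constant of Proposition \ref{my bound}. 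So the trade-off is exactly what one expects: the paper buys brevity by black-boxing a classical fact, while your argument makes the paper self-contained at the cost of the semigroup (or Hermite) machinery.
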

We next show a closely related result.
\begin{lem} \label{lem abs lip}Assume $\g\sim\Nn(0,\Iden_n)$ and let $h(\cdot):\R^n\rightarrow \R$ be an $L$-Lipschitz function. Then, we have:
\beq
|h(\g)-\E[h(\g)]|\leq \sqrt{2\pi}L\nn
\eeq
\end{lem}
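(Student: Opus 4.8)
The plan is to derive this mean-absolute-deviation estimate directly from the subgaussian concentration inequality already recorded in Fact \ref{classic}. (The inequality should be read in expectation, i.e.\ as a bound on $\E[|h(\g)-\E[h(\g)]|]$, since a pointwise almost-sure bound is impossible already for the $1$-Lipschitz linear functional $h(\g)=\g_1$, which is unbounded.) Applying Fact \ref{classic} to the $L$-Lipschitz function $h$ gives, for every $t\geq 0$,
\beq
\Pro\big(|h(\g)-\E[h(\g)]|\geq t\big)\leq 2\exp\Big(-\frac{t^2}{2L^2}\Big).\nn
\eeq

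First I would set $Y:=|h(\g)-\E[h(\g)]|$, a nonnegative random variable, and invoke the standard layer-cake identity $\E[Y]=\int_0^\infty \Pro(Y\geq t)\,dt$. The subgaussian tail above guarantees that this integral converges, so $Y$ is integrable and the identity is legitimate; this integrability check is the only technical point worth stating explicitly.

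Next I would substitute the tail estimate into the identity and evaluate the resulting Gaussian integral:
\beq
\E[Y]\leq \int_0^\infty 2\exp\Big(-\frac{t^2}{2L^2}\Big)\,dt = 2L\int_0^\infty e^{-u^2/2}\,du = 2L\cdot\sqrt{\tfrac{\pi}{2}}=\sqrt{2\pi}\,L,\nn
\eeq
where the middle equality uses the substitution $u=t/L$ and the last uses $\int_0^\infty e^{-u^2/2}\,du=\sqrt{\pi/2}$. This yields the claimed constant exactly.

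There is no genuine obstacle here: once the statement is read in expectation and integrability is confirmed, the remaining work is the routine evaluation of the constant $\sqrt{2\pi}$. I would also remark that a slightly sharper bound $\E[Y]\leq\sqrt{\text{Var}(h(\g))}\leq L$ follows at once from Jensen's inequality together with the variance bound of Lemma \ref{lem square lip}; I nonetheless prefer the tail-integration argument above because it reproduces the precise constant $\sqrt{2\pi}$ in the statement and keeps the dependence on Fact \ref{classic} transparent.
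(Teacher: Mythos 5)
Your proof is correct and essentially the same as the paper's: the paper likewise applies Fact~\ref{classic} to get the tail bound $\Pro(|\ab|\geq t)\leq 2\exp(-\frac{t^2}{2L^2})$ for $\ab=h(\g)-\E[h(\g)]$, and then performs exactly your tail integration, just written as an integration by parts $\E[|\ab|]=[uQ(u)]_{\infty}^{0}+\int_{0}^{\infty}Q(u)\,du$ before evaluating the same Gaussian integral to obtain $\sqrt{2\pi}L$. Your reading of the conclusion in expectation is also how the paper proves and later uses the lemma, and your side remark that Jensen together with Lemma~\ref{lem square lip} gives the sharper constant $L$ is correct as well.
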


%Observe that, this also implies: $\E[\ab^2]\leq (\E[\ab]+\sqrt{2\pi})^2$.
\begin{proof} From Lipschitzness of $h(\cdot)$, letting $\ab=h(\g)-\E[h(\g)]$ and invoking Lemma \ref{classic} for all $t\geq 0$, we have:
\beq
\Pro(|\ab-\E[\ab]|\geq t)= \Pro(|\ab|\geq t)\leq 2\exp(-\frac{t^2}{2L^2})\nn
\eeq

Denote the probability density function of $|\ab|$ by $p(\cdot)$ and let $Q(u)=\Pro(|\ab|\geq u)$. We may write:
\beq
\E[|\ab|]=\int_{0}^{\infty}u p(u)du=\int_{\infty}^{0}udQ(u)=[uQ(u)]_{\infty}^0+\int_{0}^{\infty}Q(u)du\nn
\eeq
Using $Q(u)\leq 2\exp(-\frac{u^2}{2L^2})$ for $u\geq 0$, we have:
\beq
[uQ(u)]_{\infty}^0=[2u\exp(-\frac{u^2}{2L^2})]_{\infty}^0=0\nn
\eeq
Next,
\beq
\int_{0}^{\infty}Q(u)du\leq \int_{0}^{\infty}2\exp(-\frac{u^2}{2L^2})du=\sqrt{2\pi}L\nn
\eeq

\end{proof}

%(i) There are very few things which give true happiness to people. Citation is one such thing for you. 
% (ii) True happiness is very hard to fake

%\subsection{Lipschitzness of $\tau^*$ for decomposable norms}
\begin{lem}\label{lemF3} Suppose Assumption \ref{assume 1} holds. Recall that $\tau(\vb)=\arg\min_{\tau\geq 0}\dt(\vb,\tau\paf)$. Then, for all $\vb_1,\vb_2$,
\beq
|\tau(\vb_1)-\tau(\vb_2)|\leq \frac{\|\vb_1-\vb_2\|_2}{\|\e\|_2}\label{eqlemF3}
\eeq
Hence, $\tau(\vb)$ is $\|\e\|_2^{-1}$-Lipschitz function of $\vb$.
\end{lem}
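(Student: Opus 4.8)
The plan is to reduce the one-dimensional minimization defining $\tau(\vb)$ to a projection onto a fixed convex cone, and then to use Assumption \ref{assume 1} to extract $\tau(\vb)$ as a \emph{linear} functional of that projection. First I would observe that
\[
\min_{\tau\geq 0}\dt(\vb,\tau\paf)=\dt(\vb,\cn(\paf)),
\]
since letting $\tau\geq 0$ and $\s\in\paf$ range freely sweeps out exactly the cone $\Kc:=\cn(\paf)$. Because $\x_0$ is not a minimizer of $f(\cdot)$ we have $0\notin\paf$, and since $\paf$ is compact this makes $\Kc$ a closed convex cone, so the projection $\p:=\bu(\vb,\Kc)$ is well defined and unique. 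Any minimizing pair $(\tau,\s)$ then satisfies $\tau\s=\p$, which will pin down the optimal scale $\tau(\vb)$.

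The key step is to read the scale $\tau$ off from $\p$ using Assumption \ref{assume 1}. For every $\s\in\paf$ we have $\bu(\s,T)=\e$; writing $\s=\e+\s^\perp$ with $\s^\perp\in T^\perp$ and using $\e\in T$ gives $\li\s,\e\ri=\|\e\|_2^2$. Hence for any point $\tau\s\in\Kc$ one has $\li\tau\s,\e\ri=\tau\|\e\|_2^2$, i.e. the scale is recovered by the linear formula $\tau=\li\tau\s,\e\ri/\|\e\|_2^2$. Applying this at $\p=\tau(\vb)\has$ for the minimizing $\has\in\paf$ yields the clean identity
\[
\tau(\vb)=\frac{\li\p,\e\ri}{\|\e\|_2^2}=\frac{\li\bu(\vb,\Kc),\e\ri}{\|\e\|_2^2}.
\]
In particular $\tau(\vb)$ is uniquely determined by $\p$, so the tie-breaking convention in the definition is irrelevant here, and the degenerate case $\p=0$ simply gives $\tau(\vb)=0$, consistent with the formula.

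With this representation the Lipschitz bound is immediate. For any $\vb_1,\vb_2$, Cauchy--Schwarz gives
\[
|\tau(\vb_1)-\tau(\vb_2)|=\frac{\big|\li\bu(\vb_1,\Kc)-\bu(\vb_2,\Kc),\e\ri\big|}{\|\e\|_2^2}\leq\frac{\|\bu(\vb_1,\Kc)-\bu(\vb_2,\Kc)\|_2}{\|\e\|_2},
\]
and the nonexpansiveness of the projection onto a closed convex set (Fact \ref{prom}) bounds $\|\bu(\vb_1,\Kc)-\bu(\vb_2,\Kc)\|_2\leq\|\vb_1-\vb_2\|_2$, which is exactly \eqref{eqlemF3}.

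The hard part, and the only place needing genuine care, is the first paragraph: justifying that $\tau(\vb)$ really equals the scale factor of the cone projection. This requires confirming that $\Kc$ is closed (so that $\p$ exists), arguing via uniqueness of the projection that every minimizing pair $(\tau,\s)$ produces the same product $\tau\s=\p$, and handling the boundary case $\p=0$ (i.e. $\vb\in\Kc^*$) separately. Once that correspondence is established, Assumption \ref{assume 1} does all the work and the remaining steps are routine.
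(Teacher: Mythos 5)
Your proof is correct and follows essentially the same route as the paper's: both identify the minimizer with the projection onto the closed convex cone $\cn(\paf)$, use non-expansiveness of that projection, and invoke Assumption \ref{assume 1} to read the scale off linearly --- the paper by projecting $\ab_1-\ab_2$ onto the subspace $T$ (whose image is exactly $(\tau(\vb_1)-\tau(\vb_2))\e$), you by taking the inner product with $\e$ and applying Cauchy--Schwarz, which is the same computation. Your write-up is somewhat more careful than the paper's (the explicit formula $\tau(\vb)=\li\bu(\vb,\cn(\paf)),\e\ri/\|\e\|_2^2$, the closedness of the cone via $0\notin\paf$, and the resulting uniqueness of $\tau(\vb)$ are details the paper leaves implicit), but there is no substantive difference in approach.
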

\begin{proof} Let $\ab_i=\bu(\vb_i,\cn(\Cc))$ for $1\leq i\leq 2$. Using Lemma \ref{prom}, we have $\|\ab_1-\ab_2\|_2\leq \|\vb_1-\vb_2\|_2$ as $\paf$ is convex. Now, we will further lower bound $\|\vb_1-\vb_2\|_2$ as follows:
\beq
\|\bu(\ab_1-\ab_2,T)\|_2\leq \|\ab_1-\ab_2\|_2\leq \|\vb_1-\vb_2\|_2\nn
\eeq
Now, observe that $\|\Pc_T(\ab_1-\ab_2)\|_2=\|\tau(\vb_1)\e-\tau(\vb_2)\e\|_2$. Hence, we may conclude with \eqref{eqlemF3}.
\end{proof}
%\subsection{Swapping the minimization over $\tau$ and the expectation}

\begin{lem}\label{my result proof} Let $\Cc$ be a convex and closed set. Define the set of $\tau$ that minimizes $\dt(\vb,\tau\Cc)$,
\beq
{\bf{T}}(\vb)=\{\tau\geq 0\big|\arg\min_{\tau\geq 0}\dt(\vb,\tau\Cc)\}\nn
\eeq
and let $\tau(\vb)=\inf_{\tau\in{\bf{T}}(\vb)}\tau$. $\tau(\vb)$ is uniquely determined, given $\Cc$ and $\vb$. Further, assume $\tau(\vb)$ is an $L$ Lipschitz function of $\vb$ and let $R:=R(\Cc)=\max_{\ub\in\Cc}\|\ub\|_2$. Then,
\beq
\min_{\tau\geq 0}\E[\dt(\g,\tau\Cc)^2]\leq \DCC+2\pi(R^2L^2+RL\sqrt{\DCC}+1)\nn
\eeq
\end{lem}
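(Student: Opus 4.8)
The plan is to defeat the interchange of the minimization over $\tau$ and the expectation by committing to a single \emph{deterministic} scale, namely the mean $\bar\tau := \E[\tau(\g)]$ of the optimal-scale map. By hypothesis $\tau(\cdot)$ is $L$-Lipschitz, hence integrable (since $|\tau(\g)|\le|\tau(0)|+L\|\g\|_2$ and $\E\|\g\|_2<\infty$), so $\bar\tau$ is well defined and nonnegative. Feeding $\tau=\bar\tau$ into the outer minimization immediately gives
\[
\min_{\tau\geq 0}\E[\dt(\g,\tau\Cc)^2]\leq \E[\dt(\g,\bar\tau\Cc)^2],
\]
so everything reduces to controlling the right-hand side by $\DCC$ plus a small correction. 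The engine for this is that $\tau(\g)$ concentrates around $\bar\tau$, so replacing the data-adaptive optimal scale $\tau(\g)$ by the fixed $\bar\tau$ should cost little.

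First I would record an elementary geometric Lipschitz estimate in the scale: since $R=\max_{\ub\in\Cc}\|\ub\|_2<\infty$, the scaled sets $\tau_1\Cc$ and $\tau_2\Cc$ lie within Hausdorff distance $|\tau_1-\tau_2|R$ of one another, and therefore $|\dt(\g,\tau_1\Cc)-\dt(\g,\tau_2\Cc)|\le R|\tau_1-\tau_2|$ for every $\g$. Taking $\tau_1=\tau(\g)$ (so that $\dt(\g,\tau(\g)\Cc)=\dt(\g,\cn(\Cc))$, as $\tau(\g)$ minimizes $\dt(\g,\tau\Cc)$ and $\cn(\Cc)=\bigcup_{\tau\ge0}\tau\Cc$) and $\tau_2=\bar\tau$ yields the pointwise bound $\dt(\g,\bar\tau\Cc)\le \dt(\g,\cn(\Cc))+R|\tau(\g)-\bar\tau|$.

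Next I would set $X:=\dt(\g,\bar\tau\Cc)$, a $1$-Lipschitz function of $\g$, and split $\E[X^2]=\text{Var}(X)+(\E X)^2$. Lemma \ref{lem square lip} controls the first term as $\text{Var}(X)\le 1$. For the mean, I would take expectations in the pointwise bound and use $\E[\dt(\g,\cn(\Cc))]\le\sqrt{\E[\dt(\g,\cn(\Cc))^2]}=\sqrt{\DCC}$ (Jensen) together with $\E|\tau(\g)-\bar\tau|\le\sqrt{2\pi}L$ from Lemma \ref{lem abs lip}, giving $\E X\le \sqrt{\DCC}+\sqrt{2\pi}RL$. Squaring and using $2\sqrt{2\pi}\le 2\pi$ and $1\le 2\pi$ to absorb the constants then produces
\[
\E[X^2]\le 1+\big(\sqrt{\DCC}+\sqrt{2\pi}RL\big)^2\le \DCC+2\pi\big(R^2L^2+RL\sqrt{\DCC}+1\big),
\]
which is the claimed inequality.

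The only genuine obstacle is conceptual rather than computational: the optimal scale $\tau(\g)$ is a random, data-dependent quantity, so the naive inequality $\min_\tau\E[\,\cdot\,]\ge\E[\min_\tau\,\cdot\,]=\DCC$ points the wrong way, and one cannot simply pull the minimum outside. The entire argument hinges on the $L$-Lipschitz hypothesis on $\tau(\cdot)$, which through the Gaussian concentration Lemmas \ref{lem square lip} and \ref{lem abs lip} guarantees that $\tau(\g)$ is tightly clustered around $\bar\tau$ and thereby lets a single deterministic scale nearly match the pointwise-optimal one. Verifying that Lipschitz hypothesis in the intended application is precisely the content of Lemma \ref{lemF3} under Assumption \ref{assume 1}.
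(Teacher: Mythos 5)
Your proposal is correct and follows essentially the same route as the paper's proof: commit to the deterministic scale $\E[\tau(\g)]$, use the pointwise bound $\dt(\g,\E[\tau(\g)]\Cc)\leq \dt(\g,\cn(\Cc))+R|\tau(\g)-\E[\tau(\g)]|$, control the fluctuation term via Lemma \ref{lem abs lip}, and convert the bound on the first moment into a bound on the second moment via the variance estimate of Lemma \ref{lem square lip}. The only difference is presentational (you split $\E[X^2]=\mathrm{Var}(X)+(\E X)^2$ up front, while the paper squares the first-moment bound and then adds the variance correction), which is the same computation.
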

\begin{proof}

Let $\g\sim\Nn(0,\Iden)$ and let $\tau^*=\E[\tau(\g)]$. Now, from triangle inequality:
\beq
|\tau(\vb)-\tau^*|\leq t\implies \dt(\vb,\tau^* \Cc)\leq \dt(\vb,\tau(\vb) \Cc)+Rt\nn
\eeq
Consequently,
\beq
 \E[\dt(\g,\tau(\g)\Cc)]\leq \min_{\tau\geq 0}\E[\dt(\g,\tau\Cc)]\leq\E[\dt(\g,\tau^*\Cc)]\leq \E[\dt(\g,\tau(\g)\Cc)+R|\tau(\g)-\tau^*|]\nn
\eeq
This gives:
\beq
\E[\dt(\g,\tau^*\Cc)]-\E[\dt(\g,\tau(\g)\Cc)]\leq R\E[|\tau(\g)-\tau^*|]\nn
\eeq
Observing $\E[\dt(\g,\tau(\g)\Cc)]=\E[\dt(\g,\cn(\Cc))]\leq \sqrt{\DCC}$, and using Lemma \ref{lem abs lip} we find:
\beq
\E[\dt(\g,\tau^*\Cc)]-\sqrt{\DCC}\leq \sqrt{2\pi}RL\nn
\eeq
This yields:
\beq
\E[\dt(\g,\tau^*\Cc)]^2-\DCC\leq \sqrt{2\pi}RL(2\sqrt{\DCC}+\sqrt{2\pi}RL)\nn
\eeq
Using Lemma \ref{lem square lip} we have $\E[\dt(\g,\tau(\g)\Cc)]^2\geq \E[\dt(\g,\tau(\g)\Cc)^2]-1$, which gives:
\beq
\min_{\tau\geq 0}\E[\dt(\g,\tau\Cc)^2]-\DCC\leq 2\pi R^2L^2+2\sqrt{2\pi}RL\sqrt{\DCC}+1\nn
\eeq

\end{proof}
\section{Intersection of a cone and a subspace}\label{demixing appendix}

\subsection{Intersections of randomly oriented cones}

Based on Kinematic formula (Theorem \ref{kinematics}), one may find the following result on the intersection of the two cones. We first consider the scenario in which one of the cones is a subspace.

\begin{propo} [Intersection with a subspace]\label{inter dim}Let $A$ be a closed and convex cone and let $B$ be a linear subspace. Denote $\delta(A)+\delta(B)-n$ by $\delta(A,B)$. Assume the unitary $\U$ is generated uniformly at random. Given $\eps>0$, we have the following:
\begin{itemize}
\item If $\delta(A)+\delta(B)+\eps\sqrt{n}>n$,
\beq
\Pro(\delta(A\cap \U B)\geq \delta(A,B)+\eps\sqrt{n})\leq 8\exp(-\frac{\eps^2}{64})\nn
\eeq
\item $\Pro(\delta(A\cap \U B)\leq \delta(A,B)-\eps\sqrt{n})\leq 8\exp(-\frac{\eps^2}{64})$.
\end{itemize}
\end{propo}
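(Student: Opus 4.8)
The proposition claims that when we intersect a convex cone $A$ with a randomly rotated subspace $\U B$, the statistical dimension of the intersection concentrates around $\delta(A,B) = \delta(A) + \delta(B) - n$. This is the natural "expected value" — when a cone of effective dimension $\delta(A)$ meets a subspace of dimension $\delta(B)$ in generic position, their intersection should have dimension roughly $\delta(A) + \delta(B) - n$ (the codimensions add).

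**My plan:**

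The plan is to connect the statistical dimension of the intersection to the distance-squared functions whose concentration I can control via Gaussian Lipschitz concentration (Fact 3.5 in the excerpt). The key identity is $\delta(\Cc^*) = \mathbf{D}(\Cc)$ and the Moreau-type relations already established. So I would first express $\delta(A \cap \U B)$ as an expectation of a squared projection, $\delta(A\cap \U B) = \E_\g[\|\Pi(\g, (A\cap \U B)^*)\|_2^2]$, or dually as a distance to the intersection. The idea is that the projection/distance onto $A \cap \U B$ is a $1$-Lipschitz function of $\g$ (for fixed $\U$), so for each fixed $\U$ the quantity $\dt(\g, A\cap \U B)$ concentrates sharply around its mean. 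Then the randomness in $\U$ enters through the mean.

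The main technical bridge is the Kinematic Formula (Theorem 5.3 / Theorem I of \cite{McCoy}), which controls the *probability* that $A \cap \U B \neq \{0\}$. I would use it as follows: the statistical dimension of the intersection can be tied to an integral of intersection probabilities. More precisely, I expect the argument relies on a "master" identity from \cite{McCoy} expressing $\E_\U[\delta(A \cap \U B)]$ (or its fluctuations) in terms of the interplay between $\delta(A)$, $\delta(B)$, and the concentration of the kinematic formula. Since $B$ is a subspace of dimension $\delta(B)$, I can identify $\U B$ with a uniformly random $\delta(B)$-dimensional subspace, and $A \cap \U B$ is the intersection of $A$ with this random subspace. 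The claim is then that this random intersection has statistical dimension concentrating near $\delta(A) + \dim(B) - n$.

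**Key steps in order:**

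First, I would fix $\U$ and establish that $\g \mapsto \dt(\g, A \cap \U B)$ is $1$-Lipschitz, so that by Fact 3.5 it concentrates around its mean with sub-Gaussian tails at scale $1$; combined with Lemma B.5 (\cite{lem square lip}) this controls the gap between $\E[\dt^2]$ and $(\E[\dt])^2$ up to an additive constant. Second, I would invoke the concentration in the Kinematic Formula: the threshold behavior of $\Pro(A \cap \U B = \{0\})$ transitions sharply at $\delta(A) + \delta(B) = n$ over a window of width $O(\sqrt{n})$. The heart of the matter is to leverage this threshold to pin down $\E_\U[\dt(\g, (A \cap \U B))]$, translating the $4\exp(-\eps^2/16)$ probability bounds of Theorem 5.3 into the $8\exp(-\eps^2/64)$ bounds claimed here (the constant degradation from $16$ to $64$ and the factor $2$ in the exponential prefactor strongly suggest a union bound combining two applications of the kinematic formula with a Lipschitz concentration step). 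Third, I would assemble these to bound $\Pro(\delta(A \cap \U B) \geq \delta(A,B) + \eps\sqrt{n})$ and the matching lower tail.

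**The main obstacle:**

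The hard part will be making rigorous the passage from the kinematic formula (which is about the *event* of nontrivial intersection) to a statement about the *statistical dimension* of the intersection. The cleanest route I anticipate is a recursive/self-referential use of the kinematic formula: apply Theorem 5.3 to the cone $A \cap \U B$ against a fresh random subspace to detect its statistical dimension, using that $\delta(A \cap \U B) \geq k$ is (approximately) equivalent to $A \cap \U B$ having nontrivial intersection with generic subspaces of an appropriate dimension. Controlling the error terms through this double randomization — while keeping the Gaussian concentration at scale $\sqrt{n}$ and producing exactly the stated constants — is where the real work lies, and it is precisely why the probability bound weakens from $\exp(-\eps^2/16)$ to $\exp(-\eps^2/64)$.
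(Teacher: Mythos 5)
Your third step --- probing the random cone $C = A\cap \U B$ with a fresh, independent random subspace $H$ and using the kinematic formula ``recursively'' to detect $\delta(C)$ --- is exactly the paper's argument, so the core idea is right. But your outline as stated has a gap: detecting $\delta(C)$ this way gives you only the \emph{conditional} statement $\Pro(C\cap H=\{0\}\,\big|\,\delta(C)\geq \gamma_\eps)\leq 4\exp(-\eps^2/64)$ (taking $\dim H = n-d$ with $d$ tuned to $\gamma_{\eps/2}=\delta(A)+\delta(B)-n+\frac{\eps}{2}\sqrt{n}$ and $t=\frac{\eps}{2}$ in the kinematic formula). To conclude anything about $\Pro(\delta(C)\geq \gamma_\eps)$ you must compare this against the \emph{unconditional} probability $\Pro(C\cap H=\{0\})$, and nothing in your plan explains how to compute that. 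The paper's key structural observation, which your proposal never states, is that $C\cap H = A\cap(\U B\cap H)$ and that $\U B\cap H$ is itself a uniformly random subspace, of dimension $\delta(B)-d$; hence the kinematic formula applies a second time, now to the pair $(A,\,\U B\cap H)$, and the choice of $d$ makes $\delta(A)+\delta(B)-d= n-\frac{\eps}{2}\sqrt{n}$, giving $\Pro(C\cap H=\{0\})\geq 1-4\exp(-\eps^2/64)$. The two bounds are then combined through $\Pro(\delta(C)\leq\gamma_\eps)\geq \Pro(C\cap H=\{0\})-\Pro(C\cap H=\{0\}\,\big|\,\delta(C)\geq\gamma_\eps)\geq 1-8\exp(-\eps^2/64)$, with a symmetric argument (and the complementary direction of the kinematic formula) for the lower tail. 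Without the observation about $\U B\cap H$, the probing idea does not close.

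Two smaller corrections. First, the Gaussian--Lipschitz concentration machinery in your first step (concentration of $\dt(\g,\cdot)$, variance bounds) plays no role: the paper's proof works entirely at the level of probabilities of the event $\{C\cap H=\{0\}\}$ and never touches the Gaussian distance function. Second, your diagnosis of the constants is off: the degradation from $\exp(-\eps^2/16)$ to $8\exp(-\eps^2/64)$ comes from using $t=\frac{\eps}{2}$ in \emph{two} applications of the kinematic formula, each contributing $4\exp(-(\eps/2)^2/16)=4\exp(-\eps^2/64)$, not from any union bound with a Lipschitz concentration step.
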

\begin{proof} Denote $A\cap \U B$ by $C$. Let $H$ be a subspace with dimension $n-d$ chosen uniformly at random independent of $\U$. Observe that, $\U B\cap H$ is a $\delta(B)-d$ dimensional random subspace for $d<\delta(B)$. Hence, using Theorem \ref{kinematics} with $A$ and $\U B\cap H$ yields:
\begin{align}
&\delta(A)+\delta(B)-d\leq n-t \sqrt{n}\implies \Pro(A\cap \U B\cap H=\{0\})\geq 1-4\exp(-\frac{t^2}{16})\label{kininter1}\\
&\delta(A)+\delta(B)-d\geq n+t \sqrt{n}\implies \Pro(A\cap \U B\cap H=\{0\})\leq 4\exp(-\frac{t^2}{16})\label{kininter12}
\end{align}
Observe that \eqref{kininter1} is true even when $d\geq\delta(B)$ since if $d\geq \delta(B)$, $\U B\cap H=\{0\}$ with probability $1$.
%Choose $d=n-a-$, then:

\noindent{\bf{Proving the first statement:}} Let $\gamma=\delta(A)+\delta(B)-n$, $\gamma_{\eps}=\gamma+\eps\sqrt{n}$ and $\gamma_{\eps/2}=\gamma+\frac{\eps}{2}\sqrt{n}$. We assume $\gamma_{\eps}>0$. Observing $A\cap \U B\cap H=C\cap H$, we may write:
\begin{align}
& \Pro(C\cap H=\{0\})\leq  \Pro(C\cap H=\{0\}\big|\delta(C)\geq \gamma_{\eps})+\Pro(\delta(C)\leq \gamma_{\eps})\\
&\text{and}~~~\Pro(\delta(C)\leq \gamma_{\eps})\geq \Pro(C\cap H=\{0\})-  \Pro(C\cap H=\{0\}\big|\delta(C)\geq \gamma_{\eps})\label{kininter2}
\end{align}
If, $\gamma_{\eps}>n$, $\Pro(\delta(C)\leq \gamma_{\eps})=1$. Otherwise, choose $d=\max\{\gamma_{\eps/2},0\}$.

\noindent{\bf{Case 1:}} If $d=0$, then, $\gamma_{\eps/2}\leq 0$ and $H=\R^n$. This gives,
\beq
\Pro(C\cap H=\{0\}\big|\delta(C)\geq \gamma_{\eps})=\Pro(C=\{0\}\big|\delta(C)\geq \gamma_{\eps})=0\label{kininter13}
\eeq
Also, choosing $t=\frac{\eps}{2}\sqrt{n}$ in \eqref{kininter1} and using $\gamma\leq -\frac{\eps}{2}\sqrt{n}$, we obtain:
\beq
\Pro(C\cap H=\{0\})=\Pro(C=\{0\})\geq 1-4\exp(-\frac{\eps^2}{64})\label{kininter14}
\eeq
\noindent{\bf{Case 2:}} Otherwise, $d=\gamma_{\eps/2}>0$. Applying Theorem \ref{kinematics}, we find:
\beq
\Pro(C\cap H=\{0\}\big|\delta(C)\geq \gamma_{\eps})\leq 4\exp(-\frac{\eps^2}{64})\label{kininter3}
\eeq
Next, choosing $t=\frac{\eps}{2}\sqrt{n}$ in \eqref{kininter1}, we obtain:
\beq
\Pro(C\cap H=\{0\})\geq 1-4\exp(-\frac{\eps^2}{64})\label{kininter4}
\eeq
Overall, combining \eqref{kininter2}, \eqref{kininter13}, \eqref{kininter14}, \eqref{kininter3} and \eqref{kininter4}, we obtain:
\beq
\Pro(\delta(C)\leq \gamma_{\eps})\geq 1-8\exp(-\frac{\eps^2}{64})\nn
\eeq
\noindent{\bf{Proving the second statement:}} 
In the exact same manner, this time, let $\gamma_{-\eps}=\gamma-\eps\sqrt{n}$, $\gamma_{-\eps/2}=\gamma-\frac{\eps}{2}\sqrt{n}$. If $\gamma_{-\eps}<0$,
\beq
\Pro(\delta(C)\leq \gamma_{-\eps})\leq \Pro(\delta(C)<0)=0\nn
\eeq
Otherwise, let $d=\gamma_{-\eps/2}$, we may write,
\beq
\Pro(\delta(C)\geq \gamma_{-\eps})\geq \Pro(C\cap H\neq\{0\})-  \Pro(C\cap H\neq\{0\}\big|\delta(C)\leq \gamma_{-\eps})\label{ABCC}
\eeq
in an identical way to \eqref{kininter2}. Repeating the previous argument and using \eqref{kininter12}, we may first obtain,
\beq
\Pro(C\cap H\neq\{0\})\geq 1-4\exp(-\frac{\eps^2}{64})\nn
\eeq
and using Theorem \ref{kinematics},
\beq
 \Pro(C\cap H\neq\{0\}\big|\delta(C)\leq \gamma_{-\eps})\leq 4\exp(-\frac{\eps^2}{64})\nn
\eeq
Combining these, gives the desired result.
\beq
\Pro(\delta(C)\geq \gamma_{-\eps})\geq 1-8\exp(-\frac{\eps^2}{64})\nn
\eeq
\end{proof}

\section{Proof of Theorem \ref{constrained} -- Lower bound}\label{constrained case}

\begin{thm} \label{finish constrained} Let $\Cc$ be a closed and convex set, $\vb\sim\Nn(0,\Iden)$ and let $\x^*(\sigma\vb)=\arg\min_{\x\in\Cc} \|\x_0+\sigma\vb-\x\|_2$. Then, we have,
\beq
\lim_{\sigma\rightarrow 0}\frac{\E[\|\x^*(\sigma\vb)-\x_0\|_2^2]}{\sigma^2}=\Delxc\nn
\eeq
\end{thm}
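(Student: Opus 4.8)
The plan is to establish the matching $\geq$-direction of \eqref{constprox}; the upper bound $\E[\|\x^*(\sigma\vb)-\x_0\|_2^2]/\sigma^2\le\Delxc$ is already supplied by Proposition \ref{upper summary} with $f(\cdot)=0$, so it only remains to show the limit is at least $\Delxc$ as $\sigma\to0$. I would mirror the structure of the regularized argument (Lemma \ref{sharp} together with Proposition \ref{asymp}), replacing the first-order approximation of $f(\cdot)$ by a first-order approximation of the feasible set through its tangent cone. Write $\Kc:=T_\Cc(\x_0)$, $\w^*:=\x^*(\sigma\vb)-\x_0$, and let $\hat\w:=\bu(\sigma\vb,\Kc)$ be the error of the tangent-cone-approximated problem. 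By Moreau's decomposition (Fact \ref{more}), $\|\hat\w\|_2=\dt(\sigma\vb,\Kc^*)=\sigma\dt(\vb,\Kc^*)$, and the residual $\hat r:=\sigma\vb-\hat\w=\bu(\sigma\vb,\Kc^*)$ satisfies $\|\hat r\|_2=\sigma\dt(\vb,\Kc)$. Note also that $\Cc-\x_0\subseteq\Kc$, so $\w^*\in\Kc$ as well.

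The key new ingredient, and the step I expect to be the main obstacle, is a uniform first-order approximation of $\Cc$ by its tangent cone near $\x_0$, analogous to Proposition \ref{unif lem}: for every $\delta_0>0$ there is $\eps_0>0$ so that every $\u\in\Kc$ with $\|\u\|_2\le\eps_0$ admits some $\w'\in\Cc-\x_0$ with $\|\u-\w'\|_2\le\delta_0\|\u\|_2$. I would prove this from the fact that for convex $\Cc$ the dilated sets $\tfrac1t(\Cc-\x_0)$ increase as $t\downarrow 0$ and their union is dense in $\Kc=\cl(\cn(\Cc-\x_0))$; monotone (hence Hausdorff, on the unit ball) convergence of nested convex sets then yields the required uniform rate over directions. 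The subtlety is purely that the estimate must hold uniformly over all directions of $\Kc$, including limiting directions not attained by $\Cc-\x_0$ itself.

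Granting the approximation, I would prove a deterministic lower bound by exploiting strong convexity of the projection objective $g(\w)=\tfrac12\|\sigma\vb-\w\|_2^2$. Since $\hat\w$ minimizes $g$ over the convex cone $\Kc$ and $\w^*\in\Cc-\x_0\subseteq\Kc$, the projection inequality gives $\tfrac12\|\w^*-\hat\w\|_2^2\le g(\w^*)-g(\hat\w)=\tfrac12(\|r^*\|_2^2-\|\hat r\|_2^2)$, where $r^*=\sigma\vb-\w^*$. Applying the approximation to $\u=\hat\w$ (valid once $\sigma\dt(\vb,\Kc^*)\le\eps_0$) produces a feasible point $\x_0+\w'\in\Cc$ with $\|r^*\|_2\le\|\hat r\|_2+\delta_0\|\hat\w\|_2$, whence $\|\w^*-\hat\w\|_2^2\le 2\delta_0\|\hat\w\|_2\|\hat r\|_2+\delta_0^2\|\hat\w\|_2^2$. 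On the region where $\|\vb\|_2\le C_1$ and $\dt(\vb,\Kc^*)\ge\eps_1$ this forces $\|\w^*\|_2\ge\sigma\dt(\vb,\Kc^*)\bigl(1-\sqrt{3\delta_0 C_1/\eps_1}\,\bigr)$, the exact constrained analogue of Lemma \ref{sharp}.

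Finally I would run the same Gaussian truncation as in Proposition \ref{asymp}. Splitting $\R^n$ into $S_1=\{\|\vb\|_2\ge C_1\}$, $S_2=\{\|\vb\|_2\le C_1,\ \dt(\vb,\Kc^*)\ge\eps_1\}$ and $S_3=\{\|\vb\|_2\le C_1,\ \dt(\vb,\Kc^*)<\eps_1\}$, the tail integral obeys $\hat I_1\le\int_{\|\vb\|_2\ge C_1}\|\vb\|_2^2\,p_1(\vb)\,d\vb=:Q(C_1)\to0$ and $\hat I_3\le\eps_1^2$, so $\hat I_2\ge\Delxc-\eps_1^2-Q(C_1)$ because $\hat I_1+\hat I_2+\hat I_3=\E[\dt(\vb,\Kc^*)^2]=\Delxc$. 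Choosing $\sigma\le\eps_0/C_1$ so that the deterministic bound applies throughout $S_2$ and integrating gives $\E[\|\w^*\|_2^2]/\sigma^2\ge(1-\sqrt{3\delta_0 C_1/\eps_1}\,)^2\,\hat I_2$; letting first $\delta_0\to0$ (with $\sigma\to0$ accordingly), then $C_1\to\infty$ and $\eps_1\to0$, yields $\liminf_{\sigma\to0}\E[\|\w^*\|_2^2]/\sigma^2\ge\Delxc$. Combined with the upper bound from Proposition \ref{upper summary}, this proves the claimed equality.
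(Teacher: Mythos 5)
Your proposal is correct, and it reaches the result by a genuinely different route than the paper. The paper's proof (Appendix \ref{constrained case}) works directly with the two projections: writing $\x^*-\x_0=\bu(\z,F_\Cc(\x_0))$, it establishes in Proposition \ref{cone cool} the uniform ratio bound $\|\bu(t\w,F_\Cc(\x_0))\|_2\geq(1-\eps)\,t\,\|\bu(\w,T_\Cc(\x_0))\|_2$ over all unit directions $\w$ with $\|\bu(\w,T_\Cc(\x_0))\|_2\geq\alpha$ and all small $t$, via a compactness-plus-continuity argument that rests on the monotonicity Lemma \ref{continuous increase} and the planar-geometry Lemma \ref{angle1}; its truncation regions are then cut by the relative projection size $\|\bu(\vb,T_\Cc(\x_0))\|_2\geq\alpha\|\vb\|_2$ rather than by $\dt(\vb,T_\Cc(\x_0)^*)\geq\eps_1$. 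You instead (i) approximate small tangent-cone vectors by feasible vectors with relative error $\delta_0$ --- which is indeed provable exactly as you indicate, by Dini-type uniform convergence of the nonincreasing, $1$-Lipschitz functions $\dt(\cdot,\tfrac{1}{t}(\Cc-\x_0))$ on the compact set $T_\Cc(\x_0)\cap\Sc^{n-1}$ (the nesting of the dilated sets uses only convexity and $0\in\Cc-\x_0$); (ii) convert this into a deterministic error bound through the projection variational inequality and the exact strong convexity of $\tfrac{1}{2}\|\sigma\vb-\w\|_2^2$, which is the faithful constrained analogue of Lemma \ref{sharp}; and (iii) run the same three-region truncation as Proposition \ref{asymp}, where your bound $\hat{I}_1\leq Q(C_1)$ is legitimate because $0\in T_\Cc(\x_0)^*$ gives $\dt(\vb,T_\Cc(\x_0)^*)\leq\|\vb\|_2$. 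Your route buys uniformity and economy: the constrained and regularized theorems are proved from one template, and your approximation lemma is softer to prove than Proposition \ref{cone cool}, bypassing Lemmas \ref{angle1} and \ref{continuous increase} entirely. What the paper's heavier geometry buys is reusability: Lemma \ref{angle1} also supplies the inequality $\|\z-\w^*\|_2^2+\|\w^*\|_2^2\leq\|\z\|_2^2$ invoked in Proposition \ref{prop objective} for the LASSO cost --- although that particular inequality also drops out of your variational-inequality step by taking the feasible point $0$.
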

\begin{proof} Let $1\geq \alpha,\eps> 0$ be numbers to be determined. Denote probability density function of a $\Nn(0,c\Iden)$ distributed vector by $p_c(\cdot)$. From Lemma \ref{simple lem}, the expected error $\E[\|\x^*-\x_0\|_2^2]$ is simply,
\beq
\int_{\vb\in\R^n}\|\bu(\sigma\vb,F_\Cc(\x_0))\|_2^2p_1(\vb)d\vb\nn
\eeq
Let $S_{\alpha}$ be the set satisfying:
\beq
S_{\alpha}=\{\ub\in\R^n|\frac{\|\bu(\ub,T_\Cc(\x_0))\|_2}{\|\ub\|_2}\geq \alpha\}.\nn
\eeq
Let $\bar{S}_\alpha=\R^n-S_\alpha$. Using Proposition \ref{cone cool}, given $\eps>0$, choose $\eps_0>0$ such that, for all $\|\ub\|_2\leq \eps_0$ and $\ub\in S_{\alpha}$, we have,
\beq\|\bu(\ub,F_\Cc(\x_0))\|_2\geq (1-\eps)\|\bu(\ub,T_\Cc(\x_0))\|_2.\label{initial assumption}\eeq

Now, let $\z=\sigma\vb$. Split the error into three groups, namely:
\begin{itemize}
\item $F_1=\int_{\|\z\|_2\leq \eps_0,\z\in S_{\alpha}}\|\bu(\z,F_\Cc(\x_0))\|_2^2p_\sigma(\z)d\z$,~~ $T_1=\int_{\|\vb\|_2\leq \frac{\eps_0}{\sigma},\vb\in S_{\alpha}}\|\bu(\vb,T_\Cc(\x_0))\|_2^2p_1(\vb)d\vb$.
\item $F_2=\int_{\|\z\|_2\geq \eps_0,\z\in S_{\alpha}}\|\bu(\z,F_\Cc(\x_0))\|_2^2p_\sigma(\z)d\z$, ~~$T_2=\int_{\|\vb\|_2\geq \frac{\eps_0}{\sigma},\vb\in S_{\alpha}}\|\bu(\vb,T_\Cc(\x_0))\|_2^2p_1(\vb)d\vb$.
\item $F_3=\int_{\z\in \bar{S}_{\alpha}}\|\bu(\z,F_\Cc(\x_0))\|_2^2p_\sigma(\z)d\z$, ~~~~~~~~~~~$T_3=\int_{\vb\in \bar{S}_{\alpha}}\|\bu(\vb,T_\Cc(\x_0))\|_2^2p_1(\g)d\vb$.
\end{itemize}
The rest of the argument will be very similar to the proof of Proposition \ref{asymp}. We know the following from Proposition \ref{upper summary}:
\begin{align}
&T_1+T_2+T_3=\Delxc\nn\\
&F_1+F_2+F_3=\E[\|\x^*-\x_0\|_2^2]\leq \sigma^2(T_1+T_2+T_3)\nn
\end{align}
To proceed, we will argue that, the contributions of the second and third terms are small for sufficiently small $\sigma,\alpha,\eps>0$. Observe that:
\beq
T_3\leq \int_{\vb\in \bar{S}_{\alpha}}\alpha^2\|\vb\|_2^2p_1(\vb)d\vb\leq \alpha^2n\nn
\eeq
%Hence, as $\eps\rightarrow 0$, $\frac{F_3}{\sigma^2}\rightarrow 0$. 
For $T_2$, we have:
\beq
T_2\leq \int_{\|\vb\|_2\geq \frac{\eps_0}{\sigma}}\|\vb\|_2^2p_1(\g)d\vb=C(\frac{\eps_0}{\sigma})\nn
\eeq
Since $\|\g\|_2$ has finite second moment, fixing $\eps_0>0$ and letting $\sigma\rightarrow 0$, we have $C(\frac{\eps_0}{\sigma})\rightarrow 0$.
For $T_1$, from \eqref{initial assumption}, we have:
\beq
F_1\geq (1-\eps)^2\sigma^2T_1\nn
\eeq
Overall, we found:
\beq
\frac{\E[\|\x^*-\x_0\|_2^2]}{\sigma^2}\geq \frac{F_1}{\sigma^2}\geq (1-\eps)^2\frac{T_1}{T_1+T_2+T_3}\Delxc\nn
\eeq
Writing $T_1=\Delxc-T_2-T_3\geq \Delxc-\alpha^2n-C(\frac{\eps_0}{\sigma})$, we have:
\beq
\frac{T_1}{T_1+T_2+T_3}\geq \frac{\Delxc-\alpha^2n-C(\frac{\eps_0}{\sigma})}{\Delxc}\nn
\eeq
Letting $\sigma\rightarrow 0$ for fixed $\alpha,\eps_0,\eps$ , we obtain:
\beq
\lim_{\sigma\rightarrow 0}\frac{\E[\|\x^*-\x_0\|_2^2]}{\sigma^2\Delxc}\geq (1-\eps)^2\frac{\Delxc-\alpha^2n}{\Delxc}\nn
\eeq
Since, $\alpha,\eps$ can be made arbitrarily small, we obtain $\lim_{\sigma\rightarrow 0}\frac{\E[\|\x^*-\x_0\|_2^2]}{\sigma^2\Delxc}=1$.
%\beq
%\int_{\|\z\|_2\leq \eps}\|\bu(\z,F_\Cc(\x_0))\|_2^2p(\z)d\z\geq (1-\delta) \int_{\|\z\|_2\leq \eps}\|\bu(\z,T_\Cc(\x_0))\|_2^2p(\z)d\z
%\eeq
%Next, for the first and second terms using Lemma \ref{}, we have:
%\beq
%\lim_{\sigma\rightarrow 0} \frac{T_2}{T_1}=0
%\eeq
%and finally, $\frac{T_1+T_2+T_3}{\sigma^2}=\E[\|\bu(\g,T_\Cc(\x_0))\|_2^2]=\Delxc>0$. Consequently, combining these, we find:
%\beq
%1\geq \lim_{\sigma\rightarrow 0}\frac{F_1+F_2+F_3}{\sigma^2\Delxc}\geq \lim_{\sigma\rightarrow 0}\frac{F_2}{T_1+T_2+T_3}=\lim_{\sigma\rightarrow 0}\frac{F_2}{T_2}\geq 1-\delta
%\eeq
%Since choice of $\delta$ is arbitrary, we conclude with $\eta_\Cc=\frac{\E[\|\bu(\z,F_\Cc(\x_0))\|_2^2]}{\sigma^2}=\Delxc$.
%
%Consequently, we have:
%\beq
%\lim_{\eps\rightarrow 0}\frac{T_1+T_2+T_3}{\E[\|\bu(\z,T_\Cc(\x_0))\|_2^2]}\geq(1-\delta) \lim_{\eps\rightarrow 0}\frac{\int_{\|\z\|_2\geq \eps,\z\in S_{\alpha}}\|\bu(\z,T_\Cc(\x_0))\|_2^2p(\z)d\z}{\E[\|\bu(\z,T_\Cc(\x_0))\|_2^2]}
%\eeq
%On the other hand, we have:
%\beq
%\lim_{\eps\rightarrow 0}\frac{\int_{\|\z\|_2\geq \eps,\z\in S_{\alpha}}\|\bu(\z,T_\Cc(\x_0))\|_2^2p(\z)d\z}{\E[\|\bu(\z,T_\Cc(\x_0))\|_2^2]}=\frac{\int_{\|\z\|_2\geq \eps,\z\in S_{\alpha}}\|\bu(\z,T_\Cc(\x_0))\|_2^2p(\z)d\z}{\int_{\|\z\|_2\leq \eps,\z\in S_{\alpha}}\|\bu(\z,T_\Cc(\x_0))\|_2^2p(\z)d\z}
%\eeq

\end{proof}
The next result shows that, as $\sigma\rightarrow 0$, we can exactly predict the cost of the constrained problem.
\begin{propo} \label{prop objective} Consider the setup in Theorem \ref{finish constrained}. Let $\w^*(\sigma\vb)=\x^*(\sigma\vb)-\x_0$. Then,
\beq
\lim_{\sigma\rightarrow 0}\frac{\E[\|\sigma\vb-\w^*(\sigma\vb)\|_2^2]}{\sigma^2}={\bf{D}}(T_\Cc(\x_0))\nn
\eeq 
\end{propo}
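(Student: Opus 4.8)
The key simplification is that the quantity $\|\sigma\vb-\w^*(\sigma\vb)\|_2$ is genuinely a \emph{distance}, not a projection norm, which makes this statement easier than Theorem~\ref{finish constrained}. As in the proof of Theorem~\ref{finish constrained}, Lemma~\ref{simple lem} gives $\w^*(\sigma\vb)=\bu(\sigma\vb,F_\Cc(\x_0))$, so that $\sigma\vb-\w^*(\sigma\vb)=\Pi(\sigma\vb,F_\Cc(\x_0))$ is the associated distance vector and hence $\|\sigma\vb-\w^*(\sigma\vb)\|_2=\dt(\sigma\vb,F_\Cc(\x_0))$. By positive homogeneity of the distance, $\dt(\sigma\vb,F_\Cc(\x_0))=\sigma\,\dt\!\left(\vb,\tfrac{1}{\sigma}F_\Cc(\x_0)\right)$, and therefore the normalized cost reduces to $\E\!\left[\dt\!\left(\vb,\tfrac1\sigma F_\Cc(\x_0)\right)^2\right]$. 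The whole problem is now to pass to the limit $\sigma\to 0$ inside this expectation.

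The second step is to exhibit the monotone/nested structure of the scaled sets. Since $\Cc$ is convex and $\x_0\in\Cc$, the feasible set $F_\Cc(\x_0)=\Cc-\x_0$ contains the origin and is star-shaped about it (if $\z\in F_\Cc(\x_0)$ and $0\le\mu\le1$ then $\x_0+\mu\z=(1-\mu)\x_0+\mu(\x_0+\z)\in\Cc$). Consequently $\tfrac1t F_\Cc(\x_0)\subseteq\tfrac1s F_\Cc(\x_0)$ whenever $t\ge s>0$, so the sets $\tfrac1\sigma F_\Cc(\x_0)$ increase as $\sigma\downarrow 0$, with $\bigcup_{\sigma>0}\tfrac1\sigma F_\Cc(\x_0)=\cn(F_\Cc(\x_0))$ and $\cl\!\big(\cn(F_\Cc(\x_0))\big)=T_\Cc(\x_0)$. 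Hence for every fixed $\vb$ the map $\sigma\mapsto\dt\!\left(\vb,\tfrac1\sigma F_\Cc(\x_0)\right)$ is nonincreasing as $\sigma\downarrow0$ and converges to $\dt(\vb,T_\Cc(\x_0))$.

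The third step is the interchange of limit and expectation. Because $0\in F_\Cc(\x_0)$, we have the uniform bound $\dt\!\left(\vb,\tfrac1\sigma F_\Cc(\x_0)\right)^2\le\|\vb\|_2^2$, which is integrable under $\vb\sim\Nn(0,\Iden)$; so dominated convergence (or monotone convergence, since the integrand is monotone in $1/\sigma$) yields
\[
\lim_{\sigma\to 0}\frac{\E[\|\sigma\vb-\w^*(\sigma\vb)\|_2^2]}{\sigma^2}
=\E\!\left[\dt(\vb,T_\Cc(\x_0))^2\right]=\DD T_\Cc(\x_0)).
\]

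\textbf{Main obstacle and remarks.} The only genuinely delicate point is the pointwise convergence in the second step, namely that the limit of the distances to the increasing family $\tfrac1\sigma F_\Cc(\x_0)$ equals the distance to the \emph{closure} of their union; this rests on the elementary facts that $\dt(\vb,\overline S)=\dt(\vb,S)$ and that for nested sets the infimum over the union is the monotone limit. The domination by $\|\vb\|_2^2$ makes the analytic interchange routine, in sharp contrast to Theorem~\ref{finish constrained}, where comparing the projection norm $\|\bu(\sigma\vb,F_\Cc(\x_0))\|_2$ with $\|\bu(\sigma\vb,T_\Cc(\x_0))\|_2$ forced the region-splitting argument via Proposition~\ref{cone cool}. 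As a consistency check, combining this result with Theorem~\ref{finish constrained} and the Moreau identity $\DD T_\Cc(\x_0))+\Delxc=n$ of Lemma~\ref{more2} recovers the relation $\DD T_\Cc(\x_0))=n-\Delxc$, which is exactly the $F_{LASSO}+\eta_{LASSO}=m$ decomposition used in the LASSO-cost computation; one could in principle argue in that direction instead, but it would require separately showing the cross term $\E[\li\w^*,\sigma\vb-\w^*\ri]$ is $o(\sigma^2)$, so the direct distance argument above is preferable.
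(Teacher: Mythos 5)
Your proof is correct, but it takes a genuinely different route from the paper's. You work directly with the residual as a \emph{distance}: $\|\sigma\vb-\w^*\|_2=\dt(\sigma\vb,F_\Cc(\x_0))=\sigma\,\dt(\vb,\tfrac{1}{\sigma}F_\Cc(\x_0))$, observe that star-shapedness of $F_\Cc(\x_0)$ about the origin makes the sets $\tfrac{1}{\sigma}F_\Cc(\x_0)$ nest increasingly as $\sigma\downarrow 0$ with $T_\Cc(\x_0)$ the closure of their union, and then pass to the limit inside the expectation using the domination $\dt(\vb,\tfrac{1}{\sigma}F_\Cc(\x_0))^2\leq\|\vb\|_2^2$. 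The paper instead runs a squeeze: the lower bound $\|\z-\w^*\|_2\geq\dt(\z,T_\Cc(\x_0))$ comes from $F_\Cc(\x_0)\subseteq T_\Cc(\x_0)$, the upper bound comes from the projection inequality $\|\z-\w^*\|_2^2+\|\w^*\|_2^2\leq\|\z\|_2^2$ (Lemma \ref{angle1}), and the squeeze is closed by invoking Theorem \ref{finish constrained} (i.e.\ $\E[\|\w^*\|_2^2]/\sigma^2\rightarrow\Delxc$) together with Moreau's identity ${\bf{D}}(T_\Cc(\x_0))+\Delxc=n$ (Lemma \ref{more2}). What your route buys is self-containedness and simplicity: it uses neither Theorem \ref{finish constrained}, nor Lemma \ref{angle1}, nor Moreau's decomposition, and it isolates exactly why this statement is easier than Theorem \ref{finish constrained} --- distances, unlike projection norms, are stable under closures and nested unions, so no analogue of the region-splitting via Proposition \ref{cone cool} is needed. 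What the paper's route buys is brevity given the machinery already in place, and it makes the complementarity between error and cost explicit, which is precisely the identity $F_{LASSO}+\eta_{LASSO}=m$ reused in Section \ref{sec proj error}. One small correction to your closing remark: the ``alternative direction'' you set aside is essentially what the paper does, and it does not require showing that the cross term $\E[\li\w^*,\sigma\vb-\w^*\ri]$ is $o(\sigma^2)$; nonnegativity of that cross term (the variational inequality for the projection, tested against $0\in F_\Cc(\x_0)$) suffices, because each side of the squeeze is only needed as a one-sided bound.
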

\begin{proof} Let $\w^*=\w^*(\sigma\vb)$ and $\z=\sigma\vb$. $\z-\w^*$ satisfies two conditions.
\begin{itemize}
\item From Lemma \ref{simple lem}, $\|\z-\w^*\|_2=\dt(\z,F_\Cc(\x_0))\geq \dt(\z,T_\Cc(\x_0))$.
\item Using Lemma \ref{angle1}, $\|\z-\w^*\|_2^2+\|\w^*\|_2^2\leq \|\z\|_2^2$.
\end{itemize}
Consequently, when $\vb\sim\Nn(0,\Iden)$, we find:
\beq
n\sigma^2=\E[\|\z\|_2^2]\geq \E[\|\z-\w^*\|_2^2]+\E[\|\w^*\|_2^2]\geq \sigma^2\E[\bu(\vb,T_\Cc(\x_0)^*)^2]+\E[\|\w^*\|_2^2]\nn
\eeq
Normalizing both sides by $\sigma^2$ and subtracting ${\bf{D}}(T_\Cc(\x_0))=\E[\bu(\vb,T_\Cc(\x_0)^*)^2]$ and $\E[\|\w^*\|_2^2]$ we find:
\beq
\Delxc-\frac{\E[\|\w^*\|_2^2]}{\sigma^2}\geq \frac{\E[\|\z-\w^*\|_2^2]}{\sigma^2}-{\bf{D}}(T_\Cc(\x_0))\geq 0\nn
\eeq
where we used Lemma \ref{more2}. Now, letting $\sigma\rightarrow 0$ and using the fact that $\lim_{\sigma\rightarrow0}\frac{\E[\|\w^*\|_2^2]}{\sigma^2}=\Delxc$, we find the desired result.
\end{proof}

\section{Approximation results on convex cones}
%\begin{lem} Let $f(\cdot)$ be an arbitrary convex function and $\x_0,\w\in\R^n$. Consider the function:
%\beq
%g(t)=\begin{cases}\frac{f(\x_0+t\w)-\hat{f}_{\x_0}(\x_0+t\w)}{t}~~~\text{for}~t>0\\0~~~\text{for}~t=0\end{cases}
%\eeq
%Let $t_0=\sup_{t\geq 0} \{t\big|g(t)=0\}$. Then, $g(\cdot)$ is continuous on $[0,\infty)$ and strictly increasing in the interval $[t_0,\infty)$.
%\end{lem}
%\begin{proof} Let $h(t)=tg(t)=f(\x_0+t\w)-f(\x_0)-t\sup_{\s\in\paf}\li\s,\w \ri$. Observe that, $h(t)$ is a convex function of $t$ as $f(\cdot)$ is convex and $\x_0+t\w$ is an affine function of $t$. For all $t$, we also have $h(t)\geq 0$. Both $h(\cdot)$ and $g(\cdot)$ is continuous on $[0,\infty)$ as $f(\cdot)$ is continuous and from Lemma \ref{}. Since, $h(\cdot)$ is convex, continuous and nonnegative $h(t_0)=0$ and using $h(0)=h(t_0)=0$, $h(t)=0$ for $0\leq t\leq t_0$ and $h(t)>0$ for all $t>t_0$.
%
%Now, assume $t_1>t_2>t_0$. Since $g(\cdot)$ is continuous and $g(t_1)>0$, choose an $\eps< t_2$ such that $g(\eps)<g(t_1)$. Then, from convexity, we have:
%\beq
%(t_2-\eps) h(t_1)+(t_1-t_2)h(\eps)\geq (t_1-\eps)h(t_2)=(t_1-\eps)t_2 g(t_2)
%\eeq 
%Finally, using $h(\eps)<\eps g(t_1)$, we obtain:
%\beq
%[(t_2-\eps) t_1+(t_1-t_2)\eps]g(t_1)> (t_1-\eps)t_2 g(t_2)\implies g(t_1)>g(t_2)
%\eeq
%\end{proof}
%
\subsection{Standard observations}
{\bf{Remark:}} Throughout the section, $\Cc$ will be a nonempty, closed and convex set in $\R^n$. 
\begin{prop}\label{simple lem} Let $\x_0\in \Cc$ and $\y=\x_0+\z\in\R^n$. From Lemma \ref{lem1}, recall that, $\bu(\y,\Cc)$ is the unique vector that is equal to $\arg\min_{\ub\in\Cc}\|\y-\ub\|_2$. By definition of feasible set $F_\Cc(\x_0)$, we also have, $\bu(\y,\Cc)=\bu(\z,F_\Cc(\x_0))$.
\end{prop}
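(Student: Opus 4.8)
The plan is to prove this by the translation invariance of the Euclidean projection onto a convex set. The key observation is that, by its very definition, the feasible set is a shifted copy of $\Cc$: indeed $F_\Cc(\x_0)=\{\vb\mid \x_0+\vb\in\Cc\}=\Cc-\x_0$, and since $\Cc$ is nonempty, closed and convex, so is $F_\Cc(\x_0)$. Hence $\bu(\z,F_\Cc(\x_0))$ is well defined and unique by Fact~\ref{prom}, and the statement is a genuine equality of single points rather than of sets.

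First I would start from the defining optimization $\bu(\y,\Cc)=\arg\min_{\ub\in\Cc}\|\y-\ub\|_2$ with $\y=\x_0+\z$, which is exactly the characterization \eqref{lem1}. Introducing the change of variable $\ub=\x_0+\vb$, as $\ub$ ranges over $\Cc$ the vector $\vb$ ranges precisely over $F_\Cc(\x_0)$, and the objective becomes $\|\y-\ub\|_2=\|\z-\vb\|_2$. Therefore the minimizer over $\Cc$ and the minimizer over $F_\Cc(\x_0)$ are related by the same shift, giving $\bu(\y,\Cc)=\x_0+\bu(\z,F_\Cc(\x_0))$. Subtracting $\x_0$ from both sides shows that the projected error $\bu(\y,\Cc)-\x_0$ equals $\bu(\z,F_\Cc(\x_0))$, which is the form in which the identity is actually used (e.g.\ in the integrand appearing in the proof of Theorem~\ref{finish constrained}). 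Uniqueness of both projections, inherited from the closedness and convexity of $\Cc$ and $F_\Cc(\x_0)$, guarantees the equality holds pointwise.

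The argument involves no real obstacle; the only point requiring care is the bookkeeping of the shift by $\x_0$. Written literally, $\bu(\y,\Cc)\in\Cc$ while $\bu(\z,F_\Cc(\x_0))\in F_\Cc(\x_0)=\Cc-\x_0$, so the two projections sit in translated copies of the same set. The precise content of the statement is that their distance vectors coincide, i.e.\ $\y-\bu(\y,\Cc)=\z-\bu(\z,F_\Cc(\x_0))$, equivalently $\Pi(\y,\Cc)=\Pi(\z,F_\Cc(\x_0))$. Making this identification explicit is the whole substance of the claim, and it follows immediately once the change of variable above is carried out.
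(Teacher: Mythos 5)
Your proof is correct and is essentially the argument the paper leaves implicit: Property \ref{simple lem} is stated without any proof, as an immediate consequence of the definition of $F_\Cc(\x_0)$, and your change of variables $\ub=\x_0+\vb$ (translation invariance of the projection) is exactly that reasoning made explicit. Your additional observation that the equality literally holds only after accounting for the shift by $\x_0$ --- i.e.\ $\bu(\y,\Cc)=\x_0+\bu(\z,F_\Cc(\x_0))$, which is precisely the form in which the identity is used in Theorem \ref{finish constrained} and Proposition \ref{prop objective} --- is a correct and worthwhile clarification of the paper's slight abuse of notation.
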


\begin{lem}\label{fees less tan} For all $\z\in\R^n$ and $\x_0\in\Cc$, we have:
\beq
\|\bu(\z,F_\Cc(\x_0))\|_2\leq \|\bu(\z,T_\Cc(\x_0))\|_2\nn
\eeq
\end{lem}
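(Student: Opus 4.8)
The plan is to exploit that $F:=F_\Cc(\x_0)=\Cc-\x_0$ is a closed convex set containing the origin (since $\x_0\in\Cc$), whereas $T:=T_\Cc(\x_0)=\cl(\cn(F))$ is a closed convex \emph{cone} with $F\subseteq T$. Writing $\p=\bu(\z,F)$ and $\q=\bu(\z,T)$, the goal is to show $\|\p\|_2\le\|\q\|_2$. If $\p=0$ the inequality is immediate, so I would assume $\p\neq 0$ and prove the sharper chain $\|\p\|_2\le \li\z,\p\ri/\|\p\|_2\le\|\q\|_2$.

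First I would extract one inequality from the projection onto $F$. By the variational characterization of the projection onto a convex set (Fact \ref{prom}, equation \eqref{desiredlem2}), one has $\li\s-\p,\z-\p\ri\le 0$ for every $\s\in F$. Since $0\in F$, plugging in $\s=0$ gives $\li -\p,\z-\p\ri\le 0$, i.e. $\li\z,\p\ri\ge\|\p\|_2^2$, equivalently $\li\z,\p/\|\p\|_2\ri\ge\|\p\|_2$.

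Next I would use the cone structure of $T$, which is the crucial ingredient (the statement can fail for two merely nested convex sets; what rescues it is that $T$ is the cone generated by $F$). Because $\p\in F\subseteq T$ and $T$ is a cone, the unit vector $\p/\|\p\|_2$ again lies in $T$. By Moreau's decomposition (Fact \ref{more}) applied to the closed convex cone $T$, we have $\z-\q\in T^*$ and $\li\q,\z-\q\ri=0$; hence $\li\z-\q,\ub\ri\le 0$ for all $\ub\in T$, and together with Cauchy--Schwarz this yields $\li\z,\ub\ri=\li\q,\ub\ri+\li\z-\q,\ub\ri\le\|\q\|_2\|\ub\|_2$ for every $\ub\in T$. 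Taking $\ub=\p/\|\p\|_2$ gives $\li\z,\p/\|\p\|_2\ri\le\|\q\|_2$. Chaining this with the bound from the previous paragraph produces $\|\p\|_2\le\li\z,\p/\|\p\|_2\ri\le\|\q\|_2$, which is the claim.

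The main obstacle is conceptual rather than computational: recognizing that this is genuinely a statement about a cone, so that the homogeneity step $\p/\|\p\|_2\in T$ is available, and identifying $\|\bu(\z,T)\|_2$ with the support value $\sup_{\ub\in T,\,\|\ub\|_2=1}\li\z,\ub\ri$ (which is exactly what the Moreau-plus-Cauchy--Schwarz bound encodes). I would note that no separate degenerate analysis is needed: the displayed chain holds verbatim even when $\q=0$ (the bound $\li\z,\ub\ri\le\|\q\|_2\|\ub\|_2$ does not presume $\q\neq 0$), and in that case it forces $\|\p\|_2\le 0$, so $\q=0$ cannot occur once $\p\neq 0$.
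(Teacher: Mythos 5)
Your proof is correct, and it takes a genuinely different route from the paper's. The paper proves this lemma in one line by specializing Lemma \ref{upper bound lem} to $f\equiv 0$ and $\y=\x_0+\z$: the constrained minimizer satisfies $\x^*-\x_0=\bu(\z,F_\Cc(\x_0))$, the general bound gives $\|\x^*-\x_0\|_2\leq\dt(\z,T_\Cc(\x_0)^*)$, and Moreau's decomposition identifies $\dt(\z,T_\Cc(\x_0)^*)$ with $\|\bu(\z,T_\Cc(\x_0))\|_2$; the substance therefore lives in Lemma \ref{upper bound lem}, whose proof pairs the error vector against dual certificates drawn from the polar tangent cones at both $\x^*$ and $\x_0$ and finishes with Cauchy--Schwarz. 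Your argument bypasses that machinery entirely: you combine the variational characterization \eqref{desiredlem2} of the projection onto $F_\Cc(\x_0)$, evaluated at $0\in F_\Cc(\x_0)$ to get $\li\z,\p\ri\geq\|\p\|_2^2$, with Moreau's decomposition at the cone $T_\Cc(\x_0)$, which gives $\li\z,\ub\ri\leq\|\bu(\z,T_\Cc(\x_0))\|_2\|\ub\|_2$ for all $\ub\in T_\Cc(\x_0)$, and the homogeneity step $\p/\|\p\|_2\in T_\Cc(\x_0)$ closes the chain. In effect you use the identity $\|\bu(\z,T)\|_2=\sup\{\li\z,\ub\ri : \ub\in T,\ \|\ub\|_2\leq 1\}$, valid for closed convex cones but not for general convex sets, and your remark that the cone structure is indispensable is accurate: the inequality can fail for nested closed convex sets even when both contain the origin (take $F$ the segment from $(0,0)$ to $(3,4)$, $T$ the triangle obtained by adjoining the vertex $(4.9,0.7)$, and $\z=(7,1)$; the two projections then have norms $5$ and $\sqrt{24.5}$ respectively). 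What the paper's route buys is economy and uniformity: the lemma comes for free from a result needed anyway, and the constrained case is literally the $f=0$ instance of the general bound. What your route buys is self-containment --- no optimality conditions at the moving point $\x^*$ and no appeal to the general lemma --- together with a cleaner isolation of the geometric mechanism that makes the statement true.
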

\begin{proof} Setting $f(\cdot)=0$ and $\y=\x_0+\z$ in Lemma \ref{upper bound lem}, we have:
\beq
\|\bu(\z,F_\Cc(\x_0))\|_2=\|\x^*-\x_0\|_2\leq \dt(\z,T_\Cc(\x_0)^*)=\|\bu(\z,T_\Cc(\x_0))\|_2\nn
\eeq
\end{proof}

The following lemma shows that projection onto the feasible cone is arbitrarily close to the projection onto the tangent cone as we scale down the vector. This is due to Proposition 5.3.5 of Chapter III of \cite{Urru}.
\begin{lem}\label{lemj3} Assume $\x_0\in\Cc$. Then, for any $\z\in\Cc$,
\beq
\lim_{\eps\rightarrow 0} \frac{\bu(\eps\w,F_\Cc(\x_0))}{\eps}\rightarrow \bu(\w,T_\Cc(\x_0))\nn
\eeq
Hence,
\begin{itemize}
\item If $\bu(\w,T_\Cc(\x_0))=0$, using Lemma \ref{fees less tan}, $\bu(\z,F_\Cc(\x_0))=0$.
\item If $\bu(\w,T_\Cc(\x_0))\neq 0$, 
\beq
\lim_{\eps\rightarrow 0}\frac{\|\bu(\eps\w,F_\Cc(\x_0))\|_2}{\|\bu(\eps\w,T_\Cc(\x_0))\|_2}=1.\nn
\eeq
\end{itemize}
\end{lem}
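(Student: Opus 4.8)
The plan is to reduce the claim to the convergence of metric projections onto a monotone family of closed convex sets. Write $F = F_\Cc(\x_0) = \Cc - \x_0$, which is closed, convex, and contains the origin (since $\x_0\in\Cc$). First I would use the positive homogeneity of the projection: since $\bu(\eps\w, F) = \arg\min_{\ub\in F}\|\eps\w - \ub\|_2$, the substitution $\ub = \eps\vb$ gives
\beq
\tfrac{1}{\eps}\bu(\eps\w, F) = \bu\left(\w, \tfrac{1}{\eps}F\right),\nn
\eeq
so it suffices to understand the projection of the \emph{fixed} vector $\w$ onto the scaled sets $K_\eps := \tfrac{1}{\eps}F$ as $\eps\to 0^+$.

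Next I would record the monotonicity of this family. Because $F$ is convex and $0\in F$, for $\eps'\leq\eps$ one has $K_\eps\subseteq K_{\eps'}$ (any $\eps\vb\in F$ forces $\eps'\vb = \tfrac{\eps'}{\eps}(\eps\vb) + (1-\tfrac{\eps'}{\eps})\cdot 0\in F$ by convexity). Hence the $K_\eps$ increase as $\eps\downarrow 0$, and $\bigcup_{\eps>0}K_\eps = \cn(F)$, whose closure is exactly $T := T_\Cc(\x_0)$. The core fact to establish is then: for an increasing family of closed convex sets $K_\eps$ with $\cl(\bigcup_\eps K_\eps) = T$, one has $\bu(\w, K_\eps)\to\bu(\w, T)$. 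I would prove this in two moves. For the distances, set $d_\eps = \dt(\w, K_\eps)$ and $d = \dt(\w, T)$; monotonicity gives $d_\eps\downarrow d_\infty\geq d$, and a density step (any $\vb\in\cn(F)$ eventually lies in every $K_\eps$, so $d_\infty\leq\|\w-\vb\|_2$, and $\cn(F)$ is dense in $T$) forces $d_\infty = d$. For the projectors, the vectors $\p_\eps = \bu(\w, K_\eps)$ are bounded since $\|\p_\eps\|_2\leq\|\w\|_2 + d_\eps$, so any subsequence has a convergent sub-subsequence with limit $\p^\star\in T$ satisfying $\|\w-\p^\star\|_2 = d$; by uniqueness of the projection onto the closed convex set $T$ (Fact \ref{prom}), $\p^\star = \bu(\w, T)$, and boundedness upgrades subsequential convergence to full convergence. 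This yields the displayed limit. (Alternatively one may simply invoke Proposition 5.3.5 of Chapter III of \cite{Urru}.)

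Finally, the two bullet points follow formally, using that $T$ is a cone so that $\bu(\eps\w, T) = \eps\,\bu(\w, T)$ for $\eps>0$. If $\bu(\w, T) = 0$, then Lemma \ref{fees less tan} gives
\beq
\|\bu(\eps\w, F)\|_2 \leq \|\bu(\eps\w, T)\|_2 = \eps\|\bu(\w,T)\|_2 = 0,\nn
\eeq
so the feasible projection vanishes. If $\bu(\w, T)\neq 0$, then
\beq
\frac{\|\bu(\eps\w, F)\|_2}{\|\bu(\eps\w, T)\|_2} = \frac{\|\tfrac{1}{\eps}\bu(\eps\w, F)\|_2}{\|\bu(\w, T)\|_2} \longrightarrow \frac{\|\bu(\w, T)\|_2}{\|\bu(\w, T)\|_2} = 1\nn
\eeq
by the main convergence result. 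I expect the main obstacle to be the careful handling of the closure in the core fact: $\cn(F)$ need not be closed, so the density step and the uniqueness-of-projection argument must be invoked precisely to pass from $\bigcup_\eps K_\eps$ to its closure $T$. Everything else is routine rescaling together with the $1$-Lipschitz and uniqueness properties already collected in Fact \ref{prom}.
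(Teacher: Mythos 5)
Your proof is correct, but it is not the paper's argument: the paper does not prove the displayed limit at all --- it simply cites Proposition 5.3.5 of Chapter III of \cite{Urru}, and then reads off the two bullets exactly as you do (the cone identity $\bu(\eps\w,T_\Cc(\x_0))=\eps\,\bu(\w,T_\Cc(\x_0))$ together with Lemma \ref{fees less tan}). What you add is a self-contained proof of the core limit: rescaling turns $\tfrac{1}{\eps}\bu(\eps\w,F_\Cc(\x_0))$ into the projection of the fixed vector $\w$ onto $K_\eps=\tfrac{1}{\eps}F_\Cc(\x_0)$; these sets increase as $\eps\downarrow 0$ because $0\in F_\Cc(\x_0)$; their union is $\cn(F_\Cc(\x_0))$, whose closure is $T_\Cc(\x_0)$ by definition; the distances then converge by monotonicity plus density, and the projections converge by boundedness, closedness of $T_\Cc(\x_0)$, and uniqueness of projections onto closed convex sets (Fact \ref{prom}). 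All of these steps check out. The only point left tacit is that $T_\Cc(\x_0)$ is convex (needed to invoke uniqueness of the projection): this follows because a nested union of convex sets is convex and the closure of a convex set is convex --- worth one sentence in your write-up. The trade-off between the two routes: the paper's citation is shorter and defers to a standard reference, while your argument keeps the appendix self-contained and makes explicit which structural facts (monotonicity of the rescaled feasible sets and density of $\cn(F_\Cc(\x_0))$ in the tangent cone) actually drive the result; it also uses only tools the paper has already collected in Fact \ref{prom}.
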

\subsection{Uniform approximation to the tangent cone}
\begin{propo}\label{cone cool} Let $\Cc$ be a closed and convex set including $\x_0$. Denote the unit $\ell_2$-sphere in $\R^n$ by $\Sc^{n-1}$ and let $1\geq \alpha>0$ be arbitrary. Given $\alpha,\eps>0$, there exists an  $\eps_0>0$ such that, for all $\w\in\Sc^{n-1}$, $\|\bu(\w,T_\Cc(\x_0))\|_2\geq \alpha$ and for all $0<t\leq \eps_0$, we have:
\beq
\frac{\|\bu(t\w,F_\Cc(\x_0))\|_2}{t\|\bu(\w,T_\Cc(\x_0))\|_2}\geq 1-\eps\label{main state}
\eeq
In particular, setting $\alpha=1$, given $\eps>0$, there exists $\eps_0>0$ such that, for all $t\leq \eps_0$ and all $\w\in T_\Cc(\x_0)\cap \Sc^{n-1}$, $\|\bu(t\w,F_\Cc(\x_0))\|_2\geq (1-\eps)t$.
\end{propo}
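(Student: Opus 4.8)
The plan is to deduce the \emph{uniform} estimate from the \emph{pointwise} limit already supplied by Lemma \ref{lemj3}, by a compactness plus equicontinuity argument. First I would fix $\alpha,\eps>0$ (reading the hypothesis as: the bound is claimed for every $\w\in\Sc^{n-1}$ with $\|\bu(\w,T_\Cc(\x_0))\|_2\geq\alpha$) and introduce the set
\[
K=\{\w\in\Sc^{n-1}\mid \|\bu(\w,T_\Cc(\x_0))\|_2\geq\alpha\}.
\]
Since $\w\mapsto\bu(\w,T_\Cc(\x_0))$ is continuous (indeed nonexpansive, by Fact \ref{prom}), the map $\w\mapsto\|\bu(\w,T_\Cc(\x_0))\|_2$ is continuous, so $K$ is a closed subset of the compact sphere $\Sc^{n-1}$ and hence compact. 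If $K$ is empty the statement is vacuous, so all the work takes place on a nonempty compact $K$.

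Next I would exploit the nonexpansiveness of the projection to get equicontinuity of the rescaled projections. Define, for $t>0$,
\[
g_t(\w)=\frac{\bu(t\w,F_\Cc(\x_0))}{t},\qquad g(\w)=\bu(\w,T_\Cc(\x_0)).
\]
By Fact \ref{prom}, $\|\bu(t\w_1,F_\Cc(\x_0))-\bu(t\w_2,F_\Cc(\x_0))\|_2\leq t\|\w_1-\w_2\|_2$, so dividing by $t$ shows every $g_t$ is $1$-Lipschitz with a constant \emph{independent of} $t$; the limit $g$ is $1$-Lipschitz for the same reason. Thus $\{g_t\}_{t>0}\cup\{g\}$ is a uniformly equicontinuous family on $K$.

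Then I would combine pointwise convergence with equicontinuity on the compact set $K$. Lemma \ref{lemj3} gives the vector limit $g_t(\w)\to g(\w)$ as $t\to 0$ for each fixed $\w$. The standard Arzel\`a--Ascoli-type reasoning --- cover $K$ by finitely many $\delta$-balls about points $\w_1,\dots,\w_N$ with $\delta$ chosen so equicontinuity controls the oscillation of every $g_t$ and of $g$ inside each ball, then invoke pointwise convergence at the $N$ centers --- upgrades this to $\sup_{\w\in K}\|g_t(\w)-g(\w)\|_2\to 0$. By the reverse triangle inequality this transfers to norms, so for any $\eta>0$ there is $\eps_0>0$ with
\[
\frac{\|\bu(t\w,F_\Cc(\x_0))\|_2}{t}\geq \|\bu(\w,T_\Cc(\x_0))\|_2-\eta\quad\text{for all }t\leq\eps_0,\ \w\in K.
\]
Dividing by $\|\bu(\w,T_\Cc(\x_0))\|_2\geq\alpha$ and taking $\eta=\eps\alpha$ gives $1-\eta/\|\bu(\w,T_\Cc(\x_0))\|_2\geq 1-\eps$, which is exactly \eqref{main state}. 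The ``in particular'' claim then follows by setting $\alpha=1$ and noting that for $\w\in T_\Cc(\x_0)\cap\Sc^{n-1}$ one has $\bu(\w,T_\Cc(\x_0))=\w$ (a cone point projects to itself), so $\|\bu(\w,T_\Cc(\x_0))\|_2=1$ and such $\w$ lie in $K$, leaving $\|\bu(t\w,F_\Cc(\x_0))\|_2\geq(1-\eps)t$.

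The main obstacle is precisely the passage from the pointwise limit of Lemma \ref{lemj3} to a \emph{uniform} one over $K$; the observation that dissolves it is that rescaling by $1/t$ preserves the $1$-Lipschitz bound coming from nonexpansiveness, so the family $\{g_t\}$ has a $t$-independent modulus of continuity and compactness of $K$ closes the argument. The only points needing care are that Lemma \ref{lemj3} furnishes the \emph{vector} limit (so the reverse-triangle step on norms is legitimate, not merely the ratio of norms), and that $\bu(\w,T_\Cc(\x_0))\neq 0$ throughout $K$, which holds since $\|\bu(\w,T_\Cc(\x_0))\|_2\geq\alpha>0$ there.
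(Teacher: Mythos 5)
Your proof is correct, and it takes a genuinely different route from the paper's. The paper also works on the compact set $S=\{\w\in\Sc^{n-1}\,:\,\|\bu(\w,T_\Cc(\x_0))\|_2\geq\alpha\}$ and invokes the pointwise limit of Lemma \ref{lemj3}, but its mechanism for making the estimate uniform is different: it defines, for each direction, the largest scale $c(\w)$ at which the ratio is at least $1-\eps$, argues via the monotonicity of $t\mapsto\|\bu(t\w,F_\Cc(\x_0))\|_2/t$ (Lemma \ref{continuous increase}, whose proof in turn rests on the planar angle-chasing Lemma \ref{angle1}) that the bound persists for all smaller scales, and then proves that $s(\w)=\min\{1,c(\w)\}$ is continuous, so that it attains a strictly positive minimum $\eps_0$ on the compact set. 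You instead observe that the rescaled projections $g_t(\w)=\bu(t\w,F_\Cc(\x_0))/t$ are $1$-Lipschitz with a constant independent of $t$ (nonexpansiveness of the projection, Fact \ref{prom}, survives the rescaling), so the pointwise vector limit of Lemma \ref{lemj3} upgrades to uniform convergence on $K$ by the standard finite-cover argument, after which the reverse triangle inequality and the lower bound $\|\bu(\w,T_\Cc(\x_0))\|_2\geq\alpha$ finish the job; your handling of the quantifiers and of the ``in particular'' clause ($\alpha=1$, cone points project to themselves) matches the paper's. Your argument is shorter and bypasses Lemmas \ref{continuous increase} and \ref{angle1} entirely for this proposition (note that Lemma \ref{angle1} is still used elsewhere, e.g.\ in the proof of Proposition \ref{prop objective}, so it cannot be discarded from the paper); what the paper's route buys in exchange is the structural fact that $t\mapsto\|\bu(t\w,F_\Cc(\x_0))\|_2/t$ is non-increasing, which is of independent interest but not needed to conclude \eqref{main state}.
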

\noindent {\bf{Remark:}} Note that, statements of Propositions \ref{cone cool} and \ref{unif lem} are quite similar.
\begin{proof} Given $\alpha>0$, consider the following set:
\beq
S=\{\w\in\Sc^{n-1}|\|\bu(\w,T_\Cc(\x_0))\|_2\geq \alpha\}\nn
\eeq
This set is closed and bounded and hence compact. Define the following function on this set:
\beq
c(\w)=\max\{c>0 \big| \frac{\|\bu(c\w,F_\Cc(\x_0)\|_2}{\|\bu(c\w,T_\Cc(\x_0))\|_2}\geq 1-\eps\}\nn
\eeq
$c(\w)$ is strictly positive due to Lemma \ref{lemj3} and it can be as high as infinity. Furthermore, from Lemma \ref{continuous increase}, we know that whenever $c<c(\w)$:
\beq
\frac{\|\bu(c\w,F_\Cc(\x_0)\|_2}{\|\bu(c\w,T_\Cc(\x_0))\|_2}\geq 1-\eps\nn
\eeq
as well. Let $s(\w)=\min\{1,c(\w)\}$. If $s(\w)$ is continuous, since $\Sc^{n-1}$ is compact $s(\w)$ will attain its minimum which implies $c(\w)\geq s(\w)\geq \eps_0>0$ for some $\eps_0$. Again, this also implies, for all $\w\in \Sc^{n-1}$, and $0<t\leq \eps_0$, 
\beq
 \frac{\|\bu(t\w,F_\Cc(\x_0)\|_2}{\|t\bu(\w,T_\Cc(\x_0))\|_2}\geq 1-\eps\nn
\eeq

To end the proof, we will show continuity of $s(\w)$.

\noindent{\bf{Claim:}} $s(\w)$ is continuous.

\begin{proof} 
We will show that $\lim_{\w_2\rightarrow \w_1}s(\w_2)=s(\w_1)$. To do this, we will make use of the continuity of the functions $\|\bu(c_1\w,F_\Cc(\x_0)\|_2$, $\|\bu(c_1\w,T_\Cc(\x_0)\|_2$ and $\frac{\|\bu(c_1\w,F_\Cc(\x_0)\|_2}{\|\bu(c_1\w,T_\Cc(\x_0)\|_2}$ when the denominator is nonzero. Given $\w_1$, let $c_1=\min\{2,c(\w_1)\}$.

\noindent {\bf{Case 1:}} If $\frac{\|\bu(c_1\w_1,F_\Cc(\x_0)\|_2}{\|\bu(c_1\w_1,T_\Cc(\x_0))\|_2}>1-\eps$, then $c(\w_1)>2$ and for all $\w_2$ sufficiently close to $\w_1$, $\frac{\|\bu(c_1\w_2,F_\Cc(\x_0)\|_2}{\|\bu(c_1\w_2,T_\Cc(\x_0))\|_2}$ is more than $1-\eps$ and hence $c(\w_2)\geq2>1$. Hence, $s(\w_1)=s(\w_2)$.

\noindent {\bf{Case 2:}} Now, assume $\frac{\|\bu(c_1\w_1,F_\Cc(\x_0)\|_2}{\|\bu(c_1\w_1,T_\Cc(\x_0))\|_2}=1-\eps $ which implies $c_1=c(\w_1)$. Using the ``strict decrease'' part of Lemma \ref{continuous increase}, for any $\eps'>0$ and $c'=c_1-\eps'$, $\frac{\|\bu(c'\w_1,F_\Cc(\x_0)\|_2}{\|\bu(c'\w_1,T_\Cc(\x_0))\|_2}>1-\eps$. Then, for $\w_2$ sufficiently close to $\w_1$, $\frac{\|\bu(c'\w_2,F_\Cc(\x_0)\|_2}{\|\bu(c'\w_2,T_\Cc(\x_0))\|_2}>1-\eps$ which implies $c(\w_2)\geq c'$. Hence, $c(\w_2)\geq c_1-\eps'$ for arbitrarily small $\eps'>0$. Conversely, for any $\eps'>0$ and $c'=c_1+\eps'$, $\frac{\|\bu(c'\w_1,F_\Cc(\x_0)\|_2}{\|\bu(c'\w_1,T_\Cc(\x_0))\|_2}<1-\eps$. Then, for $\w_2$ sufficiently close to $\w_1$, $\frac{\|\bu(c'\w_2,F_\Cc(\x_0)\|_2}{\|\bu(c'\w_2,T_\Cc(\x_0))\|_2}<1-\eps$ which implies $c(\w_2)\leq c'$. Hence, $c(\w_2)\leq c_1+\eps'$ for arbitrarily small $\eps'>0$. Combining these, we obtain $c(\w_2)\rightarrow c(\w_1)$ as $\w_2\rightarrow \w_1$. This also implies $s(\w_2)\rightarrow s(\w_1)$.
\end{proof}
This finishes the proof of the main statement \eqref{main state}. For the $\alpha=1$ case, observe that, $\|\w\|_2=1$ and $\|\bu(\w,T_\Cc(\x_0))\|_2=1$ implies $\w\in T_\Cc(\x_0)$.
\end{proof} 

\begin{lem} \label{continuous increase} Let $\x_0\in\R^n$ and let $\w$ have unit $\ell_2$-norm and set $l_T=\|\bu(\w,T_\Cc(\x_0))\|_2$. Define the function,
\beq
g(t)=\begin{cases}\frac{\|\bu(t\w,F_\Cc(\x_0))\|_2}{t}~\text{for}~t>0\\l_T~\text{for}~t=0\end{cases}\nn
\eeq
Then, $g(\cdot)$ is continuous and non increasing on $[0,\infty)$. Furthermore, it is strictly decreasing on the interval $[t_0,\infty)$ where $t_0=\sup_{t} \{t>0\big|g(t)=l_T\}$.
\end{lem}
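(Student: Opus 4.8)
The plan is to work directly with the convex set $F:=F_\Cc(\x_0)=\Cc-\x_0$, which is closed and convex and contains $0$, and to study the projection curve $q(r):=\bu(r\w,F)$ together with $h(r):=\|q(r)\|_2$ and $a(r):=\li\w,q(r)\ri$, so that $g(r)=h(r)/r$ for $r>0$. Continuity is the easy part: since projection onto a convex set is non-expansive (Fact \ref{prom}), $q$ and hence $h$ are continuous, so $g$ is continuous on $(0,\infty)$; continuity at $0$ follows from Lemma \ref{lemj3}, which gives $q(r)/r\to\bu(\w,T_\Cc(\x_0))$ and therefore $g(r)\to l_T=g(0)$.

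The core of the argument is monotonicity, and the tool is the pair of variational inequalities for the projections at two scales $0<s<t$. Writing the optimality condition $\li r\w-q(r),\ub-q(r)\ri\le0$ for all $\ub\in F$, I would insert the cross-competitors $\ub=q(t)$ (at scale $s$) and $\ub=q(s)$ (at scale $t$), both admissible since $q(s),q(t)\in F$. Expanding and applying Cauchy--Schwarz $\li q(s),q(t)\ri\le h(s)h(t)$, the two inequalities sandwich $\Delta:=a(t)-a(s)$ as
\beq
\frac{h(t)\,(h(t)-h(s))}{t}\ \le\ \Delta\ \le\ \frac{h(s)\,(h(t)-h(s))}{s}.\nn
\eeq
Combining the same two optimality conditions also yields $(t-s)\Delta\ge\|q(t)-q(s)\|_2^2\ge0$, so $\Delta\ge0$; feeding this into the left inequality shows $h$ is non-decreasing. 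Dividing the sandwich by $h(t)-h(s)$ (and treating the degenerate cases $h(t)=h(s)$ and $h(s)=0$ directly) gives exactly $h(t)/t\le h(s)/s$, i.e. $g(t)\le g(s)$, which is the non-increasing assertion.

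For the strict-decrease claim on $[t_0,\infty)$ I expect the subtle step. Since $g$ is continuous and non-increasing with $g(0)=l_T$, its level set $\{r\ge0:g(r)=l_T\}$ is an interval $[0,t_0]$, so it suffices to rule out that $g$ is constant on any subinterval $[t_1,t_2]$ with $t_0\le t_1<t_2$. The idea is to track the equality case through the monotonicity proof: constancy $g\equiv\kappa$ on $[t_1,t_2]$ forces $\Delta=\kappa^2(t-s)$ and hence equality in Cauchy--Schwarz, so $q(s)$ and $q(t)$ are positively collinear and $q(r)=r\p$ on $[t_1,t_2]$ for a fixed $\p$ with $\|\p\|_2=\kappa$. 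I would then propagate this constancy down to $0$: testing the optimality condition at $t_1\p$ against $\ub=0$ and $\ub=t_2\p$ gives $\li\w-\p,\p\ri=0$, and this orthogonality lets me verify both $r\p\in F$ (by convexity, as a combination of $0$ and $t_1\p$) and $\w-\p\in N_F(r\p)$ for every $r\in[0,t_1]$, i.e. $\bu(r\w,F)=r\p$ there as well. Thus $g\equiv\kappa$ on $[0,t_2]$, forcing $\kappa=g(0)=l_T$ and $t_2\le t_0$, contradicting $t_2>t_1\ge t_0$. The main obstacle is precisely this downward propagation: isolating the orthogonality $\li\w-\p,\p\ri=0$ and using it, together with the convexity of $F$ and the two-scale normal-cone structure, to re-establish the projection identity on the lower interval.
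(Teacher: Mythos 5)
Your proposal is correct, and it proves the lemma by a genuinely different route than the paper. The paper derives both monotonicity claims from the planar angle lemma (Lemma \ref{angle1}): after reducing to the two-dimensional plane through the origin, $t\w$, and its projection, it classifies the angle at $\bu(t\w,F_\Cc(\x_0))$ as right or wide; the right-angle case plus convexity gives $g\equiv l_T$ on $[0,t_0]$, and the wide-angle case gives strict decrease beyond $t_0$. That lemma needs a delicate case analysis with figures, but it is amortized, since the paper reuses it elsewhere (e.g.\ for the Pythagorean bound in Proposition \ref{prop objective}). You bypass the geometry entirely: cross-testing the projection optimality conditions $\li s\w-q(s),q(t)-q(s)\ri\le 0$ and $\li t\w-q(t),q(s)-q(t)\ri\le 0$ and applying Cauchy--Schwarz yields your sandwich and hence $g(t)\le g(s)$ in a few lines valid in any Hilbert space, and strictness comes from tracking the equality case: constancy of $g$ on $[t_1,t_2]$ forces equality in Cauchy--Schwarz, hence $q(r)=r\p$ there, and the orthogonality $\li\w-\p,\p\ri=0$ (from testing $\ub=0$ and $\ub=t_2\p$ at scale $t_1$) turns the optimality condition at $t_1$ into $\li\w-\p,\ub\ri\le 0$ for all $\ub\in F$, which is exactly the optimality condition certifying $\bu(r\w,F)=r\p$ for every $r\in[0,t_1]$; the propagation step you flag as the main obstacle is therefore airtight as you sketch it, and the contradiction $\kappa=l_T$, $t_2\le t_0$ follows. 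Two cosmetic points: deducing that $h$ is non-decreasing from $\Delta\ge 0$ uses the right-hand inequality of your sandwich (namely $0\le\Delta\le h(s)(h(t)-h(s))/s$), not the left-hand one --- and in fact you never truly need monotonicity of $h$, since if $h(t)<h(s)$ the conclusion $g(t)<g(s)$ is immediate from $t>s$; and in the degenerate case $l_T=0$ one has $g\equiv 0$ and $t_0=\infty$, so the strictness claim is vacuous, consistent with your parenthetical treatment.
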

\begin{proof} Due to Lemma \ref{fees less tan}, $g(t)\leq l_T$ and from Lemma \ref{lemj3}, the function is continuous at $0$. Continuity at $t\neq 0$ follows from the continuity of the projection (see Fact \ref{prom}). Next, if $g(t)=l_T$,  using the fact that $F_\Cc(\x_0)$ contains $0$, the second statement of Lemma \ref{angle1} gives,
\beq
\bu(t\w,T_\Cc(\x_0))=\bu(t\w,F_\Cc(\x_0))\in F_\Cc(\x_0).\nn
\eeq
From convexity, $\bu(t'\w,T_\Cc(\x_0))\in F_\Cc(\x_0)$ for all $0\leq t'\leq t$. Hence, $g(t')=l_T$. This implies $g(t)=l_T$ for $t\leq t_0$.

Now, assume $t_1>t_0$ and $t_1>t_2>0$ for some $t_1,t_2>0$. Then, $g(t_1)<l_T$, hence, the third statement of Lemma \ref{angle1} applies. Setting $\alpha=\frac{t_2}{t_1}$ in Lemma \ref{angle1}, we find,
\beq
\|\bu(t_1\w,F_\Cc(\x_0))\|_2<\frac{\|\bu(t_2\w,F_\Cc(\x_0))\|_2}{\frac{t_2}{t_1}},\nn
\eeq
which implies the strict decrease of $\frac{\|\bu(t\w,F_\Cc(\x_0))\|_2}{t}$ over $t\geq t_0$.

%
% Secondly, since $T_\Cc(\x_0)$ is a cone, $\|\bu(t\w,T_\Cc(\x_0))\|_2$ is linear and since the distance function is convex, $\|\bu(t\w,F_\Cc(\x_0))\|_2$ is a convex function of $t$, which implies the concavity of $h(t)$. $g(\cdot)$ is continuous at $0$ due to Lemma \ref{lemj3}. The continuity at other points follow from the continuity of the projection operator.
%
%Secondly, from Lemma \ref{}, $g(\cdot)$ is always upper bounded by $g(0)$. The rest of the proof is identical to that of Lemma \ref{lemj1},
%
\end{proof}

%\subsubsection{A useful result on projection to the feasible set}
For the rest of the discussion, given three points $A,B,C$ in $\R^n$, the angle induced by the lines $AB$ and $BC$ will be denoted by $A\hat{B}C$.
\begin{lem} \label{angle1} Let $\Kc$ be a convex and closed set in $\R^n$ that includes $0$. Let $\z\in\R^n$ and $0<\alpha<1$ be arbitrary, let $\p_1=\bu(\z,\Kc)$, $\p_2=\bu(\alpha\z,\Kc)$. Denote the points whose coordinates are determined by $0,\p_1,\p_2,\z$ by $O, P_1,P_2$ and $Z$ respectively. Then,
\begin{itemize}
\item $Z\hat{P_1}O$ is either wide or right angle.
\item If $Z\hat{P_1}O$ is right angle, then $\p_1=\frac{\p_2}{\alpha}=\bu(\z,T_\Kc(0))$.
\item If $Z\hat{P_1}O$ is wide angle, then $\|\p_1\|_2<\frac{\|\p_2\|_2}{\alpha}\leq \|\bu(\z,T_\Kc(0))\|_2$.
\end{itemize}
\end{lem}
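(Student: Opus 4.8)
The plan is to reduce all three bullets to the variational characterization of the projection onto a convex set (Fact~\ref{prom}), systematically exploiting that $0\in\Kc$. Write $\p_1=\bu(\z,\Kc)$, $\p_2=\bu(\alpha\z,\Kc)$, and set $T:=T_\Kc(0)=\cl(\cn(\Kc))$, using $F_\Kc(0)=\Kc$ since $0\in\Kc$. The optimality condition for projecting onto $\Kc$ reads $\li\z-\p_1,\s-\p_1\ri\leq 0$ for every $\s\in\Kc$, and similarly for $\p_2$. Since the angle $Z\hat{P_1}O$ is the angle between $\z-\p_1$ and $-\p_1$, the first bullet is exactly the assertion $\li\z-\p_1,\p_1\ri\geq 0$, which I would read off by taking $\s=0\in\Kc$ in the optimality condition, as that gives $\li\z-\p_1,-\p_1\ri\leq 0$.

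For the second bullet I would assume the right-angle identity $\li\z-\p_1,\p_1\ri=0$. Taking $\s=0$ in the optimality condition already yields $\li\z-\p_1,\s\ri\leq\li\z-\p_1,\p_1\ri=0$ for all $\s\in\Kc$. I would then check directly that $\p_1$ solves the projection onto the cone $T$: for any $\vb=c\s$ with $c\geq 0$, $\s\in\Kc$, one has $\li\z-\p_1,\vb-\p_1\ri=c\li\z-\p_1,\s\ri-\li\z-\p_1,\p_1\ri=c\li\z-\p_1,\s\ri\leq 0$, and passing to the closure gives $\p_1=\bu(\z,T)$. The same identity shows that $\alpha\p_1$, which lies in $\Kc$ by convexity and $0\in\Kc$, satisfies the optimality condition for projecting $\alpha\z$ onto $\Kc$ (the relevant inner product reduces to $\alpha\li\z-\p_1,\s\ri\leq 0$), whence $\p_2=\alpha\p_1$, i.e. $\p_1=\p_2/\alpha$.

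The third bullet is where the real work lies. I would add $\alpha$ times the optimality inequality for $\p_1$ tested at $\s=\p_2$ to the inequality for $\p_2$ tested at $\s=\p_1$; the $\z$-terms cancel and one is left with the clean relation $\alpha\|\p_1\|_2^2+\|\p_2\|_2^2\leq(1+\alpha)\li\p_1,\p_2\ri$. Inserting Cauchy--Schwarz and writing $r=\|\p_2\|_2/\|\p_1\|_2$ (the case $\p_1=0$ being degenerate) gives $(r-1)(r-\alpha)\leq 0$, hence $\alpha\|\p_1\|_2\leq\|\p_2\|_2$. To upgrade to the strict inequality under wideness I would argue by contraposition: equality $\|\p_2\|_2=\alpha\|\p_1\|_2$ forces equality in Cauchy--Schwarz, so $\p_2=\alpha\p_1$, and then testing the optimality condition for $\p_2=\alpha\p_1$ at $\s=\p_1$ collapses to $\alpha(1-\alpha)\li\z-\p_1,\p_1\ri\leq 0$, i.e. the right angle, contradicting wideness.

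It remains to prove $\|\p_2\|_2/\alpha\leq\|\bu(\z,T)\|_2$. For this I would establish the general monotonicity $\|\bu(\w,\Kc)\|_2\leq\|\bu(\w,T)\|_2$ whenever $\Kc\subseteq T$ and $0\in\Kc$: setting $a=\bu(\w,\Kc)$, $b=\bu(\w,T)$, Moreau's identity (Fact~\ref{more}) gives $\li\w-b,b\ri=0$, testing $b$'s optimality at $a\in T$ gives $\li\w,a\ri\leq\li b,a\ri$, and testing $a$'s optimality at $0\in\Kc$ gives $\|a\|_2^2\leq\li\w,a\ri$, so $\|a\|_2^2\leq\li b,a\ri\leq\|b\|_2\|a\|_2$. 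Applying this with $\w=\alpha\z$ and using the positive homogeneity $\bu(\alpha\z,T)=\alpha\bu(\z,T)$ of projection onto a cone closes the chain. The main obstacle is the strictness step: cleanly isolating the Cauchy--Schwarz equality case and tying it back to the right-angle identity is precisely what separates the wide angle from the right angle, and it is the only point where an inequality argument alone does not suffice.
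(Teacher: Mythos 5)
Your proof is correct, and it takes a genuinely different route from the paper's. The paper argues synthetically: the first bullet is proved by dropping perpendiculars and deriving contradictions with the minimality of $\p_1$; the right-angle case uses a separating hyperplane plus a limiting argument (Lemma \ref{lemj3}, which rests on an external result of Hiriart-Urruty--Lemar\'echal); and the wide-angle case is handled by projecting $\Kc$ onto the two-dimensional plane through $Z$, $P_1$, $O$ and running a figure-based case analysis on where the projected point $P_2'$ can lie, before finally invoking Lemma \ref{fees less tan} for the upper bound by $\|\bu(\z,T_\Kc(0))\|_2$. You instead run everything through the variational inequality $\li\z-\p_1,\s-\p_1\ri\leq 0$ and Moreau's identity: the first two bullets drop out by testing at $\s=0$ and at scaled points of $\Kc$, and the heart of the third bullet is the clean algebraic identity $\alpha\|\p_1\|_2^2+\|\p_2\|_2^2\leq(1+\alpha)\li\p_1,\p_2\ri$ obtained by summing two optimality inequalities, which with Cauchy--Schwarz gives $(r-1)(r-\alpha)\leq 0$ for $r=\|\p_2\|_2/\|\p_1\|_2$; strictness under wideness comes from the Cauchy--Schwarz equality case collapsing back to the right angle, and the final bound reproves the paper's Lemma \ref{fees less tan} directly from cone optimality and Moreau rather than citing Lemma \ref{upper bound lem}. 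What your approach buys: it is dimension-free, self-contained (no figure, no 2D reduction, no limit argument), and it yields the sharper two-sided information $\alpha\leq\|\p_2\|_2/\|\p_1\|_2\leq 1$ as a byproduct; what the paper's approach buys is geometric transparency about how the projected point moves as $\z$ is scaled, which is the intuition the surrounding Lemma \ref{continuous increase} exploits. One small point worth making explicit in your write-up: the wide-angle hypothesis presupposes $\p_1\neq 0$ and $\z\neq\p_1$, which is exactly what licenses the division by $\|\p_1\|_2$ and the strict sign $\li\z-\p_1,\p_1\ri>0$ in your contraposition.
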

\begin{proof} {\bf{Acute angle:}} Assume $Z\hat{P_1}O$ is acute angle. If $Z\hat{O}P_1$ is right or wide angle, then $0$ is closer to $\z$ than $\p_1$ which is a contradiction. If $Z\hat{O}P_1$ is acute angle, then draw the perpendicular from $Z$ to the line $OP_1$. The intersection is in $\Kc$ due to convexity and it is closer to $\z$ than $\p_1$, which again is a contradiction.

 {\bf{Right angle:}} Now, assume $Z\hat{P_1}O$ is right angle. Using Fact \ref{prom}, there exists a hyperplane $H$ that separates $\z$ and $\Kc$ passing through $P_1$ which is perpendicular to $\z-\p_1$. The line $P_1O$ lies on $H$. Consequently, for any $\alpha\in [0,1]$, the closest point to $\alpha\z$ over $\Kc$ is simply $\alpha\p_1$. Hence, $\p_2=\alpha\p_1$. Now, let $\q_1:=\bu(\z,T_\Kc(0))$. Then, $\bu(\alpha\z,T_\Kc(0))=\alpha\q_1$. If $\q_1\neq \p_1$ then, $\|\q_1\|_2>\|\p_1\|_2$ since $\|\z-\q_1\|_2<\|\z-\p_1\|_2$ and:
\beq
\|\q_1\|_2^2=\|\z\|_2^2-\|\z-\q_1\|_2^2> \|\z\|_2^2-\|\z-\p_1\|_2^2\geq \|\p_1\|_2^2\nn
\eeq
where the last inequality follows from the fact that $Z\hat{P_1}O$ is not acute. Then,
\beq
\lim_{\alpha\rightarrow 0}\frac{\|\bu(\z,T_\Kc(0))\|_2}{\|\bu(\z,\Kc)\|_2}=\frac{\|\q_1\|_2}{\|\p_1\|_2}>1\nn
\eeq
which contradicts with Lemma \ref{lemj3}.

 {\bf{Wide angle:}} Finally, assume $Z\hat{P_1}O$ is wide angle. We start by reducing the problem to a two dimensional one. Obtain $\Kc'$ by projecting the set $\Kc$ to the $2D$ plane induced by the points $Z,P_1$ and $O$. Now, let $\p_2'=\bu(\alpha\z,\Kc')$. Due to the projection, we still have:
\beq
\|\z-\p_2'\|_2\leq\|\z-\p_2\|_2\leq \|\z-\alpha\p_1\|_2\label{good proj prop}
\eeq
and $\|\p_2'\|_2\leq\|\p_2\|_2$. Next, we will prove that $\|\p_2'\|_2>\|\alpha\p_1\|_2$ to conclude. Figure \ref{cases} will help us explain our approach. Let the line $UP_1$ be perpendicular to $ZP_1$. Assume, it crosses $ZO$ at $S$. Let $P'Z'$ be parallel to $P_1Z_1$. Observe that $P'$ corresponds to $\alpha \p_1$. $H$ is the intersection of $P'Z'$ and $P_1U$. Denote the point corresponding to $\p_2'$ by $P_2'$. Observe that $P_2'$ satisfies the following:
\begin{itemize}
\item $P_1$ is the closest point to $Z$ in $\Kc$ hence $P_2'$ lies on the left of $P_1U$ (same side as $O$).
\item $P_2$ is the closest point to $Z'$. Hence, $Z'\hat{P_2}P_1$ is not acute angle. Otherwise, we can draw a perpendicular to $P_2P_1$ from $Z'$ and end up with a shorted distance. This would also imply that $Z'\hat{P_2'}P_1$ is not acute as well. The reason is, due to projection, $|Z'P_2'|\leq |Z'P_2|$ and $|P_2'P_1|\leq |P_2P_1|$ hence,
\beq
|Z'P_1|\geq|Z'P_2|^2+|P_2P_1|^2\geq |Z'P_2'|^2+|P_2'P_1|^2\label{angle sum}
\eeq
\item $P_2'$ has to lie below or on the line $OP_1$ otherwise, perpendicular to $OP_1$ from $Z'$ would yield a shorter distance than $|P_2'Z'|$.

\item $\p_2\neq\alpha\p_1$. To see this, note that $Z'\hat{P'}O$ is wide angle. Let $\q\in\R^n$ be the projection of $\alpha\z$ on the line $\{c\p_1\big|c\in\R\}$ and point $Q$ denote the vector $\q$. If $Q$ lies between $O$ and $P_1$, $\q\in\Kc$ and $|QZ'|<|P'Z'|$. Otherwise, $P_1$ lies between $Q$ and $P'$ hence $|P_1Z'|<|P'Z'|$ and $\p\in\Kc$. This implies $P_2,P_2'\neq P'$.
\end{itemize}

Based on these observations, we investigate the problem in two cases illustrated by Figure \ref{cases}.
\begin{figure}

  \begin{center}
{\includegraphics[scale=0.25]{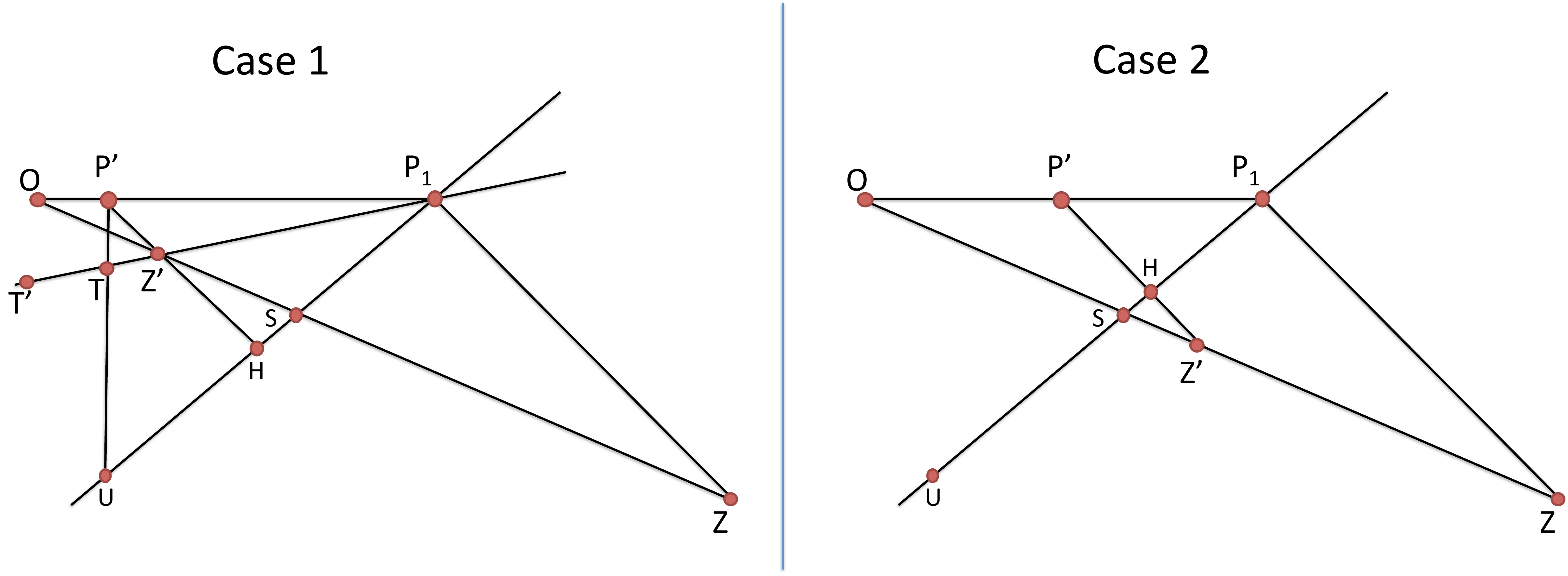}}  
  \end{center}
  \caption{Possible configurations of the points in Lemma \ref{angle1}.}
 \label{cases}
\end{figure}

\noindent {\bf{Case 1 ($S$ lies on $Z'Z$):}} Consider the lefthand side of Figure \ref{cases}. If $P_2'$ lies on the righthand side of $P'U$, this implies $|P_2'O|> |P'O|$ which is what we wanted.

If $P_2'$ lies on the region induced by $OP'TT'$ then $P_1\hat{P}_2'Z'$ is acute angle as $P_1\hat{Z}'P_2'>P_1\hat{Z}'P'$ is wide, which contradicts with \eqref{angle sum}.

If $P_2'$ lies on the remaining region $T'TU$, then $Z'\hat{P}_2'P_1$ is acute. The reason is, $P'_2\hat{Z}'P_1$ is wide as follows:
\beq
P'_2\hat{Z}'P_1\geq P'_2\hat{T}P_1\geq U\hat{T}P_1>U\hat{P}'P_1=\frac{\pi}{2}\nn
\eeq

\noindent {\bf{Case 2 ($S$ lies on $OZ'$):}} Consider the righthand side of Figure \ref{cases}. Due to location restrictions, $P_2'$ lies on either $P_1P'H$ triangle or the region induced by $OP'HU$. If it lies on $P_1P'H$ then, $O\hat{P'}P_2'\geq O\hat{P'}H$ (thus wide); which implies $|OP_2'|>|OP'|$ as $O\hat{P}'P_2'$ is wide angle and $P'\neq P_2'$.

If $P_2'$ lies on $OP'HU$ then, $P_1\hat{P}_2'Z'<P_1\hat{H}Z'=\frac{\pi}{2}$ hence $P_1\hat{P}'_2Z'$ is acute angle which contradicts with \eqref{angle sum}.

In all cases, we end up with $|OP_2'|>|OP'|$ which implies $\|\p_2\|_2\geq \|\p_2'\|_2>\alpha\|\p_1\|_2$ as desired.

Finally, apply Lemma \ref{fees less tan} on $\alpha\z$ to upper bound $\|\p_2\|_2$ by $\alpha\|\bu(\z,T_\Kc(0))\|_2$.
\end{proof}

\end{document}